\documentclass[11pt,a4paper,reqno]{amsart}
\usepackage{amsthm,amsfonts,amsmath,amssymb,amsrefs,bookmark,appendix}
\usepackage{amsxtra, mathrsfs}
\usepackage{colordvi}
\usepackage[usenames,dvipsnames]{color}
\usepackage{dsfont}
\usepackage{enumitem}

\usepackage{graphics}
\usepackage{tikz}
\usepackage{float}
\usepackage{hyperref}

\theoremstyle{plain}
\newtheorem{theorem}{Theorem}
\newtheorem{lemma}[theorem]{Lemma}
\newtheorem{corollary}[theorem]{Corollary}
\newtheorem{proposition}[theorem]{Proposition}
\theoremstyle{remark}
\newtheorem{rem}[theorem]{\bf Remark}
\newtheorem{definition}{\bf Definition}
\newtheorem{notation}{\bf Notation}

\numberwithin{equation}{section}

\DeclareMathOperator{\spec}{spec}
\DeclareMathOperator{\supp}{supp}
\DeclareMathOperator{\loc}{loc}
\DeclareMathOperator{\dist}{dist}
\DeclareMathOperator{\dom}{dom}
\DeclareMathOperator{\vol}{vol}
\DeclareMathOperator{\sing}{sing}
\DeclareMathOperator{\ran}{ran}
\DeclareMathOperator{\st}{\textit{st}}
\DeclareMathOperator{\link}{Link}
\DeclareMathOperator{\tr}{tr}
\DeclareMathOperator{\res}{res}
\DeclareMathOperator{\rank}{rank}
\DeclareMathOperator{\Sf}{sf}
\DeclareMathOperator{\sign}{sign}
\DeclareMathOperator{\Hopf}{Hopf}
\DeclareMathOperator{\per}{per}
\DeclareMathOperator{\hol}{hol}
\DeclareMathOperator{\Wr}{Wr}

\renewcommand{\phi}{\varphi}
\newcommand{\eps}{\varepsilon}

\newcommand{\norm}[1]{\lVert #1 \rVert}

\newcommand{\cip}[2]{\langle #1, #2 \rangle}
\newcommand{\wt}[1]{\widetilde{#1}}
\newcommand{\nt}{\widetilde{\boldsymbol{\sigma}}^{(n)}(-i\widetilde{\nabla}^{(n)})}
\newcommand{\nn}{\nonumber}
\newcommand{\Sup}[1]{(#1)_{+}}
\newcommand{\Sdo}[1]{(#1)_{-}}

\newcommand{\R}{\mathbb{R}}
\newcommand{\C}{\mathbb{C}}
\newcommand{\Z}{\mathbb{Z}}
\newcommand{\T}{\mathbb{T}}
\newcommand{\N}{\mathbb{N}}
\renewcommand{\S}{\mathbb{S}}

\newcommand{\cB}{\mathcal{B}}
\newcommand{\cC}{\mathcal{C}}
\newcommand{\cD}{\mathcal{D}}
\newcommand{\cE}{\mathcal{E}}

\newcommand{\cG}{\mathcal{G}}
\newcommand{\cH}{\mathcal{H}}

\newcommand{\cL}{\mathcal{L}}
\newcommand{\cM}{\mathcal{M}}
\newcommand{\cN}{\mathcal{N}}
\newcommand{\cQ}{\mathcal{Q}}
\newcommand{\cT}{\mathcal{T}}

\newcommand{\rT}{\mathrm{T}}
\renewcommand{\d}{\mathrm{d}}
\newcommand{\reg}{\mathrm{reg}}
\newcommand{\crit}{\mathrm{crit}}

\newcommand{\sC}{\mathscr{C}}
\newcommand{\sD}{\mathscr{D}}

\newcommand{\sK}{\mathscr{K}}

\newcommand{\sR}{\mathscr{R}}
\newcommand{\sS}{\mathscr{S}}
\newcommand{\sT}{\mathscr{T}}

\newcommand{\ul}{\underline{\lambda}}
\newcommand{\ua}{\underline{\alpha}}
\newcommand{\ug}{\underline{\gamma}}
\newcommand{\uS}{\underline{S}}
\newcommand{\pxi}[1]{P_{\xi_{#1}}}
\newcommand{\Sd}{\mathcal{S}_{\mathrm{disc}}}
\newcommand{\SO}{\mathbf{SO}}

\newcommand{\bA}{\boldsymbol{A}}
\newcommand{\bB}{\boldsymbol{B}}
\newcommand{\bG}{\boldsymbol{G}}
\newcommand{\bN}{\boldsymbol{N}}
\newcommand{\bS}{\boldsymbol{S}}
\newcommand{\bT}{\boldsymbol{T}}

\newcommand{\bn}{\boldsymbol{n}}
\newcommand{\bx}{\boldsymbol{x}}
\newcommand{\be}{\boldsymbol{e}}
\newcommand{\bv}{\boldsymbol{v}}
\newcommand{\bp}{\boldsymbol{p}}
\newcommand{\bt}{\boldsymbol{t}}
\newcommand{\bs}{\boldsymbol{s}}

\newcommand{\bsigma}{\boldsymbol{\sigma}}
\newcommand{\bal}{\boldsymbol{\alpha}}
\newcommand{\bbet}{\boldsymbol{\beta}}

\newcommand{\Tf}{[0,1]_{\per}}

\newcommand{\bbd}{\mathbb{D}}

\begin{document}

\title[]{Spectral flow for Dirac operators with magnetic links}
\date{}

\author[F. Portmann]{Fabian Portmann}
\address[F. Portmann]{QMATH, Department of Mathematical Sciences, University of Copenhagen
Universitetsparken 5, 2100 Copenhagen, DENMARK\newline
Current Address: IBM Switzerland, Vulkanstrasse 106 Postfach, 8010 Z\"{u}rich, Switzerland} 
\email{f.portmann@bluewin.ch}

\author[J. Sok]{J\'er\'emy Sok}
\address[J. Sok]{QMATH, Department of Mathematical Sciences, University of Copenhagen
Universitetsparken 5, 2100 Copenhagen, DENMARK\newline
Current Address: University of Basel, Department of Mathematics and Computer Science, Spiegelgasse 1, CH-4051 Basel, Switzerland.
} 
\email{jeremyvithya.sok@unibas.ch}

\author[J. P. Solovej]{Jan Philip Solovej}
\address[J. P. Solovej]{Corresponding author. \newline QMATH, Department of Mathematical Sciences, University of Copenhagen
Universitetsparken 5, 2100 Copenhagen, DENMARK
} 
\email{solovej@math.ku.dk}

\thanks{The authors acknowledge support from the ERC grant Nr.\ 321029 ``The mathematics of the structure of matter" and
VILLUM FONDEN through the QMATH Centre of Excellence grant. nr.\ 10059.
This work has been done when all the authors were working at the university of Copenhagen.}

\begin{abstract}
	This paper is devoted to the study of the spectral properties
        of Dirac operators on the three-sphere with singular magnetic fields
        supported on smooth, oriented links. As for Aharonov-Bohm
        solenoids in Euclidean three-space, the flux carried by an
        oriented knot features a $2\pi$-periodicity of the associated
        operator.  For a given link one thus obtains a family of Dirac
        operators indexed by a torus of fluxes. We study the spectral
        flow of paths of such operators corresponding to loops in this
        torus. The spectral flow is in general non-trivial.  In the
        special case of a link of unknots we derive an explicit
        formula for the spectral flow of any loop on the torus of
        fluxes.  It is given in terms of the linking numbers of the
        knots and their writhes.
\end{abstract}

\keywords{Dirac operators, knots, links, Seifert surface, spectral
  flow, zero modes}

\maketitle
\tableofcontents

\section{Introduction}
In \cite{dirac_s3_paper1} we introduced Dirac operators with magnetic
fields supported on links.  These magnetic fields, also called
magnetic links, are described by a smooth, oriented link in $\S^3$,
together with fluxes $2\pi\alpha_k$ on each component $\gamma_k$.
They can be seen as natural generalizations of the celebrated
Aharonov-Bohm magnetic solenoids, which are magnetic fields supported
on lines in Euclidean three space.  As shown in
\cite{dirac_s3_paper1}, the associated Dirac operators are
self-adjoint, have discrete spectrum and are Fredholm. 
They correspond to specific boundary conditions on the link, a co-dimension $2$ boundary.

These operators possess an inherent $2\pi$-periodicity in the fluxes
$2\pi \alpha_k$, which is manifest when using a singular magnetic
gauge potential (to be elaborated on below).  One is therefore incited
to study the spectral flow, i.e. the net number of eigenvalues
(counting multiplicity) crossing the spectral point $0$ from negative
to positive, when varying such an $\alpha_k$ from $0$ to $1$.  To
ensure that the spectral flow is well-defined, we choose to study the
problem on $\S^3$ rather than $\R^3$, where the corresponding
operators would not have discrete spectrum. 

For a knot $\gamma$ we find that the value of the spectral flow
depends on the {\it writhe} \cite{Fuller71} $\mathrm{Wr}(\gamma)$ of
the knot.  We take the writhe of a closed oriented curve $\gamma$ on
$\S^3$ to be given by the writhe of the image $\gamma_0$ of $\gamma$
through any stereographic projection onto $\R^3$.  The formula of the
writhe for the space curve $\gamma_0$ is:
\[
\mathrm{Wr}(\gamma_0)=\frac{1}{4\pi}\int_{\gamma_0}\int_{\gamma_0}
\left\langle\d \mathbf{r}_1\times \d
\mathbf{r}_2,\frac{\mathbf{r}_1-\mathbf{r}_2}{|\mathbf{r}_1-\mathbf{r}_2|^3}\right\rangle_{\R^3},
\]
which corresponds formally to the Gauss' linking number formula where
instead of integrating over two non-intersecting curves $\gamma_1$ and
$\gamma_2$, we integrate over $\gamma_1=\gamma_0$ and
$\gamma_2=\gamma_0$.  Alternatively we prove in
Appendix~\ref{sec:techn} that the writhe of $\gamma$ is equal to
$-I_\tau(\gamma)/(2\pi)$, where $I_\tau(\gamma)$ is the integral over
$\gamma$ of the relative torsion associated to the Darboux frame of a
Seifert surface for $\gamma$, that is, an oriented compact surface with boundary
$\gamma$ (which exists for all $\gamma$, see
\eqref{eq:eq_darboux_frame}).  In Appendix~\ref{sec:techn} we show,
moreover, that $I_\tau(\gamma)$ is independent of the choice of
Seifert surface, is conformally invariant, isotopy continuous, and
equal to the total torsion modulo $2\pi$.

We show, in particular, that by deforming any given knot the spectral
flow can attain all values.  This stands in strong contrast to the
smooth case, where the spectral flow is always trivial.
This follows from the celebrated Atiyah-Patodi-Singer
Index Theorem: in the smooth setting, the spectral flow of a closed loop
of Dirac operators on a closed manifold with boundary, defined by a smooth
variation of smooth magnetic fields (but otherwise fixed) is always
trivial.
Note however that
on a manifold with boundary that the spectral flow can attain any value,
for instance by choosing an appropriate loop of boundary conditions for a fixed Dirac operator.

\subsection{Overview}
The spectral flow was first introduced by Atiyah, Patodi and Singer in
\cites{ASP1,ASP3} to obtain an index Theorem for elliptic operators on
vector bundles over compact manifolds with boundary. In particular,
they proved the following.  Under appropriate assumptions, the
spectral flow of a first order smooth family $(D_t)_{t\in \S^1}$ of
self-adjoint elliptic operators on a closed, orientable,
odd-dimensional, compact manifold $X$ is equal to the index of the
elliptic operator $\partial_t+D_t$ on $\S^1 \times X$.  The general
formula for the spectral flow of smooth open paths $(D_t)_{t\in
  [0,1]}$ of self-adjoint elliptic operators involves the eta
invariant of the endpoints $\eta_{D_0}(0)$ and $\eta_{D_1}(0)$, and
the dimension of their respective kernels.  We refrain from giving an
overview of the (vast) literature; for more details on the index
theorems and the eta invariant we refer the reader to e.g.
\cites{AS68,ASP1,ASP2,ASP3,getzler,Melrose,Grubb05}.  Let us
nevertheless emphasize that the computation of the eta invariant is a
difficult problem.  It has been computed explicitly in some cases, see
for instance \cite{hitchin}*{Section 3.1}, the survey \cite{Goette12}
and the references therein.

Similarly, it is impossible to give a complete overview on the works
devoted to the spectral flow, and we only mention some works of
importance for the present discussion.  In
\cite{Philips_spectral_flow}, the author gives an equivalent
definition of the spectral flow using the functional calculus. It is
shown that the spectral flow of a continuous path $(D_t)_{t\in [0,1]}$
of bounded, self-adjoint, Fredholm operators on a separable Hilbert
space corresponds (at least in a small interval $[t_1,t_2]$) to the
difference
$$
	\Sf\left[(D_t)_{t\in [t_1,t_2]}\right] = 
        \dim \mathds{1}_{[0,a]}\big(D_{t_2}\big)-\dim\mathds{1}_{[0,a]}\big(D_{t_1}\big),
$$ 
where $a>0$ is a given spectral level for which
$(\mathds{1}_{[-a,a]}(D_t))_{t\in [t_1,t_2]}$ is continuous. The
spectral flow of the whole path is obtained by subdividing $[0,1]$
into a finite family of intervals $[t_{i},t_{i+1}]$ for which we can
apply the above formula and adding all the contributions.  The concept
of spectral flow was then extended to unbounded, self-adjoint,
Fredholm operators in e.g. \cite{BBLP05}.  The restriction of the
spectral flow to loops constitutes a homotopy invariant in what is
called the gap topology.  It gives rise to a group homomorphism from
the fundamental group of the set of unbounded, self-adjoint, Fredholm
operators to $\Z$.

The paths of Dirac operators that we study are not continuous in the
gap topology, because their corresponding eigenfunctions may collapse
(by concentration of its mass on a set of Lebesgue measure $0$).  As
long as this only happens away from the spectral level zero it does
not cause a problem for defining the spectral flow.  Indeed, when
looking at the functional calculus construction of the spectral flow,
one realizes that it suffices to follow the evolution of the
eigenvalues close to zero.  In \cite{Wahl08}, the author introduces a
new topology, intermediate between the topology of strong-resolvent
convergence and the gap topology, for which the spectral flow is still
well defined and is a homotopy invariant.  We will see that this
topology is robust enough to allow some vanishing of eigenfunctions
and will allow us to study the spectral flow of our Dirac operators
introduced in \cite{dirac_s3_paper1}.  Note that other topologies on
the set of unbounded, self-adjoint, Fredholm operators can be
considered, but these topologies are stronger than the gap topology.
We refer the reader to \cites{BBLP05, Nicolaescu07} and the references
therein for their description and their relevance from the point of
view of $\mathrm{K}$-theory.

\subsection{Main results}
We consider a magnetic field, i.e., an exact $2$-form, $\bbet$ on
$\S^3$ which can be decomposed as
$\bbet=\bbet_{\mathrm{r}}+\bbet_{\mathrm{sing}}$ in a regular smooth part
$\bbet_{\mathrm{r}}$ and a singular part $\bbet_{\mathrm{sing}}$
corresponding to a smooth, oriented $K$-link
$$
	\gamma=\bigcup_{k=1}^K\gamma_k \subset \S^3
$$ 
with fluxes $2\pi\alpha_k$ on each $\gamma_k$, $k=1,\ldots,K$ (the
connected components of $\gamma$).  We need to make a gauge choice,
i.e., write $\bbet=d\bal$ to associate a Dirac operator $\cD_{\bal}$ to
such a magnetic field. Writing
$\bal=\bal_{\mathrm{r}}+\bal_{\mathrm{sing}}$ we found it convenient
to choose the singular gauge $\bal_{\mathrm{sing}}$ supported on the
union of Seifert surfaces for the $\gamma_k$. This has the advantage
that the Dirac operator depends periodically on the link fluxes
$\ua=(\alpha_1,\ldots,\alpha_K)$. This construction is explained in
Section~\ref{sec:def_dirac}.  Observe however that the choice of gauge
does not affect the spectra, and thus does not change the spectral
flow.

In Section~\ref{sec:cont_prop} we study the continuity of the Dirac
operator $\cD_{\bal}$ as a function of the link fluxes in the topology
of \cite{Wahl08}, which we call the Wahl topology.  By the
$2\pi$-periodicity, the link fluxes form a $K$-torus, and it turns out
that the family of Dirac operators fails to be continuous only for
certain points where the flux $2\pi\alpha_k$ of a knot approaches
$2\pi$.  In that case several eigenfunctions collapse as they
concentrate around the knot $\gamma_k$ and the discontinuity occurs
when this happens at the spectral level zero.
Theorem~\ref{thm:strg_res_cont} states continuity of the family in the
strong resolvent sense.  Theorem~\ref{thm:bump_cont_bulk} deals with
the Wahl continuity in the ``bulk" case $\ua\in (0,1)^K$, while
Theorem~\ref{thm:bump_cont_bdry_A} treats the Wahl continuity in the
``boundary" case when one or more $\alpha_k$ is $0\sim 1$.

As $\alpha_k \to 1^{-}$, fixing the remaining magnetic field
$\bbet^{(k)}=d\bal^{(k)}$, the limiting eigenvalues of the vanishing
eigenfunctions can be determined
(Proposition~\ref{prop:calcul_spectre}) with the help of an effective
operator which acts on sections of a canonical line bundle on
$\gamma_k$.  The limiting eigenvalues depend on the writhe of
$\gamma_k$ and the flux of the remaining magnetic field $\bbet^{(k)}$
through a Seifert surface $S_k$ for $\gamma_k$, i.e.,
\begin{equation}\label{eq:PHI}
		\Phi_{\bbet}(\gamma_k)=\int_{S_k}\bbet^{(k)}=\int_{\gamma_k}\bal^{(k)}
                =\int_{\gamma_k}\bal_{\mathrm{r}}+2\pi\sum_{j=1\atop j
                  \ne k}^K\alpha_j\link(\gamma_k,\gamma_j),
\end{equation}
where $\link(\gamma_k,\gamma_j)$ denotes the linking number of $\gamma_k$
and $\gamma_j$. The characterization of the limiting eigenvalues allows us to
determine the exact position of the points of discontinuity on the torus, i.e.,
when a limiting eigenvalue is zero.  When removing the set where
continuity fails, we obtain a ``cut" $K$-torus.

The spectral flow then defines a group homomorphism from the
fundamental group of the cut $K$-torus to $\Z$, which is studied in
Section~\ref{sec:spectral_flow}.  Using the fact that we have
identified the critical cuts and shown continuity away from them, we
are able to calculate the spectral flow for loops encircling one of
these critical cuts through a careful analysis close to it, see
Theorem~\ref{thm:spec_flow_gen_l}.

Theorem~\ref{thm:arbitrary_spectral_flow} states how the spectral flow
changes under variation of the fluxes and deformations of the
link. This implies a series of results, in particular, we compute in
Corollary~\ref{cor:sf_link_unknots} the spectral flow of the loop
obtained by tuning the flux carried by an \emph{unknot} in the
presence of a fixed additional magnetic field. The result is stated in the theorem below. We do not know the
generalization to a topologically non-trivial knot.

\begin{theorem}
 Let $\bbet$ be a magnetic field as above and assume that $\gamma_1$
 is a realization of an unknot.  Consider the corresponding closed
 loop of Dirac operators $(\cD_{\bal})_{\alpha_1\in[0,1]}$ obtained by
 tuning the flux $2\pi\alpha_1$ of the knot $\gamma_1$ from $0$ to
 $2\pi$. The spectral flow of the loop is defined only if
 $\frac{1}{2}(1-\Wr[\gamma_1])
 -(2\pi)^{-1}\Phi_{\bbet}(\gamma_1)\notin \Z$, in which case it is
\[
	\Sf((\cD_{\bal})_{\alpha_1\in[0,1]})=\Big\lfloor\frac{1}{2}(1-\Wr[\gamma_1])-\frac1{2\pi}\Phi_{\bbet}(\gamma_1)\Big\rfloor.
\]
\end{theorem}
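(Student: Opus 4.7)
The plan is to concentrate the entire spectral flow of the loop at the single point $\alpha_1=1\sim 0$ where Wahl-continuity can fail, and then to read off its value from the limiting spectrum provided by Proposition~\ref{prop:calcul_spectre}. By Theorem~\ref{thm:bump_cont_bulk} the family $\alpha_1\mapsto\cD_{\bal}$ is Wahl-continuous on the open interval $(0,1)$, so no eigenvalue crossings in the interior can be missed, and the $2\pi$-periodicity closes the path into a loop whose only potential critical cut is at the identified endpoint. The hypothesis $\tfrac12(1-\Wr[\gamma_1])-(2\pi)^{-1}\Phi_{\bbet}(\gamma_1)\notin\Z$ will turn out to be precisely the requirement that $0$ is not a limiting eigenvalue at that cut, so that the loop stays inside the set on which the spectral flow of Section~\ref{sec:spectral_flow} is well defined.

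Next I invoke Proposition~\ref{prop:calcul_spectre}, which identifies the limits, as $\alpha_1\to 1^-$, of the collapsing eigenvalues with the spectrum of a self-adjoint effective operator $A_{\gamma_1}$ acting on sections of a canonical line bundle over $\gamma_1$. For an unknot a Seifert surface can be chosen to be a disk, which trivialises this line bundle; after trivialization $A_{\gamma_1}$ is unitarily equivalent to a first-order operator of the form $-i\partial_s+c(s)$ on $L^2(\S^1)$ parametrised by arc length $s$. Two contributions to the integrated shift must be computed: a geometric piece from the relative torsion of the Darboux frame of the Seifert disk, which by Appendix~\ref{sec:techn} equals $-2\pi\Wr(\gamma_1)$ and, combined with the spin-structure shift inherited from $\S^3$, contributes $\tfrac12(1-\Wr(\gamma_1))$; and a magnetic piece equal to the holonomy $-(2\pi)^{-1}\Phi_{\bbet}(\gamma_1)$ of $\bal^{(1)}$ around $\gamma_1$. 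The spectrum of $A_{\gamma_1}$ is therefore the arithmetic progression $\Z+c$ with $c=\tfrac12(1-\Wr[\gamma_1])-(2\pi)^{-1}\Phi_{\bbet}(\gamma_1)$.

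Finally, Theorem~\ref{thm:spec_flow_gen_l} translates the limiting spectrum into the spectral flow around the cut: each limiting eigenvalue $n+c$ is the end of a collapsing arc of eigenvalues of $\cD_{\bal}$ whose emergent counterpart on the other side of the cut differs from it by the shift dictated by the periodicity, and the signed number of these arcs that end up straddling $0$ is exactly $\lfloor c\rfloor$. Substituting the expression for $c$ yields the stated formula.

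The main obstacle is the precise computation of the two integer shifts inside $c$: the writhe contribution requires pairing the identification of $-I_\tau(\gamma_1)/(2\pi)$ with $\Wr(\gamma_1)$ from Appendix~\ref{sec:techn} with the spin boundary condition coming from parallel transport along the Darboux frame of the Seifert disk, and it is the triviality of that disk that makes the spectrum of $A_{\gamma_1}$ a translate of $\Z$ in closed form. For a topologically non-trivial knot the canonical line bundle is non-trivial, the spectrum of $A_{\gamma_1}$ is no longer an arithmetic progression, and the counting step cannot be carried out in closed form — which is why the theorem is restricted to unknots.
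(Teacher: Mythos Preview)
Your proposal has a genuine gap at the final step, and it misidentifies both the role of Theorem~\ref{thm:spec_flow_gen_l} and the reason the unknot hypothesis is needed.

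First, the counting claim is unsupported. Knowing that the collapsing eigenvalues have limits $\Z+c$ as $\alpha_1\to 1^-$ does not by itself determine the spectral flow of the loop: those arcs are only defined in a small neighborhood of the boundary (Theorem~\ref{thm:bump_cont_bdry_B}), and nothing tells you how many times each of them crossed $0$ on the rest of the interval $(0,1)$. There is also no ``emergent counterpart on the other side of the cut'': as $\alpha_1\to 0^+$ no eigenfunctions collapse (Theorem~\ref{thm:compactness}), so the picture of arcs straddling $0$ across the identification is not correct. You are also misapplying Theorem~\ref{thm:spec_flow_gen_l}: that result computes the spectral flow of a small loop in a \emph{two}-parameter torus encircling a critical point and yields $\pm 1$; it does not convert the limiting spectrum of the effective operator into the spectral flow of your one-parameter loop.

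Second, your explanation of why the theorem is restricted to unknots is wrong. By Proposition~\ref{prop:calcul_spectre} the spectrum of $\cT_{k,\bA}+\tau_{S_k}$ is an arithmetic progression for \emph{every} knot, trivial or not; the line-bundle argument is a red herring. The actual reason the unknot is special in the paper is that one has an explicit \emph{reference point}: for a geometric circle $\cC$ the kernel is trivial for all fluxes (Theorem~\ref{thm:no_zero_modes_for_circles}), so $\Sf=0$ there. The paper then obtains the formula as Corollary~\ref{cor:sf_link_unknots} by applying Theorem~\ref{thm:arbitrary_spectral_flow}, whose proof deforms $\gamma_1$ to a circle, adds an auxiliary linked circle to create a two-dimensional torus of fluxes, and uses Theorem~\ref{thm:spec_flow_gen_l} to count the $\pm 1$ jumps of the spectral flow each time the critical locus is crossed during the homotopy. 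The floor formula comes from counting those crossings, not from reading off the limiting spectrum at a single cut. For a non-trivial knot one lacks the circle reference point, which is why the general case remains open.
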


Throughout the paper we only deal with the difficult part $\bbet_{\sing}\neq 0$, and fix for simplicity $\bbet_{\mathrm{r}}=0$,
but we emphasize that all the results hold for a non-zero smooth part, the proofs become just a little longer.

In the special case of a Hopf $2$-link, that is, two fibers of the Hopf map $\S^3\to \S^2$,
we can give a very detailed picture.
There are two critical points $p_1,p_2$ 
corresponding to the fluxes
$(\alpha_1,\alpha_2)=(\tfrac{1}{2},0)$ and $(\alpha_1,\alpha_2)=(0,\tfrac{1}{2})$,
see Figure~\ref{fig:punctured_torus_intro} below.
The segment $\alpha_1+\alpha_2=\tfrac{3}{2}$ corresponds to a family of fluxes 
for which the corresponding Dirac operator
has a one-dimensional kernel (see Corollary~\ref{cor:dim_ker_hopf_link}), 
and the elements of the kernel vanish in the above sense when the 
fluxes approach $(1^-,\tfrac{1}{2}^+)$ or $(\tfrac{1}{2}^+,1^-)$ along the line 
$\alpha_1+\alpha_2=\tfrac{3}{2}$.
In the general case we only know the location of the critical points, 
but not where the kernel is non-trivial.

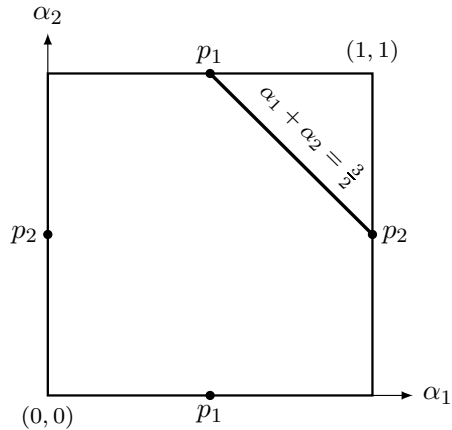
\begin{figure}[!!ht]
	\resizebox{0.5\textwidth}{!}{
	\begin{tikzpicture}
		\draw[thick] (0,0) rectangle (4,4);
		\node[below] at (0,0) {\scriptsize $(0,0)$};
		\node[above] at (4,4) {\scriptsize $(1,1)$};
		\draw[>=latex, ->] (4,0) -- (4.5,0);
		\node[thin, right, scale=0.9] at (4.5,0) {$\alpha_1$};
		\draw[>=latex, ->] (0,4) -- (0,4.5);
		\node[thin, above, scale=0.9] at (0,4.5) {$\alpha_2$};
		
		\node[below, scale = 0.9] at (2,0) {$p_1$};
		\node[above, scale = 0.9] at (2,4) {$p_1$};
		\node[left, scale = 0.9] at (0,2) {$p_2$};
		\node[right, scale = 0.9] at (4,2) {$p_2$};
		\draw [fill] (2,4) circle [radius = 0.05];
		\draw [fill] (4,2) circle [radius = 0.05];
		\draw [fill] (0,2) circle [radius = 0.05];
		\draw [fill] (2,0) circle [radius = 0.05];
		
		\draw[very thick] (2,4) -- (4,2)
		node[sloped, midway, above] {\scriptsize $\alpha_1+\alpha_2 = \tfrac{3}{2}$};
	\end{tikzpicture}
	}
	\caption{The punctured torus for the Hopf 2-link.}
	\label{fig:punctured_torus_intro}
\end{figure}

In \cite{dirac_s3_paper3}, we show that the Dirac
operators with magnetic links are limits (in the Wahl topology and
hence also in strong resolvent sense) of Dirac operators with
smooth magnetic fields. For the smooth approximation we do not have the
$2\pi$-periodicity as in the singular case. So for a single knot, the
spectral flow of the loop of Dirac operators corresponds, after
smoothing, to the spectral flow of an \textit{open} path of Dirac
operators. Nevertheless, the continuity in the Wahl topology allows us
to calculate the spectral flow for these open paths. In particular,
this leads to many new examples of smooth magnetic
fields for which Dirac operators have zero modes. Such examples play
an important role in the study of stability of matter in the presence
of magnetic fields. We refer the reader to 
\cites{Fefferman95,Fefferman96,FroLiebLoss86,LiebLossSol95,LiebLoss86,LossYau86} for more details.

\subsubsection*{\bf{Convention:}}
Throughout this paper, by a link $\gamma \subset \S^3$ we mean a \emph{smooth}, \emph{oriented} 
submanifold of $\S^3$ which is diffeomorphic to finitely many copies of $\S^1$. 
We will often write $\gamma = \cup_{k=1}^K \gamma_k$, where the $\gamma_k$ 
are (oriented) knots (the connected components of $\gamma$),
and we denote the $K$-tuple $(\gamma_1, \dots, \gamma_K)$ by $\ug$.

\subsubsection*{\bf{Acknowledgements:}}
We would like to thank M. Goffeng, G. Grubb and R. Nest for fruitful discussions and helpful comments.
We also thank the referee for various remarks and comments that helped us improving
the paper.

\section{Dirac operators with magnetic fields on links - a summary}\label{sec:def_dirac}
In this section we will recall the most important notions and results from \cite{dirac_s3_paper1},
where the definition of the Dirac operator with magnetic fields on links has been outlined in great detail.
All theorems in this section are given without proof, the interested reader will find them in
\cite{dirac_s3_paper1}. The singular magnetic fields are more easily understood when viewed
as $1$-currents, that is after we used the Hodge duality and the musical isomorphisms to transform
$2$-forms into vectors. Their magnetic potentials become then $2$-currents, 
through the operation which transforms $1$-forms into bivectors.

\smallskip

We see $\S^3$ as the set of unit vectors in $\C^2$. We endow it with its induced metric
and its Levi-Civita connection $\nabla$; its metric tensor is denoted by $g_3$.

\subsection{Magnetic links}
A \textit{magnetic link} $\bB$ is a link $\gamma = \cup_k \gamma_k \subset \S^3$ together with fluxes 
$2\pi\alpha_k$, $0 \leq \alpha_k < 1$.
$\bB$, which we view as a one-current, acts on smooth one-forms $\omega \in \Omega^1(\S^3)$ as
$$
	(\bB;\omega) :=\sum_k  2\pi\alpha_k \int_{\gamma_k}\omega,
$$
so that $(\partial \bB;\phi) = (\bB;\d\phi) = 0$.

To construct a magnetic gauge potential $\bA$ for $\bB$, we use Seifert's theorem \cites{Seifert35, FP30}, 
which states that for any (oriented) knot $\gamma_k \subset \S^3$
there exists a smooth, connected and oriented surface $S_k \subset \S^3$ such that 
$\partial S_k = \gamma_k$. 
The magnetic gauge potential $\bA$ is thus given by
$$
	\bA = 2\pi \sum_{k} \alpha_k [S_k].
$$
For any $\omega \in \Omega^1(\S^3)$, $\bA$ satisfies
$(\partial \bA;\omega) = (\bB; \omega)$ by Stokes' formula.
If $\bA'$ is another magnetic gauge potential for $\bB$, then $\bA$ and $\bA'$ 
should be related by a boundary term. 
We restrict ourselves to one component $\gamma_k$ of $\gamma$;
assuming that $S_k$ and $S_k'$ only intersect on $\gamma_k$,
then we have:
\begin{equation}\label{eq:a_gauge_transf}
	2\pi\alpha_k[S_k] - 2\pi\alpha_k[S_k'] = 2\pi\alpha_k\, \partial [V_k],
\end{equation}
where $[V_k]$ is the volume enclosed by $S_k$ and $S_k'$, seen as a $3$-current.

\subsection{$Spin^c$ spinor bundles on $\S^3$}
In order to be able to give the definition of the Dirac operator, it is necessary to introduce the 
underlying spinor structure on $\S^3$.

A $Spin^c$ spinor bundle $\Psi$ over a Riemannian 3-manifold $\cM$ is a two-dimensional complex vector bundle over 
$\cM$ together with an inner product $\cip{\cdot\,}{\cdot}$ and an isometry $\bsigma: \rT^*\cM \to \Psi^{(2)}$, 
called the Clifford map, where
$$
	\Psi^{(2)} := \{M \in \mathrm{End}(\Psi): M = M^*, \tr(M)=0\},
$$
where the inner product on $\Psi^{(2)}$ is given by $(A,B):=\tfrac{1}{2}\mathrm{Tr}\big(AB\big)$. 
A connection $\nabla^{\Psi}$ on $\Psi$ is a $Spin^c$ connection if for any vector field $X \in \Gamma(\rT\,\cM)$ it satisfies
\begin{enumerate}
	\item $X\cip{\xi}{\eta} = \cip{\nabla_X^{\Psi}\xi}{\eta} + \cip{\xi}{\nabla_X^{\Psi}\eta}$, 
	$\forall \xi,\eta \in \Gamma(\Psi)$.
	
	\item $[\nabla_X^{\Psi},\bsigma(\omega)] = \bsigma(\nabla_X \omega)$, $\forall \omega \in \Omega^1(\cM)$.
\end{enumerate}
Here, $\nabla_X$ is the Levi-Civita connection on $\cM$.

Consider a local frame $(\be_1,\be_2,\be_3)$ of $\rT^*\cM$.
Let $(\lambda_+,\lambda_-)$ be the local trivialization of $\Psi$
given by eigen-sections $\lambda_{\pm}\in\ker(\sigma(\be_3)\mp 1)$.
We require that $(\lambda_+,\lambda_-)$ both have unit length and a fixed relative phase
according to 
$$
	\omega(\be_1) + i\omega(\be_2) = \cip{\lambda_-}{\sigma(\omega)\lambda_+}, 
	\quad \forall \omega \in \Omega^1(\cM).
$$
The connection form of the $Spin^c$ connection $\nabla^{\Psi}$ in the trivialization $(\lambda_+,\lambda_-)$ is
\begin{align}\label{eq:con_form}
	M_{\lambda}(X) := \begin{pmatrix} \cip{\lambda_+}{\nabla_X^{\Psi} \lambda_+} & \cip{\lambda_+}{\nabla_X^{\Psi} \lambda_-} \\
	\cip{\lambda_-}{\nabla_X^{\Psi} \lambda_+} & \cip{\lambda_-}{\nabla_X^{\Psi} \lambda_-}\end{pmatrix},
\end{align}
where $X$ is any vector field on $\cM$.
We then have (as in \cite{MR1860416}*{Prop.~2.9})
\begin{multline}\label{eq:spin_basis}
	M_{\lambda}(X)\\
	= \frac{i}{2}\begin{pmatrix} \cip{\be_1}{\nabla_X \be_2} & -\cip{\be_3}{\nabla_X \be_2} - i\cip{\be_3}{\nabla_X \be_1} \\
	-\cip{\be_3}{\nabla_X \be_2} + i\cip{\be_3}{\nabla_X \be_1} & -\cip{\be_1}{\nabla_X \be_2}\end{pmatrix}\\
	- i\omega_{\lambda}(X)\mathrm{Id}_{\C^2},
\end{multline}
where $\omega_{\lambda}$ is the (local) real one form
\begin{equation}\label{def_omega_lambda}
	\omega_{\lambda}(X) := \frac{i}{2}\left(\cip{\lambda_+}{\nabla_X\lambda_+} 
	+ \cip{\lambda_-}{\nabla_X\lambda_-}\right).
\end{equation}

It is well known that, up to isomorphism, the trivial bundle
$\Psi = \S^3 \times \C^2$ is the unique $Spin^c$ spinor bundle on $\S^3$, 
and that we can choose a global orthonormal frame of $\rT^*\S^3$ ($\S^3$ is a Lie group).
By \cite{MR1860416}*{Prop.~2.11}, the connection form of the induced connection from the Levi-Civita connection on 
$\S^3$ is given by \eqref{eq:spin_basis} with $\omega_{\lambda}=0$.

\begin{rem}
	From now on, whenever $\nabla$ acts on a vector field we mean the Levi-Civita connection,
	whereas if $\nabla$ acts on a spinor, then we mean the trivial (induced) connection on the $Spin^c$ bundle on $\S^3$ 
	(corresponding to $\omega_\lambda=0$).
\end{rem}

\subsection{Dirac operators}
Having defined the spinor structure on $\S^3$, we need to discuss Seifert surfaces in greater detail, as they will 
be one of the main objects through which we define our Dirac operator. 
\begin{rem}
Let us emphasize that the definition given below (Theorem~\ref{thm:def_D_-}) corresponds 
to the convention $\sigma(-i\nabla+\boldsymbol{\alpha})$ for a smooth magnetic potential $\boldsymbol{\alpha}\in \Omega^1(\S^3)$, and \emph{not}
$\sigma(-i\nabla-\boldsymbol{\alpha})$.
\end{rem}

\subsubsection{Seifert frames and local coordinates}\label{sec:seif_fram_loc_coord}
If $S \subset \S^3$ is a Seifert surface for a knot $\gamma$, then the Seifert frame is given by the triple 
$(\bT, \bS, \bN)$, where $\bT$ is the tangent vector to $\partial S = \gamma$, $\bN$ is the normal of the oriented Seifert surface
and $\bS$ is the unit vector such that $(\bT, \bS, \bN)$ is positively oriented (thus along the knot it points towards the
inside of the Seifert surface). The Seifert frame coincides
with the Darboux frame (see \cite{Spivakvol4}*{Chapter~7})
and thus satisfies
\begin{equation}\label{eq:eq_darboux_frame}
	\nabla_{\bT}\begin{pmatrix} \bT\\ \bS\\ \bN\end{pmatrix}
	=\begin{pmatrix}0 & \kappa_g & \kappa_n \\ -\kappa_g & 0 & \tau_r\\ -\kappa_n & -\tau_r & 0 \end{pmatrix}
	\begin{pmatrix} \bT\\ \bS \\ \bN\end{pmatrix},
\end{equation}
where $\kappa_g, \kappa_n$ are the geodesic and normal curvatures and $\tau_r$ is the relative torsion.

\begin{rem}
	Throughout this paper we will drop the subscript $r$ for simplicity; for a knot $\gamma$ with Seifert surface $S$, 
	we will write $\tau_S$ for the corresponding relative torsion.
\end{rem}

We extend the Seifert frame -- which is \emph{a priori} only defined on $\gamma$ --
to a tubular neighborhood of the curve: ($\dist_{g_3}$ denotes the geodesic distance)
$$
	B_{\eps}[\gamma]:=\{\bp \in \S^3: \dist_{g_3}(\bp,\gamma)<\eps\}.
$$
The parameter $\eps>0$ is chosen such that the map
\begin{equation*}
	\exp: 
	\begin{array}{ccl}
		B_{\eps}[\sigma_0;\cN_{\gamma}] &\longrightarrow& B_{\eps}[\gamma]\\
		(\gamma(s),\bv(s)) &\mapsto& \exp_{\gamma(s)}(\bv(s))
	\end{array}
\end{equation*}
is a diffeomorphism, where $\cN_{\gamma}$ is the normal bundle to $\gamma$ in $\S^3$ and $\sigma_0$ its null-section.
We define the frame at a point $\bp = \exp_{\gamma(s)}(t_0\bv_0(s))$, $\norm{\bv_0}=1$, 
by parallel transporting $(\bT,\bS,\bN)$ along the 
geodesic\footnote{The geodesics on $\S^3$ are the great circles.}
\begin{align}\label{eq:geodesic_s3}
	t \in [0,t_0] \mapsto \exp_{\gamma(s)}(t\bv_0(s)) = \cos(t)\gamma(s) + \sin(t)\bv_0(s)\in\S^3\subset \C^2.
\end{align}
The geodesic distance $t_0$ is henceforth denoted by $\rho$.

To describe the self-adjoint extensions of the Dirac operators under study, it is necessary to describe the
behavior of spinors in their domains in the vicinity of the singular magnetic field. For that we need coordinates in such
a neighborhood. Two coordinates are intrinsic to the curve $\gamma$, namely the geodesic distance $\rho$ to it
and the arc length parameter $s$ of the projection. To complete the picture, we need to define a suitable angle function $\theta$ on 
$B_{\eps}[\gamma] \setminus \gamma$, something which can be done with the help of the 
Seifert frame; the angle $\theta$ is defined through
\begin{align}\label{eq:drho}
	\d \rho = \cos(\theta) \bS^{\flat} + \sin(\theta) \bN^{\flat},
\end{align}
giving us a set of orthogonal coordinates $(s,\rho,\theta)$ on $B_{\eps}[\gamma]$.
We recall that $\flat,\sharp$ denote the musical isomorphisms which transform vectors into
1-forms resp. 1-forms into vectors through the metric.
We thus obtain a map
\[
	\mathrm{exp}:(s,\theta,\rho)\mapsto \exp_{\gamma(s)}\big\{\rho(\cos(\theta)\bS+\sin(\theta)\bN) \big\}.
\]
Setting 
\begin{equation}\label{eq:def_h}
	h(\bp) := \cos(\rho) - \sin(\rho)(\kappa_g(s)\cos(\theta) +\kappa_n(s)\sin(\theta)),
\end{equation} 
we get that
\begin{align}\label{def:bG}
	(\bT,(\d\rho)^{\sharp},\bG) :&=(\bT,\exp_{*}(\partial_{\rho}), \sin(\rho)^{-1}\exp_{*}(\partial_{\theta}))\\
	&= (h^{-1}(\exp_{*}(\partial_{s}) - \tau_r \exp_{*}(\partial_{\theta})),\exp_{*}(\partial_{\rho}), \sin(\rho)^{-1}\exp_{*}(\partial_{\theta}))\nonumber
\end{align}
is an orthonormal basis of $\rT\, (B_{\eps}[\gamma] \setminus \gamma)$.
The pullback of the volume form is given by
\begin{equation}\label{eq:pullb_volform}
	\exp^{*}(\vol_{g_3}) = \exp^{*}(\bT^{\flat} \wedge \d\rho \wedge \bG^{\flat}) = h \sin(\rho)\,\d s \wedge \d\rho \wedge \d\theta.
\end{equation}

\subsubsection{Space of knots and Seifert surfaces}
Let $\sK$ be the set of smooth knots in $\S^3$
and $\sS$ the set of Seifert 
surfaces having boundary in $\sK$. 
Furthermore, in $\prod_{k=1}^K \sS$ we define
\begin{multline*}
	\sS^{(K)}\!:=\!\{(S_1,\dots,S_K)\!\in\!\textstyle{\prod_{k=1}^K \sS}\!:
	\quad \cup_{k=1}^K\partial S_k\mbox{ is a $K$-link }\&\\
	\partial S_i\textrm{ and } S_j \textrm{ are transverse in } \S^3, \, i \neq j\}.
\end{multline*}
Two smooth submanifolds $M, N$ of $\S^3$ are said to be transverse if
$$
	\rT_{\bp}M + \rT_{\bp}N = \rT_{\bp} \S^3, \quad \forall \bp \in M \cap N.
$$
Given any $K$-link $\gamma = \cup_{k=1}^K\gamma_k$
we can always find a $K$-tuple of Seifert surfaces 
$\uS \in \sS^{(K)}$ with
$\partial S_k = \gamma_k$
by using a transversality argument, see \cite{dirac_s3_paper1}*{Section~3.3.1}.

\subsubsection{Dirac operators}\label{sec:dirac_op_def}
We are now in the position to define the Dirac operator for a
link\footnote{Up to fixing a base point $\bp_0$, we identify $\gamma$ with its arc length parametrization
$\gamma: \R/\ell\Z \to \S^3$ with $\gamma(0) = \bp_0$.}
$\gamma = \bigcup_k \gamma_k$
together with a $K$-tuple $2\pi$-normalized fluxes $\ua \in \Tf^{K}$, where
$$
	\Tf = \{[0,1]: 0 \sim 1\}.
$$
We choose a $(\uS,\ua) \in \sS^{(K)}\times \Tf^K$ such that $\partial S_k = \gamma_k$
and set
$$
	\bA = \sum_k 2\pi\alpha_k\, [S_k], \quad 0 \leq \alpha_k <1.
$$
Setting 
$$
	\Omega_{\underline{S}} := \S^3\setminus \big(\cup_k S_k \big),
	\quad \wt{S}_k := S_k \cap \big(\cap_{k \neq k'}\complement_{\S^3}S_{k'}\big),
$$
the minimal operator is defined by
\begin{equation}\label{eq:def_dom_min_dirac_op}
\left\{
	\begin{array}{rcl}
	 \dom\big(\cD_{\bA}^{(\min)}\big) 
	&:=& \left\{\psi \in H^1(\Omega_{\uS})^2: \right.\\
	&&\quad\quad
	\left.\left.\psi\right|_{(\wt{S}_k)_+}=e^{-2i\pi\alpha_k}\left.\psi\right|_{(\wt{S}_k)_-}\in H^{1/2}(\wt{S}_k)^2, \forall k\right\},\\
	\cD_{\bA}^{(\min)}\psi&:=&\bsigma(-i\nabla)\psi|_{\Omega_{\uS}}\in L^2(\S^3)^2.
	\end{array}
\right.
\end{equation}

As usual $\cD_{\bA}^{(\max)}$ denotes $(\cD_{\bA}^{(\min)})^*$. In order to characterize 
the self-adjoint extensions of the operator $\cD_{\bA}^{(\min)}$ in the singular gauge $\bA$, 
we need to specify the behavior of the spinors in the neighborhood of each knot $\gamma_k$.
We define two smooth spinors $\xi_{\pm}$ through
\begin{align}\label{eq:def_xi}
	\bsigma(\bT^{\flat})\xi_{+} = \xi_{+}, \quad \bsigma(\bT^{\flat})\xi_{-} = -\xi_{-}.
\end{align}
\emph{A priori} only defined on the link $\gamma$, these spinors can be extended in a tubular neighborhood of the curve $\gamma$
via parallel transport along geodesics.
Furthermore, we fix their relative phase by requiring that 
\begin{align}\label{eq:rel_phase}
	\omega(\bS) + i \omega(\bN) = \langle \xi_-, \bsigma(\omega)\xi_+\rangle, \quad \forall \omega \in \Omega^1(B_{\eps}[\gamma]).
\end{align}

Let $\chi: \R \to \R_+$ be a smooth function with $\supp \chi \in [-1,1]$ and $\chi(x) = 1$ for $x \in [-2^{-1},2^{-1}]$. 
We then define the localization function at level $\delta>0$ (for $\delta$ small enough, see below) by
\begin{equation}\label{eq:chi_loc_curve}
	\chi_{\delta,\gamma}: 
	\begin{array}{ccl}
		B_{\delta}[\gamma] &\longrightarrow& \R_+\\
		\bp = \exp_{\gamma(s)}(\rho_{\gamma}\bv_0(s))
		&\mapsto& \chi\left(\rho_{\gamma}\delta^{-1}\right).
	\end{array}
\end{equation}
Let $\eps>0$ be small enough such that the tubular neighborhoods around each
$\gamma_k$ are mutually disjoint,
$$
	B_{\eps}[\gamma] = \bigcup_{k=1}^K B_{\eps}[\gamma_k].
$$
We recall that for all $k$, $\exp$ defines a diffeomorphism of $B_\eps[\sigma_0]\subset \cN_{\gamma_k}$ onto $B_\eps[\gamma_k]$.
We then choose $\delta>0$ as follows:
$$
	0 < \delta < \eps \min\left\{1, \left(\sup_{k,n}\norm{\kappa_k^{(n)}}_{L^{\infty}} 
	+ \sqrt{\eps + \sup_{k,n}\norm{\kappa_k^{(n)}}_{L^{\infty}}}\right)^{-1}\right\}.
$$

\begin{theorem}\label{thm:def_D_-}[Self-adjointness]
	Fix $K \in \N$ and let $\gamma = \bigcup_{k=1}^K \gamma_k \subset \S^3$ be a link.
	Pick $(\uS,\ua) \in \sS^{(K)} \times \Tf^K$ with $\partial S_k = \gamma_k$ and set
	$$
		\bA = \sum_{k=1}^K2\pi\alpha_k[S_k].
	$$
	We define $\cD_{\bA}$ by
	\begin{equation*}
		\left\{
		\begin{array}{lcl}
		\dom\big(\cD_{\bA}\big) 
		&:=&\big\{\psi \in \dom\big( (\cD_{\bA}^{(\min)})^{*}\big):\\
		&&\quad \langle \xi_{+},\chi_{\delta,\gamma_k}\psi \rangle \xi_{+} 
		\in \dom\big( \cD_{\bA}^{(\min)}\big), 1 \leq k \leq K\big\},\\
		\cD_{\bA}\psi &:=& -i\bsigma(\nabla)\psi|_{\Omega_{\uS}} \in  L^2(\S^3)^2.
		\end{array}
		\right.
	\end{equation*}
	The definition of $\cD_{\bA}$ is independent of the choice of $\chi_{\delta,\gamma}$ and the operator
	is self-adjoint.
\end{theorem}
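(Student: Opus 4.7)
The plan is to reduce self-adjointness to a local problem in a tubular neighborhood of each knot $\gamma_k$. Away from the knots the operator acts on $H^1$ spinors satisfying only the phase jump across the Seifert surfaces; a local unitary gauge transformation $\psi\mapsto e^{-i\alpha_k\theta}\psi$ in the sectors adjacent to each $\wt{S}_k$ converts the jump into smoothness, so that away from $\gamma$ the symbol of $\cD_{\bA}$ is standard elliptic and essential self-adjointness on smooth compactly supported spinors is routine. The non-trivial work therefore concentrates at the curve itself, and the task is to show that the additional condition $\langle\xi_+,\chi_{\delta,\gamma_k}\psi\rangle\xi_+\in\dom(\cD_{\bA}^{(\min)})$ pins down exactly one self-adjoint extension between $\cD_{\bA}^{(\min)}$ and $\cD_{\bA}^{(\max)}$.

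First I would pass to the orthonormal frame $(\bT,(\d\rho)^{\sharp},\bG)$ on $B_{\eps}[\gamma_k]\setminus\gamma_k$ and, using the spinor basis $(\xi_+,\xi_-)$ defined by \eqref{eq:def_xi}--\eqref{eq:rel_phase}, write any $\psi\in\dom(\cD_{\bA}^{(\max)})$ locally as $\psi=u_+\xi_++u_-\xi_-$. Combining \eqref{eq:spin_basis}, \eqref{def:bG}, the Darboux relations \eqref{eq:eq_darboux_frame} and the factor $h$ from \eqref{eq:def_h} yields an explicit formula for $\cD_{\bA}$ whose leading terms near $\gamma_k$ are the radial derivative and an angular piece of Aharonov--Bohm type acting on $(u_+,u_-)$. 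Expanding $u_\pm(s,\rho,\theta)=\sum_{m\in\Z}u_\pm^{(m)}(s,\rho)e^{im\theta}$ decouples the angular modes: for every mode with $m\notin\{0,-1\}$ the effective radial operator is limit point at $\rho=0$, giving essential self-adjointness by a Hardy-type estimate and the Weyl criterion.

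The surviving modes yield a family of radial AB Dirac operators with deficiency indices $(1,1)$, and asymptotic analysis shows that a spinor in $\dom(\cD_{\bA}^{(\max)})$ admits near $\gamma_k$ an expansion of the form
\[
\psi(s,\rho,\theta)=a_+(s)\rho^{-\alpha_k}\xi_++a_-(s)\rho^{\alpha_k-1}e^{-i\theta}\xi_-+\psi_{\mathrm{reg}},
\]
with $\psi_{\mathrm{reg}}\in\dom(\cD_{\bA}^{(\min)})$. The boundary condition in the definition of $\cD_{\bA}$ then forces the $\xi_+$-singular coefficient $a_+$ to generate an element of $\dom(\cD_{\bA}^{(\min)})$, which amounts to selecting a Lagrangian subspace for the boundary symplectic form
\[
\omega(\psi,\phi)=\sum_k\int_{\gamma_k}\bigl(\overline{a_+[\psi]}\,a_-[\phi]-\overline{a_-[\psi]}\,a_+[\phi]\bigr)\d s,
\]
obtained by computing $\langle\cD_{\bA}^{(\max)}\psi,\phi\rangle-\langle\psi,\cD_{\bA}^{(\max)}\phi\rangle$ with Green's formula on $\Omega_{\uS}\setminus B_{\eps}[\gamma]$ and letting $\eps\to 0$. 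Since the chosen Lagrangian agrees with its symplectic annihilator, self-adjointness follows.

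Independence of the cut-off is the easiest part: if $\chi,\chi'$ are two admissible choices, $(\chi_{\delta,\gamma_k}-\chi'_{\delta,\gamma_k})\psi$ is supported away from $\gamma_k$ and automatically lies in $\dom(\cD_{\bA}^{(\min)})$, so the two conditions are equivalent. The main obstacle in executing this plan is the coupling between the radial AB analysis and the $s$-dependence along the knot: one must verify that the formal leading asymptotics $\rho^{-\alpha_k}$ and $\rho^{\alpha_k-1}$ survive the curvature and torsion corrections from \eqref{eq:eq_darboux_frame} and the weight $h$, and that the boundary coefficients $a_\pm$ have enough regularity in $s$ (at the $H^{1/2}$ level, matching the Seifert trace condition in \eqref{eq:def_dom_min_dirac_op}) for the boundary pairing to close; this quantitative trace theorem along $\gamma_k$ is where the delicate part of the argument lives.
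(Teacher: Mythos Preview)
The paper does not prove this theorem; Section~\ref{sec:def_dirac} states explicitly that all results there are quoted without proof from the companion paper \cite{dirac_s3_paper1}. Your outline is nonetheless a sound strategy: reduce to a tubular neighbourhood, identify the defect space of $\cD_{\bA}^{(\min)}$ via the radial Aharonov--Bohm problem, and check that the stated condition singles out a Lagrangian for the boundary symplectic form.

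What the companion paper actually does, as one can read off from Section~\ref{sec:model_case} here, is different in execution. Rather than extracting leading powers and a boundary pairing, it introduces a flat model operator $\cD_{\T_\ell,\alpha}$ on $\T_\ell\times\R^2$ and writes each self-adjoint extension as a graph-orthogonal sum of the minimal domain with an explicit singular subspace built from modified Bessel functions, see \eqref{eq:form_domain_D_T}. The singular part carries a full Fourier series in $s$, each mode $e^{ijs}$ paired with $K_\alpha(\rho\langle j\rangle)$ rather than a bare power of $\rho$; this yields the uniform graph-norm estimates \eqref{eq:sing_contr}--\eqref{eq:est_C_alpha}. Lemma~\ref{lem:control_Ngr} then transfers the analysis from the model to the curved operator. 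Your asymptotic-coefficient route is morally equivalent, but the $H^{1/2}$ trace regularity of $a_\pm(s)$ along $\gamma_k$ that you correctly flag as the delicate step comes for free in the Bessel picture via the $\ell^2$ description of $\ul$.

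One concrete slip: your exponents are interchanged. From \eqref{eq:form_domain_D_T}, the singular behaviour admitted by the $(-)$ extension lives in the $\xi_-$ component and scales like $K_\alpha(\rho)\sim\rho^{-\alpha_k}$, while the forbidden $\xi_+$ part scales like $K_{1-\alpha}(\rho)\sim\rho^{\alpha_k-1}$; in your displayed expansion the two powers (and the $e^{-i\theta}$ factor) are attached to the wrong components. This does not derail the Lagrangian argument, but it matters later in the paper: the fact that the allowed singularity is $\rho^{-\alpha_k}$, sharpening as $\alpha_k\to 1^-$, is precisely what drives the collapse of eigenfunctions in Theorems~\ref{thm:bump_cont_bdry_A} and~\ref{thm:bump_cont_bdry_B}.
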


\begin{rem}\label{rem:d_ext_choice}
	The operator $\cD_{\bA}$ of the above theorem is written $\cD_{\bA}^{(-)}$ in \cite{dirac_s3_paper1}, where
	the exponent $(-)$ refers to the orientation of the elements of the domain on the link $\gamma$.
	\emph{A priori}, we can define other self-adjoint extensions by requiring that the elements of the domain
	satisfy instead
	\[
	\langle \xi_{-e_k},\chi_{\delta,\gamma_k}\psi \rangle \xi_{-e_k} 
		\in \dom\big( \cD_{\bA}^{(\min)}\big),
	\]
	for a given $\underline{e}\in\{+,-\}^K$. 
	As shown in \cite{dirac_s3_paper1}*{Remark 16}, each of these operators coincides
	with some $\cD_{\bA'}$ for another singular gauge $\bA'$ (depending on $\underline{e}$).
\end{rem}
\begin{center}
\textbf{From now on, we remove the superscript $(-)$ on $\cD_{\bA}^{(-)}$ for short.}
\end{center}

\section{Continuity of Dirac operators}
Fix $K \in \N$ and consider the map
\begin{equation}\label{eq:phi_no_top}
\bbd:\begin{array}{ccc}
		\sS^{(K)}\times \Tf^{K}&\to& \Sd(L^2(\S^3)^2)\\
		\big(\underline{S},\underline{\alpha} \big)&\mapsto& \cD_{\bA},
	\end{array}
\end{equation}
where $\Sd(L^2(\S^3)^2)$ denotes the set of self-adjoint operators on $L^2(\S^3)^2$ 
with discrete spectrum (that is: the spectrum of $D\in \Sd(L^2(\S^3)^2)$ 
is a discrete set in $\R$ and consists of eigenvalues with \emph{finite} multiplicities).

We now endow each space with a particular topology and
study the continuity of the resulting map, in order to define the spectral flow
of paths in this space.

Let us emphasize that, up to gauge transformations, the Dirac operators depend
only on the link and the fluxes. In particular the spectrum does not depend on the choice of 
the Seifert surfaces. But as we investigate the continuity properties of the family $\big(\cD_{\bA}\big)_{\uS,\ua}$,
we need to take the gauge into account.

\paragraph{\emph{Gauge transformations}}
Recall \eqref{eq:a_gauge_transf}: given $\gamma\in \sK$
with Seifert surfaces $S,S'\in \sS$ satisfying $S\cap S'=\gamma$, the two gauges $2\pi\alpha [S]$ and $2\pi\alpha[S']$
are related by the gauge transformation $e^{-2i\pi\alpha \mathds{1}_V}$, where $V$
is the volume enclosed by $S$ and $S'$, $\partial [V]=[S]-[S']``=-\nabla \mathds{1}_V"$ (the second homology group $H^2(\S^3)$ is trivial). We have 
\[
	\dom(\cD_{2\pi\alpha [S]})=e^{2i\pi\alpha \mathds{1}_V}\dom(\cD_{2\pi\alpha [S']})\ \&\ \cD_{2\pi\alpha [S]}=	e^{2i\pi\alpha \mathds{1}_V}\cD_{2\pi\alpha [S']}e^{-2i\pi\alpha \mathds{1}_V}.
\]
If $S\cap S'\neq\gamma$, there exists \cite{MR916076} a finite sequence of Seifert surfaces $S_0=S,S_1,\cdots,S_N=S'$
such that $S_j\cap S_{j+1}=\gamma$, the gauge transformation is then defined stepwise. This procedure carries over to magnetic links.

\subsection{Topology on $\sS^{(K)}\times \Tf^{K}$}
In order to establish a notion of convergence of Seifert surfaces, we need to introduce 
a means to compare the distance between oriented, compact sub-manifolds of $\C^2$.

Let $\cM_1,\cM_2$ be two real $d$-dimensional, oriented, compact, smooth sub-manifolds of $\C^2$.
Denote by $N_1,N_2$
their respective Gauss maps which take values in the oriented Grassmannian 
$\widetilde{\mathrm{Gr}}_{\R}(4-d,\C^2)\subset\bigwedge_{\R}^{4-d}\C^2=:\mathbb{E}_{\wedge}$.
By induction over $k\ge 0$, we define the extended $k$-derivatives of $N_i$ as smooth maps $L_kN_i:\cM_i\to \mathcal{L}_k(\C^2,\mathbb{E}_{\wedge})$.
Let $L_0N_i\equiv N_i$ and $L_1N_i(\bp)$ be the canonical extension of the differential $\d N_i$ to the whole space $\C^2$:
\begin{align*}
	L_1N_i(\bp): 
	\begin{array}{ccccl}
		\rT_{\bp}\mathcal{M}_i&\oplus&(\rT_{\bp}\mathcal{M}_i)^{\perp} &\longrightarrow& \textstyle{\bigwedge_{\R}^{4-d}\C^2}=\mathbb{E}_{\wedge}\\
		\bv&+&\bv_{\perp}&\longmapsto& \d N_i(\bp)\bv.
	\end{array}
\end{align*}
Hence $L_1N_i:\mathcal{M}_i\to \mathcal{L}(\C^2,\mathbb{E}_{\wedge})$ is smooth. Having defined $L_{k-1}N_i$,
we define $L_kN_i(\bp)$ as the canonical extension of $\d L_{k-1}N_i(\bp):\rT_{\bp}\mathcal{M}_i\to \mathcal{L}_{k-1}(\C^2,\mathbb{E}_{\wedge})$,
$L_kN_i(\bp)$ is then canonically identified with an element of $\mathcal{L}_k(\C^2,\mathbb{E}_{\wedge})$. 
For a $k$-linear map $M$ in this latter space, we consider the norm:
$$
	\norm{M}_k := \sup_{(\bv_1,\dots,\bv_k), \norm{\bv_j}_{\C^2}^2=1}\left\lVert M[\bv_1,\dots,\bv_k]\right\rVert_{\mathbb{E}_{\wedge}}.
$$
The distance between two points $\bp_1\in \cM_1, \bp_2 \in \cM_2$ is then defined as
$$
	\delta_{\cM_1\times\cM_2}(\bp_1,\bp_2)
	:= |\bp_1-\bp_2| + \sum_{k=0}^\infty 2^{-k}\min\left\{\norm{L_kN_1(\bp_1) - L_kN_2(\bp_2)}_k,1\right\}.
$$
and we set
\begin{align}\label{def:dist_subm}
	&\dist_d(\cM_1,\cM_2) := |\cH_d(\cM_1)-\cH_d(\cM_2)|\\
	&\quad +\max\left(\sup_{\bp_1 \in \cM_1}\inf_{\bp_2 \in \cM_2}\delta_{\cM_1\times\cM_2}(\bp_1,\bp_2),
	\sup_{\bp_2 \in \cM_2}\inf_{\bp_1 \in \cM_1}\delta_{\cM_1\times\cM_2}(\bp_1,\bp_2)\right),\nn
\end{align}
where $\cH_d$ is the $d$-dimensional Hausdorff measure.

The distance to compare different Seifert surfaces is now given by
\begin{align}\label{def:dist_seifert}
	\dist_{\sS}(S_1,S_2) := \dist_2(S_1,S_2) + \dist_1(\partial S_1, \partial S_2).
\end{align}
One important feature of this metric is that the convergence of one Seifert surface to another in $\dist_{\sS}$ ensures
the convergence of the coordinates on their respective boundary curves.

\begin{proposition}[Convergence of coordinates]\label{prop:conv_coord}
	Let $S^{(n)}$ be a sequence of Seifert surfaces converging to $S$. Then
	the geodesic coordinates $(s_n,\rho_n,\theta_n)$ for $S^{(n)}$ converge to the geodesic
	coordinates on $S$ in the $C^\infty$-norm, with $\tfrac{\ell}{\ell_n}s_n \to s$ on $\overline{B}_{\eps}[\partial S]$,
	 $\rho_n \to \rho$ and $\theta_n \to \theta$ on 
	$\overline{B}_{\eps}[\partial S] \setminus B_{\eps'}[\partial S]$ for $0<\eps'<\eps$.
	We also have $\rho_n\to \rho$ in the $C^0$-norm on $\overline{B}_{\eps}[\partial S]$.
\end{proposition}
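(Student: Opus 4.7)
The plan is to first extract $C^\infty$ convergence of the arc-length parametrizations and of the Darboux frames along the boundaries from $\dist_\sS(S^{(n)},S)\to 0$, and then transfer this to the coordinates by inverting the explicit exponential map on $\S^3$. The $\dist_1$-term in $\dist_\sS(S^{(n)},S)$ yields $|\ell_n-\ell|\to 0$ via the Hausdorff-measure comparison, and its $\delta$-part controls all $k$-differentials of the unit tangents, giving $C^\infty$ convergence of the arc-length parametrizations $\gamma_n$ to $\gamma$ after reparametrization by the factor $\ell_n/\ell$. Since the nearest-point projection $\bp\mapsto s_n(\bp)$ is smooth on the full tubular neighborhood, this pulls back to $(\ell/\ell_n)s_n\to s$ in $C^\infty$ on $\overline{B}_\eps[\gamma]$ once $n$ is large enough that $\overline{B}_\eps[\gamma]\subset B_{2\eps}[\gamma_n]$. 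Analogously the $\dist_2$-term controls the Gauss map (oriented normal plane in $\C^2$), giving $C^\infty$ convergence of the intrinsic normals $\bN_n\to\bN$ along $\gamma_n$, which are transported to $\gamma$ via the projection above. Combined with $\bT_n\to\bT$ and the relation $\bS_n=\bN_n\times\bT_n$ (the intrinsic cross product in $T\S^3$, consistent with the positive orientation of the Darboux frame), the whole Seifert frame converges in $C^\infty$ along $\gamma$.

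Next I would use the explicit geodesic formula \eqref{eq:geodesic_s3} to define
\[
	F_n : (s,\rho,\theta) \longmapsto \cos(\rho)\gamma_n(s) + \sin(\rho)\big(\cos(\theta)\bS_n(s) + \sin(\theta)\bN_n(s)\big),
\]
which is a diffeomorphism from $(\R/\ell_n\Z)\times(0,\eps)\times(\R/2\pi\Z)$ onto $B_\eps[\gamma_n]\setminus\gamma_n$ with inverse $\bp\mapsto(s_n(\bp),\rho_n(\bp),\theta_n(\bp))$. On any compact subset of $\{\rho\geq\eps'\}$ the Jacobian of $F$ is bounded below, so the $C^\infty$ convergence $F_n\to F$, combined with a uniform-in-$n$ inverse function theorem, gives $F_n^{-1}\to F^{-1}$ in $C^\infty$ on $\overline{B}_\eps[\gamma]\setminus B_{\eps'}[\gamma]$; this yields the claimed convergence of $\rho_n$ and $\theta_n$.

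For the $C^0$ convergence of $\rho_n$ on the full neighborhood I would simply use the identity $\rho_n(\bp)=\dist_{g_3}(\bp,\gamma_n)$ and the one-sided Lipschitz estimate
\[
	\big|\dist_{g_3}(\bp,\gamma_n)-\dist_{g_3}(\bp,\gamma)\big|\leq d_H(\gamma_n,\gamma),
\]
with the Hausdorff distance $d_H$ controlled by the positional part of $\dist_1(\gamma_n,\gamma)$.

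The main obstacle is the degeneracy of the coordinates $(\rho,\theta)$ at $\rho=0$: the Jacobian of $F_n$ vanishes there, so the inverse function theorem cannot be applied uniformly up to $\gamma$. This is precisely what forces the restriction to $\{\rho\geq\eps'\}$ for the $C^\infty$ convergence of $\rho_n$ and $\theta_n$, and reduces the full-neighborhood statement for $\rho_n$ to the Hausdorff-distance argument above. A secondary subtlety is that $\bN_n|_{\gamma_n}$ and $\bN|_\gamma$ live on different curves, so all comparisons of the frame and of the associated coordinates have to be interpreted through the nearest-point identification between $\gamma_n$ and $\gamma$, which is itself smooth and convergent for $n$ large by the arc-length step.
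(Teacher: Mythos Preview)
The paper does not actually prove this proposition: it is stated without proof in Section~3.1, immediately after the definition of $\dist_\sS$, and the surrounding compactness results are explicitly imported from the companion paper \cite{dirac_s3_paper1}. So there is no argument in the present paper to compare your proposal against.

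That said, your outline is a sound route to the statement and is presumably close to what the companion paper does: extract $C^\infty$ convergence of the arc-length parametrizations and Darboux frames from the two summands of $\dist_\sS$, then invert the explicit chart $F_n$ built from \eqref{eq:geodesic_s3}. Two points deserve a bit more care. First, the Gauss map of a $2$-surface in $\C^2\simeq\R^4$ lands in the Grassmannian of oriented $2$-planes, not in $\S^2$; to recover the unit normal $\bN_n\in T\S^3$ along $\gamma_n$ you must intersect that normal $2$-plane with $T_{\gamma_n(s)}\S^3$. This is smooth in the data, but should be spelled out. Second, your identification of $\bN_n|_{\gamma_n}$ with a section along $\gamma$ via nearest-point projection is the right device, but to conclude $C^\infty$ convergence of the composite you need that this projection (and its inverse) themselves converge in $C^\infty$, which is exactly what your arc-length step provides once $n$ is large enough that $\overline{B}_\eps[\gamma]$ sits inside the tubular neighborhood of $\gamma_n$. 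The degeneracy of the Jacobian at $\rho=0$ is handled correctly by restricting to $\{\rho\ge\eps'\}$ for the inverse-function argument and falling back on the Hausdorff estimate $|\rho_n-\rho|\le d_H(\gamma_n,\gamma)$ for the $C^0$ claim on the full neighborhood.
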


The above definition then allows us to define a natural 
metric on the cartesian product $\Pi_{k=1}^K\sS\times \Tf^K$; for 
$$
	(\uS,\ua), (\uS',\ua') \in \Pi_{k=1}^K\sS\times \Tf^K
$$ 
we set
$$
	\dist((\uS,\ua),(\uS',\ua'))
	:= \max_{1\leq k\leq K}\left[\dist_{\sS}(S_k,S_k') + \dist_{\Tf}(\alpha_k,\alpha_k')\right].
$$
Observe that $\sS^{(K)}$ is open in $\Pi_{k=1}^K\sS$.
The above metric then induces a topology on $\sS^{(K)} \times \Tf^K$, 
which we denote by $\sT_{\sS}^{(K)}$.

We can now restate an important result \cite[Theorem~22]{dirac_s3_paper1}, 
which concerns the convergence of Dirac operators and will be useful
throughout the rest of the paper. For a (singular) magnetic potential, $\norm{\cdot}_{\bA}$
denotes the graph norm of $\cD_{\bA}$. 

\smallskip

In the theorem below and elsewhere the arrow $\rightharpoonup$
denotes the \emph{weak} convergence in the corresponding Banach space.

\begin{theorem}[Compactness]\label{thm:compactness}
	Fix $K \in \N$. Given
	$(\uS^{(n)},\ua^{(n)})$ and $(\uS,\ua)$ in $\sS^{(K)} \times \Tf^K$,
	such that $(\uS^{(n)},\ua^{(n)}) \to (\uS,\ua)$ in the $\dist$-metric, set
	$$
		\bA^{(n)} = \sum_{k=1}^K 2\pi \alpha_k^{(n)} [S_k^{(n)}],
		\quad \bA = \sum_{k=1}^K 2\pi \alpha_k [S_k].
	$$
	Furthermore, let $(\psi^{(n)})_{n \in \N}$ be a sequence in $L^2(\S^3)^2$ such that:
	\begin{enumerate}
		\item $\psi^{(n)} \in \dom\big(\cD_{\bA^{(n)}}\big)$, $n \in \N$.
		
		\item $\big(\norm{\psi^{(n)}}_{\bA^{(n)}}\big)_{n \in \N}$ is uniformly bounded.
	\end{enumerate}
	Then, up to extraction of a subsequence, $\psi^{(n)} \rightharpoonup \psi \in \dom\big(\cD_{\bA}\big)$
	and
	$$
		\big(\psi^{(n)}, \cD_{\bA^{(n)}}\psi^{(n)}\big)
		\rightharpoonup \big(\psi,\cD_{\bA} \psi\big)
		\textrm{ in } L^2(\S^3)^2 \times L^2(\S^3)^2
	$$
	with
	$$
		\int \big|\cD_{\bA} \psi \big|^2
		\leq \liminf_{n\to\infty} \int \big|\cD_{\bA^{(n)}} \psi^{(n)} \big|^2.
	$$
	In fact, one has strong convergence for $(\psi^{(n)})_n$ 
	except in the cases when $\alpha_k^{(n)} \to 1^{-}$,
	where the loss of $L^2$-mass of $(\psi^{(n)})_n$ can occur only 
	through concentration onto such a knot
	$\gamma_k = \partial S_k$.
\end{theorem}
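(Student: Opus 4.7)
The plan is to extract a weakly convergent subsequence, identify the weak limit as lying in $\dom(\cD_{\bA})$ with image $\cD_{\bA}\psi$, then upgrade to strong convergence outside arbitrarily small tubular neighborhoods of $\gamma$, and finally to pinpoint the mechanism of possible $L^2$-loss near those knots with $\alpha_k^{(n)}\to 1^{-}$.

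The uniform graph-norm bound makes both $(\psi^{(n)})$ and $(\cD_{\bA^{(n)}}\psi^{(n)})$ bounded in $L^2(\S^3)^2$; along a subsequence one extracts weak limits $\psi^{(n)}\rightharpoonup\psi$ and $\cD_{\bA^{(n)}}\psi^{(n)}\rightharpoonup\phi$, and the $\liminf$-inequality follows from weak lower semicontinuity of the norm once $\phi=\cD_{\bA}\psi$ is established. To identify $\phi$: for any $\bp\in\Omega_{\uS}=\S^3\setminus\cup_kS_k$, the convergence $\dist_2(S_k^{(n)},S_k)\to 0$ ensures $\bp\in\Omega_{\uS^{(n)}}$ for $n$ large, so testing $\bsigma(-i\nabla)\psi^{(n)}=\cD_{\bA^{(n)}}\psi^{(n)}$ against smooth spinors compactly supported in $\Omega_{\uS}$ gives $\bsigma(-i\nabla)\psi=\phi$ distributionally on $\Omega_{\uS}$. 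The jump condition $\psi|_{(\wt{S}_k)_+}=e^{-2i\pi\alpha_k}\psi|_{(\wt{S}_k)_-}$ transfers via the trace theorem and Proposition~\ref{prop:conv_coord}, which guarantees $C^\infty$-convergence of the local coordinate systems on each Seifert surface; the self-adjoint extension condition near the knots is preserved by the locally uniform convergence of the Seifert frames and of the cut-offs $\chi_{\delta,\gamma_k^{(n)}}$.

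For strong convergence away from the link: on any compact set $K$ disjoint from $\gamma$, the magnetic flux is supported on $\gamma$, so after a smooth gauge transformation eliminating the phase jumps across $S_k^{(n)}\cap K$ (which converges smoothly by Proposition~\ref{prop:conv_coord}), the spinor $\psi^{(n)}$ satisfies the free Dirac equation on $K$ with uniformly $L^2$-bounded right-hand side. Standard elliptic estimates then yield a uniform $H^1(K)$-bound, and Rellich--Kondrachov gives strong $L^2(K)$-convergence. Hence any loss of $L^2$-mass must concentrate inside $\cup_k B_{\eps}[\gamma_k]$ for arbitrarily small $\eps$. Fix a single knot $\gamma_k$, work in the Seifert-frame coordinates $(s_n,\rho_n,\theta_n)$ of $\gamma_k^{(n)}$ (which converge smoothly by Proposition~\ref{prop:conv_coord}), and decompose $\psi^{(n)}=\psi_+^{(n)}\xi_+^{(n)}+\psi_-^{(n)}\xi_-^{(n)}$ in the corresponding $\xi_\pm^{(n)}$-basis. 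The self-adjoint-extension condition of Theorem~\ref{thm:def_D_-} places $\chi_{\delta,\gamma_k^{(n)}}\psi_+^{(n)}\xi_+^{(n)}$ uniformly in the $H^1$-domain of $\cD_{\bA^{(n)}}^{(\min)}$, giving strong $L^2$-convergence of this piece; expanding $\psi_-^{(n)}$ in the angular Fourier variable $\theta_n$, every mode with $|m|\geq 1$ enjoys an additional angular-derivative estimate controlled by the graph norm, hence strong $L^2$-convergence by Rellich. Only the zero mode $m=0$ remains.

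The main obstacle is the analysis of this zero mode: its leading behavior near the knot is of order $\rho^{-\alpha_k^{(n)}}$, so its local $L^2$-mass on $\{\rho<\delta\}$ is bounded by a constant times $\delta^{2(1-\alpha_k^{(n)})}/(1-\alpha_k^{(n)})$. This estimate is uniform in $n$ as long as $(\alpha_k^{(n)})$ stays bounded away from $1$, in which case the zero mode also converges strongly and we recover full strong convergence. Only when some $\alpha_k^{(n)}\to 1^{-}$ does the bound degenerate logarithmically and the zero mode develop a concentration on $\gamma_k$; using the explicit local form of this mode in terms of the effective operator on the canonical line bundle over $\gamma_k$ from \cite{dirac_s3_paper1}, one verifies that any $L^2$-mass lost in the weak limit is accounted for precisely by concentration of this mode onto the corresponding knot, as asserted.
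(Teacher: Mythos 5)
The paper does not contain its own proof of this theorem: it is recalled from \cite{dirac_s3_paper1}*{Theorem~17}, and only the machinery that underlies it (the model operator $\cD_{\T_\ell,\alpha}^{(\pm)}$ and its domain decomposition \eqref{eq:form_domain_D_T}, the gauge map $E_k$ of Proposition~\ref{prop:sa_link}, the Lichnerowicz identity \eqref{eq:dirac_dirichlet}, the graph-norm comparison of Lemma~\ref{lem:control_Ngr}, and the estimate $C_\alpha+C_{1-\alpha}=\Theta((1-\alpha)^{-1})$ of \eqref{eq:est_C_alpha}) is developed here in Section~\ref{sec:bump_cont_proofs}. Your argument is built from exactly this toolkit and is, in substance, the same proof: weak extraction; identification of the distributional limit on $\Omega_{\uS}$; gauge removal of the jumps plus elliptic regularity and Rellich to get strong convergence away from $\gamma$; localization near each $\gamma_k$ and decomposition into the $\xi_+$ piece (in the minimal domain, hence uniformly $H^1$) and the $\xi_-$ piece (minimal part plus the singular $K_\alpha(\rho\langle j\rangle)e^{i\alpha\theta}$ mode); and finally the observation that only the singular mode can lose mass, and only when $\alpha_k^{(n)}\to1^-$. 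This matches the approach the cited paper evidently takes.

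A few points of your sketch are imprecise and would need tightening in a careful write-up, though none is a genuine gap in the idea. First, the claim that "every mode with $|m|\geq1$ enjoys an additional angular-derivative estimate" is not quite the clean statement: the operative fact is the Lichnerowicz identity $\int|\cD_{\T_\ell,\alpha}^{(\min)}f_0|^2=\int|(\wt\nabla f_0)_{\theta\neq0}|^2$, which gives a uniform $H^1$ bound on the \emph{entire} minimal-domain part $f_0$ (including its $m=0$ radial component); Rellich on the bounded tube then does the rest without needing a mode-by-mode Hardy argument, which in any case would degenerate for the mode $m=-1$ as $\alpha\to1^-$. Second, the assertion that the weak limit $\psi$ lies in $\dom(\cD_{\bA})$ (not merely $\dom(\cD_{\bA}^{(\max)})$) needs the transfer of the extension condition $\langle\xi_+,\chi_{\delta,\gamma_k}\psi\rangle\xi_+\in\dom(\cD_{\bA}^{(\min)})$ and of the jump conditions through the moving coordinates; you wave at Proposition~\ref{prop:conv_coord}, and this is the right tool, but the pullback to a fixed coordinate chart and passage to the limit of the traces should be spelled out. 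Third, the quantitative control on the singular mode is really \eqref{eq:sing_contr}--\eqref{eq:sing_psi_la}: the graph norm forces $\norm{\ul^{(n)}}_{\ell^2}^2\lesssim(1-\alpha_k^{(n)})$, while the $L^2$ mass of the singular mode is $C_{\alpha_k^{(n)}}\sum_j|\lambda_j^{(n)}|^2/(1+j^2)$ with $C_\alpha\sim(1-\alpha)^{-1}$; these cancel, so the singular mass is $O(1)$ but need not vanish, and because $K_\alpha(\rho)\sim\rho^{-\alpha}$ its $L^2$ mass concentrates at $\rho=0$ as $\alpha\to1^-$. Your $\delta^{2(1-\alpha)}/(1-\alpha)$ estimate is the right heuristic and is consistent with this, but it only describes one Fourier coefficient, so the role of $\norm{\ul^{(n)}}_{\ell^2}$ should be made explicit.
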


\subsection{Topology on $\Sd$}\label{sec:sdisc_top}
The usual topology on $\Sd(L^2(\S^3)^2)$, the gap topology, turns out to be too restrictive
for our situation. We recall that the gap topology is the metric induced 
by the embedding of the graphs $\cG_{\cD}$ (viewed as closed vector spaces in $L^2(\S^3)^2\times L^2(\S^3)^2$)
in the set $(\cB(L^2(\S^3)^2\times L^2(\S^3)^2),\norm{\,\cdot\,}_{\cB})$
through the orthogonal projection $\Pi_{\cG_{\cD}}$ onto $\cG_{\cD}$:
$$
	\dist_{\cG_{\cD}}(\cD_1,\cD_2):=\norm{\Pi_{\cG_{D_1}}-\Pi_{\cG_{D_2}}}_{\cB}.
$$
We will instead use topologies introduced in \cite{Wahl08}.

By a \textit{bump function centered at $\lambda\in\R$} we mean a function $\phi\in \sD(\R;\R_+)$ 
such that its translation $\phi(x+\lambda)$ is even with compact support 
$[-a,a]$ and $\phi'(x+\lambda)>0$ for $x\in (-a,0)$. By convention, if 
the center $\lambda$ is not given explicitly, then it is $0$.

For a bump function $\phi\in\sD(\R;\R_+)$ centered around $0$,
we define $\sT_{\phi} \subset 2^{\Sd}$
as the weakest topology for which the maps
\begin{align}\label{eq:def_t_phi}
	\Sd(L^2(\S^3)^2) &\to L^2(\S^3)^2, \cD \mapsto (\cD+i)^{-1}\psi\nn\\
	\Sd(L^2(\S^3)^2) &\to L^2(\S^3)^2, \cD \mapsto (\cD-i)^{-1}\psi\\
	\Sd(L^2(\S^3)^2) &\to \cB(L^2(\S^3)^2), \cD \mapsto \phi(\cD)\nn
\end{align}
are continuous for all $\psi \in L^2(\S^3)^2$. 

If $\wt{\phi}$ is another bump function with $\supp\wt{\phi}\subset (\supp\phi)^{\circ}=(-a,a)$,
then we have the inclusion $\sT_{\wt{\phi}}\subset \sT_{\phi}$. This comes from the factorization
$\wt{\phi}(x)=f\circ \phi(x)$ where $f$ is a continuous function. More precisely, we have
$f(y):=\wt{\phi}(x_\phi(y))$ where $x_\phi$ is the inverse function of $\phi$ that satisfies $x_\phi(\phi(x))=x$ for $x\in[-a,0]$. 
The inclusion follows from functional calculus.

Consider a bump function $\phi_1$ with $\supp\,\phi_1=[-1,1]$, and the family $\phi_n(x):=\phi_1(nx)$, $n\ge 1$,
$(\sT_{\phi_n})_n$ is a decreasing sequence of subsets of $2^{\Sd}$.

For our purpose we define the notion of bump-continuity.

\begin{definition}[Bump-continuity]\label{def:bump_continuity}
	Let $\cL$ be a topological space and let $c: \cL \to \Sd$
	be a map which is continuous in the strong resolvent sense.
	
	For $x\in\cL$, we say that $c$ is bump-continuous at $x$ if $c$ is locally $\sT_{\phi}$-continuous
	at $x$ for some bump function $\phi$, that is if there exist
	an open neighborhood $U\subset \cL$ containing $x$ and such that $c_{|_{U}}$ is $\sT_{\phi}$-continuous.
	
	We say that $c$ is bump-continuous if it is bump-continuous at all points in $\cL$.
\end{definition}

\begin{rem}
 Using the the Heine-Borel property, we can choose the bump function uniformly in $x$ when $\cL$ is compact. 
\end{rem}

\begin{rem}\label{rem:Wahl_top}
 In \cite{Wahl08}, the author defines a topology $\sT_W$ on the set $\mathrm{SF}(\cH)$ of Fredholm self-adjoint operators on a Hilbert space $\cH$. 
 We refer the reader to \cite{Wahl08} for the construction of $\sT_W$ (it uses the inductive limit). 
 
 Let us simply emphasize that for all bump functions $\phi$, the injection map $(\Sd,\sT_{\phi})\to (\mathrm{SF}(\cH),\sT_{W})$ is continuous.
 We can also define the $\sT_{\phi}$-topology directly on $\mathrm{SF}(\cH)$ and $(\Sd,\sT_{\phi})$ is then its induced topology.
 
 For short, we will write $(\mathrm{SF},\sT_{W})$ instead of $(\mathrm{SF}(L^2(\S^3)^2),\sT_{W})$.
\end{rem}

\subsection{Effective operators on the knots}\label{sec:effective_operator}
In order to study the continuity of \eqref{eq:phi_no_top}, we
will make use of some induced operators on the knots $\gamma_k$.

Fix $(\uS,\ua)\in \sS^{(K)}\times \Tf^{K}$.
Recall that in the vicinity of a knot $\gamma_k=\partial S_k$,
the other Seifert surfaces may cut the tubular neighborhood $B_{\delta}[\gamma_k]$.
For all $k'\neq k$, either 
$S_{k'}$ does not intersect $\gamma_k$,
or the intersection is reduced to one point,
or there are $M_{k,k'}$ points of intersection and $B_{\delta}[\gamma_k]\cap \complement S_{k'}$ is split
into $M_{k,k'}$ sections $R_{m}^{(k')}$, $0\le m\le M_{k,k'}-1$. 
Going along $\gamma_k$, we then pass through the other Seifert surfaces at the points of $\gamma_k$ 
with parameters
$$
	0\le s_0<\cdots <s_{J_k-1}<\ell_k,
$$
which then induces a phase jump $e^{ib_{j}}$ across
the points $s_j$. We have $b_{j}=\sum_{k'\neq k}b_{j,k'}$ where $e^{ib_{j,k'}}$ is the phase jump
due to $S_{k'}$ at $s_j$, with $b_{j,k'}=0$ if there is no intersection, else $b_{j,k'}=\pm 2\pi\alpha_{k'}$.
We use the convention $s_{J_k+1}=s_1+\ell_k$ and $s_0 = s_{J_k}-\ell_k$.

On $\R/(\ell_k\Z)$, $D_{k,\bA}$ denotes the self-adjoint extension of $-i\tfrac{\d}{\d s}$ 
whose domain is $H^1$ with the phase jump conditions across the $s_j$'s.
Through the arclength parametrization, we see $D_{k,\bA}$ as
a self-adjoint operator on $L^2(\gamma_k)$.

The $Spin^c$-spinor bundle $\Psi$ restricted to $\gamma_k$ is canonically split into
two different complex line bundles $L_{k}^{(\pm)}$, corresponding to the two different lines $\C\xi_{\pm}$.
We write $\pxi{\pm}$ the (pointwise) projection onto these lines,
$$
	\pxi{\pm}:= \cip{\xi_{\pm}}{\,\cdot\,}\xi_{\pm}.
$$
The canonical $Spin^c$-connection induces the elliptic operators 
$\cT_{k}^{(\pm)}$ on $L_{k}^{(\pm)}$
defined by 
$$
	\cT_{k}^{(\pm)}:=-i\pxi{\pm}\sigma(\bT^{\flat})\nabla_{\bT} \pxi{\pm}.
$$
Taking into account the phase jumps due
to the other $S_{k'}$, we write $\cT_{k,\bA}^{(\pm)}$ for the self-adjoint operator on 
the set of $L^2$-sections of $L_{k}^{(\pm)}$
with domain
$$
	\dom(\cT_{k,\bA}^{(\pm)}):=\big\{ f_{\pm}\xi_{\pm}: f_{\pm}\in\dom(D_{k,\bA})\big\}.
$$
Using the sections $(\xi_+,\xi_-)$ to trivialize $L_{k}^{(\pm)}$, the operators $\cT_{k,\bA}^{(\pm)}$ act
as follows on $L^2(\R/(\ell_k\Z))\simeq L_{k}^{(\pm)}$:
$$
	\cT_{k,\bA}^{(\pm)}\big(f_{\pm}\xi_{\pm}\big)
	=\pm \big((D_{k,\bA}f_{\pm})-i\cip{\xi_{\pm}}{\nabla_{\bT} \xi_{\pm}} f_{\pm} \big)\xi_{\pm}.
$$
Note that changing the Seifert surface $S_k$ for $\gamma_k$ to $S_k'$ induces a gauge transformation
on $\cT_{k,\bA}^{(\pm)}$ seen as operators on $L^2(\R/(\ell_k\Z))$, 
leaving the spectrum of the operator invariant. We now focus on $\cT_{k,\bA}:=\cT_{k,\bA}^{(-)}$.
The operator $\cT_{k,\bA}+\tau_{S_k}$, where $\tau_{S_k}$ denotes the relative torsion of $\gamma_k$ with respect to $S_k$ 
(see \eqref{eq:eq_darboux_frame}), turns out to be an important effective operator in our study. We now calculate its spectrum.

\begin{proposition}[Spectrum of the effective operator]\label{prop:calcul_spectre}
	Let $\ua \in (0,1)^{K_0} \times \{0\}^{K_1}$, and $\uS\in \sS^{(K)}$ with $\gamma_k=\partial S_k$ and
	$\bA:=2\pi\sum_{k=1}^{K_0}\alpha_k[S_k]$.
	Then for $K_0+1 \leq k \leq K$ we have
	\begin{equation*}
		\spec\left(\cT_{k,\bA}+\tau_{S_k}\right)
		= \Bigg\{\frac{1}{\ell_{k}}\Big(2n\pi+\pi(1-\Wr(\gamma_k))
		-\Phi_{\bB}(\gamma_{k})\Big), \quad n \in\Z \Bigg\},
	\end{equation*}
	where $\Wr(\gamma_k)$ denotes the writhe of $\gamma_k$ (see Section~\ref{sec:techn})
	and $\Phi_{\bB}(\gamma_{k})$ denotes the flux of $\bB:=\partial \bA$ through $S_k$
	(or the circulation of $\bA$ along $\gamma_k$):
	\[
	\Phi_{\bB}(\gamma_{k})=2\pi\sum_{k_0=1}^{K_0}\alpha_{k_0}\link(\gamma_k,\gamma_{k_0}).
	\]
	Furthermore, each eigenvalue has multiplicity one.
\end{proposition}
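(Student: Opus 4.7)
The plan is to pass from the abstract operator $\cT_{\gamma_k,\bA}+\tau_{S_k}$ to a concrete first-order scalar operator on the circle $\gamma_k\simeq\R/(\ell_k\Z)$ and diagonalize it by a gauge transformation. The spectrum will come from three independent pieces of monodromy: the phase jumps due to intersections of $\gamma_k$ with the other Seifert surfaces, the Darboux torsion integral, and the non-trivial spin structure along the knot.

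First I would make the operator explicit in the $\xi_-$-trivialization. With the identification $(\be_1,\be_2,\be_3)=(\bS,\bN,\bT)$ forced by \eqref{eq:def_xi} together with the relative phase condition \eqref{eq:rel_phase}, formula \eqref{eq:spin_basis} (with $\omega_\lambda=0$) applied to $X=\bT$ and the Darboux equations \eqref{eq:eq_darboux_frame} yield $\cip{\xi_-}{\nabla_\bT\xi_-}=\tfrac{i}{2}\tau_{S_k}$. Plugging this into the formula for $\cT_{\gamma_k,\bA}^{(-)}$ stated just before the proposition, $\cT_{\gamma_k,\bA}+\tau_{S_k}$ acts on $f\in\dom(D_{k,\bA})$, seen as a scalar function on $\gamma_k$, as $-D_{k,\bA}+\tfrac{1}{2}\tau_{S_k}$. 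The jump structure is read off from \eqref{eq:def_dom_min_dirac_op}: at each transverse crossing of $\gamma_k$ through some $S_{k_0}$ ($k_0\ne k$) the function picks up a phase $e^{\mp 2i\pi\alpha_{k_0}}$, and summing the signed crossings with $S_{k_0}$ (equal to $\link(\gamma_k,\gamma_{k_0})$) together with \eqref{eq:PHI} gives the total jump monodromy $e^{-i\Phi_{\bB}(\gamma_k)}$.

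I would then remove the smooth potential $\tfrac{1}{2}\tau_{S_k}$ by the pointwise unitary $f(s)\mapsto e^{(i/2)\int_0^s\tau_{S_k}(u)\,du}\phi(s)$, which conjugates $-D_{k,\bA}+\tfrac{1}{2}\tau_{S_k}$ to $-D_{k,\bA}$ acting on $\phi$. Since $\int_{\gamma_k}\tau_{S_k}=-2\pi\Wr(\gamma_k)$ by Appendix~\ref{sec:techn}, this multiplier is not single-valued on $\gamma_k$, contributing an additional factor $e^{i\pi\Wr(\gamma_k)}$ to the boundary condition for $\phi$. Finally, the spinors $\xi_\pm$ are the spin lift of the Darboux frame, and as $s$ traverses $\gamma_k$ that frame traces a loop in $\SO(3)$ whose class in $\pi_1(\SO(3))=\Z/2\Z$ is the non-trivial one (since $\gamma_k$ is a simple closed curve on the oriented surface $S_k$, its tangent makes one full turn about $\bN$, generating $\pi_1(\SO(3))$); hence the lift to $\mathrm{Spin}(3)=\SU(2)$ is anti-periodic, forcing an extra $e^{i\pi}$ in the monodromy.

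Assembling the three pieces, eigenfunctions $\phi\propto e^{-i\mu s}$ on each regular interval must satisfy the global matching condition
\[
	e^{-i\mu\ell_k}\,e^{-i\Phi_{\bB}(\gamma_k)} = e^{i\pi}\,e^{i\pi\Wr(\gamma_k)},
\]
equivalent to $\mu\ell_k\equiv\pi(1-\Wr(\gamma_k))-\Phi_{\bB}(\gamma_k)\pmod{2\pi\Z}$, which is exactly the claimed spectrum; simplicity of each eigenvalue follows because $-id/ds$ on the circle has simple spectrum. The main obstacle is the spin-structure step: one must identify the $\Z/2\Z$-contribution of the Darboux-frame loop precisely, since this is what produces the ``$\pi$'' in $\pi(1-\Wr(\gamma_k))$ and distinguishes the effective operator from a naive periodic model on the circle.
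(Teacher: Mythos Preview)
Your strategy is the same as the paper's --- reduce to a first-order ODE on $\R/\ell_k\Z$ and read off the spectrum from the total monodromy --- but there is a genuine inconsistency in how you account for the $\pi$ contribution. You first compute $\langle\xi_-,\nabla_{\bT}\xi_-\rangle=\tfrac{i}{2}\tau_{S_k}$ by invoking \eqref{eq:spin_basis} ``with $\omega_\lambda=0$'', and then \emph{separately} assert that $\xi_\pm$ are anti-periodic. These two claims belong to two different trivializations of $L_k^{(-)}$. The paper's $\xi_\pm$ are declared to be \emph{smooth} sections on $B_\eps[\gamma_k]$ (Section~\ref{sec:dirac_op_def}), hence periodic along $\gamma_k$; for that choice \eqref{eq:spin_basis} gives $\langle\xi_-,\nabla_{\bT}\xi_-\rangle=\tfrac{i}{2}\tau_{S_k}-i\omega_\xi(\bT)$, and the $\pi$ enters through $\int_{\gamma_k}\omega_\xi\equiv\pi\pmod{2\pi}$, which is precisely what Appendix~\ref{app:comp_omega_xi} establishes. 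Conversely, an $\SU(2)$-valued spin lift of the Darboux frame does have $\omega_\lambda=0$ but is anti-periodic; it is not the paper's $\xi_\pm$, and using it would force you to redefine the domain of $D_{k,\bA}$ with an extra sign. Your two slips cancel and your final monodromy equation is correct, but the argument as written conflates the two frames: pick one and stay with it.

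On the $\Z/2\Z$ step you rightly flag as the main obstacle: the heuristic ``the tangent makes one full turn about $\bN$'' is not a proof once $S_k$ has positive genus. The paper proves non-triviality of the Darboux-frame loop in $\pi_1(\SO(3))$ in Lemma~\ref{lem:loop_not_trivial} via the Seifert algorithm and the parity of (Seifert circles $+$ crossings). A route closer to your heuristic is Poincar\'e--Hopf: the obstruction to extending the tangent frame $(\bT,\bS)$ from $\partial S_k$ over all of $S_k$ is $\chi(S_k)=1-2g$, which is always odd, hence the frame loop is the non-trivial element of $\pi_1(\SO(3))$.
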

\begin{rem}
	We have defined $\cT_{k,\bA}$ when the last $K_1$ fluxes are zero, but in principle these $K_1$ zeros
	can be located anywhere on $\{1,\dots, K\}$. The Proposition still holds for the corresponding $\cT_{k,\bA}$'s.
\end{rem}

\begin{proof}
	The eigenvalue problem can be solved explicitly. We write down the eigen-equation 
	$(\cT_{k,\bA}+\tau_{S_k})(f\xi_-)=\lambda f\xi_-$, where we assume $f\not\equiv 0$.
	On each subinterval $(s_1,s_2)\subset \T_{\ell_k}$ between two phase jumps $e^{i b_1}$ and $e^{i b_2}$ 
	across $s_1$ and $s_2$, $f$ must satisfy the equation:
	$
		\big(i\partial_s  +i\cip{\xi_-}{\nabla_{\bT} \xi_-}+\tau_{S_k}\big)f=\lambda f.
	$
	Solving the equation on $(s_1,s_2)$, we get
	\[
		f(s_2^{+})=e^{i b_2}f(s_2^{-})=\mathrm{exp}\Big(ib_2-i\int_{s_1}^{s_2}(\lambda-\tau_{S_k} -i\cip{\xi_-}{\nabla_{\bT} \xi_-})\d s\Big)f(s_1^{+}).
	\]
	Going over a full cycle, we obtain the (necessary and sufficient) condition
	\[
		\sum_{j}b_j-\lambda \ell_k+\int_{\gamma_k}(i\cip{\xi_-}{\nabla_{\bT} \xi_-}+\tau_{S_k})\bT^{\flat}\equiv 0\mod 2\pi,
	\]
	and the eigenvalues $\lambda_n$ have multiplicity one. Furthermore for all $k'\neq k$, as $\link(\gamma_{k'},\gamma_k)$ 
	corresponds to the number of algebraic crossing of $\gamma_k$ through $S_{k'}$ \cite{Rolfsen}*{Part D, Chapter 5}, there holds:
	$
	\sum_{j}b_{j,k'}=-2\pi\alpha_{k'}\link(\gamma_{k'},\gamma_k).
	$
	So the eigenvalues $\lambda_n$'s can be rewritten as
	\begin{align*}
		\lambda_n
		=\frac{1}{\ell_{k}}\big( 2n\pi +\!\!\int_{\gamma_k}(i\cip{\xi_-}{\nabla_{\bT} \xi_-}\!+\!\tau_{S_k})\bT^{\flat}
		-2\pi\!\sum_{k_0=1}^{K_0}\!\alpha_{k_0}\link(\gamma_k,\gamma_{k_0})\big),\,n\in\Z.
	\end{align*}
	From \eqref{eq:spin_basis} we get the equality $i\cip{\xi_-}{\nabla_{\bT} \xi_-}+\tau_{S_k} = \omega_{\xi}(\bT)+\tfrac{1}{2}\tau_{S_k}$. That
	the integral $\int_{\gamma_k}\omega_\xi$ equals $\pi\!\! \mod 2\pi$ will be shown in Section~\ref{app:comp_omega_xi}.
	In Section~\ref{sec:techn}, we show that the integrated torsion $\int_{\gamma_k} \tau_{S_k}\bT^{\flat}$
	is independent of the choice of $S_k$ and that it is equal to $-2\pi$ times the writhe of any stereographic projection of $\gamma_k$.
\end{proof}

\subsection{Continuity properties}\label{sec:cont_prop}
We now see the map $\bbd$ defined in \eqref{eq:phi_no_top} as a map
between $(\sS^{(K)}\times \Tf^{K},\sT_{\sS}^{(K)})$, 
and $\Sd(L^2(\S^3)^2)$ endowed with some $\sT_{\phi}$, associating a Dirac operator to Seifert surfaces and fluxes.
We present several theorems describing its continuity properties.

\subsubsection{Strong resolvent continuity}

Since the domain of $\bbd$ is metric, continuity is equivalent to sequential continuity and
the following theorem follows immediately from \cite[Theorem~23]{dirac_s3_paper1}.
\begin{theorem}[Strong resolvent continuity]\label{thm:strg_res_cont}
	The map $\bbd$ is continuous in the strong resolvent sense.
\end{theorem}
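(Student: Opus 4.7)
The plan is to reduce strong resolvent continuity to a direct application of the compactness result Theorem~\ref{thm:compactness}. Since $(\sS_{\sK}^{(K)}\times \Tf^{K},\sT_{\sS}^{(K)})$ is metric, it suffices to establish sequential continuity. Fix a converging sequence $(\uS^{(n)},\ua^{(n)}) \to (\uS,\ua)$, form the associated potentials $\bA^{(n)}$ and $\bA$, and pick $\psi \in L^2(\S^3)^2$. I would show that
\[
	\psi^{(n)} := (\cD_{\bA^{(n)}} + i)^{-1}\psi \ \longrightarrow\ (\cD_{\bA} + i)^{-1}\psi
\]
strongly in $L^2$, and then repeat the argument with $-i$ in place of $i$.

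The first step is a uniform graph-norm bound. Self-adjointness of $\cD_{\bA^{(n)}}$ makes $\cD_{\bA^{(n)}}\psi^{(n)}$ and $i\psi^{(n)}$ orthogonal in $L^2$, so the Pythagoras identity
\[
	\|\psi^{(n)}\|^2 + \|\cD_{\bA^{(n)}}\psi^{(n)}\|^2 = \|\psi\|^2
\]
holds. In particular $\|\psi^{(n)}\|_{\bA^{(n)}}$ is uniformly bounded, which is exactly the hypothesis required to invoke Theorem~\ref{thm:compactness}. Up to extraction, it provides $\phi \in \dom(\cD_{\bA})$ with $\psi^{(n)} \rightharpoonup \phi$ and $\cD_{\bA^{(n)}}\psi^{(n)} \rightharpoonup \cD_{\bA}\phi$ weakly. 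Passing to the weak limit in the identity $\cD_{\bA^{(n)}}\psi^{(n)} + i\psi^{(n)} = \psi$ yields $(\cD_{\bA}+i)\phi = \psi$, so $\phi = (\cD_{\bA}+i)^{-1}\psi$ is independent of the extracted subsequence; a standard subsequence-of-subsequence argument then gives weak convergence of the whole original sequence.

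The remaining task is upgrading weak to strong convergence, and this is the only delicate point because Theorem~\ref{thm:compactness} itself warns that strong $L^2$ convergence may fail when some $\alpha_k^{(n)} \to 1^-$ through concentration onto the corresponding knot. The Pythagoras identity bypasses this issue: applied to $\phi$ it gives $\|\phi\|^2 + \|\cD_{\bA}\phi\|^2 = \|\psi\|^2$, so the two nonnegative quantities $\|\psi^{(n)}\|^2$ and $\|\cD_{\bA^{(n)}}\psi^{(n)}\|^2$ sum to the constant $\|\psi\|^2$ and, by weak lower semicontinuity, each is bounded below in the limit by $\|\phi\|^2$ and $\|\cD_{\bA}\phi\|^2$ respectively. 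Equality must therefore be attained in both $\liminf$'s, forcing $\|\psi^{(n)}\| \to \|\phi\|$; combined with weak convergence in the Hilbert space $L^2(\S^3)^2$ this yields strong convergence, and the same argument with $-i$ in place of $i$ completes the proof. The main conceptual obstacle is precisely the potential collapse of eigenfunctions flagged by Theorem~\ref{thm:compactness}, and the norm-constancy trick is what makes strong resolvent continuity survive it (while bump-continuity, treated in the subsequent theorems, genuinely fails at those points).
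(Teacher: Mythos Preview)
Your argument is correct. The paper itself does not give a proof here: it simply observes that the domain is metric and invokes \cite[Theorem~21]{dirac_s3_paper1}. What you have written is essentially a self-contained derivation from Theorem~\ref{thm:compactness} (which is \cite[Theorem~17]{dirac_s3_paper1}), so you stay entirely within the tools already restated in this paper.

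The route is therefore genuinely different in presentation, if not in spirit: rather than appealing to the companion paper's strong-resolvent statement directly, you extract it from the compactness theorem plus the resolvent Pythagoras identity $\|(\cD_{\bA^{(n)}}+i)\psi^{(n)}\|^2=\|\cD_{\bA^{(n)}}\psi^{(n)}\|^2+\|\psi^{(n)}\|^2$. The norm-constancy trick is exactly the right observation: it forces both $\liminf$'s to be attained (since $a_n+b_n=c$ with $\liminf a_n\ge a$, $\liminf b_n\ge b$, $a+b=c$ implies $\limsup a_n=c-\liminf b_n\le a$, hence $a_n\to a$), and this is what rescues strong convergence even in the regime $\alpha_k^{(n)}\to 1^-$ where Theorem~\ref{thm:compactness} alone allows mass loss. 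One small point worth making explicit: once you have weak convergence of the full sequence $\psi^{(n)}\rightharpoonup\phi$, the relation $\cD_{\bA^{(n)}}\psi^{(n)}=\psi-i\psi^{(n)}$ immediately gives $\cD_{\bA^{(n)}}\psi^{(n)}\rightharpoonup\cD_{\bA}\phi$ for the full sequence as well, so the lower-semicontinuity step applies globally and not just along the extracted subsequence.
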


\subsubsection{Bump-continuity}

The next theorems (\ref{thm:bump_cont_bulk}-\ref{thm:bump_cont_bdry_A}-\ref{thm:bump_cont_bdry_B}) address the bump continuity of the function
$\bbd$. As their proofs are rather technical, they will be given in Section~\ref{sec:bump_cont_proofs}.

Note that the introduced topologies are only concerned
with the spectral point $\lambda = 0$. However in order to
compute the spectral flow, we will also need to study continuity for arbitrary
spectral points $\lambda \in \R$.

\begin{theorem}[Bump-continuity in the bulk]\label{thm:bump_cont_bulk}
	For any bump function $\phi$ (centered at arbitrary $\lambda \in \R$), the 
	following map is continuous in the bulk $\sS^{(K)}\times (0,1)^{K}$
	\begin{align}\label{eq:phi_bump_cont}
		\phi \circ \bbd:
		\begin{array}{ccc}
			\sS^{(K)}\times \Tf^K
			&\longrightarrow& 
			(\mathcal{B}\big(L^2(\S^3)^2\big),\norm{\cdot}_{\cB})\\
			(\uS,\ua) &\mapsto& \phi\big(\cD_{\bA}\big).
		\end{array}
	\end{align}
\end{theorem}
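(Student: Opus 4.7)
The plan is to upgrade the strong resolvent continuity already established in Theorem~\ref{thm:strg_res_cont} to \emph{norm resolvent} continuity in the bulk, and then pass from resolvents to bump functions by a standard functional calculus argument. The set of $f\in C_0(\R)$ for which the assignment $(\uS,\ua)\mapsto f(\cD_{\bA})$ is norm-continuous at a fixed $(\uS,\ua)\in \sS_{\sK}^{(K)}\times (0,1)^{K}$ is a closed $*$-subalgebra of $C_0(\R)$. By the Stone--Weierstrass theorem, this subalgebra coincides with $C_0(\R)$ as soon as it contains the resolvent functions $z\mapsto (z-\,\cdot\,)^{-1}$ for some $z$ in each connected component of $\C\setminus\R$. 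In particular, bump functions centered at any $\lambda\in\R$ belong to it, and the theorem reduces to proving norm resolvent continuity of $\bbd$ on the bulk.

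Fix a sequence $(\uS^{(n)},\ua^{(n)})\to (\uS,\ua)$ in $\sT_{\sS}^{(K)}$ with $\ua\in(0,1)^{K}$, and argue by contradiction: assume there exist $\eps>0$ and unit vectors $\varphi^{(n)}\in L^{2}(\S^{3})^{2}$, along a subsequence, for which $\bnorm{[(\cD_{\bA^{(n)}}+i)^{-1}-(\cD_{\bA}+i)^{-1}]\varphi^{(n)}}\ge \eps$. Set $\psi^{(n)}:=(\cD_{\bA^{(n)}}+i)^{-1}\varphi^{(n)}$. The self-adjointness of $\cD_{\bA^{(n)}}$ yields the identity $\|\varphi^{(n)}\|^{2}=\|\cD_{\bA^{(n)}}\psi^{(n)}\|^{2}+\|\psi^{(n)}\|^{2}$, so the graph norm $\|\psi^{(n)}\|_{\bA^{(n)}}$ equals $1$ uniformly in $n$. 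Since $\ua^{(n)}$ stays away from the boundary stratum of $\Tf^{K}$, Theorem~\ref{thm:compactness} applies in its \emph{strong}-convergence regime: along a further subsequence, $\psi^{(n)}\to \psi\in\dom(\cD_{\bA})$ strongly in $L^{2}(\S^{3})^{2}$, with $\cD_{\bA^{(n)}}\psi^{(n)}\rightharpoonup \cD_{\bA}\psi$. Extracting a weakly convergent subsequence $\varphi^{(n)}\rightharpoonup \varphi$ and passing to the limit in $\varphi^{(n)}=(\cD_{\bA^{(n)}}+i)\psi^{(n)}$ shows $\psi=(\cD_{\bA}+i)^{-1}\varphi$. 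Since $\cD_{\bA}$ has discrete spectrum, the resolvent $(\cD_{\bA}+i)^{-1}$ is compact, and the weak convergence $\varphi^{(n)}\rightharpoonup \varphi$ upgrades to strong convergence $(\cD_{\bA}+i)^{-1}\varphi^{(n)}\to \psi$. Both terms in the assumed lower bound therefore converge strongly to the same limit, yielding the required contradiction.

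The main subtlety is the role of the bulk assumption: the argument collapses without \emph{strong} $L^{2}$-convergence of $\psi^{(n)}$. This is precisely where $\ua\in(0,1)^{K}$ enters through Theorem~\ref{thm:compactness}; were some $\alpha_{k}^{(n)}$ allowed to approach $1^{-}$, eigenfunctions could concentrate onto the corresponding knot $\gamma_{k}$ and lose $L^{2}$-mass in the limit, producing a defect that manifests as spurious near-spectrum and destroys norm resolvent convergence. This is exactly the obstruction that necessitates the separate boundary analysis carried out in Theorem~\ref{thm:bump_cont_bdry_A}, in which the effective operator of Proposition~\ref{prop:calcul_spectre} controls where the residual concentration creates (or does not create) a discontinuity at the spectral level $0$.
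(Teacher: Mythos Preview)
Your proof is correct and takes a genuinely different route from the paper. The paper argues directly with spectral projectors: it fixes the rank $M=\rank\,\mathds{1}_{\supp\phi}(\cD_{\bA})$, uses strong resolvent continuity to produce $M$-dimensional approximating subspaces $W^{(n)}\subset\ran\,\mathds{1}_{\supp\phi}(\cD_{\bA^{(n)}})$, and then invokes Theorem~\ref{thm:compactness} to rule out any \emph{extra} eigenfunctions with eigenvalues in $\supp\phi$ vanishing in the limit. You instead prove the stronger statement of \emph{norm resolvent} convergence in the bulk, using Theorem~\ref{thm:compactness} for the strong $L^2$-convergence of $\psi^{(n)}$ and the compactness of $(\cD_{\bA}+i)^{-1}$ to handle the other term; the bump-function statement then falls out of Stone--Weierstrass. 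Your argument is cleaner and yields more: it shows that $\bbd$ is actually continuous in the gap topology on the bulk, not merely bump-continuous. The paper's projector argument, on the other hand, generalizes more directly to the boundary analysis of Theorem~\ref{thm:bump_cont_bdry_B}, where one must track \emph{which} eigenfunctions vanish rather than just detect their absence.
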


Theorem~\ref{thm:compactness} suggests that this continuity may fail
at ``boundary'' points $(\uS,\ua)\in\sS^{(K)}\times \rT_b(K)$,
where 
\begin{equation}
	\rT_b(K):=\big\{\ua\in\Tf^K,\ \exists\ 1\le k\le K,\ \alpha_k=0\big\}.
\end{equation}
In fact, we can give an exact characterization of the domain of bump-continuity.
\begin{theorem}[Bump-continuity at the boundary]\label{thm:bump_cont_bdry_A}
	Let $\ua\in \rT_b(K)$, $\uS\in\sS^{(K)}$ and $\lambda \in \R$. Then
	the following two statements are equivalent.
	\begin{enumerate}[leftmargin=*]
		\item There exists a bump function $\phi$ centered at
		$\lambda\in\R $ such that the map $\phi \circ \bbd$ from \eqref{eq:phi_bump_cont}
		is continuous on an open neighborhood of $(\uS,\ua)$.
		
		\item For all $k=1,\ldots,K$ for which $\alpha_{k}=0$
		we have $\lambda\not\in\text{\rm spec}(\cT_{k,\bA}+\tau_{S_k})$.
	\end{enumerate}
\end{theorem}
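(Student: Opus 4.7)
The plan is to prove the two implications separately, with the direction (2)$\Rightarrow$(1) following from a compactness/Kato-type argument and the contrapositive of (1)$\Rightarrow$(2) requiring an explicit construction of concentrating eigenfunctions.

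For (2)$\Rightarrow$(1), I would argue as follows. Assume $\lambda\notin\spec(\cT_{k,\bA}+\tau_{S_k})$ for every $k$ with $\alpha_k=0$. Since each $\cT_{k,\bA}+\tau_{S_k}$ has purely discrete spectrum (Proposition~\ref{prop:calcul_spectre}), there exists $a>0$ such that $[\lambda-2a,\lambda+2a]$ is disjoint from all the relevant spectra. I would pick a bump function $\phi$ centered at $\lambda$ with $\supp\phi\subset[\lambda-a,\lambda+a]$. The goal is then to show $\phi(\cD_{\bA^{(n)}})\to\phi(\cD_{\bA})$ in $\cB$-norm whenever $(\uS^{(n)},\ua^{(n)})\to(\uS,\ua)$ in $\sT_{\sS}^{(K)}$. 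The strategy is standard: the operators $\phi(\cD_{\bA^{(n)}})$ are finite-rank spectral projections (times bounded functional values) onto eigenspaces with eigenvalues in $[\lambda-a,\lambda+a]$, and one needs uniform control on the rank. Suppose, for contradiction, that there is a sequence of normalized eigenfunctions $\psi^{(n)}$ of $\cD_{\bA^{(n)}}$ with eigenvalues $\mu_n\in[\lambda-a,\lambda+a]$. By Theorem~\ref{thm:compactness}, up to extraction $\psi^{(n)}\rightharpoonup\psi\in\dom(\cD_{\bA})$ with $\cD_{\bA}\psi=\mu\psi$ for $\mu=\lim\mu_n$. Either the convergence is strong, in which case $\psi$ is a genuine eigenfunction with $\mu\in[\lambda-a,\lambda+a]$ and contributes predictably to the limiting operator, or there is loss of $L^2$-mass, which by Theorem~\ref{thm:compactness} can only occur by concentration onto some knot $\gamma_k$ where $\alpha_k^{(n)}\to 1^-\sim 0$. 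In this concentration scenario, the limiting eigenvalue must lie in $\spec(\cT_{k,\bA}+\tau_{S_k})$ by Proposition~\ref{prop:calcul_spectre}, contradicting our choice of $a$. This rules out concentration and upgrades strong-resolvent convergence to operator-norm convergence of $\phi\circ\bbd$ near $(\uS,\ua)$.

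For (1)$\Rightarrow$(2), I prove the contrapositive: if $\lambda\in\spec(\cT_{k,\bA}+\tau_{S_k})$ for some $k$ with $\alpha_k=0$, then no bump function centered at $\lambda$ yields a $\sT_{\phi}$-continuous map on any neighborhood of $(\uS,\ua)$. The idea is to construct a sequence $(\uS,\ua^{(n)})$ with $\alpha_k^{(n)}\to 1^-$ (equivalently, $\to 0$ on the torus) such that $\cD_{\bA^{(n)}}$ has an eigenvalue $\mu_n\to\lambda$ with associated normalized eigenfunction $\psi^{(n)}$ concentrating on $\gamma_k$. The construction mimics the proof of Proposition~\ref{prop:calcul_spectre}: using the local coordinates $(s,\rho,\theta)$ and the $\xi_{-}$-component, one takes the eigenfunction $f\xi_{-}$ of $\cT_{k,\bA}+\tau_{S_k}$ associated to eigenvalue $\lambda$, multiplies by a radial profile of width $\sim\sqrt{1-\alpha_k^{(n)}}$ in $\rho$ and the appropriate angular factor in $\theta$ (an approximate ``Aharonov--Bohm ground state'' that saturates the singular boundary condition as $\alpha_k^{(n)}\to 1^-$), and obtains a quasi-mode for $\cD_{\bA^{(n)}}$ with eigenvalue tending to $\lambda$. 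Then $\psi^{(n)}$ weakly converges to zero while $\phi(\cD_{\bA^{(n)}})\psi^{(n)}=\phi(\mu_n)\psi^{(n)}$ with $|\phi(\mu_n)|\to\phi(\lambda)>0$, whereas $\phi(\cD_{\bA})\psi^{(n)}\to 0$ strongly (by strong resolvent convergence and the fact that $\psi^{(n)}\rightharpoonup 0$, together with a finite-rank approximation of $\phi(\cD_{\bA})$). Thus the map $\cD\mapsto\phi(\cD)\eta$ fails to be continuous for a suitable test vector $\eta$ (chosen so that $\langle\eta,\psi^{(n)}\rangle$ does not vanish in the limit, e.g. by taking $\eta=\phi(\cD_{\bA^{(n_0)}})\psi^{(n_0)}$ for fixed large $n_0$), contradicting $\sT_{\phi}$-continuity.

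The main obstacle is the quasi-mode construction in (1)$\Rightarrow$(2): one must produce, for each $\alpha_k^{(n)}$ close to $1$, a genuine (or close-to-genuine) eigenfunction with controlled eigenvalue and verify that its mass indeed concentrates on $\gamma_k$ in such a way that it is invisible to the limiting operator. This requires an $\eps$-regularization analogous to the one giving Proposition~\ref{prop:calcul_spectre}, combined with a quantitative version of the spectral convergence implicit in Theorem~\ref{thm:compactness}, to ensure the limiting eigenvalue is exactly $\lambda$ rather than merely near $\spec(\cT_{k,\bA}+\tau_{S_k})$.
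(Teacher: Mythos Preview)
Your overall strategy matches the paper's, but there is a genuine gap in the direction (2)$\Rightarrow$(1). You write that in the concentration scenario ``the limiting eigenvalue must lie in $\spec(\cT_{k,\bA}+\tau_{S_k})$ by Proposition~\ref{prop:calcul_spectre}''. Proposition~\ref{prop:calcul_spectre} does not say this: it only computes the spectrum of the effective operator. The statement that a collapsing sequence of eigenfunctions of $\cD_{\bA^{(n)}}$ with eigenvalues $\mu_n\to\mu$ forces $\mu\in\spec(\cT_{k,\bA}+\tau_{S_k})$ is precisely the hard analytical core of the theorem, and it is not a consequence of Theorem~\ref{thm:compactness} either (that theorem only locates where mass can be lost, not the limiting spectral value). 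In the paper this is the implication (S2)$\Rightarrow$(S3): one must localize around $\gamma_k$, pass to the model operator on $\T_\ell\times\R^2$, split the wavefunction into its regular part $\wt f_0^{(n)}\in\dom(\cD^{(\min)})$ and its singular part $\wt f_{\sing}^{(n)}$, prove that $\wt f_0^{(n)}\to 0$ in $L^2$ while $\wt f_{\sing}^{(n)}$ collapses onto the knot, control the subprincipal terms $\cE_0,\cE_1$, and then show that the residual equation forces $(i\partial_s+F_-^{(n)})\wt f_{\sing}^{(n)}\to 0$, which yields the spectral condition. None of this is automatic; you have treated the main step as a citation.

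A smaller issue in the contrapositive direction: your argument that $\sT_\phi$-continuity fails via a fixed test vector $\eta$ does not work, since $\psi^{(n)}\rightharpoonup 0$ gives $\langle\eta,\psi^{(n)}\rangle\to 0$ for \emph{every} fixed $\eta$. The correct (and simpler) observation is that norm convergence $\phi(\cD_{\bA^{(n)}})\to\phi(\cD_{\bA})$ together with $\|\phi(\cD_{\bA^{(n)}})\psi^{(n)}\|=|\phi(\mu_n)|\to\phi(\lambda)>0$ would force $\|\phi(\cD_{\bA})\psi^{(n)}\|\to\phi(\lambda)>0$, which is impossible because $\phi(\cD_{\bA})$ has finite rank and $\psi^{(n)}\rightharpoonup 0$. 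Your ansatz sketch is along the right lines, though the paper uses the explicit singular profile $\overline{z}^{-\alpha}$ (with a first-order correction) rather than a generic ``radial profile of width $\sqrt{1-\alpha}$''.
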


The points in the spectrum $\spec\big( \cT_{k,\wt{\bA}}+\tau_{\wt{S}_k}\big)$,
which was given explicitly in Proposition~\ref{prop:calcul_spectre}, are the
limits of the eigenvalues of the eigenfunctions that collapse onto the
knot $\wt{\gamma}_{k}$ as the flux $\alpha_k'$ converges to $1^-\in\Tf$ in
the limit $(\uS',\ua')\to (\widetilde{\uS},\widetilde{\ua})$. We now describe this
collapsing process in greater details.

\subsubsection{Decomposition of the ``boundary''}
There is a natural cell decomposition of the boundary
\begin{align*}
	\rT_b(K)&=\bigcup_{\emptyset\ne R\subset\{1,\ldots,K\}}\Gamma(R),\\ 
	\Gamma(R)&:=\{\ua\in\Tf^K,\ 
	\alpha_k=0\text{ if and only if }k\in R\}.
\end{align*}
If $\ua\in\Gamma(R)$ and\footnote{Setting $\ell:=\inf \{\dist_{\Tf}(\alpha_k,0),\ \alpha_k\neq 0\}$, it is sufficient to take $\eps\in (0,\tfrac{\ell}{2})$ or $\eps<\tfrac{1}{2}$ if $\ua=0$.} $0<\varepsilon<\eps(\ua)$, the set 
$B_\varepsilon(\ua)\setminus\rT_b(K)$ has the following $2^{|R|}$ 
connected components indexed by subsets $R'\subset R$:
\begin{equation}\label{eq:Ceps}
	\begin{array}{rl}
		C_\varepsilon(\ua, R')=
		\Big\{\ua'\in B_\varepsilon(\ua)\setminus\rT_b(K),&\text{For all } k\in R,
		1/2<\alpha_k'<1 \\&\text{if and only if }
		k\in R'\Big\}.
	\end{array}
\end{equation}
For points $(\uS',\ua')\in \sS^{(K)}\times
C_\varepsilon(\ua,R')$ approaching $(\uS,\ua)\in
\sS^{(K)}\times\Gamma(R)$, the collapsing eigenfunctions with
eigenvalues tending to some $\lambda \in\R$ will collapse onto
knots $\gamma_k$ for which $k\in R'$ with  
$\lambda\in \spec(\cT_{k,\bA}+\tau_{S_k})$. Moreover, 
there will be exactly one collapsing eigenfunction for each such knot.  

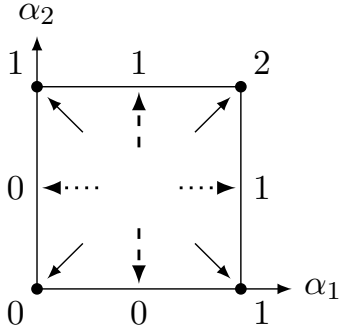
\begin{figure}[!!!ht]
	\resizebox{0.4\textwidth}{!}{
	\begin{tikzpicture}
		\draw (0,0) -- (2,0) -- (2,2) -- (0,2) -- cycle;
	
	\foreach \x in {0,2}
		{
			\foreach \y in {0,2}
			{
				\draw [fill] (\x,\y) circle [radius = 0.05];
			}
		}
	\node[above left, scale=0.9] at (0,2) {$1$};
	\node[below left, scale=0.9] at (0,0) {$0$};
	\node[below right, scale=0.9] at (2,0) {$1$};
	\node[above right, scale=0.9] at (2,2) {$2$};
	\node[left, scale=0.9] at (0,1) {$0$};
	\node[below, scale=0.9] at (1,0) {$0$};
	\node[right, scale=0.9] at (2,1) {$1$};
	\node[above, scale=0.9] at (1,2) {$1$};
	
	\draw[>=latex,->] (0.45,1.55)--(0.1, 1.9);
	\draw[thick,>=latex,->,dotted] (0.6,1)--(0.05, 1);
	\draw[>=latex,->] (0.45,0.45)--(0.1, 0.1);
	\draw [thick,>=latex,->,dashed] (1,0.6) -- (1,0.05);
	
	\draw [>=latex,->] (1.55,0.45) -- (1.9,0.1);
	\draw [thick,>=latex,->,dotted] (1.4,1) -- (1.95,1);
	\draw [>=latex,->] (1.55,1.55) -- (1.9,1.9);
	\draw [thick,>=latex,->,dashed] (1,1.4) -- (1,1.95);
	
	\draw [>=latex,->] (2,0)--(2.5,0);
	\draw [>=latex,->] (0,2)--(0,2.5);
	
	\node[thin, right, scale=0.9] at (2.5,0) {$\alpha_1$};
	\node[thin, above, scale=0.9] at (0,2.5) {$\alpha_2$};
	\end{tikzpicture}
	}
	\caption{Near each of the boundary points $\ua=(1/2,0)$ and
	$\ua=(0,1/2)$ there are two boundary components
	$C_\varepsilon(\ua,R')$, indicated by dashed and
	dotted arrows respectively. Near the point $\ua=(0,0)$ there are four
	boundary components indicated by solid arrows. The numbers
	refer to the size of the set $R'$ in each case. }
	\label{fig:diff_regm}%
\end{figure}

In the next theorem, we introduce the subspace $V$ of eigenfunctions, 
with eigenvalues close to a given spectral level $\lambda$, that``vanishes" 
when reaching a given point on the ``boundary". The collapse can only 
occur in the limit where some fluxes $\alpha_k$'s converge to $1^-$, and the eigenfunctions
concentrates on the corresponding knots $\gamma_k$'s.

\begin{theorem}[Description of the discontinuity at the boundary]\label{thm:bump_cont_bdry_B}
	Consider a non-empty set $R\subset\{1,\ldots,K\}$, $\widetilde{\ua}\in\Gamma(R)$,
	$\widetilde{\uS}\in\sS^{(K)}$, and $\lambda\in\R$. Then there exist
	$0<\eps<2^{-1}$ and $\eta>0$ such that to each point $(\uS,\ua)\in
	B_\eps[\widetilde{\uS}]\times B_\eps[\widetilde{\ua}]$ we can associate a subspace (denoted the
	vanishing subspace)
	\[
		V=V(\bA,\lambda)\subset \ran\,\mathds{1}_{[\lambda-\eta,\lambda+\eta]}
		\big( \cD_{\bA}\big),
	\]
	spanned by eigenfunctions of $\cD_{\bA}$, with the following properties. 
	\begin{enumerate} [leftmargin=*]
		\item The dimension of the vanishing subspace is 
		$$
			\dim V(\bA,\lambda)=\sum_{k\in R,\atop
			1/2<\alpha_k<1}\dim\ker \big(\cT_{k,\wt{\bA}}
			+\tau_{\wt{S}_k}-\lambda\big)=:d(\uS,\ua,\lambda),
		$$
		where for all $k$ we have $\dim\ker \big(\cT_{k,\wt{\bA}}+\tau_{\wt{S}_k}-\lambda\big)\in\{0,1\}$.

		\item The projection-valued map
		$$
			B_\eps[\widetilde{\uS}]\times B_\eps[\widetilde{\ua}]\ni (\uS,\ua)\mapsto
			\mathds{1}_{[\lambda-\eta,\lambda+\eta]}\big( \cD_{\bA}\big)
			-P_{V(\bA,\lambda)}=:P_{W(\bA,\lambda)}
		$$
		is continuous in the norm-topology.
	
		\item The maps 
		\begin{eqnarray*}
			B_\eps[\widetilde{\uS}]\times B_\eps[\widetilde{\ua}]\ni (\uS,\ua)&\mapsto& P_{V(\bA,\lambda)}\\
			B_\eps[\widetilde{\uS}]\times B_\eps[\widetilde{\ua}]\ni (\uS,\ua)&\mapsto& \cD_{\bA}P_{V(\bA,\lambda)}
		\end{eqnarray*}
		are continuous in the strong operator topology. They are continuous in operator norm on the subset 
		$B_{\eps}(\widetilde{\uS})\times \big(B_\eps(\widetilde{\ua})\setminus \rT_b(K)\big)$.
	
		\item The eigenvalues of ${\cD_{\bA}}\big|_{V(\bA,\lambda)}$ are
		continuous as functions of $(\uS,\ua)$ on
		$B_{\eps}(\widetilde{\uS})\times \big(B_\eps(\widetilde{\ua})\setminus
		\rT_b(K)\big)$. From each connected component of the set, i.e., 
		$B_{\eps}(\widetilde{\uS})\times C_\eps(\wt{\ua},R')$ the eigenvalues 
		have well defined limits on 
		$B_{\eps}(\widetilde{\uS})\times (\overline{C_\eps(\wt{\ua},R')}\cap\rT_b(K))$.  In fact, 
		the corresponding set of limiting values at a point $(\uS',\ua')\in
		B_{\eps}(\widetilde{\uS})\times (\overline{C_\eps(\wt{\ua},R')}\cap\rT_b(K))$ is
		$$
			\bigcup_{k\in R',\atop \alpha'_k=0}
			\Big(\spec(\cT_{k,\bA'}+
			\tau_{S_k'})\cap[\lambda-\eta,\lambda+\eta]\Big)\cup \spec(\cD_{\bA'}\big|_{V(\bA',\lambda)}),
		$$	
		where $\bA'$ is the potential defined by $(\uS',\ua')$.
		\item For $\mu>\eta$, up to taking $0<\eps(\mu)\le \eps$ the following map is $\sT_{\phi}$-continuous,
		where $\phi$ is a bump function with support $[\lambda-\eta,\lambda+\eta]$:
		\[
			B_{\eps(\mu)}[\widetilde{\uS}]\times B_{\eps(\mu)}[\widetilde{\ua}]\ni (\uS,\ua)\mapsto
			\cD_{\bA}(1-P_{V(\bA,\lambda)})+ \mu P_{V(\bA,\lambda)}-\lambda =: G(\bA,\lambda).
		\]
	\end{enumerate}
\end{theorem}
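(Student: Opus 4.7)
The starting point is the compactness statement of Theorem~\ref{thm:compactness}, which says that any sequence of eigenfunctions with bounded graph norm either converges strongly or loses mass by concentrating on a knot $\gamma_k$ along which $\alpha_k^{(n)}\to 1^-$. Combined with Proposition~\ref{prop:calcul_spectre}, the limiting eigenvalue of such a concentrating sequence must lie in $\spec(\cT_{k,\wt{\bA}}+\tau_{\wt{S}_k})$, and because this spectrum is simple and discrete, each fixed $\lambda\in\R$ contributes at most one concentrating eigenfunction per knot $\gamma_k$. This is the geometric content underlying the dimension formula in item~(1). I would first choose $\eta>0$ so small that $[\lambda-2\eta,\lambda+2\eta]$ intersects $\spec(\cT_{k,\wt{\bA}}+\tau_{\wt{S}_k})$ only at $\lambda$ itself (for those $k\in R$ where it does) and contains no other point of $\spec(\cT_{k',\wt{\bA}}+\tau_{\wt{S}_{k'}})$ for any $k'$, and then shrink $\eps$ enough that, uniformly on $B_\eps[\widetilde{\uS}]\times B_\eps[\widetilde{\ua}]$, the only eigenvalues of $\cD_{\bA}$ in $[\lambda-\eta,\lambda+\eta]$ come either from the bulk (well approximated in norm) or from concentrating sequences associated with the knots in $R$ having flux close to $1$; this uses the strong-resolvent continuity of Theorem~\ref{thm:strg_res_cont} to avoid spurious eigenvalues at the endpoints.

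Next I would construct $V(\bA,\lambda)$ by a Riesz-projection argument. For $k\in R$ with $1/2<\alpha_k<1$ and $\lambda\in\spec(\cT_{k,\wt{\bA}}+\tau_{\wt{S}_k})$, I would build an approximate eigenfunction $\psi_k$ using the corresponding eigensection of $\cT_{k,\wt{\bA}}+\tau_{\wt{S}_k}$, lifted off $\gamma_k$ with the singular profile $\rho^{\alpha_k-1}\xi_-$ that governs the concentration behavior from \cite{dirac_s3_paper1}. A Cauchy-integral $P_V:=\tfrac{1}{2\pi i}\oint_\Gamma (z-\cD_{\bA})^{-1}\d z$ around a small contour enclosing exactly the eigenvalues of these approximate $\psi_k$'s would then define the projection, whose rank equals the number of such $k$'s by a Gram-matrix nondegeneracy argument. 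The complementary projection $P_{W(\bA,\lambda)}$ contains only eigenvalues from the ``bulk" regime; its norm continuity in item~(2) follows from the bulk bump-continuity (Theorem~\ref{thm:bump_cont_bulk}) adapted to the boundary case: since the vanishing subspace absorbs all the $L^2$-mass loss identified in Theorem~\ref{thm:compactness}, the remaining spectral projection involves only strongly converging sequences, hence norm-converges.

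For item~(3), strong operator continuity of $P_{V(\bA,\lambda)}$ and of $\cD_{\bA}P_{V(\bA,\lambda)}$ on the full neighborhood follows from strong resolvent convergence applied to the Riesz integral defining $P_V$, together with the fact that the approximate eigenfunctions $\psi_k(\uS,\ua)$ depend continuously on the data in $L^2$. Away from $\rT_b(K)$, the bulk bump-continuity upgrades this to norm continuity because no eigenfunction is actually collapsing there. Item~(4) then becomes a finite-dimensional perturbation problem: on the finite-dimensional $V$, the self-adjoint operator $\cD_{\bA}|_V$ has eigenvalues which depend continuously on $(\uS,\ua)$ off the boundary, and for $(\uS',\ua')\in \overline{C_\eps(\wt{\ua},R')}\cap \rT_b(K)$ the limits split into two contributions: eigenvalues that stay inside the bulk give $\spec(\cD_{\bA'}|_{V(\bA',\lambda)})$, while eigenvalues that are about to collapse because some $\alpha_k\to 0^+$ (with $k\in R'$, i.e., approached through the $\alpha_k\to 1^-$ side) converge to $\spec(\cT_{k,\bA'}+\tau_{S_k'})\cap[\lambda-\eta,\lambda+\eta]$ by Proposition~\ref{prop:calcul_spectre}. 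The choice of $R'$ determines precisely which side of the periodic torus we approach from, and hence which knots contribute collapsing eigenfunctions.

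Finally, for item~(5) I would argue that $G(\bA,\lambda)=\cD_{\bA}(1-P_V)+\mu P_V-\lambda$ enjoys $\sT_\phi$-continuity because the spectrum of $G$ inside $\supp\phi\subset[\lambda-\eta,\lambda+\eta]-\lambda$ comes entirely from $\cD_{\bA}P_W-\lambda$ (since $\mu\neq 0$ moves the $V$-part out of $\supp\phi$ by shrinking $\eps$ if necessary), and $P_W$ is norm-continuous by item~(2). Combining norm continuity of $\phi(G(\bA,\lambda))=\phi(\cD_{\bA}P_W-\lambda P_W)$ with strong resolvent continuity (which passes from $\cD_{\bA}$ to $G$ because $P_V$ is finite rank and strongly continuous) yields the required $\sT_\phi$-continuity. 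I expect the main technical obstacle to be the construction of uniformly good approximate eigenfunctions $\psi_k$ on the whole neighborhood $B_\eps[\widetilde{\uS}]\times B_\eps[\widetilde{\ua}]$---in particular controlling the dependence of the singular profile $\rho^{\alpha_k-1}$ as $\alpha_k\to 1^-$ simultaneously with Seifert surface variation---and correspondingly showing that the Riesz projection $P_V$ has the correct rank uniformly; this requires careful use of the coordinate convergence from Proposition~\ref{prop:conv_coord} together with the a priori estimates underpinning Theorem~\ref{thm:compactness}.
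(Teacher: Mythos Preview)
Your overall strategy---choose $\eta$ to isolate $\lambda$ in the effective spectra, build approximate collapsing eigenfunctions concentrated on the relevant knots, and show the complement behaves well---matches the paper's in spirit. But the Riesz-projection step is a genuine gap: there is no reason the eigenvalues of $\cD_{\bA}$ coming from the collapsing eigenfunctions should be \emph{spectrally separated} from the bulk eigenvalues inside $[\lambda-\eta,\lambda+\eta]$. Both families can approach $\lambda$ simultaneously and may be interleaved, so no contour $\Gamma$ can be chosen to enclose one set and not the other. The paper circumvents this by defining $P_W$ first rather than $P_V$: it introduces the ``reference'' projection $P_{R}(\bA,\lambda):=\mathds{1}_I(\cD_{\bA_0})$ at the boundary point where the $R$-fluxes are set to zero (this map is norm-continuous by the choice of $\eta$), and then projects it in Hilbert--Schmidt norm onto the compact convex set
\[
\sC_{\bA}(R)=\big\{\omega:\ 0\le\omega\le\mathds{1}_I(\cD_{\bA}),\ [\cD_{\bA},\omega]=0,\ \tr\omega=\tr P_{R}(\bA,\lambda)\big\}.
\]
Thresholding the resulting $\omega$ by $\mathds{1}_{[3/4,1]}$ yields $P_W$, and $P_V:=\mathds{1}_I(\cD_{\bA})-P_W$. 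This construction never needs a spectral gap between the vanishing and non-vanishing parts; continuity of $P_W$ follows from continuity of the convex projection together with the auxiliary lemma described next.

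The second point you underplay is the \emph{upper} bound on $\dim V$: showing that the collapsing subspace has dimension exactly $d(\uS,\ua,\lambda)$ and not more. Your ``Gram-matrix nondegeneracy argument'' gives the lower bound only. The paper isolates this as a separate lemma: any vanishing sequence $(\psi^{(n)})$, after localization around a knot $\gamma_k$ and decomposition in the model domain $\dom(\cD_{\T_{\ell_k},\alpha_k}^{(-)})$, has singular part asymptotically \emph{collinear} to the single explicit Ansatz built from the eigenfunction of $i\partial_s+F_-^{(n)}(s)$. The argument uses the splitting of $\dom(\cD_{\T_\ell,\alpha})$ into the minimal domain and the one-dimensional singular direction, together with a term-by-term analysis of the local expansion (the analogue of \eqref{eq:Q_rest_1}--\eqref{eq:Q_rest_3}) to force \eqref{eq:cond_sur_s}. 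Without this collinearity, you cannot conclude that removing the $d$-dimensional Ansatz space leaves a sequence with no mass loss, which is exactly what drives the norm-continuity of $P_W$. Once this lemma is established, your treatments of items~(3)--(5) are essentially what the paper does.
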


\begin{rem}\label{rem:cont_discont}
	\noindent 1. The last point just means that, up to adding $-\lambda\mathrm{Id}$, we can assume that $\lambda=0$, 
	and then up to shifting the eigenvalues of the vanishing subspace away from $0$, we obtain $\sT_{\phi}$-continuity.
		
		\smallskip
		
	\noindent 2. Note that the choice of $\eta$ implicitly means that if $\lambda\in\spec(\cT_{k,\wt{\bA}}+\tau_{\wt{S}_k})$, then
	in point (4) we have according to Proposition~\ref{prop:calcul_spectre}:
	\begin{multline}\label{eq:prec_eig}
		[\lambda-\eta,\lambda+\eta]\cap \spec(\cT_{k,\bA'}+\tau_{S_k'})\\
		=\frac{1}{\ell_k'}\big\{\lambda\wt{\ell}_k-\pi\big(\Wr(\gamma_k')-\Wr(\wt{\gamma}_k)\big)-
		2\pi\sum_{k_1\neq k}(\alpha_{k_1}'-\wt{\alpha}_{k_1})\link(\wt{\gamma}_{k_1},\wt{\gamma}_{k})\big\},
	\end{multline}
	where in the above formula we assume that 
	$0\le \wt{\alpha}_{k_1},\alpha_{k_1}'<1$. And if $\lambda$ is not in the spectrum,
	then the set~\eqref{eq:prec_eig} is empty.
\end{rem}

\subsubsection{Continuity and discontinuity}\label{sec:cont_discont}
	The map $\mathbb{D}$ \eqref{eq:phi_no_top} is defined everywhere, also on the critical set of $\sS^{(K)}\times\Tf^K$. The definition implies that if one of the fluxes $\alpha_k$ vanishes,
	then the corresponding knot is completely absent in the magnetic field. 
	By definition $\mathbb{D}$ is not bump-continuous on the critical points, nevertheless we can still have bump-continuity within some region.
	
	 Indeed, consider Theorem~\ref{thm:bump_cont_bdry_B}: in the region of $B_{\eps}(\widetilde{\uS})\times \big(B_\eps(\widetilde{\ua})\setminus \rT_b(K)\big)$ where $V(\ua,\lambda)$ is trivial,
	(2)-(3)-(5) imply that we have $\lambda$-bump continuity when $\ua$ approaches the boundary $\Gamma(R)\cap B_\eps[\wt{\ua}]$ within this region.

	An important special case is the continuity ``from the right", which we now state for $\lambda=0$. 
	
	Let $\uS\in \sS^{(K)}$ and fix the first $K_1$ fluxes to be non-zero $\ua'\in (0,1)^{K_1}$. 
	By Theorem~\ref{thm:bump_cont_bdry_B} and Proposition~\ref{prop:calcul_spectre} the following holds.
	\begin{enumerate}
	\item For some $\eps'>0$ we have bump-continuity of $\bbd$ within the subset $B_{\eps'}[\uS]\times B_{\eps'}[\ua']\times [0,\eps')^{K-K_1}$.
	\item  Provided that we have
	\[
		\frac{1}{2}(1-\Wr(\gamma_{K}))-\sum_{k=1}^{K_1}\alpha_k\link(\gamma_1,\gamma_K)\notin\Z,
	\]
	 then for some $\eps'\in(0,\eps)$ we have bump-continuity of the map $\mathbb{D}$ within the set 
	 $
	 B_{\eps'}[\uS]\times \times B_{\eps'}[\ua']\times [0,\eps')^{K-K_1-1}\times\Tf.
	 $
	 \end{enumerate}

	As an example, take $\uS\in \sS^{(2)}$ with $\link(\gamma_1,\gamma_2)\neq 0$. Assume $\Wr(\gamma_2)\notin 2\Z+1$: 
	the loop $\alpha_2\mapsto \cD_{2\pi\alpha_2[S_2]}$ is continuous, but the loop $\big((0,\alpha_2)\big)_{\alpha_2\in\Tf}$ does not lie in $\T_{\ug}$.
	However, for $\eps>0$ small enough, the loop $(\eps,\alpha_2)_{\alpha_2\in\Tf}$ lies in $\T_{\ug}$, and due to the continuity from the right -- here for $\alpha_1\to 0^+$ --
	we have the bump-continuous homotopy:
	\[
		(\alpha_2,s)\in \Tf\times[0,\eps]\mapsto \cD_{2\pi(\alpha_2[S_2]+s[S_1])}.
	\]

\section{Spectral flow on the torus of fluxes}\label{sec:spectral_flow}

\subsection{Definition of the spectral flow}
\subsubsection{The spectral flow in $(\Sd,\cT_{\phi})$ and in $(\mathrm{SF},\sT_{W})$}
We define the spectral flow as in \cite[Definition~2.1]{Wahl08} (see also \cite{Philips_spectral_flow}).
Let $\phi\in\sD(\R,\R_+)$ be a bump function, $\supp\phi=[-x_0,x_0]$ 
and let $\sT_{\phi}$ be the corresponding topology on $\Sd$ (see \eqref{eq:def_t_phi}). Recall Remark~\ref{rem:Wahl_top}: $(\mathrm{SF},\sT_{W})$
denotes the set of Fredholm self-adjoint operators with the Wahl topology \cite{Wahl08} and the injection 
$(\Sd,\sT_{\phi})\hookrightarrow (\mathrm{SF},\sT_{W})$ is continuous.

For a $\sT_{\phi}$-continuous path $(\cD_t)_{t \in [a,b]}$ with invertible endpoints $\cD_a$ and $\cD_b$, 
assume that there exists $\mu>0$ such that $\pm \mu \in \res \cD_t$ for all $t \in [a,b]$.
We then define
\begin{align*}
	\Sf\big[(\cD_t)_{t \in [a,b]}\big] 
	:= \dim \ran \mathds{1}_{[0,\mu]}(\cD_b) - \dim \ran \mathds{1}_{[0,\mu]}(\cD_a).
\end{align*}
Note that if such a $\mu$ does not exist for the entire interval $[a,b]$, 
we split the curve into several parts and define the spectral flow piecewise.
The spectral flow of the whole curve is then the sum of the individual contributions.
\begin{rem}
 Let $f_{x_0}$ be the continuous odd function defined by $f_{x_0}(x)=x$ for $|x|\le \tfrac{x_0}{2}$
 and $f_{x_0}(\pm x)=\pm \tfrac{x_0}{2}$ for $x\ge \tfrac{x_0}{2}$, and $g_{x_0}$ be the smooth function defined on $\R$ by
 $g_{x_0}(x):=\mathrm{exp}\big(2i\pi \tfrac{x}{x_0}\big)$. We note $h_{x_0}:=g_{x_0}\circ f_{x_0}$.
 
By \cite[Theorem~VIII.20]{ReedSimon1} and the equality $\phi\circ f_{x_0}=\wt{f}_{x_0}\circ\phi$, with $\wt{f}_{x_0}(y):=\max\big[y,\phi\big(\tfrac{x_0}{2}\big)\big]$,
the family $(f_{x_0}(\cD_t))_t$ is $\sT_{\phi}$-continuous on $\mathrm{SF}(L^2(\S^3)^2)$ (see Remark~\ref{rem:Wahl_top}).
 Hence $(h_{x_0}(\cD_t))_t$ is operator norm-continuous. For any $t$, $h_{x_0}(\cD_t)$ has spectrum embedded in $\mathbb{S}^1$
 with essential spectrum $\{-1\}$ (Figure~\ref{fig:spectru_circle}). 
 
 We can carry over Philips' definition of the spectral flow \cite{Philips_spectral_flow} for $(h_{x_0}(\cD_t))_t$
 seen as the number of eigenvalues crossing $1$ in the positive direction: 
 the spectral flow of $(\cD_t)_{t \in [a,b]}$ coincides with that of $(h_{x_0}(\cD_t))_{t \in [a,b]}$.
\end{rem}

\begin{figure}[!ht]
	\resizebox{0.25\textwidth}{!}{
	\begin{tikzpicture}
		\draw (0,0) circle [radius=2];
		\draw [fill] (-2,0) circle [radius=0.2];
		\node[left] at (-2.2,0) {\large$-1$};
		\draw[thick] (1.8,0) -- (2.2,0);
		\node at (2.4,0) {\large$1$};
		\draw[fill,color=gray] ({-1},{-sqrt(3)}) circle [radius=0.1];
		\draw[fill,color=gray] ({2*cos(25)},{-2*sin(25)}) circle [radius=0.1];
		\draw[fill,color=gray] ({2*cos(80)},{2*sin(80)}) circle [radius=0.1];
		\draw[thick,->] ({2.8*cos(11)},{-2.8*sin(11)}) arc (-11:11:2.8); 
	\end{tikzpicture}
	}
	\caption{The spectrum of $h_{x_0}(\cD_t)$. As $t$ varies, the point $-1$ can be seen as a r\'eservoir from which only finitely many eigenvalues
	can emerge and move along $\S^1$. 
	}
	\label{fig:spectru_circle}
\end{figure}
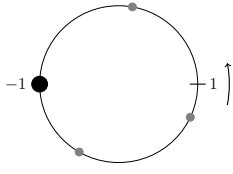

Through the injection 
 $(\Sd,\sT_{\phi})\hookrightarrow (\mathrm{SF},\sT_{W})$, we can see $(\cD_t)_{t \in [a,b]}$
 as a continuous path in $(\mathrm{SF},\sT_{W})$, and replace $(\Sd,\sT_{\phi})$
 by $(\mathrm{SF},\sT_{W})$ in the discussion below.

Let $\phi$ be some bump function around $0$.
We recall several important properties of $\Sf$ from \cite[Section~2]{Wahl08}, which we write for $(\Sd,\sT_{\phi})$:
\begin{enumerate}
	\item The map $\Sf$ is additive with respect to concatenation of paths.
	
	\item Homotopy invariance for open paths: 
	if $(\cD_{(r,t)})_{(r,t) \in [0,1]\times [a,b]}$ is a continuous
	family in $(\Sd,\sT_{\phi})$, such that $\cD_{(r,a)}$ and $\cD_{(r,b)}$ are invertible for all $r \in [0,1]$,
	then we have 
	$
		\Sf\big[(\cD_{(0,t)})_{t \in [a,b]}\big]
		= \Sf\big[(\cD_{(1,t)})_{t \in [a,b]}\big].
	$
	
	\item Homotopy invariance for loops: if $(\cD_{(r,t)})_{(r,t) \in [0,1]\times [a,b]}$ is a continuous
	family in $(\Sd,\sT_{\phi})$, such that $\cD_{(r,a)} = \cD_{(r,b)}$ for all $r \in [0,1]$,
	then
	\begin{align}\label{eq:sf_prop_3}
		\Sf\big[(\cD_{(0,t)})_{t \in [a,b]}\big]
		= \Sf\big[(\cD_{(1,t)})_{t \in [a,b]}\big].
	\end{align}
\end{enumerate}
By the above,  $\Sf$ defines a group homomorphism
$\Sf:\pi_1((\Sd,\sT_{\phi}))\to \Z$, which can be factorized through the first homology group:

\begin{figure}[!!!!!ht]
	\resizebox{0.55\textwidth}{!}{
	\begin{tikzpicture}
	\node[left] at (0.2,0) {$\pi_1((\Sd,\sT_{\phi}))$};
	\node[below] at (3,-1.5) {$H_1((\Sd,\sT_{\phi}))$};
	\node at (3.25,0) {$\mathbb{Z}$};
	
	\draw [->] (0.2,0) -- (2.8,0);
	\draw [->] (3.25,-1.5) -- (3.25,-0.3);
	\draw [->] (-0.4,-0.4) -- (2.2,-1.5);
	
	\node[above,scale=0.8] at (1.4,0) {sf};
	\node[left,scale=0.8] at (3.25,-0.75) {$\widetilde{\mathrm{sf}}$};
	\node[above right,scale=0.8] at (0.8,-0.9) {ab};
	
	\end{tikzpicture}
	}
\end{figure}

Here, $\pi_1((\Sd,\sT_{\phi}))$ denotes the fundamental group, $H_1((\Sd,\sT_{\phi}))$ the first homology group and
 $\mathrm{ab}$ the abelianization.

In accordance with Theorem~\ref{thm:bump_cont_bulk} \&~\ref{thm:bump_cont_bdry_A}
we can characterize a ``regular region" where the map $\bbd$ is bump-continuous (see Def.~\ref{def:bump_continuity}).
Its complement is called the ``critical region", where $\bbd$ fails to be bump-continuous. More precisely the critical region is:
\begin{align}\label{eq:crit_regi}
	\sR_{\crit}^{(K)} :&= \{(\uS',\ua') \in \sS^{(K)} \times \Tf^{K}:\\
	&\qquad \exists\, 1 \leq k \leq K \textrm{ such that } \alpha_k' =0 \textrm{ and }
	0 \in \spec\big(\cT_{k,\bA'}+\tau_{S_k'}\big)\}.
\end{align}
The regular region is:
\begin{align*}
	\sR_{\reg}^{(K)} := \big(\sS^{(K)} \times \Tf^{K}\big) \setminus
	\sR_{\crit}^{(K)},
\end{align*}
Injecting $\Sd$ into $(\mathrm{SF},\sT_{W})$, the map $\bbd$ induces the group homomorphism
\begin{equation}
	\Sf_{\bbd}:= \Sf \circ\, \bbd_{*} : \pi_1\big(\sR_{\reg}^{(K)}\big) \longrightarrow \Z,
\end{equation}
where $\bbd_*$ is the homomorphism $\bbd_*:\pi_1\big((\sR_{\reg}^{(K)},\dist)\big)\to \pi_1( (\mathrm{SF},\sT_{W}))$.

Similarly, when fixing a family $\uS \in \sS^{(K)}$ of Seifert surfaces, by Proposition~\ref{prop:calcul_spectre}
and Theorem~\ref{thm:bump_cont_bdry_A} the regular region is reduced to
$$
	(\sR_{\reg}^{(K)})_{|_{\uS'=\uS}}=\{ \uS\}\times \T_{\ug}\subset \sS^{(K)} \times \Tf^{K},
$$
where $\T_{\ug}$ is the \textit{(cut) torus of fluxes}: 
\begin{multline}\label{def:T_uS}
	\T_{\ug}:=\Tf^{K}\setminus\big\{\ua\in \Tf^K,\ \exists\,1\le k\le K,\ \alpha_{k}=0\  \mathrm{and}\\
	\pi\Wr(\gamma_{k})+2\pi\sum_{k'\neq k}\alpha_{k'}\link(\gamma_{k},\gamma_{k'})=\pi\!\!\mod 2\pi\big\}.
\end{multline}

Recall that $\Wr(\gamma_k)$ corresponds to the writhe of $\gamma_k$ 
(or more precisely that of any stereographic projection of $\gamma_k$) see Section~\ref{sec:techn}).
Furthermore, writing $\bA=\sum_{k'\neq k}2\pi\alpha_{k'}[S_{k'}]$ and $\bB=\partial \bA$, the quantity  
$2\pi\sum_{k'\neq k}\alpha_{k'}\link(\gamma_{k},\gamma_{k'})$ in \eqref{def:T_uS} is the flux
$\Phi_{\bB}(\gamma_k)$ of $\bB$ through the Seifert surface $S_k$.

The continuous map 
\[
	\bbd_{\uS}:
	\begin{array}{ccc}
		\T_{\ug}& \longrightarrow & \Sd\subset (\mathrm{SF},\sT_{W}),\\
		\ua &\mapsto &\cD_{\bA},\ \bA=\sum_{k=1}^K 2\pi\alpha_k[S_k],
	\end{array}
\] 
induces the group homomorphism
\begin{equation}\label{eq:spec_flow_at_fixed_seif}
	\Sf_{\uS}:=\Sf \circ \big(\bbd_{\uS}\big)_*:\pi_1\big(\T_{\ug} \big)\to \Z.
\end{equation}

\subsection{Computing the spectral flow}

\subsubsection{Spectral flow for loop circling the critical set}
Computing the spectral flow for a general loop in $\sR_{\reg}^{(K)}$ can be very difficult.
Theorem~\ref{thm:bump_cont_bdry_A} however allows us to compute the spectral flow for a small
loop encircling a point $(\uS,\ua)$ in $\sR_{\crit}^{(K)}$, for which there exist $k,k_0$ with
$\link(\gamma_k,\gamma_{k_0})\neq 0$, $\alpha_{k_0}\in (0,1)$ and 
the critical conditions
\begin{equation}\label{eq:la_croisiere_s_amuse}
	\alpha_k=0\sim 1\quad\&\quad \pi\Wr(\gamma_k)+2\pi\sum_{k'\neq k}\alpha_{k'}\link(\gamma_k,\gamma_{k'})=\pi\!\!\mod 2\pi.
\end{equation}

We define the homotopy of loops
$$
	(\cD_{(r,t)})_{(r,t) \in (0,1]\times [0,1]} := \Big(\cD_{\bA(r,t)}\Big)_{(r,t) \in (0,1]\times [0,1]}
$$
by freezing $\uS$ and all fluxes $\ua$ \textit{except}
\begin{equation*}
		\left\{
			\begin{array}{rcl}
			\alpha_{k_0}(r,t) &=& \alpha_{k_0} + r\cos(2 \pi t),\\
			\alpha_k(r,t) &=& 1+r\sin(2\pi t)\in \Tf,
			\end{array}
		\right.
\end{equation*}

\begin{theorem}\label{thm:spec_flow_gen_l}
	For $r>0$ small enough, we have
	\begin{align}\label{eq:formula_sf_circle}
		\Sf\big[(\cD_{(r,t)})_{t \in [0,1]}\big] = -\sign \link(\gamma_{k_0},\gamma_k).
	\end{align}
\end{theorem}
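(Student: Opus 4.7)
The plan is to apply Theorem~\ref{thm:bump_cont_bdry_B} at the critical point $(\uS,\ua)$ with $\lambda=0$ and $R=\{k\}$, which provides $\eps,\eta>0$ and, on a neighborhood of $(\uS,\ua)$, the splitting $\mathds{1}_{[-\eta,\eta]}(\cD_{\bA'})=P_{V(\bA',0)}+P_{W(\bA',0)}$. For $r>0$ small enough the loop $\ua(r,\cdot)$ lies in this neighborhood. Moreover, combining Proposition~\ref{prop:calcul_spectre} with~\eqref{eq:la_croisiere_s_amuse} and the hypothesis $\link(\gamma_{k_0},\gamma_k)\neq 0$: at the three loop points with $\alpha_k=0$, i.e.\ $t\in\{0,1/2,1\}$, the critical condition is shifted from its critical value by $\pm 2\pi r\cos(2\pi t)\link(\gamma_{k_0},\gamma_k)\not\equiv 0\pmod{2\pi}$, so the loop does not meet $\sR_{\crit}^{(K)}$ and its spectral flow is well defined.

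First I would partition the loop into the \emph{upper arc} $t\in[0,1/2]$ (on which $\alpha_k(r,t)\in[0,r]\subset[0,1/2)$) and the \emph{lower arc} $t\in[1/2,1]$ (on which $\alpha_k(r,t)\in[1-r,1]$, so $\alpha_k\in(1/2,1)$ in the interior). By part~(1) of Theorem~\ref{thm:bump_cont_bdry_B}, $\dim V(\bA(r,t),0)=1$ on the open lower arc and $\dim V(\bA(r,t),0)=0$ elsewhere on the loop. Parts~(3) and~(4), together with the explicit formula~\eqref{eq:prec_eig}, imply that the eigenvalue $\lambda_0(t)$ of $\cD_{\bA(r,t)}$ on $V(\bA(r,t),0)$ is continuous on $(1/2,1)$ with boundary limits
\[
\lim_{t\to(1/2)^+}\lambda_0(t)=+\tfrac{2\pi r}{\ell_k}\link(\gamma_{k_0},\gamma_k),\qquad
\lim_{t\to 1^-}\lambda_0(t)=-\tfrac{2\pi r}{\ell_k}\link(\gamma_{k_0},\gamma_k).
\]
The intermediate value theorem then shows that $\lambda_0$ crosses $0$ exactly once on the lower arc, with direction $-\sign\link(\gamma_{k_0},\gamma_k)$.

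Second, I would compare $\cD_{\bA}$ with the auxiliary operator $G(\bA',0):=\cD_{\bA'}(I-P_{V(\bA',0)})+\mu\,P_{V(\bA',0)}$ for a fixed $\mu\in(\eta,\infty)$ provided by part~(5) of Theorem~\ref{thm:bump_cont_bdry_B}. Part~(5) asserts that $G(\cdot,0)$ is $\sT_\phi$-continuous on a ball $B_{\eps(\mu)}(\uS)\times B_{\eps(\mu)}(\ua)$ containing the loop; since this ball is contractible in parameter space, the $G$-loop is null-homotopic in $\mathfrak{SF}$, and homotopy invariance for loops~\eqref{eq:sf_prop_3} yields $\Sf\big[(G(\bA(r,t),0))_{t\in[0,1]}\big]=0$. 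On the other hand, $\cD_{\bA}$ and $G(\bA,0)$ agree on $\ran(I-P_{V(\bA,0)})$ and differ on $V$ only by replacing the eigenvalue $\lambda_0(t)$ with the fixed constant $\mu>\eta>0$, which never crosses $0$. Subdividing the lower arc as $[1/2+\delta,1-\delta]$ and sending $\delta\to 0$ (to absorb the rank jump of $V$ at the endpoints, using the explicit limits above to ensure endpoint invertibility of both $\cD_\bullet$ and $G$), the difference between the two spectral flows is exactly the net crossing of $0$ by $\lambda_0$, hence
\[
\Sf\big[(\cD_{(r,t)})_{t\in[0,1]}\big]=\Sf\big[G\text{-loop}\big]-\sign\link(\gamma_{k_0},\gamma_k)=-\sign\link(\gamma_{k_0},\gamma_k),
\]
which is~\eqref{eq:formula_sf_circle}.

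The main technical obstacle is this last comparison: the vanishing subspace $V(\bA(r,t),0)$ has a rank jump at the three boundary points $t\in\{0,1/2,1\}$ of the loop and the corresponding eigenfunction ``evaporates'' in $L^2$ by concentrating on $\gamma_k$, so the spectral flow cannot be split naively into separate $V$- and $W$-contributions. The auxiliary operator $G$ of part~(5) is introduced precisely to bypass this issue, by pushing the dangerous eigenvalue out of the bump-function support while preserving the rest of the spectrum. Controlling the small intervals $[1/2,1/2+\delta]$ and $[1-\delta,1]$ via formula~\eqref{eq:prec_eig} and the strong continuity of $P_V$ from part~(3) is the most delicate step.
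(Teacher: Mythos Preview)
Your proposal is correct and follows essentially the same route as the paper: apply Theorem~\ref{thm:bump_cont_bdry_B} at the critical point, show $\Sf$ of the $G$-loop vanishes via the contractibility homotopy furnished by part~(5), and extract the $V$-contribution from the endpoint limits~\eqref{eq:prec_eig} of the single vanishing eigenvalue. The paper makes the last comparison rigorous via an explicit four-part subdivision of $[0,1]$ (at $0<\tfrac12<\tfrac12+\eps_0<1-\eps_0<1$) rather than a $\delta\to 0$ limit; note also that what matters is the \emph{net} crossing of $\lambda_0$ through $0$, determined by its endpoint signs, not ``exactly one'' crossing as the intermediate value theorem alone would give.
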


\begin{rem}
	Note that a non-zero spectral flow implies the existence of a zero mode for at least one choice 
	of fluxes on the loop.
	
	Furthermore, the non-zero value ensures that the map $\bbd$ is not continuous
	at the point $(\uS,\ua)$ of the critical region enclosed by the loops. If it was, then the spectral flow would be $0$.
	
	The minus sign in Formula~\eqref{eq:formula_sf_circle} is due to our convention of the Dirac operator $\sigma(-i\nabla+\boldsymbol{\alpha})$.
\end{rem}
\begin{proof}[Proof of Theorem~\ref{thm:spec_flow_gen_l}]
	Without loss of generality we may assume that $\alpha_{k'} \neq 0$ for all $k' \neq k$ and $k'\neq k_0$, 
	as loops with vanishing fluxes are simply not present see Section~\ref{sec:cont_discont}.

	By Theorem~\ref{thm:bump_cont_bdry_A} (for $\lambda=0$) and 
	Proposition~\ref{prop:calcul_spectre}, the loop $\big(\cD_{(r,t)}\big)_{t\in[0,1]}$
	is $\sT_W$-continuous for $r$ small enough: there exists a bump function $\phi_r$ such that the loop is $\sT_{\phi_r}$-continuous 
	(see \eqref{eq:def_t_phi} and Remark~\ref{rem:Wahl_top}). 
	In fact, the only possible points of discontinuity are $t=1/2$ and $t=1$, where
	we know that $0$ is not in $\spec(\cT_{k,\bA(r,t)}+\tau_{\gamma_k})$ by Assumption~\eqref{eq:la_croisiere_s_amuse}.
	Hence the spectral flow is well-defined.
	
	We shall now apply Theorem~\ref{thm:bump_cont_bdry_B} at the critical point $(\uS,\ua)$
	for $\lambda=0$. It provides us with a spectral window $[-\eta,\eta]\neq \{0\}$ and a radius $0<\eps<2^{-1}$
	such that for all $(\uS',\ua')\in B_\eps[\uS]\times B_\eps[\ua]=U_\eps$, we have
	$$
		\ran\,\mathds{1}_{[-\eta,\eta]}\big( \cD_{\bA'}\big)=V(\uS',\ua')\overset{\perp}{\oplus}W(\uS',\ua'),
	$$
	satisfying the five points of the theorem. 
	
	As we have the critical Assumption~\eqref{eq:la_croisiere_s_amuse},
	Theorem~\ref{thm:bump_cont_bdry_B}(1) and Proposition~\ref{prop:calcul_spectre} imply that
	for $(\uS',\ua')$ in the subset $\{(\uS'',\ua'')\in U_\eps,\,1-\eps<\alpha_k''<1\}$
	the vanishing subspace $V(\uS',\ua')$ has dimension $1$ and it is spanned by an eigenfunction of $\cD_{\bA'}$.
	Theorem \ref{thm:bump_cont_bdry_B}(4) implies that the corresponding eigenvalue 
	$\lambda(\uS',\ua')$ has a well defined limit in the limit $\alpha_{k}'\to 1^-$
	given by \eqref{eq:prec_eig}.	
	Here the limit value $\wt{\lambda}$ at point $(\wt{\uS},\wt{\ua})$ (with $\wt{\alpha}_k=0$) 
	is the point in $\spec(\cT_{k,\bA(\wt{\uS},\wt{\ua})}+\tau_{\wt{S}_k})$ 
	that connects continuously to $0$ as $(\wt{\uS},\wt{\ua}) \to (\uS,\ua)$.

	Fix $0<r_0<\eps$, and consider the loop $(\cD_{(r_0,t)})_{t\in\R/\Z}$. 
	Let $V(r_0,t)$ and $W(r_0,t)$ denote the decomposition of $\ran\,\mathds{1}_{[-\eta,\eta]}\big(\cD_{(r_0,t)}\big)$,
	and let $\phi_{r_0}$ be a bump function such that the loop $\big(\cD_{(r_0,t)}\big)_{t\in[0,1]}$ is $\sT_{\phi_{r_0}}$-continuous.
	We write $\lambda(r_0,t)$ for the eigenvalue of $\cD_{(r_0,t)}$ corresponding to $V(r_0,t)$ and define
	\[
		G(r_0,t):=\cD_{(r_0,t)}(1-P_{V(r_0,t)})+(2\eta+1)P_{V(r_0,t)}.
	\]
	By Theorem~\ref{thm:bump_cont_bdry_B}(5), the contribution of $G(r_0,t)$ 
	to the spectral flow cancels due to \eqref{eq:sf_prop_3} and 
	the fact that the map $(t,r)\in [0,1]\times [0,r_0]\mapsto G(r,t)$ is a $\sT_{\phi_{r_0}}$-homotopy.
	Hence the main contribution to the spectral flow comes from the vanishing subspace
	$V(r_0,t)$. By \eqref{eq:prec_eig} we have
	\[
		\lim_{t\to (2^{-1})^+}\lambda(r_0,t)=\tfrac{2\pi}{\ell_k}r_0\link(\gamma_{k_0},\gamma_k),
		\quad \lim_{t\to 1^-}\lambda(r_0,t)=-\tfrac{2\pi}{\ell_k}r_0\link(\gamma_{k_0},\gamma_k).
	\]
	
	The loop $t\mapsto \cD_{(r_0,t)}$ is $\sT_{\phi_{r_0}}$-continuous, in particular at points $t=1/2$ and $t=1$.
	Let $0<\mu<\tfrac{2\pi}{\ell_k}r_0\le \eta$, $\mu\notin[\spec(\cD_{(r_0,1)})\cup\spec(\cD_{(r_0,2^{-1})})]$. By Theorem~\ref{thm:bump_cont_bdry_B}(2)
	and the above, there exists $0<\eps_0<4^{-1}$
	such that for $t\in J_1:=[1/2-\eps_0,1/2+\eps_0]$ and $t\in J_2:=[1-\eps_0,1+\eps_0]\subset \R/\Z$,
	$P_{\wt{W}(r_0,t)}:=\mathds{1}_{[-\mu,\mu]}(\cD_{r_0,t})$ is equal to $\mathds{1}_{[-\mu,\mu]}(\cD_{r_0,t}) P_{W(r_0,t)}$ with $\mu\notin\spec(\cD_{(r_0,t)})$.

	By Theorem~\ref{thm:bump_cont_bdry_B}(2)-(3), $t\mapsto \mathds{1}_{[-\eta,\eta]}(\cD_{(r_0,t)})$ is norm-continuous on $I_1:=[\eps_0,1/2-\eps_0]$ and $I_2:=[1/2+\eps_0,1-\eps_0]$.
	Hence for $\eta_1>\eta$ with $(\eta_1-\eta)>0$ small enough, $\eta_1$ is in the resolvent set of $\cD_{(r_0,t)}$ for $t\in I_1\cup I_2$.

	We now use the definition of the spectral flow and split the path into four parts according to the subdivision
	$\Tf=I_1\cup J_1\cup I_2\cup J_2$. We choose the splitting level $\mu$ on $J_1$ and $J_2$ and
	the splitting level $\eta_1$ on $I_1$ and $I_2$, and denote by
	$V_+(r_0,t),W_+(r_0,t),\wt{W}_+(r_0,t)$ the positive parts with respect to $\cD_{(r_0,t)}$ of the corresponding subspaces.
	Let $a=1/2-\eps_0$, $b=1/2+\eps_0$, $c=1-\eps_0$ and $d=1+\eps_0=\eps_0\mod 1$.
	From these considerations it follows that
	\begin{align*}
		&\Sf\big[(\cD_{(r_0,t)})_{t \in [0,1]}\big]\\
		&\quad = \big(\dim \wt{W}_+(r_0,d) - \dim \wt{W}_{+}(r_0,c)\big)\\
		&\qquad +\!\left(\dim\big[W_+(r_0,c) \oplus V_+(r_0,c)\big]\!-\!\dim\big[W_+(r_0,b) \oplus V_+(r_0,b)\big]\right)\\
		&\qquad +\!\big(\dim \wt{W}_+(r_0,b)\!-\!\dim \wt{W}_{+}(r_0,a)\big)\!+\!\left(\dim W_+(r_0,a)\!-\!\dim W_{+}(r_0,d)\right)\\
		&\quad = \dim V_+(r_0,c) - \dim V_+(r_0,b)+\Sf\big[ (G(r_0,t))_{t\in [0,1]}\big]\\
		&\quad = \dim V_+(r_0,c) - \dim V_+(r_0,b)\\
		&\quad = -\sign\link(\gamma_{k_0},\gamma_k).
	\end{align*}
\end{proof}

\subsubsection{Change of the spectral flow under deformations}

We now try to describe the spectral flow defined on 
the fundamental group of a torus $\T_{\ug}$ for a generic $\uS\in\sS^{(K)}$ (with $\partial S_k=\gamma_k$).
It suffices to give the spectral flow on a set of generators. If the critical region of $\T_{\ug}$ is empty,
then the set is the whole torus $\Tf^K$ and its fundamental group is generated by the loop along the edges
\[
	\ell_k:t\in\Tf\mapsto (0,\ldots,0,t,0,\ldots,0)\in \Tf^K.
\]
If the critical region is nonempty, we easily see that $\pi_1(\T_{\ug})$ is generated by the set of loops 
in $\T_{\ug}$ parallel to the $\ell_k$'s:
\begin{equation}\label{eq:def_L_k_alpha}
	\ell_{k,\ua}:t\in\Tf \mapsto (\alpha_1,\ldots,\alpha_{k-1},t,\alpha_{k+1},\ldots,\alpha_K)\in \Tf^{K},
\end{equation}
where $\ua\in (0,1)^{K-1}$. Recall \eqref{def:T_uS}: the loop $\ell_{k,\ua}$
is in $\T_{\ug}$ if and only if
	\begin{equation}\label{eq:cond_cont_loop}
		\pi\Wr(\gamma_{k})+\Phi_{k}(\ua)\neq \pi \!\!\mod 2\pi,
		\quad \Phi_k(\ua):=2\pi\sum_{k'\neq k}\alpha_{k'} \link(\gamma_{k},\gamma_{k'}).
	\end{equation}

Theorem~\eqref{thm:arbitrary_spectral_flow} describes how $\Sf(\ell_{k,\ua})$ changes 
under deformation of the link $\ug$ and the fluxes $\ua$.

\begin{theorem}[Change of the spectral flow under deformations]\label{thm:arbitrary_spectral_flow}
	Let $\ug,\wt{\ug}\in \sK^K$ be two sets of non intersecting knots, defining two isotopic links.
	
	Let $1\le k_0 \le K$, let $\ua^{(k_0)}=(\alpha_k^{(k_0)})_{k\neq k_0}$ 
	and $\wt{\ua}^{(k_0)}=(\wt{\alpha}_k^{(k_0)})_{k\neq k_0}$ in $\Tf^{(K-1)}$,
	and let $\uS,\wt{\uS}\in\sS^{(K)}$ with $\partial S_k=\gamma_k$ and $\partial \wt{S}_k=\wt{\gamma}_k$. 
	
	Let $\Phi_{k_0}(\ua^{(k_0)}),\Phi_{k_0}(\wt{\ua}^{(k_0)})$ be the corresponding fluxes as in \eqref{eq:cond_cont_loop}.
	We assume that neither of the following two numbers are equal to $\pi \!\!\mod 2\pi$:
	\begin{equation}\label{eq:convention}
	\pi\Wr(\gamma_{k_0})+\Phi_{k_0}(\ua^{(k_0)})\le \pi\Wr(\wt{\gamma}_{k_0})+\Phi_{k_0}(\wt{\ua}^{(k_0)}).
	\end{equation}
	Let $(\bA(\alpha))_{0\le \alpha\le 1}$, $(\wt{\bA}(\alpha))_{0\le \alpha\le 1}$ be the two
	paths of magnetic potentials:
	\[
	 \bA(\alpha)=2\pi\alpha [S_{k_0}]+\sum_{k\neq k_0}2\pi\alpha_{k_0}^{(k_0)}[S_k],
	    \wt{\bA}(\alpha)=2\pi\alpha [\wt{S}_{k_0}]+\sum_{k\neq k_0}2\pi\wt{\alpha}_{k_0}^{(k_0)}[\wt{S}_k],
	\]
	defining two loops of Dirac operators. The difference of their spectral flow 
	$\Sf\big( (\cD_{\wt{\bA}(\alpha)})_{\alpha}\big)-\Sf\big( (\cD_{\bA(\alpha)})_{\alpha}\big)$ is equal to:
	\begin{equation*}
	 \big\lfloor \tfrac{1}{2}(1-\Wr[\wt{\gamma}_{k_0}])-\tfrac{1}{2\pi}\Phi_{k_0}(\wt{\ua}^{(k_0)}) \big\rfloor-
	\big\lfloor\tfrac{1}{2}(1-\Wr[\gamma_{k_0}])-\tfrac{1}{2\pi}\Phi_{k_0}(\ua^{(k_0)}) \big\rfloor.
	\end{equation*}
\end{theorem}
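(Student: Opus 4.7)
My plan is to interpolate the two loops by a one-parameter family and to track how the spectral flow jumps as we cross critical parameter values. I would first choose a smooth path
$\tau \in [0,1] \mapsto (\ug(\tau), \ua^{(k_0)}(\tau), \uS(\tau))\in \sS_{\sK}^{(K)} \times \Tf^{K-1}$
realizing the given isotopy of links and connecting the prescribed fluxes and Seifert surfaces, with standard transversality keeping the path inside $\sS_{\sK}^{(K)}$. Setting
\[
\bA_\tau(\alpha) := 2\pi\alpha [S_{k_0}(\tau)] + \sum_{k \neq k_0} 2\pi \alpha_k(\tau)[S_k(\tau)]
\]
and $L_\tau : \alpha \in \Tf \mapsto \cD_{\bA_\tau(\alpha)}$, I would introduce
\[
f(\tau) := \tfrac{1}{2}(1 - \Wr(\gamma_{k_0}(\tau))) - \tfrac{1}{2\pi} \Phi_{k_0}(\ua^{(k_0)}(\tau)),
\]
which is continuous in $\tau$ by the isotopy continuity of the writhe (Appendix~\ref{sec:techn}) and the affine dependence of $\Phi_{k_0}$ on the fluxes through the isotopy-invariant linking numbers. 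Hypothesis~\eqref{eq:convention} gives $f(0), f(1) \notin \Z$, and a generic perturbation of the path arranges that $f$ crosses $\Z$ transversely at finitely many points $\tau_1 < \cdots < \tau_N$.

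By Theorem~\ref{thm:bump_cont_bulk} together with Theorem~\ref{thm:bump_cont_bdry_A} and Proposition~\ref{prop:calcul_spectre}, the loop $L_\tau$ is bump-continuous if and only if $f(\tau) \notin \Z$, so $\Sf(L_\tau)$ is well-defined precisely off the set $\{\tau_j\}$. By homotopy invariance of the spectral flow, $\tau \mapsto \Sf(L_\tau)$ is constant on each open subinterval of $[0,1] \setminus \{\tau_1, \dots, \tau_N\}$, and it remains to compute the jump across each $\tau_j$.

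At the critical point $(\uS(\tau_j), \ua(\tau_j))$ (with $\alpha_{k_0}(\tau_j)=0$ and $f(\tau_j) = n \in \Z$), Theorem~\ref{thm:bump_cont_bdry_B} provides on a neighborhood a splitting $\ran \mathds{1}_{[-\eta, \eta]}(\cD_\bA) = V \oplus W$. On the cell where $\alpha_{k_0} > 1/2$ Proposition~\ref{prop:calcul_spectre} forces $\dim V = 1$, and the unique eigenvalue $\mu(\bA)$ of $\cD_\bA|_V$ extends continuously to the boundary $\alpha_{k_0} = 1^-$ with value
\[
\mu(\bA)\big|_{\alpha_{k_0} = 1^-} = \tfrac{2\pi}{\ell_{k_0}(\tau)}(f(\tau) - n)
\]
by \eqref{eq:prec_eig}, and this limit changes sign transversely at $\tau = \tau_j$. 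Point (2) of Theorem~\ref{thm:bump_cont_bdry_B} ensures that the $W$-contribution to the spectral flow passes through $\tau_j$ continuously, so the entire jump is carried by the one-dimensional $V$. Using point (5) of Theorem~\ref{thm:bump_cont_bdry_B} to shift $\mu$ away from zero outside a small $\alpha_{k_0}$-neighborhood of $1^-$ and applying the definition of spectral flow from Section~\ref{sec:spectral_flow} piecewise, one sees that exactly one eigenvalue sweeps through $0$ in the bulk portion of $L_\tau$ as $\tau$ crosses $\tau_j$; the sign of the jump is $\sign f'(\tau_j)$, pinned down by the same local argument underlying Theorem~\ref{thm:spec_flow_gen_l} with $\tau$ playing the role of the auxiliary flux parameter.

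Summing over the $\tau_j$ telescopes to $\sum_j \sign f'(\tau_j) = \lfloor f(1) \rfloor - \lfloor f(0) \rfloor$, which is the claimed formula. The principal obstacle will be the sign analysis at a crossing: identifying unambiguously the direction in which the collapsing eigenvalue sweeps past $0$, converting this into a single zero crossing in the bulk portion of the closed loop $L_\tau$, and matching the sign against the orientation conventions fixed in Theorem~\ref{thm:spec_flow_gen_l} and in~\eqref{eq:formula_sf_circle}.
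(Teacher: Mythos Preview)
Your approach is correct and closely parallels the paper's proof in the general link case: the paper also interpolates by a smooth path $(\ug(r),\ua^{(k_0)}(r))$, identifies the critical set via $f(r)\in\Z$, arranges transversality with the critical manifold, and concludes that the spectral flow jumps by $\pm1$ at each crossing. The one genuine difference concerns the case $K=1$ (or more generally $\link(\gamma_k,\gamma_{k_0})=0$ for all $k\neq k_0$): there the paper adjoins an auxiliary circle $\cC_r$ with $\link(\gamma_r,\cC_r)=1$ so that Theorem~\ref{thm:spec_flow_gen_l} applies verbatim (that theorem is stated for a small loop in a two-flux plane), whereas you let the isotopy parameter $\tau$ itself play the role of the auxiliary parameter, using the writhe term in~\eqref{eq:prec_eig} rather than a flux term. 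Your route is more uniform and avoids the extra construction; it works because Theorem~\ref{thm:bump_cont_bdry_B} and formula~\eqref{eq:prec_eig} are stated on the full space $\sS_{\sK}^{(K)}\times\Tf^K$ and hence govern variations of $\uS$ (thus of the writhe) just as well as variations of fluxes. The paper's detour buys a direct citation of Theorem~\ref{thm:spec_flow_gen_l} rather than a rerun of its proof. The sign analysis you flag as the principal obstacle reduces to reproducing the final computation in the proof of Theorem~\ref{thm:spec_flow_gen_l} with $\tau$ replacing $\alpha_{k_0}$; the jump comes out as $\sign f'(\tau_j)$ exactly as you state, after which the telescoping identity $\sum_j\sign f'(\tau_j)=\lfloor f(1)\rfloor-\lfloor f(0)\rfloor$ is immediate.
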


\begin{proof}[Proof of Theorem~\ref{thm:arbitrary_spectral_flow}]
	Consider a smooth isotopy $(F_r)_{r \in [0,1]}$ transforming $\ug$ into $\wt{\ug}$, meaning
	$F_0 = \mathrm{id}_{\S^3}$ and $F_1(\gamma_k) = \wt{\gamma}_k$, and set $\gamma_{k,r}:=F_r(\gamma_k)$.
	We can assume that $\wt{S}_k=F_1(S_k)$ by gauge invariance.
	We first deal with the case of one knot $K=1$.

	\begin{figure}[!!ht]
		\begin{tikzpicture}
			\foreach \y in {0,3,6}
			{
				\filldraw[fill=gray!10,draw=black,thick] (0,\y) -- (6,\y) -- (8,\y+2) -- (2,\y+2) -- cycle;
				\draw[thick] (3,\y) -- (5,\y+2)
				node[sloped, pos=0.7, scale=0.8] {$>$};
				\draw [fill] (1,\y+1) circle [radius = 0.05];	
				\draw [fill] (4,\y+1) circle [radius = 0.05];	
				\draw [fill] (7,\y+1) circle [radius = 0.05];	
			}

			\node[left, scale=0.8] at (3.3,0.5) {$M_0$};
			\node[left, scale=0.8] at (3.8,4) {$M_{r_1}$};
			\node[left, scale=0.8] at (3.3,6.5) {$M_{r_2}$};

			\draw[thin,->] (2,2)--(2.3,2.3);
			\node[above right] at (2.2,2.2) {$\alpha_1$};
			
			\draw[thin,->] (6,0)--(6.5,0);
			\node[right] at (6.5,0) {$\alpha_2$};
			
			\draw[thin,->] (0,0)--(0,6.5);
			\node[above] at (0,6.5) {$r$};
			
			\draw [fill] (1,0) circle [radius = 0.05];
			\draw [fill] (4,0) circle [radius = 0.05];
			\draw [fill] (6,2) circle [radius = 0.05];
			\draw [fill] (3,2) circle [radius = 0.05];
			
			\draw [fill] (0,3) circle [radius = 0.05]; 
			\draw [fill] (3,3) circle [radius = 0.05]; 
			\draw [fill] (2,5) circle [radius = 0.05]; 
			\draw [fill] (5,5) circle [radius = 0.05]; 
			\draw [fill] (8,5) circle [radius = 0.05]; 
			
			\draw [fill] (4,8) circle [radius = 0.05]; 
			\draw [fill] (2,6) circle [radius = 0.05];
			\draw [fill] (7,8) circle [radius = 0.05]; 
			\draw [fill] (3,6) circle [radius = 0.05];  
			
			
			\draw[very thick] (3,2) -- (17/3-3,3);
			\draw[very thick, dashed] (17/3-3,3) -- (2,5);
			\draw[very thick]  (6,2) -- (17/3,3);
			\draw[very thick, dashed] (17/3,3) -- (5,5);
			\draw[very thick]  (5,5) -- (14/3,6);
			\draw[very thick, dashed] (14/3,6) -- (4,8);
			\draw[very thick]  (8,5) -- (29/4,29/4);
			\draw[very thick, dashed] (29/4,29/4) -- (7,8);

			\draw[very thick] (4,0) -- (2,6); 
			\draw[very thick] (1,0) -- (0,3); 
			\draw[very thick] (7,1) -- (7,7); 
			\foreach \x in {0,3}		       
			{
				\draw[very thick] (1+\x,1) -- (1+\x,3);  
				\draw[very thick, dashed] (1+\x,3) -- (1+\x,4);
				\draw[very thick] (1+\x,4) -- (1+\x,6);
				\draw[very thick, dashed] (1+\x,6) -- (1+\x,7);
			}
			
			\foreach \y in {0,3,6}
			{
				\draw (4.5,\y) -- (6.5,\y+2)
				node[sloped, pos=0.7, scale=0.8] {$>$};
			}
			
			\node[below right, scale=0.8] at (6.25,1.75){$L_0$};
			\node[below right, scale=0.8] at (6.25,4.75){$L_{r_1}$};
			\node[below right, scale=0.8] at (6.25,7.75){$L_{r_2}$};

		\end{tikzpicture}
		\caption{A $\sT_{W}$-homotopy in $\cup_{r \in [0,r_2]} \bbd_{\uS(r)}(\Tf^2)$. At each height $r$ we have drawn two
		copies of $\Tf^2$ in its universal covering $\R^2$. 
		The bold lines form the set of critical points: the oblique ones for the moving critical $(0,\alpha_2)$ 
		and the vertical ones for the fixed critical $(\alpha_1=\tfrac{1}{2},0)$.}
		\label{fig:t_homotopy}
	\end{figure}
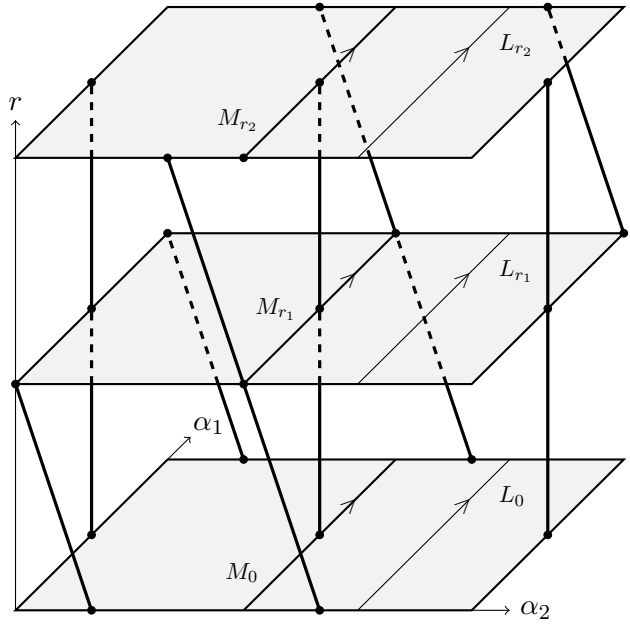
	To the one-parameter family $(\gamma_r)_{r \in [0,1]}$ we can associate a smooth one-parameter family of
	circles $(\cC_r)_{r \in [0,1]}$ in $\S^3$ such that $\link(\gamma_r,\cC_r) = 1$ for all $r \in [0,1]$.
	To each couple of knots $(\gamma_r, \cC_r) =: \Gamma_r$ we assign a couple of Seifert surfaces $\uS(r) \in \sS^{(2)}$
	in such a way that $(\uS(r))_{r \in [0,1]}$ is a smooth family.
	Recall that the critical points on $\partial\T_{\Gamma_r}$ are $(1/2,0)$ and $(0,\alpha_c(r))$, where $\alpha_c(r)$
	satisfies $\pi[1-\Wr(\gamma_r)] - 2\pi\alpha_c(r)\cdot 1 = 0 \!\!\mod 2\pi$ 
	(see Theorem~\ref{thm:bump_cont_bdry_A} and Proposition~\ref{prop:calcul_spectre}).

	We then consider the $\sT_{W}$-homotopy $(L_r)_{r \in [0,r_2]}$ in
	$$
		\bigcup_{r \in [0,r_2]} \bbd_{\uS(r)}(\T_{\Gamma_r}) \subset \bigcup_{r \in [0,r_2]} \bbd_{\uS(r)}(\Tf^2) \subset \Sd,
	$$
	drawn in Figure~\ref{fig:t_homotopy} (which corresponds to freezing the flux of the circle to some $0<\alpha_{\mathrm{aux}}<1$).
	Observe that the loop $M_r$ corresponds to the loop
	$$
		M_r = \big(\cD_{\bA(r, \alpha)}\big)_{\alpha \in [0,1]},
	$$
	where $\bA(r,\alpha) = 2\pi\alpha[S_1(r)]$. By Section~\ref{sec:cont_discont} 
	the loop $M_r$ is bump-continuous as long as $\tfrac{1}{2}[1-\Wr(\gamma_r)]\notin \Z$, and its spectral flow is equal to that of the loop
	associated to $\alpha\mapsto \bA(r,\alpha,\eps):=2\pi\alpha[S_1(r)]+2\pi\eps [S_2(r)]$ for $\eps>0$ small enough.
	From \eqref{eq:sf_prop_3} we immediately obtain that
	$$
		\Sf[L_0] = \Sf[L_{r_2}].
	$$
	Furthermore, $\Sf[L_{r_2}] = \Sf[M_{r_2}]$, yet by Theorem~\ref{thm:spec_flow_gen_l},
	$$
		\Sf[L_0] - \Sf[M_0] = -1,
	$$
	as $L_0-M_0$ is homologically equivalent to a circle around $(\tfrac{1}{2},0)$. We get
	$$
		\Sf[M_{r_2}] = \Sf[M_0] - 1.
	$$
	The jump in the spectral flow occurs precisely when $\tfrac{1}{2}[1-\Wr(\gamma_r)]$ crosses $\Z$, in Figure~\ref{fig:t_homotopy} at $r=r_1$.
	The result of the theorem then follows by successively iterating the above procedure, and the difference of the two spectral flows
	is minus the number of integer points between $\tfrac{1}{2}[1-\Wr(\gamma)]$ and $\tfrac{1}{2}[1-\Wr(\wt{\gamma})]$.
	
	\smallskip
	
	Now let us deal with the case of a link. We first deal with the case 
	$\ua^{(k_0)},\wt{\ua}^{(k_0)}\in (0,1)^K$.

	We can assume that there exists $k_1$ with $\link(\gamma_{k_1},\gamma_{k_0})\neq 0$, otherwise,
	the problem can be reduced to the one-knot case. 
	Indeed, following Section~\ref{sec:cont_discont}, we can lower $\ua^{(k_0)}$ down to $\underline{0}$ along $s\mapsto (1-s)\ua^{(k_0)}.$
	If $\link(\gamma_k,\gamma_{k_0})=0$ for all $k\neq k_0$, this would define a ($\Sd$-valued) bump continuous homotopy.

	As in the one knot case, we consider the smooth path $(\ug)_r$, 
	and we associate to it a smooth path $(\ua_r^{(k_0)})=(\alpha_{k,r}^{(k_0)})_{k\neq k_0}$ connecting $\ua^{(k_0)}$ to
	$\wt{\ua}^{(k_0)}$ within $(0,1)^{K-1}$. The two paths altogether define a (strong-resolvent continuous)-homotopy
	of Dirac operators $(\cD_{\bA(\alpha,r)})_{0\le \alpha,r\le 1}$ where:
	\begin{equation*}
	 \bA(\alpha,r):=2\pi\alpha[F_r(S_{k_0})]+\sum_{k\neq k_0}2\pi\alpha_{k,r}^{(k_0)}[F_r(S_k)].
	\end{equation*}

	This homotopy intersects the critical region $\sR_{\crit}^{(K)}$, (see \ref{eq:crit_regi}, Proposition~\ref{prop:calcul_spectre})
	whenever
	\begin{equation}\label{eq:critical_region_in_proof}
	 \alpha_{k_0}=0\ \&\ f(r):=\frac{1}{2}[1-\Wr(\gamma_{k_0,r})]-\sum_{k\neq k_0}\alpha_{k,r}\link(\gamma_k,\gamma_{k_0})\in \Z.
	\end{equation}
    By Sard's lemma, 
    $e^{-2i\pi\eps}$ is a regular value of $e^{2i\pi f(r)}$ along some sequence $\eps_n\to0^+$.
    So up to adding to $\alpha_{k_1,r}$ a term $\eps \chi(r)$  where $\chi\in C^\infty_c([0,1],[0,1])$ with
    $\chi\restriction (\eta,1-\eta)=1$ and $\chi=0$ around $0$ and $1$,
    we can assume that $f$ crosses $\Z$ finitely many times and that all crossings occur at regular points.

    Up to inserting parts 
    with only $\alpha_{k_1,r}$ varying, we can assume that the intersections occur when everything is frozen but $\alpha_{k_1}$ (after transformation the isotopy and the $\alpha_{k,r}$'s
    are continuous, piecewise $C^{\infty}$, with constant parts for $F_r$ and $\alpha_{k,r}$, $k\neq k_1$).

	\emph{Claim:} across the intersection the spectral flow $\Sf((\cD_{\bA(\alpha,r)})_{\alpha})$ jumps by $1$ (resp. $-1$)
	when the number $ \frac{1}{2}(1-\Wr[\gamma_{k_0,r}])-\sum_{k\neq k_0}\alpha_{k,r}\link(\gamma_k,\gamma_{k_0})$ 
	crosses the lattice $\Z$ \emph{positively} (resp. \emph{negatively}). Let $0<r_1<\cdots<r_M<1$ be the points of intersection. Let $I_0:=[0,r_1),I_M:=(r_{M},1]$ and $I_m:=(r_m,r_{m+1})$, for $1\le m\le M-1$.
	Each restriction $(\cD_{\bA(\alpha,r)})_{\alpha\in\Tf\atop r\in I_m}$ is bump-continuous. 
	For $r=r_m\in (0,1)$ the loop $(\cD_{\bA(\alpha,r_m)})_{\alpha\in\Tf}$ does not have a well-defined spectral flow.
	Furthermore, the spectral flow of the loop with $r=r_m-\eps$ differs from that for $r=r_m+\eps$ by $\pm 1$ since their difference 
	in $H^1(\mathrm{SF},\sT_{W})$
	is homologous to a loop in the bulk $\sS^{(K)}\times (0,1)^K$ and a loop circling the critical region $\sR_{\crit}^{(K)}$ as pictured in Figure~\ref{fig:homologous}.
	Adding up all the contributions, 
	we obtain -- under the assumption \eqref{eq:convention} -- that the difference
	\[
		\Sf\big( (\cD_{\bA(\alpha)})_{\alpha}\big)-\Sf\big( (\cD_{\wt{\bA}(\alpha)})_{\alpha}\big)
	\]
	is equal to the (non-negative) number of integers between $\frac{1}{2}(1-\Wr[\gamma_{k_0}])-\sum_{k\neq k_0}\alpha_{k}\link(\gamma_k,\gamma_{k_0})$
	and $\frac{1}{2}(1-\Wr[\wt{\gamma}_{k_0}])-\sum_{k\neq k_0}\wt{\alpha}_{k}\link(\gamma_k,\gamma_{k_0})$.

	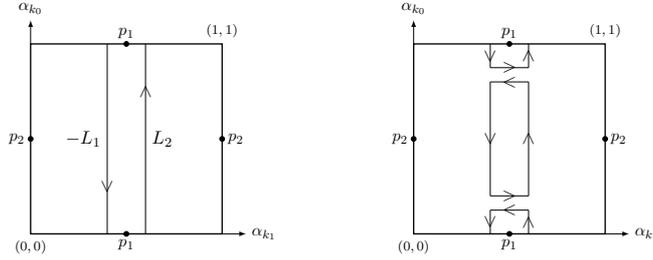
\begin{figure}[!!ht]
	\resizebox{0.7\textwidth}{!}{
	\begin{tikzpicture}
	\foreach \x in {0,8}{
		\draw[thick] ({0+\x},0) rectangle ({4+\x},4);
		\node[below] at ({0+\x},0) {\scriptsize $(0,0)$};
		\node[above] at ({4+\x},4) {\scriptsize $(1,1)$};
		\draw[>=latex, ->] ({4+\x},0) -- ({4.5+\x},0);
		\node[thin, right, scale=0.9] at ({4.5+\x},0) {$\alpha_{k_1}$};
		\draw[>=latex, ->] ({0+\x},4) -- ({0+\x},4.5);
		\node[thin, above, scale=0.9] at ({0+\x},4.5) {$\alpha_{k_0}$};
		
		\node[below, scale = 0.9] at ({2+\x},0) {$p_1$};
		\node[above, scale = 0.9] at ({2+\x},4) {$p_1$};
		\node[left, scale = 0.9] at ({0+\x},2) {$p_2$};
		\node[right, scale = 0.9] at ({4+\x},2) {$p_2$};
		\draw [fill] ({2+\x},4) circle [radius = 0.05];
		\draw [fill] ({4+\x},2) circle [radius = 0.05];
		\draw [fill] ({0+\x},2) circle [radius = 0.05];
		\draw [fill] ({2+\x},0) circle [radius = 0.05];
		}
		\draw (1.6,0)--(1.6,4)
			node[near start,sloped] {$<$}; 
		\node[left] at (1.6,2) {$-L_1$};
		\draw (2.4,0)--(2.4,4)
			node[near end,sloped] {$>$}; 
		\node[right] at (2.4,2) {$L_2$};

		\draw (9.6,0.8)--(9.6,3.2)
			node[midway,sloped] {$<$}; 
		\draw (10.4,0.8)--(10.4,3.2)
			node[midway,sloped] {$>$}; 
		\draw (9.6,0.8)--(10.4,0.8)
			node[midway,sloped] {$>$};
		\draw (9.6,3.2)--(10.4,3.2)
			node[midway,sloped] {$<$};
			
		\draw (10.4,0)--(10.4,0.5)
			node[midway,sloped] {$>$};
		\draw (10.4,0.5)--(9.6,0.5)
			node[midway,sloped] {$<$};
		\draw (9.6,0.5)--(9.6,0)
			node[midway,sloped] {$>$};
			
		\draw (9.6,4)--(9.6,3.5)
			node[midway,sloped] {$>$};
		\draw (9.6,3.5)--(10.4,3.5)
			node[midway,sloped] {$>$};
		\draw (10.4,3.5)--(10.4,4)
			node[midway,sloped] {$>$};
	\end{tikzpicture}
	}
	\caption{Homology of the difference of two parallel loops $L_1,L_2$ separated by the critical region.}
	\label{fig:homologous}
\end{figure}

	Now let us assume that some $\alpha_{k}$ or $\wt{\alpha}_k$ is $0$. By continuity from the right (Section~\ref{sec:cont_discont}), for $\eps>0$
	small enough, we can increase the vanishing $\alpha_{k}$ and $\wt{\alpha}_k$ from $0$ up to some $\eps>0$ so that
	\[
        \left\{
            \begin{array}{rcl}
                \bA(\alpha,s)&=&2\pi\alpha[S_{k_0}]+2\pi\sum_{k\neq k_0}\max(s,\alpha_k)[S_k],\\
                \wt{\bA}(\alpha,s)&=&2\pi\alpha[\wt{S}_{k_0}]+2\pi\sum_{k\neq k_0}\max(s,\wt{\alpha}_k)[\wt{S}_k]
            \end{array}
        \right.
	\]
    define bump-continuous homotopies $(\alpha,s)\in\Tf\times [0,\eps]\mapsto \cD_{\bA(\alpha,s)}$ resp. $\cD_{\wt{\bA}(\alpha,s)}$.
    Then we apply the previous proof to $(\cD_{\bA(\alpha,\eps)})_{\alpha\in\Tf}$ and $(\cD_{\wt{\bA}(\alpha,\eps)})_{\alpha\in\Tf}$.

\end{proof}

We now give two consequences to Theorem~\ref{thm:arbitrary_spectral_flow}, including the computation
of the spectral flow for the unknot.

\subsubsection{Spectral flow for an unknot}\label{subs:circle}
Let $\cC \subset \S^3$ be a circle, that is the intersection of $\S^3$ 
with a $2$-dimensional plane, excluding the case when this set is empty or just a point.
We then have the following result \cite[Theorem~25]{dirac_s3_paper1}.

\begin{theorem}\label{thm:no_zero_modes_for_circles}
	Let $\cC$ be an oriented circle in $\S^3$ with Seifert surface $S$, and set $\bA = 2\pi\alpha[S]$.
	For any flux $0 \leq \alpha < 1$ we have
	$$
		\ker\cD_{\bA}=\{0\}.
	$$
\end{theorem}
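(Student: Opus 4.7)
The plan is to exploit the large isometry group of a geometric circle $\cC\subset\S^3$ to reduce the zero-mode equation to a family of ODE eigenvalue problems, and then rule out admissible solutions mode by mode. Because $\cC$ is the intersection of $\S^3$ with a complex line, we may arrange coordinates so that $\cC=\{z_2=0\}\cap\S^3$, and introduce toroidal coordinates
\[
(z_1,z_2)=(\cos\rho\,e^{is},\sin\rho\,e^{i\theta}),\qquad (\rho,s,\theta)\in(0,\tfrac{\pi}{2})\times \T^2,
\]
which identify $\cC$ with $\{\rho=0\}$, and the Seifert disk $S$ with $\{\theta=0\}$. In these coordinates the $T^2$-action $(s,\theta)\mapsto(s+s_0,\theta+\theta_0)$ is a free isometric action on $\S^3\setminus(\cC\cup\cC')$ (where $\cC'=\{\rho=\pi/2\}$) that preserves both $\cC$ and $S$. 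It lifts to the spin bundle and commutes with $\cD_{\bA}$ once the singular gauge $\bA=2\pi\alpha[S]$ is taken into account via the prescribed phase jump $e^{-2\pi i\alpha}$ across $S$.

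Next, Fourier-decompose the spinor
\[
\psi(\rho,s,\theta)=\sum_{n,m\in\Z}e^{ins}\,e^{i(m+\alpha)\theta}\,\phi_{n,m}(\rho),
\]
where the explicit $\alpha$-shift implements the phase jump and renders each summand single-valued on $\Omega_S$. Plugging this ansatz into the Dirac equation $\cD_{\bA}\psi=0$, using the orthonormal frame $(\bT,(\d\rho)^\sharp,\bG)$ of Section~\ref{sec:seif_fram_loc_coord}, and the expression \eqref{eq:spin_basis} for the connection form, decouples the problem into a family of $2\times 2$ first-order ODEs on $(0,\pi/2)$ for $\phi_{n,m}$, indexed by $(n,m)\in\Z^2$. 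Both endpoints $\rho=0$ and $\rho=\pi/2$ are regular singular points, and for each fixed $(n,m)$ the system can be rewritten (after the substitution $u=\sin^2\rho$) as a hypergeometric-type equation, whose two independent local solutions at each endpoint are explicit.

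The third and decisive step is to intersect the admissibility constraints. At $\rho=\pi/2$, square-integrability with respect to the volume weight $\sin\rho\cos\rho\,\d\rho$ rules out one of the two indicial exponents. At $\rho=0$ the constraints are of two types: $L^2$-integrability again selects only one exponent, and on top of that the self-adjointness condition in Theorem~\ref{thm:def_D_-} (written in terms of the local sections $\xi_\pm$ and the localizer $\chi_{\delta,\cC}$) demands that the $\xi_+$-component of $\phi_{n,m}$ not merely be $L^2$ but additionally satisfy the $H^1$-type regularity imposed on the minimal domain. The task is then to show that for every $(n,m)\in\Z^2$ and every $\alpha\in[0,1)$, the two selections at $\rho=0$ and $\rho=\pi/2$ produce incompatible connection coefficients (equivalently, the associated hypergeometric boundary value problem has no solution for non-integer parameter $m+\alpha$).

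The main obstacle is precisely this last compatibility check: the abstract domain condition near $\cC$ (formulated through the projection $P_{\xi_+}$) must be translated into an explicit indicial constraint on $\phi_{n,m}$ that correctly encodes the chirality along $\bT$. Once this translation is in place, the proof reduces to verifying that the connection formula between the two endpoints never admits a trivial linear combination unless $m+\alpha\in\Z$, which forces $\alpha=0$ and $m=0$; the latter case is then excluded directly because the flux-free Dirac operator on $\S^3$ has spectrum bounded away from $0$. I expect the hypergeometric analysis to be straightforward once the correct indicial exponents are identified, so the conceptual effort lies entirely in the domain translation step.
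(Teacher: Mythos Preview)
The paper does not prove this statement here; it is quoted from the companion paper \cite{dirac_s3_paper1} (Theorem~25 there), so there is no in-paper argument to compare against directly. Judging from the surrounding material---in particular Theorem~\ref{thm:erdos-solovej}, also imported from \cite{dirac_s3_paper1}---the intended proof presumably goes through the Erd\H{o}s--Solovej correspondence: a great circle is a single Hopf fiber, the full spectrum of $\cD_{\bA}$ is expressed via 2D Dirac operators on $\S^2$, and one checks from that description that $0$ is never an eigenvalue when $K=1$. An arbitrary circle (intersection of $\S^3$ with an affine complex line) is then handled by conformal covariance of the Dirac kernel.

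Your route---separation of variables in toroidal coordinates followed by an ODE/hypergeometric analysis---is a legitimate alternative and morally close to the above (you exploit the full $T^2$-symmetry where Erd\H{o}s--Solovej use only the Hopf $U(1)$-action), but two points need attention. First, the sentence ``we may arrange coordinates so that $\cC=\{z_2=0\}\cap\S^3$'' is only correct for complex lines through the origin; a general circle in the sense of the theorem has smaller geodesic radius and cannot be moved to a Hopf fiber by an isometry of $\S^3$. You must invoke conformal covariance of $\ker\cD_{\bA}$ (standard, cf.\ \cite{MR1860416}) to reduce to the great-circle case before your coordinates apply. Second, the hypergeometric connection-coefficient check is asserted but not carried out; this is where the actual content lives, and the translation of the domain condition of Theorem~\ref{thm:def_D_-} at $\rho=0$ into the correct indicial selection is precisely the ``domain translation step'' you yourself flag as the main obstacle. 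Until that step is written down and the incompatibility verified for every $(n,m)$, the proposal is an outline rather than a proof. The Erd\H{o}s--Solovej route has the advantage of packaging all Fourier modes at once into a single 2D spectral statement whose analysis is already available.
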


This theorem is extremely useful for the computation of the spectral flow, since it will provide
us with a (generally hard-to-obtain) reference point; setting $\bA(\alpha) := 2\pi\alpha [S]$, $\alpha \in \Tf$, 
$\partial S = \cC$, we immediately obtain
$$
	\Sf\big[\big(\cD_{\bA(\alpha)}\big)_{\alpha \in [0,1]}\big] = 0.
$$

We combine this result with Theorem~\ref{thm:arbitrary_spectral_flow}: 
if we take $\wt{\gamma}_{k_0}$ a circle, $\wt{\ua}^{(k_0)}=(\underline{0}^+)$, and $\gamma_{k_0}$
another realization of the unknot we obtain the following result.

\begin{corollary}\label{cor:sf_link_unknots}
	Let $\uS\in \sS^{(K)}$, where $\gamma_{k_0}$ is a realization of the unknot, $1\le k_0\le K$. 
	Let $\ua^{(k_0)}=(\alpha_k^{(k_0)})_{1\le k\le K\atop k\neq k_0}\in \Tf^{(K-1)}$.
	We assume that the number $\frac{1}{2}(1-\Wr[\gamma_{k_0}])-\tfrac{1}{2\pi}\Phi_{k_0}(\ua^{(k_0)})$ is not an integer where:
	 \[
	\tfrac{1}{2\pi} \Phi_{k_0}(\ua^{(k_0)})=\sum_{1\le k\le K\atop k\neq k_0}\alpha_k^{(k_0)}\link(\gamma_k,\gamma_{k_0}).
	 \]
	Let $(\bA(\alpha))_{0\le \alpha\le 1}$ be the family $\bA(\alpha):=2\pi\alpha[S_{k_0}]+\sum_{k\neq k_0}2\pi\alpha_k^{(k_0)}[S_k]$,
	then we have
	\[
		\Sf[(\cD_{\bA(\alpha)})_{0\le \alpha\le 1}] =\lfloor\tfrac{1}{2}(1-\Wr[\gamma_{k_0}]) - \tfrac{1}{2\pi}\Phi_{k_0}(\ua^{(k_0)})\rfloor.
	\]
\end{corollary}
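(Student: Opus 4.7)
My plan is to reduce the computation to a reference configuration where the spectral flow vanishes (via Theorem~\ref{thm:no_zero_modes_for_circles}) and then to transport the result by Theorem~\ref{thm:arbitrary_spectral_flow}. Since $\gamma_{k_0}$ realises the unknot, I choose a smooth ambient isotopy $(F_r)_{r\in[0,1]}$ of $\S^3$ with $F_0=\mathrm{id}$ such that $\cC:=F_1(\gamma_{k_0})$ is a round circle (the intersection of $\S^3$ with a complex line). I take as reference the link $\wt{\ug}:=F_1(\ug)$ with Seifert surfaces $\wt{\uS}:=F_1(\uS)$ and reference fluxes $\wt{\ua}^{(k_0)}=0$; the associated family of gauge potentials is $\wt{\bA}(\alpha)=2\pi\alpha[\wt{S}_{k_0}]$ for $\alpha\in[0,1]$.

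I then verify that $\Sf[(\cD_{\wt{\bA}(\alpha)})_{\alpha\in[0,1]}]=0$. With the auxiliary fluxes set to zero, the phase-jump conditions in \eqref{eq:def_dom_min_dirac_op} across the surfaces $\wt{S}_k$, $k\neq k_0$, reduce to continuity, so $\cD_{\wt{\bA}(\alpha)}$ is unitarily equivalent to the single-knot Dirac operator for the circle $\cC$ alone with flux $2\pi\alpha$. Theorem~\ref{thm:no_zero_modes_for_circles} then guarantees that this operator has trivial kernel for every $\alpha\in[0,1)$, and by periodicity also at $\alpha=1$; hence no eigenvalue crosses zero along the loop and the spectral flow vanishes. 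Since a planar circle has zero writhe and $\Phi_{k_0}(0)=0$, the corresponding value of the floor expression in the statement equals $\lfloor 1/2\rfloor=0$ as well.

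To conclude, I apply Theorem~\ref{thm:arbitrary_spectral_flow} to the isotopy $(F_r)_r$ together with a smooth interpolation of fluxes from $\wt{\ua}^{(k_0)}=0$ to $\ua^{(k_0)}$. Depending on the sign of $\pi\Wr(\gamma_{k_0})+\Phi_{k_0}(\ua^{(k_0)})$, I put the target data in the tilded or untilded position of \eqref{eq:convention} so that the ordering hypothesis holds; in either case the identity of the theorem, combined with $\Sf_{\mathrm{ref}}=0$ and the reference floor $=0$, gives precisely the claimed formula. The non-integrality assumption of the corollary ensures that the target loop lies in the regular region, so its spectral flow is well defined by Theorems~\ref{thm:bump_cont_bulk} and~\ref{thm:bump_cont_bdry_A}.

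The main technical obstacle is the identification used in the reference computation: checking that when all auxiliary fluxes vanish, the singular-gauge Dirac operator $\cD_{\wt{\bA}(\alpha)}$ defined in Theorem~\ref{thm:def_D_-} really coincides (modulo a unitary gauge transformation) with the single-circle operator to which Theorem~\ref{thm:no_zero_modes_for_circles} applies. Concretely, the $H^{1/2}$ jump condition across each $\wt{S}_k$ should collapse at $\wt{\alpha}_k=0$ to the condition of $H^1$-extendability across $\wt{S}_k$, so that the domains coincide. Once this is in place, the remainder of the argument is essentially bookkeeping.
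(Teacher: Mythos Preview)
Your proposal is correct and follows essentially the same route as the paper: the paper's proof is just the sentence preceding the corollary, which says to take $\gamma_{k_0}$ a circle with $\ua^{(k_0)}=0$ as the reference (spectral flow zero by Theorem~\ref{thm:no_zero_modes_for_circles}) and then invoke Theorem~\ref{thm:arbitrary_spectral_flow}. Your write-up is simply a more explicit version of this, including the observation that at $\wt{\alpha}_k=0$ the phase-jump conditions trivialise so that the multi-knot operator collapses to the single-circle one --- a point the paper regards as already established in the companion article and leaves implicit.
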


This gives us the full description of the spectral flow for a link of unknots.
\begin{proof}
	Without loss of generality, we can assume $\alpha_k\neq 0$ for $k\neq k_0$(see Section~\ref{sec:cont_discont}).
	By assumption, there exists a smooth isotopy $(F_r)_{r\in [0,1]}$, $F_0=\mathrm{id}_{\S^3}$, transforming $\gamma_{k_0}$ into a circle $\wt{\gamma}_{k_0}:=F_1(\gamma_{k_0})$ 
	(the non-degenerate intersection of a $2$-dimensional plane with $\S^3$). It transforms $\gamma_k$, resp. $S_k$ into $\wt{\gamma}_k:=F_1(\gamma_{k})$ resp. $\wt{S}_k:=F_1(S_k)$. 
	The loop $\wt{\alpha}_{k_0}\mapsto \cD_{2\pi \wt{\alpha}_{k_0}[\wt{S}_{k_0}]}$ has trivial spectral flow by Theorem~\ref{thm:no_zero_modes_for_circles}. 
	
	To end the proof it suffices to apply Theorem~\ref{thm:arbitrary_spectral_flow}
	to $(\uS,\ua^{(k_0)})$ and $(\wt{\uS},\wt{\ua}^{(k_0)})$ with $\wt{\ua}^{(k_0)}\equiv \underline{0}\in \Tf^{(K-1)}$.
	
\end{proof}

\subsubsection{Lower bound on zero modes}
\begin{rem}
	Let $\gamma$ be a knot. In Section~\ref{sec:techn}, we show that its writhe is equal to
	 $-(2\pi)^{-1}$ times the integrated relative torsion of any Seifert frame.
	 Furthermore in Section~\ref{sec:varying_I_tau}, we show that
	 we can smoothly vary its integrated relative torsion without changing its isotopy class.
\end{rem}

Then using Theorem~\ref{thm:arbitrary_spectral_flow} and the above remark, we obtain.

\begin{corollary}[Lower bound on the total number of zero modes]\label{coro:lower_bound}
	Let $m\in\Z$ and let $\gamma_0\subset \S^3$ be a knot. There exists a knot $\gamma\subset \S^3$
	isotopic to $\gamma_0$ with $-\pi\Wr(\gamma)\neq \pi\!\!\mod 2\pi$ such that the 
	following holds. If we pick a Seifert surface $S$ for $\gamma$ and set $\bA(\alpha)=2\pi\alpha[S]$, 
	then we have:
	\[
	\Sf\big( (\cD_{\bA(\alpha)})_{0\le \alpha\le 1}\big)=m.
	\]
	In particular, the total number of zero modes
	$$
		N(\gamma):=\sum_{\alpha\in (0,1)}\dim\,\ker\big(\cD_{\bA(\alpha)}\big)
	$$
	along the path $\big(\cD_{\bA(\alpha)}\big)_{\alpha\in [0,1]}$ is equal or greater than $|m|$.
\end{corollary}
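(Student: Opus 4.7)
The strategy is to reduce the statement to Theorem~\ref{thm:arbitrary_spectral_flow} in the single-knot case ($K=1$, where the flux $\Phi_{k_0}$ of \eqref{eq:cond_cont_loop} vanishes identically, since there are no other components) together with the writhe-variation result recalled in the remark preceding the corollary. The observation that drives the argument is that $\lfloor\tfrac{1}{2}(1-x)\rfloor$ surjects onto $\Z$ as $x$ ranges over $\R\setminus(1+2\Z)$, and that the non-resonance condition $-\pi\Wr(\gamma)\neq\pi\!\!\mod 2\pi$ is equivalent to $\Wr(\gamma)\notin 1+2\Z$.

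First, I would fix an auxiliary representative $\gamma_{\mathrm{ref}}$ of the isotopy class of $\gamma_0$ whose writhe is not an odd integer (arranged, if necessary, by invoking the smooth variation of the integrated relative torsion in Section~\ref{sec:varying_I_tau}), pick a Seifert surface $S_{\mathrm{ref}}$, and set $\bA_{\mathrm{ref}}(\alpha):=2\pi\alpha[S_{\mathrm{ref}}]$. The resulting reference spectral flow $n_0:=\Sf[(\cD_{\bA_{\mathrm{ref}}(\alpha)})_{\alpha\in[0,1]}]\in\Z$ is then well-defined. For any other representative $\gamma$ isotopic to $\gamma_0$ with writhe not an odd integer, Theorem~\ref{thm:arbitrary_spectral_flow} applied with $K=1$ gives
\[
\Sf[(\cD_{\bA(\alpha)})_{\alpha\in[0,1]}]-n_0=\Big\lfloor\tfrac{1}{2}(1-\Wr[\gamma])\Big\rfloor-\Big\lfloor\tfrac{1}{2}(1-\Wr[\gamma_{\mathrm{ref}}])\Big\rfloor,
\]
(swapping the roles of $\gamma$ and $\gamma_{\mathrm{ref}}$ if the ordering convention \eqref{eq:convention} demands it). By the writhe-variation remark, I may smoothly deform $\gamma_{\mathrm{ref}}$ inside its isotopy class to a knot $\gamma$ whose writhe realizes $\lfloor\tfrac{1}{2}(1-\Wr[\gamma])\rfloor=m-n_0+\lfloor\tfrac{1}{2}(1-\Wr[\gamma_{\mathrm{ref}}])\rfloor$ while avoiding odd-integer writhes. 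The resulting $\gamma$ then satisfies $-\pi\Wr(\gamma)\neq\pi\!\!\mod 2\pi$ and $\Sf[(\cD_{\bA(\alpha)})_{\alpha\in[0,1]}]=m$.

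For the lower bound $N(\gamma)\geq|m|$, I would invoke the very definition of the spectral flow as the signed count of eigenvalues of $\cD_{\bA(\alpha)}$ crossing $0$ as $\alpha$ traverses $[0,1]$. Under the non-resonance condition, Proposition~\ref{prop:calcul_spectre} together with Theorem~\ref{thm:bump_cont_bdry_A} ensure that $0$ is not a limiting value of the vanishing eigenfunctions at the endpoints $\alpha\in\{0,1\}$, so every crossing occurs in the open interval $(0,1)$ and contributes at least one to $\dim\ker\cD_{\bA(\alpha)}$ at its location. Hence $|m|=|\Sf|\leq N(\gamma)$. No serious obstacle appears: the corollary is an assembly exercise once Theorem~\ref{thm:arbitrary_spectral_flow} and the writhe-variation lemma are in hand, the only mild delicacy being to arrange that the deformation of the writhe avoids odd-integer values so that the spectral flow is well-defined at the chosen endpoint---a measure-zero constraint that is easily satisfied.
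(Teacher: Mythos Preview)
Your proposal is correct and is precisely the argument the paper intends: the paper's own ``proof'' consists only of the sentence ``using Theorem~\ref{thm:arbitrary_spectral_flow} and the above remark, we obtain,'' and you have faithfully unpacked that sentence by specializing Theorem~\ref{thm:arbitrary_spectral_flow} to $K=1$ (so that $\Phi_{k_0}\equiv 0$) and invoking the writhe-variation from Section~\ref{sec:varying_I_tau}. Your treatment of the lower bound $N(\gamma)\ge |m|$ via the signed-crossing interpretation of spectral flow, together with the observation that the endpoints $\alpha\in\{0,1\}$ correspond to the free Dirac operator (trivial kernel), is the standard justification the paper leaves implicit.
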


\begin{rem}[Open question]
	We compute in Corollary~\ref{cor:sf_link_unknots} the spectral flow corresponding to the path of fluxes $\alpha\in [0,1]\to 2\pi \alpha$ 
	when $\gamma$ is an an unknot. The case of a general knot remains open.
\end{rem}

We now illustrate these results with several examples for which we know a more detailed description of the zero modes.

\subsection{Hopf links}

\subsubsection{The Erd\H{o}s-Solovej construction}
We quickly recall the construction in \cite{MR1860416}, extended to the singular case in
\cite[Theorem~31]{dirac_s3_paper1}.

Seeing $\S^2 \subset \R^3$ with the metric $\tfrac{1}{4}g_2$, 
where $g_{2}$ is the induced metric from its ambient space, 
we define the Hopf map as
\begin{equation*}
	\Hopf:
	\begin{array}{ccl}
		(\S^3,g_3) &\longrightarrow& (\S^2,\tfrac{1}{4}g_2)\\
		(z_0,z_1) &\mapsto& (|z_0|^2-|z_1|^2,\Re(2z_0\overline{z}_1),\Im(2z_0\overline{z}_1)).
	\end{array}
\end{equation*}

The pre-image of $K$ points $v_1, \dots, v_K \in \S^2$ is given by $K$ interlinking 
circles on $\S^3$, $\gamma^K = \Hopf^{-1}(\{v_1,\dots, v_k\})$, with
$\link(\gamma_i,\gamma_j) = 1$ for any $i \neq j$.
The link $\gamma^K$ is then oriented along the vector field $u_3 = (iz_0,iz_1)$
and we define the \textit{magnetic Hopf link} $\bB^K$ as
\begin{equation}\label{def:magn_hopf_link}
	\bB^K := \sum_{k=1}^K 2\pi \alpha_k \big[\Hopf^{-1}(\{v_k\})\big],
\end{equation}
with (renormalized) flux $\alpha_k \in [0,1)$ on each component $\gamma_k$.

\begin{theorem}\label{thm:erdos-solovej}
	Let $\bB^K$ be a magnetic Hopf link as in \eqref{def:magn_hopf_link} with corresponding singular gauge
	$\bA$ and assume $\alpha_k \in (0,1)$ for all $k$. 
	Let $\cD_{\bA}$ be the Dirac operator on $(\S^3,g_3)$ 
	and define $c \in (-1/2,1/2]$, $m \in \Z$ such that
	$$
		\sum_{k=1}^K \alpha_k =: c + m.
	$$
	Furthermore, set 
	$$
		\beta_{\S^2,\ell} := \left(\,\sum_{k=1}^K 8\pi\alpha_k \delta_{v_k} - 2(c+\ell)\right) \frac{\vol_{g_2}}{4}.
	$$
	As $(2\pi)^{-1}\int_{\S^2} \beta_{\S^2,\ell} = m-\ell$, on the spinor bundle $\Psi_{m-\ell}$ (with Chern number $m-\ell$) there exists a
	two-dimensional Dirac operator $\cD_{\S^2,\ell}$ with magnetic two-form $\beta_{\S^2,\ell}$ (see Remark~\ref{rem:def_D_S_2}). 
	Then:
	\begin{enumerate}
		\item The spectrum of $\cD_{\bA}$ is given by
		$$
			\spec \cD_{\bA} = \bigcup_{\ell \in \Z} \left( \mathcal{Z}_\ell  
			\cup \left\{\pm \sqrt{\lambda^2 + (\ell+c)^2} - \frac{1}{2} : \lambda \in \spec_+\cD_{\S^2,\ell}\right\}\right),
		$$
		where
		\begin{align*}
			\mathcal{Z}_\ell = \left\{\begin{array}{ll}
 				\{\ell+c-1/2\}, 
				&\quad m>\ell,\\
				\emptyset,
				&\quad m=\ell,\\
				\{-\ell-c-1/2\},
				&\quad m<\ell.
				\end{array} \right.
		\end{align*}
		
		\item The multiplicity of an eigenvalue equals the number of ways it can be written as 
		$\sqrt{\lambda^2 + (\ell+c)^2} - \frac{1}{2}$, $\ell \in \Z$ and $\lambda \in \spec_+\cD_{\S^2,\ell}$ counted with
		multiplicity, or as an element in $\mathcal{Z}_\ell$ counted with multiplicity $|m-\ell|$.
		
		\item The eigenspace of $\cD_{\bA}$ with eigenvalue in $\mathcal{Z}_\ell$ contains spinors with definite
		spin value $\mathrm{sgn}(m-\ell)$.

	\end{enumerate}
\end{theorem}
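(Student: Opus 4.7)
The plan is to follow the Erd\H{o}s--Solovej construction, exploiting the $U(1)$-bundle structure of the Hopf fibration $\mathrm{Hopf}:\S^3\to \S^2$, for which the Hopf circles $\gamma_k = \mathrm{Hopf}^{-1}(v_k)$ are fibers. Since the singular magnetic field $\bB^K$ is supported along fibers, the $U(1)$-action generated by $u_3=(iz_0,iz_1)$ will commute (after a suitable gauge choice) with the Dirac operator $\cD_{\bA}$, allowing a simultaneous diagonalization.

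First, I would set up the bundle decomposition. Choose Seifert surfaces $S_k$ compatible with the bundle structure, e.g.\ each $S_k = \mathrm{Hopf}^{-1}(\gamma_{v_k})$ where $\gamma_{v_k}$ is an arc on $\S^2$ from $v_k$ to a fixed base point. In the singular gauge $\bA=\sum_k 2\pi\alpha_k[S_k]$, the Dirac operator satisfies the quasi-periodicity condition that Lie derivation along $u_3$ has integer spectrum in a trivialization adapted to the multi-valuedness; globally, this means $-iu_3$ acts on $\dom(\cD_{\bA})$ with spectrum $\{k+c:k\in\Z\}$, where $c\equiv \sum_k \alpha_k \mod\Z$ is determined by the total flux through any global section transverse to the fibers. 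This gives the Hilbert space decomposition $L^2(\S^3)^2 = \bigoplus_{k\in\Z}\cH_k$, where $\cH_k$ is the $(k+c)$-eigenspace.

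Next, using the $Spin^c$ structure on $\S^3$ induced from the fiber bundle, each $\cH_k$ is unitarily equivalent to $L^2(\S^2;\Psi_{m-k}^+ \oplus \Psi_{m-k}^-)$, the space of sections of the $\pm$-spin components of the Chern-class-$(m-k)$ line bundle on $\S^2$. The Chern number $m-k$ is dictated by the requirement that the total flux condition $(2\pi)^{-1}\int_{\S^2}\beta_{\S^2,k}=m-k$ be compatible with existence of the bundle; here the four pieces contributing to $\beta_{\S^2,k}$ are the Dirac $\delta$-masses $8\pi\alpha_k\delta_{v_k}$ (from Aharonov--Bohm fluxes lifted to $\S^2$), and a constant curvature contribution $-2(c+k)$ (from the connection on the Hopf bundle twisted by the $k$-th Fourier mode). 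In this decomposition, $\cD_{\bA}$ restricts to each $\cH_k$ as
\[
\cD_{\bA}\big|_{\cH_k} \;\cong\; \begin{pmatrix} (k+c)-\tfrac{1}{2} & \cD_{\S^2,k}^{-}\\ \cD_{\S^2,k}^{+} & -(k+c)-\tfrac{1}{2}\end{pmatrix},
\]
where $\cD_{\S^2,k}^{\pm}$ are the chiral parts of the two-dimensional magnetic Dirac operator on $\Psi_{m-k}$ with magnetic two-form $\beta_{\S^2,k}$, and the $-\tfrac{1}{2}$ comes from the scalar curvature contribution of the fiber via the Lichnerowicz formula on the total space.

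Finally, diagonalize this $2\times 2$ block. On the orthogonal complement of $\ker\cD_{\S^2,k}^{\pm}$, the operator pairs up positive singular values $\lambda\in\spec_+ \cD_{\S^2,k}$ into eigenvalues $\pm\sqrt{\lambda^2+(k+c)^2}-\tfrac{1}{2}$; on the kernel, which by the two-dimensional Aharonov--Casher theorem for $\S^2$ with singular magnetic field (as developed in \cite{MR1860416}) has dimension $|m-k|$ and consists of spinors of definite chirality $\mathrm{sgn}(m-k)$, the block is scalar and yields the eigenvalue in $\mathcal{Z}_k$ with the stated sign convention. The three conclusions of the theorem follow. The main obstacle is the careful treatment of the singular gauge: one must check that the fiberwise Fourier decomposition is compatible with the domain $\dom(\cD_{\bA})$ of Theorem~\ref{thm:def_D_-} (including the phase-jump conditions across the Seifert surfaces, which encode exactly the $c$-shift) and that the chirality splitting on $\S^2$ holds even in presence of the $\delta$-flux singularities at $v_k$; this is the content of \cite[Theorem~26]{dirac_s3_paper1}, which extends the smooth construction of \cite{MR1860416}.
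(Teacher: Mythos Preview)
The paper does not prove this theorem: it is quoted from \cite{MR1860416} (smooth case) and \cite[Theorem~26]{dirac_s3_paper1} (extension to singular magnetic Hopf links), and stated here without proof as background for the spectral-flow computations. Your outline correctly reproduces the strategy of those references --- Fourier decomposition along the Hopf fibers, reduction to a $2\times 2$ block involving the two-dimensional Dirac operator $\cD_{\S^2,k}$ on the Chern-number-$(m-k)$ bundle, and the Aharonov--Casher theorem for the kernel --- and you rightly flag the compatibility of the fiberwise decomposition with the singular domain $\dom(\cD_{\bA})$ as the main technical point, which is exactly what \cite[Theorem~26]{dirac_s3_paper1} supplies.
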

\begin{rem}\label{rem:def_D_S_2}[The operator $\cD_{\S^2,\ell}$]
	Let us precise the definition of $\cD_{\S^2,\ell}$. Up to using the stereographic projection,
	it is enough to define the operator in $\C\hat{=}\R^2$ seen as a chart of $\S^2\setminus\{\mbox{North pole}\}$,
	and in which lie all the A.B. solenoids (then it suffices to use \cite[Appendix~A]{MR1860416}). Seeing $v_k\in \C$,
	and $\tfrac{v-v_k}{|v-v_k|}=:e^{i\theta_k}$ for $v=x+iy\in\C$, we use the scalar gauge of $\beta_{\S^2,\ell}$ given by
	\[
		\boldsymbol{\alpha}_{\ell}=\sum_{k=1}^K\alpha_k\d\theta_k-\frac{c+\ell}{4+|v|^2}(x\d y-y\d x). 
	\]
	In the vicinity of each A.B. solenoid $2\pi\alpha_k\delta_{v_k}$, the spinors in $\dom(\cD_{\S^2,\ell})$
	satisfy the boundary condition of the operator $\cD^{\tau}$ of \cite[Section~2]{Persson06} with $\tau=\pi$ (we refer to \cite[Section~4]{MR1860416}
	for the correspondence between the Dirac operator in the flat metric and that in the $\S^2$-metric).
	That is, writing $\psi_{\pm}$ the spin up and spin down components of $\psi\in \dom(\cD_{\S^2,\ell})$ we have
	 $\psi_+\in H_{\mathrm{loc}}^1(\C)$ and
	\[
	\lim_{v\to v_k\atop v\neq v_k}|v-v_k|^{\alpha_k}\psi_-(v)\mbox{ exists}.
	\]
\end{rem}

In the special case when $c=1/2$, we have precise information about $\ker \cD_{\bA}$.
\begin{corollary}\label{cor:dim_ker_hopf_link}
	If the magnetic Hopf link $\bB^K$ has fluxes $\ua \in (0,1)^K$ with $\sum_{k=1}^K\alpha_k = m+1/2$,
	then
	$$
		\dim \ker \cD_{\bA} = m.
	$$ 
\end{corollary}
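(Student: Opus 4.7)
The plan is a direct unpacking of Theorem~\ref{thm:erdos-solovej} in the special case $c=1/2$. Since $\sum_{k=1}^K\alpha_k = m+1/2$ and $c\in(-1/2,1/2]$ is the unique representative of this sum modulo $1$, we must have $c=1/2$ and the associated integer equals $m$. The assumption $\alpha_k\in(0,1)$ forces $0<m+1/2<K$, so $m\in\{0,1,\ldots,K-1\}$; in particular $m\ge 0$. The proof then consists of enumerating the ways in which the spectral formula
\[
\spec\cD_{\bA}=\bigcup_{k\in\Z}\Big(\mathcal{Z}_k\cup\{\pm\sqrt{\lambda^2+(k+1/2)^2}-\tfrac12:\lambda\in\spec_+\cD_{\S^2,k}\}\Big)
\]
can produce the eigenvalue $0$, and summing the resulting multiplicities.

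For the discrete part, with $c=1/2$ we have $\mathcal{Z}_k=\{k\}$ if $m>k$, $\mathcal{Z}_k=\{-k-1\}$ if $m<k$, and $\mathcal{Z}_k=\emptyset$ if $m=k$. Hence $0\in\mathcal{Z}_k$ precisely when either $k=0$ and $m>0$, or $k=-1$ and $m<-1$; the second possibility is excluded by $m\ge 0$. For the continuous part, the equation $\pm\sqrt{\lambda^2+(k+1/2)^2}-\tfrac12=0$ reduces to $\lambda^2=\tfrac14-(k+1/2)^2=-k(k+1)$, which is non-negative only for $k\in\{0,-1\}$, and in both cases forces $\lambda=0$. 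Since $\spec_+\cD_{\S^2,k}$ consists of strictly positive eigenvalues, no contribution to $\ker\cD_{\bA}$ arises from this family.

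Assembling the count: when $m=0$ the set $\mathcal{Z}_0$ is empty and by the above $0\notin\spec\cD_{\bA}$, so $\dim\ker\cD_{\bA}=0=m$. When $m>0$, the only zero eigenvalue comes from $0\in\mathcal{Z}_0$, and by the multiplicity rule in part~(2) of Theorem~\ref{thm:erdos-solovej} it is counted with multiplicity $|m-0|=m$. In either case $\dim\ker\cD_{\bA}=m$. The only non-mechanical step in the argument is the verification that the continuous part of the spectrum cannot produce a zero eigenvalue, which ultimately relies on the sharp inequality $(k+1/2)^2\ge 1/4$ for $k\in\Z$; this is precisely what makes the half-integer flux condition $c=1/2$ the borderline case and yields the clean identification of the kernel dimension with the integer part $m$.
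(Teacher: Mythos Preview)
Your proof is correct and is precisely the direct unpacking of Theorem~\ref{thm:erdos-solovej} that the paper intends; the corollary is stated there without an explicit proof, and your case analysis on $\mathcal{Z}_k$ together with the observation that $(k+1/2)^2\ge 1/4$ rules out any contribution from $\spec_+\cD_{\S^2,k}$ is exactly what is needed.
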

Armed with the above, we can to a large extent describe the spectral flow of the operator $\cD_{\bA}$ on
the torus of fluxes, and we begin by a simple example.

\subsubsection{Spectral flow for a magnetic Hopf $2$-link}
We freeze two Seifert surfaces $\uS = (S_1,S_2)$ for $\gamma = \Hopf^{-1}(\{v_1,v_2\})$;
we have that $\tfrac{1}{2}\int_{\gamma_k}\tau_{S_k'}\bT^{\flat} = 0$, since the 
Hopf circles are great circles in $\S^3$.
According to Theorem~\ref{thm:bump_cont_bdry_A} and Proposition~\ref{prop:calcul_spectre},
the map $\bbd(\uS,\cdot)$ is only discontinuous at $p_1 = (\tfrac{1}{2},1)$ and $p_2 = (1,\tfrac{1}{2})$ in $\Tf^2$.
Thus we have 
$$
	\T_{\ug}=\Tf^2 \setminus \{p_1,p_2\}
$$
and the group homomorphism \eqref{eq:spec_flow_at_fixed_seif} is:
$$
	\Sf_{\uS} := \Sf \circ (\bbd_{\uS})_{*} : \pi_1(\T_{\ug}) \to \Z.
$$
The general picture for the spectral flow is illustrated in Figure~\ref{fig:punctured_torus}.
\begin{figure}[!ht]
	\resizebox{0.5\textwidth}{!}{
	\begin{tikzpicture}
		\draw[thick] (0,0) rectangle (4,4);
		\node[below] at (0,0) {\scriptsize $(0,0)$};
		\node[above] at (4,4) {\scriptsize $(1,1)$};
		\draw[>=latex, ->] (4,0) -- (4.5,0);
		\node[thin, right, scale=0.9] at (4.5,0) {$\alpha_1$};
		\draw[>=latex, ->] (0,4) -- (0,4.5);
		\node[thin, above, scale=0.9] at (0,4.5) {$\alpha_2$};
		
		\node[above] at (1.5,0.35) {\scriptsize $L_1$};
		\node[right] at (0.35,1.5) {\scriptsize $L_2$};
		\node[below] at (1.72,3.8) {\scriptsize $L_3$};
		
		\node[below, scale = 0.9] at (2,0) {$p_1$};
		\node[above, scale = 0.9] at (2,4) {$p_1$};
		\node[left, scale = 0.9] at (0,2) {$p_2$};
		\node[right, scale = 0.9] at (4,2) {$p_2$};
		\draw [fill] (2,4) circle [radius = 0.05];
		\draw [fill] (4,2) circle [radius = 0.05];
		\draw [fill] (0,2) circle [radius = 0.05];
		\draw [fill] (2,0) circle [radius = 0.05];
		
		\draw[very thick] (2,4) -- (4,2)
		node[sloped, midway, above] {\scriptsize $\alpha_1+\alpha_2 = \tfrac{3}{2}$};
		\draw[very thin, dashed] (1.5,2) circle [radius = 0.5];
		\draw[very thin, dashed] (3.5,1) circle [radius = 0.3];
		
		\draw (0.4,0) -- (0.4,4)
		node[pos=0.7, scale=0.6] {$\wedge$};
		\draw(0,0.4) -- (4,0.4)
		node[pos=0.7, scale=0.6] {$>$};
		\draw(1.7,4) arc (180:360:0.3)
		node[sloped, midway, scale=0.6, rotate=5] {$>$};
		\draw (2.3,0) arc (0:180:0.3)
		node[sloped, midway, scale=0.6, rotate=5] {$<$};
	\end{tikzpicture}
	}
	\caption{The punctured torus $\T_{\ug}$ for the Hopf 2-link.}
	\label{fig:punctured_torus}
\end{figure}

The family $(L_1,L_2,L_3)$ generates the first homology group $H_1(\T_{\ug})$.
The bold segment $[p_1,p_2]$ of equation $\alpha_1+\alpha_2=3/2$ corresponds to fluxes
for which we know 
that $\dim \ker\cD_{\bA}=1$, see Corollary~\ref{cor:dim_ker_hopf_link}.

Recall that $\wt{\Sf}_{\uS}\circ \textrm{ab}$ is the factorization of $\Sf_{\uS}$.
We have:
\[
  	\left\{
		\begin{array}{rcl}
			\wt{\Sf}_{\uS}(L_1)&=&0,\\
			\wt{\Sf}_{\uS}(L_2)&=&0,\\
			\wt{\Sf}_{\uS}(L_3)&=&-1.
		\end{array}
	\right.
\]
Indeed, the result $\wt{\Sf}_{\uS}(L_3) = -1$ follows directly from Theorem~\ref{thm:spec_flow_gen_l}
and Corollary~\ref{cor:sf_link_unknots} covers the other two.

\begin{rem}
	Note that the computation of the spectral flow on $\pi_1$ (or the homology group $H_1$) 
	does not enable us to exclude the occurence of zero modes outside the range $\alpha_1+\alpha_2=\tfrac{3}{2}$.
	Indeed if there existed closed curves in $\Tf^2$ in which all points $(\alpha_1,\alpha_2)$
	gave rise to Dirac operators with non-trivial kernel, it would not be detected by the spectral flow
	(these potential curves are represented by dashed circles in Figure~\ref{fig:punctured_torus}).
	
	The existence of these curves could however be disproved
	if one could determine the positive spectrum of the operators $\cD_{\S^2,\ell}$
	(and thus the full spectrum of $\cD_{\bA}$).
\end{rem}

\subsubsection{Spectral flow for a magnetic Hopf $3$-link}

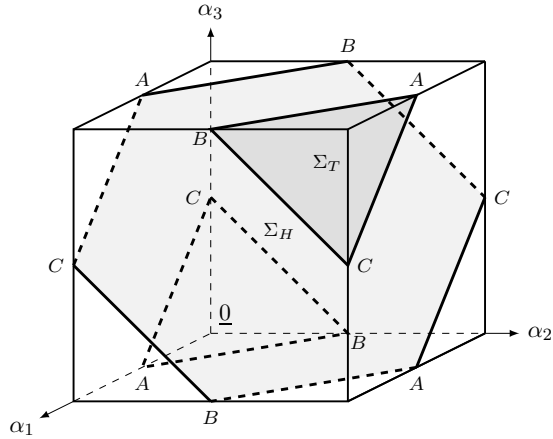
\begin{figure}[!ht]
	\resizebox{0.6\textwidth}{!}{
	\begin{tikzpicture}
		\fill[color=gray!10]
		(2,0) -- (0,2) -- (1,4.5) -- (4,5) -- (6,3) -- (5,0.5) -- cycle; 
		
		\filldraw[fill=gray!25, draw=black,very thick]
		(2,4) -- (5,4.5) -- (4,2) -- cycle;
		
		\draw[thick] (0,0) rectangle (4,4);
		\draw[dashed] (0,0) -- (2,1);
		\draw[dashed] (2,1) -- (6,1);
		\draw[dashed] (2,1) -- (2,5);
		
		\draw[thick] (0,4) -- (2,5);
		\draw[thick] (4,0) -- (6,1);
		\draw[thick] (4,0) -- (6,1);
		\draw[thick] (4,4) -- (6,5);
		\draw[thick] (6,1) -- (6,5);
		\draw[thick] (2,5) -- (6,5);
		
		\draw[>=latex, ->] (0,0) -- (-0.5,-0.25);
		\node[thin, scale=0.9] at (-0.75,-0.4) {$\alpha_1$};
		\draw[>=latex, ->] (6,1) -- (6.5,1);
		\node[thin, right, scale=0.9] at (6.5,1) {$\alpha_2$};
		\draw[>=latex, ->] (2,5) -- (2,5.5);
		\node[thin, above, scale=0.9] at (2,5.5) {$\alpha_3$};
		\node[thin, above right, scale=0.9] at (2,1) {$\underline{0}$};
		
		\draw[very thick] (0,2) -- (2,0);
		\draw[very thick, dashed] (0,2) -- (1,4.5);
		\draw[very thick] (1,4.5) -- (4,5);
		\draw[very thick, dashed] (4,5) -- (6,3);
		\draw[very thick] (6,3) -- (5,0.5);
		\draw[very thick, dashed] (5,0.5) -- (2,0);
		
		\node  at (3,2.5) {\scriptsize $\Sigma_H$};
		
		\draw[very thick, dashed] (2,3) -- (4,1) -- (1,0.5) -- cycle;
		\node  at (3.7,3.5) {\scriptsize $\Sigma_T$};
		
		\node[above] at (1,4.5) {\scriptsize $A$};
		\node[above] at (4,5) {\scriptsize $B$};
		\node[right] at (6,3) {\scriptsize $C$};
		
		\node[below] at (5,0.5) {\scriptsize $A$};
		\node[below] at (2,0) {\scriptsize $B$};
		\node[left] at (0,2) {\scriptsize $C$};
		
		\node[below] at (1,0.5) {\scriptsize $A$};
		\node[below right] at (3.9,1.1) {\scriptsize $B$};
		\node[left] at (2,3) {\scriptsize $C$};
		
		\node[above] at (5,4.5) {\scriptsize $A$};
		\node[below left] at (2.1,4.1) {\scriptsize $B$};
		\node[right] at (4,2) {\scriptsize $C$};
	\end{tikzpicture}
	}
	\caption{The cut torus $\T_{\ug}$ for the Hopf 3-link.}
	\label{fig:sf_cube}
\end{figure}

We again freeze the triplet $\uS=(S_1,S_2,S_3)$ for $\gamma = \Hopf^{-1}(\{v_1,v_2,v_3\})$.
Since $\tfrac{1}{2}\int_{\gamma_k}\tau_{S_k}\bT^{\flat} = 0$, 
the map $\bbd(\uS,\cdot)$ is only discontinuous on the loops
$$
	\alpha_i = 1, \quad \alpha_j + \alpha_k = \frac{1}{2}\!\!\!\! \mod 1, \quad \{i,j,k\} = \{1,2,3\},
$$
which are represented by the different (dashed or non-dashed) segments between the points $A,B,C$ in Figure~\ref{fig:sf_cube}.
The shaded surfaces $\Sigma_H, \Sigma_T$ correspond to fluxes for which the associated Dirac operator has non-trivial kernel:
$$
	\Sigma_H: \left\{ \begin{array}{l}
		\alpha_1+\alpha_2+\alpha_3 = 3/2\\
		0<\alpha_1,\alpha_2,\alpha_3<1
		\end{array}\right.,
	\quad\Sigma_T:\left\{ \begin{array}{l}
		\alpha_1+\alpha_2+\alpha_3 = 5/2\\
		0<\alpha_1,\alpha_2,\alpha_3<1
		\end{array}\right..
$$
According to Corollary~\ref{cor:dim_ker_hopf_link}, for $\ua \in \Sigma_H$ the kernel is one-dimensional, 
whereas for $\ua \in \Sigma_T$ it is two-dimensional.

To discuss the homotopy of $\T_{\ug}$, it is easier to analyze the situation in the universal covering $\R^3$ 
of $\Tf^3$.
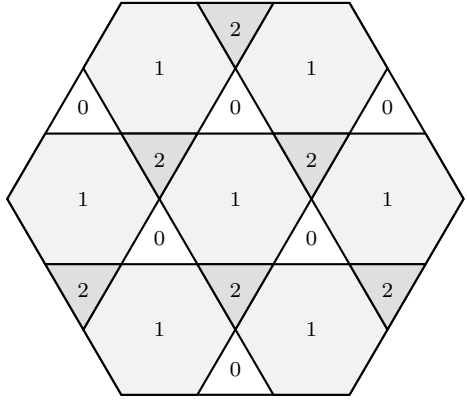
\begin{figure}[!ht]
	\begin{tikzpicture}
		\draw[thick] (-1,0) -- (0.5,3*sin{60}) -- (3.5,3*sin{60}) -- (5,0) -- (3.5,-3*sin{60}) -- (0.5,-3*sin{60}) -- cycle;
		
		\foreach \x in {0,2,4}
		{
			\filldraw[fill=gray!10, draw=black, thick]
			(\x+1,0) -- (\x+0.5,0+sin{60}) -- (\x-0.5,0+sin{60}) -- (\x-1,0+sin{0.0}) -- (\x-0.5,0-sin{60}) -- (\x+0.5,0-sin{60})-- cycle;
			\node at (\x,0) {\scriptsize $1$};
		}
		\foreach \x in {1,3}
		{
			\foreach \y in {-2*sin{60},2*sin{60}}
			{
				\filldraw[fill=gray!10, draw=black, thick]
				(\x+1,\y) -- (\x+0.5,\y+sin{60}) -- (\x-0.5,\y+sin{60}) 
				-- (\x-1,\y+sin{0.0}) -- (\x-0.5,\y-sin{60}) -- (\x+0.5,\y-sin{60})-- cycle;
				\node at (\x,\y) {\scriptsize $1$};
			}
		}
		
		\foreach \x in {0,2,4}
			\node at (\x,1.4*sin{60}) {\scriptsize $0$};
		\foreach \x in {1,3}
			\node at (\x,-0.6*sin{60}) {\scriptsize $0$};
		\node at (2,-2.6*sin{60}) {\scriptsize $0$};
		
		\filldraw[fill=gray!25, draw=black, thick]
		(2,2*sin{60}) -- (2+1/2, 2*sin{60}+sin{60}) -- (2-1/2,2*sin{60}+sin{60}) -- cycle;
		\node at (2,2*sin{60} + 0.6*sin{60}) {\scriptsize $2$};
		
		\foreach \x in {1,3}
		{
			\filldraw[fill=gray!25, draw=black, thick]
			(\x,0) -- (\x+1/2,	0+sin{60}) -- (\x-1/2,0+sin{60}) -- cycle;
			\node at (\x,0.6*sin{60}) {\scriptsize $2$};
		}
		\foreach \x in {0,2,4}
		{
			\filldraw[fill=gray!25, draw=black, thick]
			(\x,-2*sin{60}) -- (\x+1/2, -2*sin{60}+sin{60}) -- (\x-1/2,-2*sin{60}+sin{60}) -- cycle;
			\node at (\x,-2*sin{60} + 0.6*sin{60}) {\scriptsize $2$};
		}
	\end{tikzpicture}
	\caption{The tiling of the surface $\alpha_1+\alpha_2+\alpha_3=1/2$ in the universal covering. The number in
	each tile denotes the number of zero modes of the corresponding Dirac operator for such an 
	$\ua = (\alpha_1,\alpha_2,\alpha_3)$.}
	\label{fig:sf_david}
\end{figure}

As in the case of the Hopf 2-link, for $\eps>0$ small enough and $t \in \Tf$, the loops
$$
	\left\{
	\begin{array}{l}
		L_1: t \mapsto (t,\eps,\eps),\\
		L_2: t \mapsto (\eps,t,\eps),\\
		L_3: t \mapsto (\eps,\eps,t),
	\end{array}
	\right.
$$
have trivial spectral flow.

Any loop enclosing an edge in the tiling in
Figure~\ref{fig:sf_david} is homotopic to a loop discussed in Theorem~\ref{thm:spec_flow_gen_l}.
A case which is not directly covered is when the loop encircles a vertex, for example when going from a $0$-tile to
a $2$-tile as in Figure~\ref{fig:special_loop}.
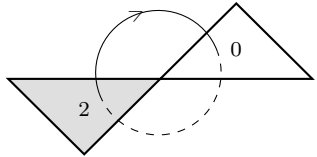
\begin{figure}[!ht]
	\begin{tikzpicture}
		\filldraw[fill=gray!25, draw=black, thick]
		(0,0) -- (-2,0) -- (-1,-1) -- cycle;
		\node at (-1,-0.4) {\scriptsize $2$};
		\draw[thick] (0,0) -- (2,0) -- (1,1) -- cycle;
		\node at (1,0.4) {\scriptsize $0$};
		
		\draw (-0.8,-0.2) arc (200:20:0.825)
		node[sloped, midway, scale=0.7, rotate=5] {$>$};
		\draw[dashed] (-0.8,-0.2) arc (-160:20:0.825);
	\end{tikzpicture}
	\caption{A loop in the universal covering encircling a vertex.}
	\label{fig:special_loop}
\end{figure}
Such a loop is however homologous to two loops encircling an edge;
both tiles share two $1$-tiles on their boundary and the loop
can be decomposed into the loop encircling the edge bordering the $2$-tile
and one of the $1$-tile and the loop encircling the edge shared by this same $1$-tile and the $0$-tile.
This leaves us with $6$ loops $L_4, \dots, L_9$, one for each side of the hexagon, 
each of them 
circling an edge so that they intersect the surface $\sum_{i=1}^3\alpha_i=\tfrac{1}{2}$ -- oriented by $(1,1,1)$ -- 
exactly twice and transversally: positively on the $(n+1)$-tile and negatively on the $n$-tile with $n=0$ or $n=1$.
Summarizing, we have
$$
	\wt{\Sf}_{\uS}(L_1) = \ldots = \wt{\Sf}_{\uS}(L_3) = 0, \quad \wt{\Sf}_{\uS}(L_4) = \ldots = \wt{\Sf}_{\uS}(L_9) = -1,
$$
and $L_1,\ldots, L_9$ generate $H_1(\T_{\ug})$.

\subsubsection*{\bf{Conjecture}}
	In the case of the Hopf 2-link (Figure~\ref{fig:punctured_torus}),
	the two critical points $p_1,p_2$ are linked by a line of solutions to the zero mode problem, namely
	$\alpha_1+\alpha_2=\tfrac{3}{2}$. If we now deform for example the second Hopf circle and increase 
	its integrated relative torsion, then the critical point $p_1$ moves to the left. 
	We conjecture that the line of solutions remains, and having this picture in mind,
	the jump of the spectral flow can be easily explained: it corresponds to the 
	crossing of the line across the vertical line $\alpha_1=0\in \Tf$.
	
	More generally, we conjecture that for any system of two linked knots, this line of solutions linking the two critical points
	exist, and that it is a $C^1$-curve. Furthermore, adding other knots in the picture, we think that the critical region bounds
	hypersurfaces of solutions when the knots have non-trivial linking numbers
	(see Figure~\ref{fig:sf_cube} for the Hopf 3-link in which the critical region bounds the triangle $\Sigma_{T}$ and
	the hexagon $\Sigma_{H}$). There may be degenerate cases for specific values of the integrated relative torsions.

\section{Proofs of Theorems~\ref{thm:bump_cont_bulk},~\ref{thm:bump_cont_bdry_A}~\&~\ref{thm:bump_cont_bdry_B}}\label{sec:bump_cont_proofs}
The following Lemma from \cite[Lemma~23]{dirac_s3_paper1} will be useful when proving the main theorems below.
\begin{lemma}\label{lem:char_sres_conv}
	Let $(\cD_n)_n$ be a sequence of (unbounded) self-adjoint operators on a separable 
	Hilbert space $\cH$. 
	Then the following propositions are equivalent.
	\begin{enumerate}
		\item $\cD_n$ converges to $\cD$ in the strong resolvent sense.
    	
		\item For any $(f,\cD f)\in \cG_{\cD}$, 
    		there exists a sequence $(f_n,\cD_n f_n)\in \cG_{\cD_n}$ converging to 
		$(f,\cD f)$ in $\cH \times \cH$.
    	
		\item The orthogonal projection $P_n$ onto $\cG_{\cD_n}$ 
		converges in the strong operator topology to $P$, the orthogonal projector onto $\cG_{\cD}$.
		\end{enumerate}
\end{lemma}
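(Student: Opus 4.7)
The plan is to prove the equivalences by establishing $(1)\Leftrightarrow(2)$ via the resolvent identity and $(2)\Leftrightarrow(3)$ via a dense-set argument, exploiting the fact that for a self-adjoint $\cD$ both $\cG_{\cD}$ and its orthogonal complement in $\cH\oplus\cH$ are ``graph-like''.

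For $(1)\Rightarrow(2)$, given $(f,\cD f)\in\cG_{\cD}$, I would set $g:=(\cD-i)f\in\cH$ and define $f_n:=(\cD_n-i)^{-1}g\in\dom(\cD_n)$. Strong resolvent convergence immediately gives $f_n\to(\cD-i)^{-1}g=f$, while the algebraic identity $\cD_n f_n = g+if_n$ yields $\cD_n f_n\to g+if=\cD f$. Conversely, for $(2)\Rightarrow(1)$, given any $g\in\cH$, I would set $f:=(\cD-i)^{-1}g$, use (2) to produce $f_n\in\dom(\cD_n)$ with $f_n\to f$ and $\cD_n f_n\to\cD f$, so $h_n:=(\cD_n-i)f_n\to g$. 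Since $(\cD_n-i)^{-1}$ is bounded uniformly by $1$,
\[
(\cD_n-i)^{-1}g-f_n=(\cD_n-i)^{-1}(g-h_n)\longrightarrow 0,
\]
hence $(\cD_n-i)^{-1}g\to f=(\cD-i)^{-1}g$. The analogous argument with $+i$ handles the other half-plane, and strong resolvent convergence at one non-real point suffices.

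For $(3)\Rightarrow(2)$, which is the easy direction, note that for $x=(f,\cD f)\in\cG_{\cD}$ one has $Px=x$ and $P_n x\in\cG_{\cD_n}$, so $P_n x=(f_n,\cD_n f_n)$ with $(f_n,\cD_n f_n)\to(f,\cD f)$. For $(2)\Rightarrow(3)$, the key observation is that, since $\cD$ is self-adjoint,
\[
\cG_{\cD}^{\perp}=\bigl\{(-\cD h,h)\colon h\in\dom(\cD)\bigr\},
\]
so elements of $\cG_{\cD}^{\perp}$ are also of ``graph type'' and condition (2) applies to them via the pair $(h,\cD h)$: there exist $h_n\in\dom(\cD_n)$ with $h_n\to h$ and $\cD_n h_n\to\cD h$, and thus $z_n:=(-\cD_n h_n,h_n)\in\cG_{\cD_n}^{\perp}$ converges to $(-\cD h,h)$. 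Because $\|P_n\|\le 1$ uniformly, it suffices to check strong convergence on a dense set. For $x\in\cG_{\cD}$ I pick $y_n\in\cG_{\cD_n}$ with $y_n\to x$ and write $P_n x-x=P_n(x-y_n)+(y_n-x)$, which tends to $0$; for $x\in\cG_{\cD}^{\perp}$ I do the same with the $z_n$ above and use $P_n z_n=0$. Since $\cG_{\cD}\oplus\cG_{\cD}^{\perp}=\cH\oplus\cH$, this yields $P_n\to P$ strongly everywhere.

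The only genuinely delicate point is the characterization $\cG_{\cD}^{\perp}=J\cG_{\cD}$ with $J(u,v)=(-v,u)$, which crucially requires $\dom(\cD^{*})=\dom(\cD)$; this is precisely the self-adjointness hypothesis and justifies approximating perpendicular elements via condition (2) applied to an auxiliary graph element. Everything else is bookkeeping with the uniform bound $\|P_n\|\le 1$ and the resolvent identity $(\cD_n-i)f_n=g+if_n$.
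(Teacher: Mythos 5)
The paper does not prove this lemma in the text; it cites it verbatim from the companion paper \cite{dirac_s3_paper1}*{Lemma~23}, so there is no in-text argument to compare against. Your proof is correct and is the standard one: $(1)\Leftrightarrow(2)$ via the substitution $g=(\cD-i)f$ together with the uniform bound $\norm{(\cD_n-i)^{-1}}\le 1$, and $(2)\Leftrightarrow(3)$ via the von Neumann identity $\cG_{\cD}^{\perp}=\{(-\cD h,h):h\in\dom\cD\}$ (where self-adjointness of $\cD$ and of each $\cD_n$ is exactly what lets you recognize graph complements as graphs) combined with $\norm{P_n}\le 1$ and $\cG_{\cD}\oplus\cG_{\cD}^{\perp}=\cH\oplus\cH$. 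One minor remark on $(2)\Rightarrow(1)$: you in fact establish strong convergence of $(\cD_n-z)^{-1}$ at both $z=i$ and $z=-i$, and from there the extension to all non-real $z$ is a routine Neumann-series continuation within each half-plane; the parenthetical claim that a single non-real point suffices is true for self-adjoint operators (weak convergence of adjoints plus convergence of norms upgrades to strong convergence at the conjugate point), but it is a slightly subtler fact that your argument does not actually need.
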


\subsection{Proof of Theorem~\ref{thm:bump_cont_bulk}}
	As $\sS^{(K)}\times \Tf^{K}$ is metric, it suffices to verify sequential continuity.
	Let $(\uS,\ua)\in  \sS^{(K)}\times (0,1)^{K}$, and 
	let $(\uS^{(n)},\ua^{(n)})_n$ be a sequence converging to $(\uS,\ua)$.
	We set 
	$$
		M:=\rank\,\mathds{1}_{\supp\,\phi}(\cD_{\bA}).
	$$
	Due to the strong resolvent continuity of $\bbd$ 
	(characterization (2) of Lem.~\ref{lem:char_sres_conv}), 
	there exists a sequence of $M$-dimensional planes $W^{(n)}\subset\ran\,\cD_{\bA^{(n)}}$ such that the projection
	$P_{W^{(n)}}$ converges to $\mathds{1}_{\supp\,\phi}(\cD_{\bA})$. As the operators have discrete spectrum, 
	we can assume that $W^{(n)}$ is generated by eigenfunctions of $\cD_{\bA^{(n)}}$. 
	By Theorem~\ref{thm:compactness}, any sequence $(\psi^{(n)})_n$ with 
	$$
		\psi^{(n)} \in \mathds{1}_{\supp\,\phi}(\cD_{\bA^{(n)}}), \quad \norm{\psi^{(n)}}_{L^2}=1, 
	$$
	converges (up to extraction) to an element
	in $\ran \mathds{1}_{\supp\,\phi}(\cD_{\bA})$.
	This shows that along the sequence $(\uS^{(n)},\ua^{(n)})_n$ 
	there is no vanishing of eigenfunctions in the spectral region $(\supp\,\phi)^\circ$, and thus
	$$
		\lim_{n\to+\infty}\norm{\phi(\cD_{\bA^{(n)}})-\phi(\cD_{\bA})}_{\cB}=0.
	$$
\qed

\subsection{Proof of Theorems~\ref{thm:bump_cont_bdry_A}~\&~\ref{thm:bump_cont_bdry_B}}

\subsubsection{Description of the proof of Theorem~\ref{thm:bump_cont_bdry_A}}
We split the proof into several steps. In the first three steps we
recall some technical tools from \cite{dirac_s3_paper1} used to
analyze the operator $\cD_{\bA}$ and its domain in the vicinity of a
component $\gamma_k$ for a given magnetic link represented by
$(\uS,\ua)\in \sS^{(K)}\times(0,1)^K$.

{\bf Step 1 \ref{sec:remov_phase}:} Using a local gauge transformation 
in a tubular neighborhood around $\gamma_k$ (see Proposition~\ref{prop:sa_link}
below) we replace the singular vector potentials coming from all
the other knots by a smooth vector potential. 

{\bf Step 2 \ref{sec:localization}:} We introduce a partition of unity
enabling us to localize wave functions around $\gamma_k$.

{\bf Step 3 \ref{sec:model_case}:} In the vicinity of a $\gamma_k$ we
express the Dirac operator in the local coordinates $(s,\rho,\theta)$ and the spinor 
basis $(\xi_+,\xi_-)$, defined in
Section~\ref{sec:seif_fram_loc_coord}.  Up to sub-principal terms
depending on the geometry of $\gamma_k$ and the smooth vector
potential from Step 1 the Dirac operator acts like a cylindrical Dirac
operator $\cD_{\T_{\ell_k},\alpha_k}$ which has a phase jump of
$e^{-2i\pi\alpha_k}$ across $\theta=0$. In fact, as there is a small
discrepancy between the surface $\theta=0$ and the Seifert surface
$S_k$ we need a singular gauge transformation to move the phase jump
from $S_k$ to $\theta=0$. We have an explicit orthogonal decomposition of the domain of 
$\cD_{\T_{\ell_k},\alpha_k}$ in a regular and singular part (see \eqref{eq:form_domain_D_T}).

{\bf Step 4 \ref{sec:start_proof}:} This is the main part of the proof. We prove both
implications in Theorem~\ref{thm:bump_cont_bdry_A} 
by contraposition, i.e., we prove the equivalence of the statements  
\begin{itemize}
		\item[(S1)] For any bump-functions $\phi$ centered at
                  $\lambda$, the map $\bbd$ \eqref{eq:phi_no_top}
                  is not locally $\sT_{\phi}$-continuous at $(\uS,\ua)$.
\item[(S3)] There exists a $k \in \{1,\ldots,K\}$ such that $\alpha_k=0$ and
			$$ \lambda \in \spec \left(\cT_{k,\bA} +
  \tau_{S_k}\right).
			$$
	\end{itemize}
We show the equivalence of theses two statements 
by showing that they are both equivalent to
\begin{itemize}
	\item[(S2)] There exists a sequence $(\uS^{(n)},\ua^{(n)}) \to
          (\uS,\ua)$ for which there exists a sequence
          $(\psi^{(n)})_n$ with
		\begin{equation}\label{def:w-l_approx_seq}
			\left\{
			\begin{array}{l}
			\psi^{(n)} \in \dom\big(\cD_{\bA^{(n)}}\big),
                        \quad \norm{\psi^{(n)}}_{L^2} = 1,\\ \lim_{n
                          \to
                          \infty}\int_{\S^3}\big|\big(\cD_{\bA^{(n)}}-\lambda\big)\psi^{(n)}\big|^2
                        = 0,\\ \psi^{(n)} \rightharpoonup 0.
			\end{array}
			\right.
		\end{equation}
		We say that there exists a vanishing sequence
                $(\psi^{(n)})_n$ (at $\lambda$) if the above holds.
\end{itemize}
The most difficult part is to show that (S2) is equivalent to (S3). By
Theorem~\ref{thm:compactness} a vanishing sequence must concentrate
onto knots $\gamma_k$ for which $\alpha_k\to1^-$.  We localize the
vanishing sequence $(\psi^{(n)})_n$ around each of these $\gamma_k$'s
and express (S2) in local coordinates.  We decompose the localized
functions into regular and singular parts. To prove that (S2) implies
(S3) we show that the singular part of the vanishing sequence
collapses to an eigenfunction of $\cT_{k,\bA} +
  \tau_{S_k}$.
We show that (S3) implies (S2) by explicitly constructing a vanishing
sequence. 

\begin{notation}
	Along the proof we will consider converging sequences of
        tuples $(\uS^{(n)},\ua^{(n)})$. After localization we will
        concentrate our study around $\gamma_k^{(n)}$ for a given
        $k$. To avoid the burden of notation the index $k$ will often
        be omitted.
	
	For $\ell>0$, we will use the notations
        $\T_{\ell}:\R/(\ell\Z)$ and
        $\T_{\ell}^*:=\frac{2\pi}{\ell}\Z$.  As an example: in the
        proof and after localization, $\T_n$ denotes
        $\T_{\ell_k^{(n)}}$ where $\ell_k^{(n)}$ is the length of
        $\gamma_k^{(n)}$.
        
        We recall that the local coordinates $(s,\rho,\theta)$ around a knot $\gamma$ 
        are defined in Section~\ref{sec:seif_fram_loc_coord}.
\end{notation}

\subsubsection{Removing the phase jumps}\label{sec:remov_phase}
In this part, we introduce the function $E_{k}$
explained in \cite[Section~3.4]{dirac_s3_paper1}, which removes the
phase jumps along the curve $\gamma=\gamma_k$.

The following result can be found in \cite[Section~3.4]{dirac_s3_paper1}. 
The construction of $E_k$ is reexplained below.

\begin{proposition}\label{prop:sa_link}
	Let $(\uS,\ua)\in \sS^{(K)}\times(0,1)^K$ and let $(\,\cdot\,)$ denote $(\max)$, $(-)$ or $(\min)$.
	Let $\eps>0$ be small enough such that for all $1\le k\neq k'\le K$,
	 $B_{\eps}[\gamma_k]\cap S_{k'}$ and $\gamma_k\cap S_{k'}$ have the same
	number of connected components.
	
	For any $1\le k\le K$, there exists a map (defined below in \eqref{def:curve_phasej_rem})
	\[
	  E_k:B_\eps[\gamma_k]\cap\Omega_{\uS}\mapsto \S^1
	\]
	which maps the set 
	$
		\{\psi \in \dom(\cD_{\bA}^{(\,\cdot\,)}): \supp\psi \in B_{\eps}[\gamma_k] \cap \Omega_{\uS}\}
	$
	onto the set 
	$
		\{\psi \in \dom(\cD_{\bA_k}^{(\,\cdot\,)}): \supp\psi
                \in B_{\eps}[\gamma_k] \cap \Omega_{\uS}\},
	$
	where $\bA_k=2\pi\alpha_k[S_k]$ and we have the correspondence
	\begin{equation}\label{eq:gauge_E_k}
	 \cD_{\bA}^{(\,\cdot\,)}\psi=E_k\left(\cD_{\bA_k}^{(\,\cdot\,)}
         +\frac{c_k}{h_k}\sigma(\bT_k^{\flat})\right)\overline{E}_k\psi,
	\end{equation}
	where $c_k$ is the number:
	\begin{equation}\label{eq:pjump_linkn}
		c_k:=\frac{1}{\ell_k}
			\sum_{k'\neq k}2\pi\alpha_{k'}\link(\gamma_{k'},\gamma_{k})=\frac{1}{\ell_k}\Phi_{k}[(\alpha_{k'})_{k'\neq k}],
	\end{equation}
	Furthermore the function $E_k$ has a bounded derivative in $B_{\eps}[\gamma_k] \cap \Omega_{\uS}$, and for any vector field $X$,
	$\overline{E_k}X(E_k)$ coincides with $ic_k X(s_k(\cdot))$ on $B_{\eps}[\gamma_k] \cap \Omega_{\uS}$
	and $\overline{E_k}X(E_k)$ can be extended to an element in $C^1(B_{\eps}[\gamma_k])$. 
	The choice of $E_k$ is unique up to a constant phase. 
\end{proposition}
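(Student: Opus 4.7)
The plan is to build $E_k$ as an explicit $\S^1$-valued phase function on $B_\eps[\gamma_k]\cap\Omega_{\uS}$ whose discontinuities exactly cancel those produced in $\dom(\cD_\bA^{(\cdot)})$ by the ``other'' singular potential $\bA-\bA_k=\sum_{k'\ne k}2\pi\alpha_{k'}[S_{k'}]$, while leaving the jump across $S_k$ itself intact. The conjugation identity \eqref{eq:gauge_E_k} will then follow from a direct Leibniz-rule computation, after which the smoothed-out residual term will be identified with $(c_k/h_k)\bsigma(\bT_k^\flat)$.

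For $\eps$ small enough (as assumed), each $S_{k'}$, $k'\ne k$, cuts $B_\eps[\gamma_k]$ into finitely many sectors; traversing $\gamma_k$ once one picks up, with signs dictated by the orientations at the crossings, a total phase $\sum_{k'\ne k}2\pi\alpha_{k'}\link(\gamma_{k'},\gamma_k)=c_k\ell_k$, by the characterization of the linking number as an algebraic intersection. I would set $E_k=e^{ic_k s}\cdot e^{i\phi}$, where $s$ is a smooth extension of the arc length of $\gamma_k$ to $B_\eps[\gamma_k]$ and $\phi$ is locally constant on each sector of $B_\eps[\gamma_k]\setminus\bigcup_{k'\ne k}S_{k'}$, with jumps $\pm 2\pi\alpha_{k'}$ across the $S_{k'}$'s matching in sign the ones of wavefunctions in $\dom(\cD_{\bA}^{(\cdot)})$. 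The compatibility of the resulting monodromy modulo $2\pi$ is exactly the identity defining $c_k$ in \eqref{eq:pjump_linkn}.

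To verify the formula, set $\tilde\psi:=\overline{E}_k\psi$ for $\psi$ supported in $B_\eps[\gamma_k]\cap\Omega_{\uS}$. The matching of the jumps delivers the bijection between $\dom(\cD_\bA^{(\cdot)})$ and $\dom(\cD_{\bA_k}^{(\cdot)})$ for such supports. A Leibniz computation then yields
\begin{equation*}
\cD_\bA^{(\cdot)}\psi = E_k\Bigl[\cD_{\bA_k}^{(\cdot)}\tilde\psi + \bsigma(\eta)\tilde\psi\Bigr], \qquad \eta := -i\,\overline{E}_k\,dE_k,
\end{equation*}
so the remaining task is to identify the real $1$-form $\eta$ (smooth on $B_\eps[\gamma_k]\cap\Omega_{\uS}$) with $(c_k/h_k)\bT_k^\flat$, and to show it extends smoothly across the surfaces $S_{k'}$. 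Both $1$-forms are closed; they have equal circulation $c_k\ell_k$ along the generator $\gamma_k$ of $H_1(B_\eps[\gamma_k])\cong\Z$, and they coincide on $\gamma_k$ itself since $h_k\equiv 1$ there and $E_k|_{\gamma_k}=e^{ic_k s}$ up to a locally constant factor. The pointwise identification then pins down the smooth extension of $s$ off the curve that is required in the construction of $E_k$: one uses the tubular coordinates of Section~\ref{sec:seif_fram_loc_coord} so that $ds=(h_k+O(\rho))\bT_k^\flat$ plus transverse terms that can be absorbed into $d\phi$.

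The main obstacle will be precisely this last matching: making $\eta$ equal $(c_k/h_k)\bT_k^\flat$ on the nose rather than just up to an exact term requires carefully choosing the smooth part of $E_k$ in each sector, not only the piecewise-constant jump values, which is delicate because of the twisting of the Seifert frame encoded by $\tau_r$ in \eqref{eq:eq_darboux_frame}. Uniqueness up to a constant phase is then immediate: any other admissible $E_k'$ yields $\overline{E_k'}E_k\in C^\infty(B_\eps[\gamma_k];\S^1)$ with vanishing derivative, hence a constant.
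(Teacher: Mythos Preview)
Your construction is essentially the same as the paper's: the paper defines $E_k=\prod_{k'\ne k}E_{k,k'}$, where each factor has the form $\exp\bigl(i(2\pi\alpha_{k'}\link(\gamma_k,\gamma_{k'})/\ell_k)\,\tilde s_k\bigr)$ times a locally constant phase on the sectors of $B_\eps[\gamma_k]\setminus S_{k'}$, with $\tilde s_k$ a local lift of the circle-valued arclength coordinate $s_k$. This is exactly your $e^{ic_ks}e^{i\phi}$, written factor by factor. One small point: since $s_k$ takes values in $\R/\ell_k\Z$ and $c_k\ell_k\notin 2\pi\Z$ in general, ``$e^{ic_ks}$'' only makes sense after lifting $s_k$ on each sector; the paper handles this by choosing lifts $s_{k,k'}$ on $B_\eps[\gamma_k]\setminus S_{k'}$.

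Where you go astray is in the final ``main obstacle''. There is none. With $\phi$ locally constant, $d\phi=0$ on each sector and hence $\eta=-i\overline{E}_k\,dE_k=c_k\,ds_k$ \emph{exactly}. Moreover, the tubular coordinates of Section~\ref{sec:seif_fram_loc_coord} give the \emph{exact} identity $\bT_k^\flat=h_k\,ds_k$ (compute $\bT_k^\flat(\partial_s)=g(\bT_k,h_k\bT_k+\tau_r\sin\rho\,\bG)=h_k$ and $\bT_k^\flat(\partial_\rho)=\bT_k^\flat(\partial_\theta)=0$ from \eqref{def:bG}); there are no $O(\rho)$ errors and no transverse terms, so $\bsigma(\eta)=c_kh_k^{-1}\bsigma(\bT_k^\flat)$ on the nose. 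The relative torsion $\tau_r$ plays no role in this identification. Your cohomological argument (matching circulations along $\gamma_k$) is unnecessary and, as stated, would only pin $\eta$ down up to an exact form; the pointwise equality is immediate from the construction.

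For uniqueness, your argument is essentially right but needs one extra observation: the ratio $E_k'/E_k$ a priori lives on $B_\eps[\gamma_k]\cap\Omega_{\uS}$, which need not be connected, so ``zero derivative implies constant'' is not automatic. Since both $E_k$ and $E_k'$ have the \emph{same} prescribed jumps across each $S_{k'}$ and no jump across $S_k$, the ratio extends continuously to all of $B_\eps[\gamma_k]$, which is connected, and the conclusion follows.
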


In the proposition, $s_k(\cdot)$ denotes the coordinate map that associates to a point $\bp\in B_\eps[\gamma_k]$
	the arclength parameter $s_k(p)\in\R/(\ell_k\Z)$ of its projection onto $\gamma_k$.

The phase jump function $E_k$ is defined as the product of the $E_{k,k'}$'s,
where $E_{k,k'}$ is the function describing the phase jump on $B_\eps[\gamma_k]$ due to $S_{k'}$. 
For $\bp \in B_{\eps}[\gamma_k] \cap \Omega_{\uS}$ we set:
\begin{equation}\label{def:curve_phasej_rem}
	E_k(\bp) :=
	\prod_{k'\neq k}E_{k,k'}(\bp).
\end{equation}

\paragraph{\textit{Construction of $E_{k,k'}$}}

Let $1\le k\neq k'\le K$. 

If $\gamma_k\cap S_{k'}=\emptyset$, we set $E_{k,k'}\equiv 1$,
else we proceed as follows. The curve $\gamma_k$ intersects $S_{k'}$ at the points 
$0\le s_1<s_2<\ldots<s_{M}<\ell_{k}$. Call $C_1,\cdots,C_{M}$ the corresponding connected
components of the intersection $B_{\eps}[\gamma_k]\cap S_{k'}$.
For $1\le m\le M$, $S_{k'}$ induces a phase jump $e^{ib_{m}}$ across the cut $C_m$,
where $b_{m}=\pm 2\pi\alpha_{k'}$.

\medskip

\subparagraph{\textit{Case 1: $M=1$}}
		The surface $S_{k'}$ cuts $\gamma_k$ only once and the cut neighborhood
		$B_\eps[\gamma_k]\setminus S_{k'}=:B(\eps,k,k')$ is contractible. 
		We can thus lift the coordinate map $s_k(\cdot)$ on this subset which gives a smooth function
		\[
		    s_{k,k'}:B_\eps[\gamma_k]\setminus S_{k'}\mapsto \mathbb{R},
		\]
		satisfying for all $\bp\in B_\eps[\gamma_k]\setminus S_{k'}$:
		\[
		    \mathrm{exp}\Big(\frac{2i\pi}{\ell_k}s_{k,k'}(\bp)\Big)=\mathrm{exp}\Big(\frac{2i\pi}{\ell_k}s_{k}(\bp)\Big).
		\]
		We then define for all $\bp\in B_\eps[\gamma_k]\setminus S_{k'}\supset B_\eps[\gamma_k]\cap\Omega_{\uS}$
		\begin{equation}\label{Eq:E_k_k'_1_intersection}
			E_{k,k'}(\bp):=\exp\Big(-i b_{1}\frac{s_{k,k'}(\bp)}{\ell_k}\Big).
		\end{equation}
\subparagraph{\textit{Case 2: $M\ge 2$}} 
		In this case the cuts $C_m$'s split $B_{\eps}[\gamma_k]$ into $M$ sections:
		\[
			B_{\eps}[\gamma_k] \cap \complement S_{k'}
			=: R_{12} \cup R_{23} \cup \dots \cup R_{(M-1)M}\cup R_{M(M+1)},
		\]
		When passing from $R_{(m-1)m}$ to $R_{m(m+1)}$, we are going through $C_{m}$
		which then induces the phase jump $e^{ib_{m}}$ (with $C_{M+1}=C_1$).
		As in the first case, we pick a smooth lift $s_{k,k'}$ of $s_k$
		defined on $B_\eps[\gamma_k]\setminus C_1$.
		
		Writing for $2\le m\le M$
		$$
			R_{m(M+1)}:=R_{m(m+1)} \cup R_{(m+1)(m+2)} \cup \dots \cup R_{M(M+1)},
		$$
		we set for all $\bp\in B_\eps[\gamma_k]\cap\Omega_{\uS}$:
		\begin{equation}\label{Eq:E_k_k'_many_intersections}
			E_{k,k'}(\bp):=\exp\Big(-i\sum_{m=1}^M b_m\frac{s_{k,k'}(\bp)}{\ell_k}
				+i \sum_{m=2}^{M}b_m \mathds{1}_{R_{m(M+1)}}(\bp)\Big).
		\end{equation}
		
	Note that
	    \[
		\sum_{m=1}^{M}b_{m}=-2\pi\alpha_{k'}\link(\gamma_k,\gamma_{k'}),
	     \]
	as the linking number $\link(\gamma_k,\gamma_{k'})$ \cite{Rolfsen}*{Part D, Chapter 5} 
	corresponds to the number of algebraic crossing of $\gamma_{k}$ through the Seifert surface $S_{k'}$ for $\gamma_{k'}$.
	Thus $E_{k,k'}$ is a $\S^1$-valued function with a fixed slope and the correct phase jump across the cuts $C_m$'s: it is a unique
	up to a constant phase.

\subsubsection{Localization}\label{sec:localization}
Let $(\uS^{(n)},\ua^{(n)})_n$ be a sequence converging to $(\uS,\ua)$.
We pick $\eps>0$ such that 1. $B_{\eps}[\gamma_k] \cap B_{\eps}[\gamma_{k'}] = \emptyset$ for $k \neq k'$,
2. the coordinates $(s_k,\rho_k,\theta_k)$ are well-defined on $B_{\eps}[\gamma_k]$ and 3. Proposition~\ref{prop:sa_link}
applies. Note that by the continuity of the coordinates (see Proposition~\ref{prop:conv_coord}), 
the same will be true for for the links $\gamma^{(n)}$ for $n$ large enough.

We introduce a partition of unity (depending on $n$) that localizes around the regions around the knots $\gamma_k^{(n)}$'s,
their Seifert surfaces and away from them.

For $1 \leq k \leq K$, recall the function $\chi_{\delta,\gamma_k^{(n)}}$ 
as in \eqref{eq:chi_loc_curve}, with 
$$
	0<\delta< \eps \min\left\{1, \left(\sup_{k,n}\norm{\kappa_k^{(n)}}_{L^{\infty}} 
	+ \sqrt{\eps + \sup_{k,n}\norm{\kappa_k^{(n)}}_{L^{\infty}}}\right)^{-1}\right\},
$$
where for $s\in\R/(\ell_k^{(n)}\Z)$ we set:
\[
\kappa_k^{(n)}(s):=\sup_{\theta}|\kappa_{g,k}^{(n)}(s)\cos(\theta)+\kappa_{n,k}^{(n)}(s)\sin(\theta)|.
\]
This condition on $\delta$ ensures that $\cos(\rho_k^{(n)})-\sin(\rho_k^{(n)})\kappa_k^{(n)}(s)\ge 1-\eps$
uniformly in $k$ and $n$ (this function is a lower bound of \eqref{eq:def_h} that appears in the 
pullback of the volume form $\vol_{\S^3}$). Furthermore, set
$$
	\chi_{\delta,S_k^{(n)}}(\bp) 
	:= \chi\left(4\tfrac{\dist_{g_3}(\bp,S_k^{(n)})}{\delta}\right)\left(1-\chi_{\delta,\gamma_{k}^{(n)}}(\bp)\right),
$$
which has support close to the Seifert surface $S_k^{(n)}$, but away from the knot $\gamma_k^{(n)}$.
The remainder is then defined as
$$
	\chi_{\delta,R_k^{(n)}}(\bp) := 1-\chi_{\delta,S_k^{(n)}}(\bp) - \chi_{\delta,\gamma_k^{(n)}}(\bp),
$$
and the three $\chi$'s provide us with a partition of unity subordinate to $S_k^{(n)}$.
The partition of unity for the entire link is then given by
\begin{align}\label{def:part_unity}
	1 &= \prod_{k=1}^K\left(\chi_{\delta,\gamma_k^{(n)}}(\bp) + \chi_{\delta,S_k^{(n)}}(\bp)
	+\chi_{\delta,R_k^{(n)}}(\bp) \right)\nn\\
	&= \sum_{\underline{a} \in \{1,2,3\}^K}\chi_{\delta,\underline{a}}^{(n)}(\bp)
	=\sum_{k=1}^K\chi_{\delta,\gamma_k^{(n)}}(\bp)+\sum_{\underline{a} \in \{2,3\}^K}\chi_{\delta,\underline{a}}^{(n)}(\bp),
\end{align}
where $\chi_{\delta,\underline{a}}^{(n)}$ is the product $\prod_{k=1}^K\chi_{\delta,X_k^{(n)}}$, with 
$$
	X_k^{(n)} = \left\{\begin{array}{lr}
	\gamma_k^{(n)}, &a_k = 1,\\
	S_{k}^{(n)}, &a_k = 2,\\
	R_k^{(n)}, &a_k = 3.
	\end{array}\right.
$$

\subsubsection{The model case}\label{sec:model_case}
The results of this section are explained in full details in \cite{dirac_s3_paper1}*{Section 3.2.2}.
Consider the operator $\cD_{\bA}$ associated to $(\uS,\ua)\in \sS^{(K)}\times(0,1)^K$: we now
study its action close to one of the $\gamma_k$'s.  Close to such a
knot $\gamma_k$, we write
$$
	\chi_{\delta,\gamma}\psi = E_k(f_+\xi_+ + f_-\xi_-),
$$ where $E_k$ denotes the phase function
        \eqref{def:curve_phasej_rem}.
The local coordinates manifold
$$
	\T_{\ell} \times \R^2 := (\R / (\ell \Z)) \times \R^2,
$$ 
is endowed with the flat metric and coordinates $(s,u_1,u_2)$
defined by $u_1+iu_2=\rho e^{i\theta}$. The
$Spin^c$ spinor bundle is $\T_{\ell} \times \R^2 \times \C^2$,
endowed with the Clifford map $\wt{\bsigma}$ defined through
$$ 
\wt{\bsigma}(\d s) = \sigma_3, \quad \wt{\bsigma}(\d u_1) =
\sigma_1, \quad \wt{\bsigma}(\d u_2) = \sigma_2.
$$ 
Furthermore, the symbol $\wt{\nabla}$ denotes the canonical
        connection.

Outside the Seifert surfaces $S_{k'}$'s, across which a
$\psi\in\dom(\cD_{\bA}^{(\max)})$ has the phase jumps
\eqref{eq:def_dom_min_dirac_op} inherited from $\cD_{\bA}^{(\min)}$,
$\cD_{\bA}^{(\max)}$ acts like the free Dirac operator
$\sigma(-i\nabla)$.  Written in the coordinates $(s,\rho,\theta)$ the
action of $\cD_{\bA}^{(\max)}$ is given by
\begin{multline}\label{eq:D_max_close_knot}
	\forall\,\bp\in B_{\delta}[\gamma_k]\cap \Omega_{\uS},\\
	\quad \overline{E}_k\begin{pmatrix}\cip{\xi_+}{\cD_{\bA}^{(\max)}\psi} \\ \cip{\xi_-}{\cD_{\bA}^{(\max)}\psi}\end{pmatrix}(\bp)
	=  \big(\wt{\bsigma}(-i\wt{\nabla})+ \mathcal{E}_1 + \mathcal{E}_0+\frac{c_k}{h_k}\sigma_3 \big) f(s(\bp),\rho(\bp),\theta(\bp)),
\end{multline}
where the operators of the last line (which act \emph{away} from the phase jump surfaces in \eqref{eq:D_max_close_knot})
are
\begin{equation}\label{eq:D_gamma_in_T}
	\left\{
	\begin{array}{ccl}
	\wt{\bsigma}(-i\wt{\nabla})	 &=& -i\begin{pmatrix} \partial_s & e^{-i\theta}(\partial_{\rho} - \frac{i}{\rho}\partial_{\theta})\\
		e^{i\theta}(\partial_{\rho} + \frac{i}{\rho}\partial_{\theta}) & -\partial_s\end{pmatrix},\\
		\cE_1 &=& -i \begin{pmatrix} (h_k^{-1}-1)\partial_s & \frac{-ie^{-i\theta}(\rho-\sin\rho)}{\rho \sin\rho}\partial_{\theta}\\
		\frac{ie^{i\theta}(\rho-\sin\rho)}{\rho \sin\rho}\partial_{\theta} & -(h_k^{-1}-1)\partial_s\end{pmatrix}
	 	+ i\sigma_3 \tfrac{\tau}{h} \partial_\theta,\\
		\cE_0 &=& -i(\sigma_3 M_{\xi}(\bT) + \sigma_1M_{\xi}(\bS) + \sigma_2M_{\xi}(\bN)).
	\end{array}\right.
\end{equation}
We recall that $M_\xi$ is the connection form of the canonical
connection \eqref{eq:con_form} in the $(\xi_+,\xi_-)$-trivialization
(relatively to $\gamma_k$), and $h_k$ is defined in
\eqref{eq:def_h}. The term $\frac{c_k}{h_k}\sigma_3$ corresponds to
the action on the phase jump function $E_k$. On $B_\delta[\gamma_k]$,
the matrix $\cE_0$ is bounded, and we will show that the other
sub-principal term $\cE_1$ can be controlled by the leading term. 

Consider the magnetic field supported on $\T_{\ell} \times\{0\}\subset
\T_{\ell} \times\R^2$ with flux $0\le \alpha<1$; we use the singular
gauge $2\pi\alpha[\wt{S}]$ where $\wt{S}$ denotes the surface
$\T_\ell\times\{(u_1,u_2)\in\R^2,\ \theta=0\}$ oriented by
$\partial_{u_2}$.\footnote{In \cite{dirac_s3_paper1} we use the
  ``smooth" gauge $\alpha\d\theta$: it is linked to the singular gauge
  $2\pi\alpha[\wt{S}]$ through the gauge transformation $e^{i\alpha
    \theta}$, where $\theta$ has branch cut along $\theta=0$.}

The minimal operator $\cD_{\T_{\ell},\alpha}^{(\min)}$ with domain 
$$
	\dom(\cD_{\T_{\ell},\alpha}^{(\min)}) 
	:= \big\{ \psi\in H^1(\T_\ell\times\R^2\setminus \wt{S})^2,\ \psi_{|_{\wt{S}_+}}=e^{-2i\pi\alpha}\psi_{|_{\wt{S}_-}}\big\},
$$ acts like the free Dirac operator $\wt{\bsigma}(-i\wt{\nabla})$
        outside $\wt{S}$.  An important property, which can be easily
        obtained by Stokes' formula, is the fact that for all $f_0\in
        \dom\big(\cD_{\T_{\ell},\alpha}^{(\min)}\big)$ the Dirac
        energy coincides with the Dirichlet energy on $\{ (s,u)\in
        \T_\ell\times\R^2,\ \theta\neq 0\}$:
\begin{equation}\label{eq:dirac_dirichlet}
	\int |\cD_{\T_{\ell},\alpha}^{(\min)}f_0|^2=\int|(\wt{\nabla}f_0)_{|_{\theta\neq 0}}|^2.
\end{equation}

We are interested in two particular extensions $\cD_{\T_{\ell},\alpha}^{(\pm)}$ with domains
\begin{align}\label{eq:form_domain_D_T}
	&\dom\big(\cD_{\T_{\ell},\alpha}^{(+)} \big):=\dom\big(\cD_{\T_{\ell},\alpha}^{(\min)} \big)
	\overset{\perp_{\cG}}{\oplus}
	\left\{ \frac{1}{\sqrt{2\pi\ell}} \sum_{j \in \T_{\ell}^*} \lambda_j e^{ijs}
	\begin{pmatrix}K_{1-\alpha}(\rho\langle j\rangle)e^{i(\alpha-1)\theta}\\0\end{pmatrix}: \ul \in\ell_2(\T_{\ell}^*)\right\}\nonumber\\
	&\dom\big(\cD_{\T_{\ell},\alpha}^{(-)} \big):=\dom\big(\cD_{\T_{\ell},\alpha}^{(\min)} \big)
	\overset{\perp_{\cG}}{\oplus}
	\left\{ \frac{1}{\sqrt{2\pi\ell}} \sum_{j \in \T_{\ell}^*} \lambda_j e^{ijs}
	\begin{pmatrix}0\\K_{\alpha}(\rho\langle j\rangle)e^{i\alpha\theta}\end{pmatrix}: \ul \in\ell_2(\T_{\ell}^*)\right\}.
\end{align}
Here, $\perp_{\cG}$ means the orthogonality with respect to the graph norm,
$K_\alpha$ and $K_{1-\alpha}$ denote modified Bessel functions 
of the second kind, the group
$\T_{\ell}^*:=\frac{2\pi}{\ell}\Z$ is the Pontrjagin dual of $\T_{\ell}$ 
and $\langle j\rangle:=\sqrt{1+j^2}$. The action is as follows: for all $f\in\dom\big(\cD_{\T_{\ell},\alpha}^{(\pm)} \big)$ we have:
\begin{equation}\label{eq:action_D_T}
  \cD_{\T_{\ell},\alpha}^{(\pm)}f:= \big(\wt{\bsigma}(-i\wt{\nabla})f\big)\big|_{\theta\neq 0}.
\end{equation}
These extensions should again be thought of as the situation when the singular part 
of the spinor ``aligns" with or against the magnetic field.
For any of these extensions we can decompose $f \in \dom\big(\cD_{\T_{\ell},\alpha}^{(\pm)}\big)$ 
with respect to the splittings \eqref{eq:form_domain_D_T} as 
\begin{align}\label{eq:model_decomp}
	f = f_0+f_{\sing}(\ul)
\end{align}
with $f_0 \in \dom\big(\cD_{\T_{\ell},\alpha}^{(\min)} \big)$.
Furthermore, there is a lower bound on the graph norm $\norm{\,\cdot\,}_{\T_{\ell}}$ (proved in \cite{dirac_s3_paper1})
\begin{equation}\label{eq:sing_contr}
	\norm{f}_{\T_{\ell}}^2=\norm{f_0}_{\T_{\ell}}^2+\norm{f_{\sing}}_{\T_{\ell}}^2
	\ge \norm{f_{\sing}}_{\T_{\ell}}^2\ge (C_\alpha+C_{1-\alpha}) \norm{\ul}_{\ell_2(d\mu_{\T_{\ell}^*})}^2,
\end{equation}
where $C_\alpha$ and $C_{1-\alpha}$ denote the integrals
\[
	\int_0^{+\infty}K_\beta(r)^2r\d r, \quad\beta=\alpha,1-\alpha.
\]
Observe that, due to the behaviour of $K_\alpha$ close to $0$, we have as $\alpha\to 1^-$
\begin{equation}\label{eq:est_C_alpha}
	C_\alpha+C_{1-\alpha}=\Theta\Big(\frac{1}{1-\alpha}\Big),
\end{equation}
that is, it behaves like $(1-\alpha)^{-1}$ as $\alpha\to 1^-$.
In other words, the graph norm  controls the $\ell_2$-norm of $\ul$. Observe also the following:
\begin{equation}\label{eq:sing_psi_la}
	\int|f_{\sing}(\ul)|^2=C_\alpha \sum_{j\in\T_\ell^*} \frac{|\lambda_j|^2}{1+j^2}.
\end{equation}

Furthermore, any $f \in \dom\big(\cD_{\T_{\ell},\alpha}^{(\pm)}\big)$ satisfies
\begin{align}\label{eq:dpm_decouple}
	\int |\cD_{\T_{\ell},\alpha}^{(\pm)}f|^2
	= \int|(\partial_s f)\big|_{\theta\neq 0}|^2 + \int |(\wt{\bsigma}(-i\wt{\nabla}_u)f)\big|_{\theta\neq 0}|^2,
\end{align}
which can be verified with the help of an explicit calculation.
Recall that $\norm{\cdot}_{\bA_k}$ and $\norm{\cdot}_{\T_{\ell}}$
denote the graph norms of $\cD_{\bA_k}$ and of the model operator $\cD_{\T_{\ell},\alpha}$
respectively. 

The next lemma (proved in \cite{dirac_s3_paper1}) compares the two graph norms.
\begin{lemma}\label{lem:control_Ngr}
	Let $\gamma_0$ be a knot of length $\ell$ with Seifert surface $S_0$ and
	$0<\alpha<1$. Given $\delta>0$ small enough and any 
	$\psi\in \dom\big(\mathcal{D}_{\bA_0}\big)$, $\bA_0=2\pi\alpha[S_0]$
	we write
	$$
		e^{i\zeta}\chi_{\delta,\gamma}\psi=f \cdot \xi = f_+\xi_++ f_{-}\xi_{-},
	$$ where the local gauge transformation $e^{i\zeta}$ shifts
                the phase jump $e^{-2i\pi\alpha}$ from the surface
                $S\cap B_{2\delta}[\gamma]$ to
                $\{(s,\rho,\theta),\,\rho\le 2\delta\ \&\ \theta=0\}$.
                Then $f=(f_+,f_-)\in
                \dom(\cD_{\T_{\ell},\alpha}^{(-)})$ and there exists a
                constant $C=C(\delta,\alpha,S)$ such that
	$$
		\norm{f}_{\T_{\ell}} 
		\leq C\norm{\chi_{\delta,\gamma}\psi}_{\bA_0}
		\leq C(1+\delta^{-1})\norm{\psi}_{\bA_0},
	$$
	and the bound $C(\delta,\alpha,S)$ depends continuously on $S\in\sS$ and $\alpha$.
\end{lemma}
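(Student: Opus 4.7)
The plan is to transfer the localized object $\chi_{\delta,\gamma}\psi$ into the cylindrical model geometry, via the local coordinates $(s,\rho,\theta)$ and the frame $(\xi_+,\xi_-)$, and then combine the pointwise formula \eqref{eq:D_max_close_knot} for $\cD_{\bA_0}$ with the orthogonal decomposition \eqref{eq:form_domain_D_T} for the model operator. There are three independent tasks: verify $f\in\dom(\cD_{\T_\ell,\alpha}^{(-)})$, establish the graph-norm estimate, and prove continuity of the constant. The gauge transformation $e^{i\zeta}$ is a bounded unitary multiplier (with $d\zeta\in L^\infty(\{\theta\neq 0\})$) whose sole purpose is to move the phase jump from $S\cap B_{2\delta}[\gamma]$ to $\{\theta=0\}$, so it preserves all $L^2$ and $H^1$ quantities up to a multiplicative constant depending only on $\delta$ and $S$.

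For domain membership, first observe that multiplying by $\chi_{\delta,\gamma}$ keeps $\psi$ in $\dom(\cD_{\bA_0}^{(\max)})$ (Leibniz), and by Theorem~\ref{thm:def_D_-} the defining condition $\cip{\xi_+}{\chi_{\delta,\gamma}\psi}\xi_+\in\dom(\cD_{\bA_0}^{(\min)})$ holds. After applying $e^{i\zeta}$ and reading off the components in the $(\xi_+,\xi_-)$ trivialization, this translates to $f_+e_1\in\dom(\cD_{\T_\ell,\alpha}^{(\min)})$; meanwhile $f_-e_2$ is only constrained to lie in $\dom(\cD_{\T_\ell,\alpha}^{(\max)})$, which is exactly the freedom allowed by the singular part in \eqref{eq:form_domain_D_T}. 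Hence $f=(f_+,f_-)\in\dom(\cD_{\T_\ell,\alpha}^{(-)})$.

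For the graph norm, apply \eqref{eq:D_max_close_knot} with $K=1$ (so $E_k\equiv 1$ and $c_k=0$), absorbing the gauge transform $e^{i\zeta}$ into a correction term $\bsigma(d\zeta)$. This yields, pointwise on $B_\delta[\gamma]\setminus\{\theta=0\}$,
$$
\cD_{\T_\ell,\alpha}^{(-)}f \;=\; e^{-i\zeta}\begin{pmatrix}\cip{\xi_+}{\cD_{\bA_0}(\chi_{\delta,\gamma}\psi)}\\ \cip{\xi_-}{\cD_{\bA_0}(\chi_{\delta,\gamma}\psi)}\end{pmatrix} \;-\;(\cE_1+\cE_0+\bsigma(d\zeta))f.
$$
The volume forms compare with bounded ratios on $B_\delta[\gamma]$ by \eqref{eq:pullb_volform} and the choice of $\delta$ (ensuring $h\ge 1-\eps$), so $L^2$ norms transfer with a bounded factor. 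The bounded matrix $\cE_0$ contributes $\|f\|_{L^2}$, while $\cE_1$ has coefficients $h^{-1}-1=O(\rho)$, $(\rho-\sin\rho)/(\rho\sin\rho)=O(\rho)$, and $\tau/h$ uniformly bounded; since $\rho\le\delta$ on the support, $\cE_1$ is controlled by $C\delta$ times the graph norm of $f$ viewed through the decomposition \eqref{eq:dpm_decouple}, and can be absorbed into the left-hand side for $\delta$ small. This yields $\|f\|_{\T_\ell}\le C(\delta,\alpha,S)\|\chi_{\delta,\gamma}\psi\|_{\bA_0}$. The second bound follows from the Leibniz identity $\cD_{\bA_0}(\chi_{\delta,\gamma}\psi)=\chi_{\delta,\gamma}\cD_{\bA_0}\psi-i\bsigma(d\chi_{\delta,\gamma})\psi$ together with the estimate $\|d\chi_{\delta,\gamma}\|_\infty\le C\delta^{-1}$ coming from the definition \eqref{eq:chi_loc_curve}. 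Continuity of $C$ in $(S,\alpha)$ is inherited from Proposition~\ref{prop:conv_coord} (which provides $C^\infty$-continuity of the geodesic coordinates and hence of $h$, $\tau$, and $M_\xi$) together with the continuous dependence of the radial integrals $C_\alpha,C_{1-\alpha}$ on $\alpha\in(0,1)$.

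The main obstacle is the relative bound on $\cE_1$: the operator $\rho^{-1}\partial_\theta$ is singular at $\gamma$ and a priori not controlled by $\|f\|_{L^2}$, so one cannot treat $\cE_1$ as a bounded multiplier. The resolution is that every potentially singular term in $\cE_1$ is multiplied by a coefficient vanishing at least like $\rho$ at $\gamma$; combined with the explicit radial asymptotics of $K_\alpha(\rho\langle j\rangle)e^{i\alpha\theta}$ in the singular sector and the identity \eqref{eq:dpm_decouple}, this shows $\|\cE_1 f\|_{L^2}\le C\delta\|f\|_{\T_\ell}$, so for $\delta$ small enough the $\cE_1$ contribution on the right is absorbed into the left-hand side. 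The behavior \eqref{eq:est_C_alpha} of $C_\alpha+C_{1-\alpha}$ as $\alpha\to 1^-$ is precisely what dictates the explicit blow-up of $C(\delta,\alpha,S)$ at the boundary of the flux interval.
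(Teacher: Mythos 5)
The overall strategy — push the localized spinor $\chi_{\delta,\gamma}\psi$ into the cylindrical model via the local coordinates and the $(\xi_+,\xi_-)$ trivialization, apply the pointwise expansion \eqref{eq:D_max_close_knot}--\eqref{eq:D_gamma_in_T}, control $\cE_0$ and the gauge term $\bsigma(\d\zeta)$ as bounded multipliers, absorb the $\cE_1$ contribution for $\delta$ small using \eqref{eq:dpm_decouple} and the $O(\rho)$ decay of its coefficients, and finish with the Leibniz identity to get the $\delta^{-1}$ factor — is the natural one and is consistent with the machinery the paper lays out in Section~\ref{sec:model_case}. The paper itself only cites \cite{dirac_s3_paper1} for the proof, so there is no in-text argument to compare against, but your main line is sound. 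The domain step is phrased loosely (the objects $f_+e_1$ and $f_-e_2$ cannot individually sit in the various domains, since the Dirac operator couples the components), yet the conclusion is right: $f\in\dom(\cD^{(\max)}_{\T_\ell,\alpha})$ by Leibniz, and the boundary condition $\langle\xi_+,\chi_{\delta,\gamma}\psi\rangle\xi_+\in\dom(\cD_{\bA_0}^{(\min)})$ kills the upper-component ($K_{1-\alpha}$) singularity, which is exactly what the splitting \eqref{eq:form_domain_D_T} requires for the $(-)$ extension.

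The final assertion, that the blow-up \eqref{eq:est_C_alpha} of $C_\alpha+C_{1-\alpha}$ as $\alpha\to 1^-$ ``dictates the explicit blow-up of $C(\delta,\alpha,S)$,'' is wrong, and the attribution of the $\alpha$-continuity of $C$ to the integrals $C_\alpha,C_{1-\alpha}$ is misdirected. That quantity appears only in the \emph{lower} bound \eqref{eq:sing_contr} on the model graph norm; its divergence as $\alpha\to 1^-$ shows that a bounded graph norm forces $\|\ul\|^2=O(1-\alpha)\to 0$ (which is precisely how \eqref{eq:bound_graph_norm_sing} is obtained later), but it plays no role in the upper-bound constant of the lemma. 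In your own argument the only $\alpha$-dependence enters through the gauge $\zeta$, which scales linearly in $\alpha$, so $|\d\zeta|$ stays bounded up to $\alpha=1$. Indeed $C(\delta,\alpha,S)$ \emph{must} remain bounded as $\alpha\to 1^-$: in Section~\ref{sec:start_proof} the lemma is applied with $\alpha_k^{(n)}\to 1^-$ to pass from $\limsup_n\|\chi_{\delta,\gamma_k^{(n)}}\psi^{(n)}\|_{\bA_k^{(n)}}<\infty$ to \eqref{eq:bound_graph_norm_f}, and a diverging constant would void that step.
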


\subsubsection{Proof of Theorem~\ref{thm:bump_cont_bdry_A}}\label{sec:start_proof}
	Since $\sS^{(K)}\times \Tf^{K}$ is metric the continuity
        in (S1) above is the same as sequential continuity.
	That (S2) implies (S1) is obvious. 
	
	\noindent {\bf (S1) implies (S2):} Assume (S1).
	We claim that there exists a sequence $(\uS^{(n)},\ua^{(n)})$ converging to $(\uS,\ua)$ 
	such that for all bump functions $\phi$ centered around $\lambda$ 
	there holds $\liminf_{n\to+\infty}\norm{\phi(\cD_{\bA^{(n}})-\phi(\cD_{\bA})}_{\cB}>0$.
	
	\paragraph{\textit{Proof of the claim}} 
	Let $\phi_s$ be a bump-function. We define a family of bump functions around $\lambda$ by setting
	$\phi_n(x):=\phi_s(n(x-\lambda))$, $n\ge 1$. Let $B_n:=B((\uS,\ua),2^{-n})$.
	By assumption, $\phi_n\circ \bbd_{|_{B_n}}$ is not continuous. Thus we can find a sequence
	 $(\uS^{(n,m)},\ua^{(n,m)})\underset{m\to+\infty}{\to}(\uS^{(n,\infty)},\ua^{(n,\infty)})$ \emph{in the ball} $B_n$
	such that $\lim_{m\to+\infty}\norm{\phi_n(\cD_{\bA^{(n,m)}})-\phi_n(\cD_{\bA^{(n,\infty)}})}_{\cB}= \eps_n>0$.
	Let $m(n)$ be a rank above which the norm is larger than $\eps_n/2$. We can assume that $m(n+1)>m(n)$,
	and for every $n_0$ the triangle inequality gives:
	\[
	 \liminf_{n\to+\infty}(\norm{\phi_{n_0}(\cD_{\bA^{(n,\infty)}})-\phi_{n_0}(\cD_{\bA})}_{\cB}+\norm{\phi_{n_0}(\cD_{\bA^{(n,m(n))}})-\phi_{n_0}(\cD_{\bA})}_{\cB})
	 \ge \tfrac{\eps_{n_0}}{2}.
	\]
	Thus one subsequence $(\uS^{(n_k,m(n_k))},\ua^{(n_k,m(n_k))})_k$ or $(\uS^{(n_k,+\infty)},\ua^{(n_k,+\infty)})_k$,
	which we rename $(\uS^{(k)},\ua^{(k)})_k$, satisfies the claim for the family $(\phi_{n_{k'}})_{k'\ge 1}$ of bump functions. For another bump function $\phi$
	around $\lambda$, for $k$ big enough we can factorize $\phi_{n_k}=f\circ \phi$ with $f$ continuous.

	\medskip
	We now consider this sequence $(\uS^{(n)},\ua^{(n)})_n$. As $\cD_{\bA}$ has discrete spectrum, we can isolate $\lambda$:
	there exists $\eta>0$ such that $\dist(\lambda\pm\eta,\spec(\cD_{\bA}))>0$ and 
	$\spec(\cD_{\bA})\cap[\lambda+\eta,\lambda-\eta]\subset\{\lambda\}$. For $n_0$ big enough, 
	the bump function $\phi_{n_0}$ has support in $(\lambda+\eta,\lambda-\eta)$.
	
	By Theorem~\ref{thm:strg_res_cont}, $\cD_{\bA^{(n)}}$ converges to $\cD_{\bA}$ in the strong resolvent sense.
	By Lemma~\ref{lem:char_sres_conv}, for every eigenfunction $\psi\in\dom(\cD_{\bA})$ with eigenvalue $\mu$, 
	there exists a sequence $(\psi_n)$ with $\psi_n\in\dom(\cD_{\bA^{(n)}})$ such that $(\psi_n,\cD_{\bA^{(n)}}\psi_n)$
	converges to $(\psi,\mu\psi)$ in norm. As the $\cD_{\bA^{(n)}}$'s also have discrete spectrum, we can assume $\cD_{\bA^{(n)}}\psi_n=\mu_n\psi_n$
	with $\mu_n\to \mu$.
	
	In particular, we can write 
	$
		\ran \mathds{1}_{[\lambda+\eta,\lambda-\eta]}\left(\cD_{\bA^{(n)}}\right) = V_{\eta}^{(n)} \overset{\bot}{\oplus}W_{\eta}^{(n)},
	$
	where $V_{\eta}^{(n)}, W_{\eta}^{(n)}$ are subspaces spanned by eigenfunctions of $\cD_{\bA^{(n)}}$ with
	$$
		\cD_{\bA^{(n)}}P_{W_{\eta}^{(n)}} \to \lambda\mathds{1}_{\{\lambda\}}(\cD_{\bA})\textrm{ in } \cB(L^2(\S^3)^2).
	$$
	Yet for all $n_0$, we have $\lim_{n\to\infty}\norm{\phi_{n_0}(\cD_{\bA^{(n)}})-\phi_{n_0}(\cD_{\bA})}_{\mathcal{B}}>0$. Thus for $n_0$ big enough we have
	$\supp\,\phi_{n_0}\subset (\lambda-\eta,\lambda+\eta)$ and
	$$
		\lim_{n\to+\infty}\norm{\phi_{n_0}(\cD_{\bA^{(n)}})P_{V_\eta^{(n)}}}_{\cB}
			=\lim_{n\to+\infty}\norm{\phi_{n_0}(\cD_{\bA^{(n)}})-\phi_{n_0}(\cD_{\bA})}_{\cB}>0.
	$$
	In particular $V_{\phi}^{(n)}\neq\{0\}$, and we can find a sequence $\psi^{(n)}\in V_{\eta}^{(n)}$ 
	of normalized eigenfunctions of $\cD_{\bA^{(n)}}$ with corresponding eigenvalues $\lambda^{(n)}$ all lying in $\supp\,\phi_{n_0}$.
	By another diagonal argument we can assume $\lambda^{(n)}\to\lambda$.
	
	At last, (S2) follows from the observation that any sequence satisfying:
	\[
		\psi^{(n)} \in V_{\eta}^{(n)}, \quad \norm{\psi^{(n)}}_{L^2} =1
	\]
	converges weakly to $0$: $\psi^{(n)}\rightharpoonup 0$. Indeed, pick 
	an accumulation point $\psi$ in the weak topology of $L^2(\S^3)^2$: let us show $\psi\in(\dom(\cD_{\bA}))^\perp=\{0\}$.
	As done above, let $\wt{\psi}$ be an eigenfunction $\cD_{\bA}$, approximated by eigenfunctions $\wt{\psi}^{(n)}$
	of $\cD_{\bA^{(n)}}$. If $\cD_{\bA}\wt{\psi}\neq \lambda\wt{\psi}$, then for $n$ big enough $\wt{\psi}^{(n)}\perp \psi^{(n)}$ as we have
	$|\mu_n-\lambda|>\eta$ , else $\cD_{\bA}\wt{\psi}= \lambda\wt{\psi}$ and we can choose $P_{W_{\eta}^{(n)}}\wt{\psi}^{(n)}=\wt{\psi}^{(n)}\perp \psi^{(n)}$.
	Anyway $\cip{\wt{\psi}}{\psi}_{L^2}=\lim_{n\to+\infty}\cip{\wt{\psi}^{(n)}}{\psi^{(n)}}_{L^2}=0$.
\medskip

	\noindent\textbf{(S2) implies (S3):}	
	We consider a vanishing sequence $(\psi^{(n)})_n$ along a sequence $(\uS^{(n)},\ua^{(n)})$ converging
	to $(\uS,\ua)$. For each flux $\alpha_k^{(n)},\alpha_k$ we consider the representative in $[0,1)$, that we write the same way.
	
	The sequence $(\uS^{(n)},\ua^{(n)})$ can be split into at most $2^{K}$ subsequences, indexed by the set
	$A\subset \{1,\cdots,K\}$ of indices $k$ for which $\alpha_{k}^{n}\to 1^-$ along the subsequence.

	By Theorem~\ref{thm:compactness}, there exists a subsequence with $A\neq\emptyset$, otherwise, the sequence $(\psi^{(n)})_n$
	cannot be vanishing. We will henceforth focus on one such subsequence associated to $A\neq\emptyset$, and furthermore denote by
	$$
		\gamma_{A} := \bigcup_{k \in A}\gamma_{k}.
	$$
	From the same Theorem~\ref{thm:compactness} it also follows that
	\begin{equation}\label{eq:psi_n_vanish}
		\psi^{(n)} \to 0 \textrm{ in } L_{\loc}^2(\S^3 \setminus \gamma_{A})^2.
	\end{equation}
	Thanks to \eqref{def:w-l_approx_seq} we also get that
	\begin{equation}\label{eq:Dpsi_n_vanish}
		\cD_{\bA^{(n)}} \psi^{(n)} \to 0 \textrm{ in } L_{\loc}^2(\S^3 \setminus \gamma_{A})^2.
	\end{equation}

	\paragraph{\textit{Localization of the sequence around the knots}}
	Using $\chi_{\delta,\gamma_k^{(n)}}$, we now localize around a knot $\gamma_k$ with $k\in A$,
	and drop the subscript $k$. To simplify notations, the relative torsion $\tau_{S_k^{(n)}}$ 
	will simply be written $\tau_k^{(n)}$ or just $\tau_n$.
	
	The phase function $E_k^{(n)}$ for $\gamma_k^{(n)}$ (see ~\eqref{def:curve_phasej_rem})
	is written $E^{(n)}$. We then decompose with respect to the trivialization $\xi^{(n)}=(\xi_+^{(n)},\xi_-^{(n)})$
	$$
		\chi_{\delta,\gamma^{(n)}}\psi^{(n)} = 
		E^{(n)}f^{(n)}\cdot \xi^{(n)}, f^{(n)}=(f_+^{(n)},f_-^{(n)}).
	$$
	As in Lemma~\ref{lem:control_Ngr}, up to a local gauge transformation 
	$e^{i\zeta_k^{(n)}}$ shifting the phase jump from $S_k^{(n)}$ to $\{\theta_n=0\}$,
	the spinor $f^{(n)}$ is in $\dom\big(\cD_{\T_{\ell_k^{(n)}},\alpha_k^{(n)}}^{(-)}\big)$ when 
	seen as a function of $(s_n,\rho_n,\theta_n)$. For any $n$, we perform this local gauge transformation
	around $\gamma_k^{(n)}$, and still write $f^{(n)}$ the new spinor. According to \eqref{eq:model_decomp} 
	we split $f^{(n)}$ into a regular part $\wt{f}_0^{(n)}$ and a singular part $\wt{f}_{\sing}^{(n)}$:
	$$
		E^{(n)}f^{(n)} = E^{(n)}\wt{f}_0^{(n)} + E^{(n)}\wt{f}_{\sing}^{(n)}.
	$$
	Furthermore, set 
	$$
		f_0^{(n)} := \chi(\tfrac{\rho}{2\delta})\wt{f}_0^{(n)},
		\quad f_{\sing}^{(n)} := \chi(\tfrac{\rho}{2\delta})\wt{f}_{\sing}^{(n)}.
	$$
	Observe that $f_0^{(n)} + f_{\sing}^{(n)} = \wt{f}_0^{(n)} + \wt{f}_{\sing}^{(n)}$, but that 
	$\wt{f}_0^{(n)},\wt{f}_{\sing}^{(n)}$ do not have compact support. 
	We also define
	\begin{align*}
		e^{i\zeta_k^{(n)}}\psi_0^{(n)}(f)(\bp) &:= f_0^{(n)}(s_n(\bp),\rho_n(\bp),\theta_n(\bp)) \cdot \xi^{(n)}(\bp)\in\dom(\cD_{\bA^{(n)}}^{(\min)}),\\
		e^{i\zeta_k^{(n)}}\psi_{\sing}^{(n)}(f)(\bp) &:= f_{\sing}^{(n)}(s_n(\bp),\rho_n(\bp),\theta_n(\bp)) \cdot \xi^{(n)}(\bp)\in\dom(\cD_{\bA^{(n)}}),\\
		e^{i\zeta_k^{(n)}}\psi^{(n)}(f)(\bp)&:=\psi(f)_0^{(n)}(\bp)+\psi_{\sing}^{(n)}(f)(\bp).
	\end{align*}
	We then have:
	\[
	 E^{(n)}\psi^{(n)}(f)=e^{i\zeta_k^{(n)}}\chi_{\delta,\gamma^{(n)}}\psi^{(n)}.
	\]
	\paragraph{\textit{Boundedness of the localized functions}}
	
	The form of $\wt{f}_{\sing}^{(n)}$ is somewhat easy to describe:
	\begin{equation}\label{eq:explicit_sing}
		\wt{f}_{\sing}^{(n)}(s,\rho,\theta)
		= \frac{1}{\sqrt{2\pi \ell^{(n)}}}\sum_{j \in (\T_{\ell_k^{(n)}})^*}e^{ijs} \lambda_j^{(n)}
		\begin{pmatrix} 0\\ K_{\alpha^{(n)}}(\rho \langle j\rangle)e^{i\alpha^{(n)}\theta} \end{pmatrix},
	\end{equation}
	(Recall that $\cD_{\bA}=\cD_{\bA}^{(-)}$ is related to the model operator $\cD_{\T_{\ell_k^{(n)}},\alpha_k^{(n)}}^{(-)}$.)
	
	Let $\bA_k^{(n)}$ be the magnetic potential:
	\[
	\bA_k^{(n)}:=2\pi\alpha_k^{(n)}[S_k^{(n)}]. 
	 \]

	 By \eqref{def:w-l_approx_seq} and \eqref{eq:gauge_E_k}:
	\[
	 \limsup_{n\to+\infty}\norm{\overline{E_k^{(n)}}\chi_{\delta,\gamma_k^{(n)}}\psi^{(n)}}_{\bA_k^{(n)}}<+\infty.
	\]

	By Lemma~\ref{lem:control_Ngr}, we have 
	\begin{equation}\label{eq:bound_graph_norm_f}
	\limsup_{n\to+\infty}\norm{\wt{f}^{(n)}}_{\T_n,\alpha^{(n)}}^2=
	  \limsup_{n\to+\infty}\big(  \norm{\wt{f}_0^{(n)}}_{\T_n}^2+\norm{\wt{f}_{\sing}^{(n)}}_{\T_n}^2\big)<+\infty.
	\end{equation}
	Using estimates \eqref{eq:sing_contr}-\eqref{eq:est_C_alpha}, this gives:
	\begin{equation}\label{eq:bound_graph_norm_sing}
	 \limsup_{n \to \infty} (1-\alpha^{(n)})^{-1}\sum_{j} |\lambda_j^{(n)}|^2 < \infty.
	\end{equation}
	By Lemma~\ref{lem:control_Ngr}, we also have 
	\begin{equation}\label{eq:bound_graph_norm_psi_f}
	 \limsup_n(\norm{\psi_0^{(n)}(f)}_{\bA_k^{(n)}}+\norm{\psi_{\sing}^{(n)}(f)}_{\bA_k^{(n)}})<+\infty.
	\end{equation}

	\paragraph{\textit{Careful study of $\cD_{\bA^{(n)}}\psi^{(n)}$ around the knots}}
	Recall \eqref{eq:D_max_close_knot}-\eqref{eq:D_gamma_in_T}; we introduce the operator
	$$
		\cQ^{(n)} := \nt + \cE_1^{(n)} + \cE_0^{(n)}+\frac{c_k^{(n)}}{h_n}\sigma_3,
	$$
	acting on $L^2(\T_n \times \R^2)^2$, where $\T_n := \R/(\ell_k^{(n)}\Z)$.
	Note that $h_n$ is the function $h$ in \eqref{eq:def_h} associated to the knot $\gamma_k^{(n)}$.
	We recall the correspondence:
	\begin{equation}\label{eq:rappel}
	 \big(e^{i\zeta_k^{(n)}}\overline{E}^{(n)}\cD_{\bA^{(n)}}\psi^{(n)}\big)(\bp)
	    =(\cQ^{(n)}f)|_{\theta=0}(s_n,\rho_n,\theta_n)(\bp)\cdot \xi^{(n)}(\bp),
	\end{equation}
	The part $\nt$ corresponds to $\cD_{\T_n,\alpha_k^{(n)}}$: see \eqref{eq:action_D_T}.
	We are now going to study
	\begin{align}
		\!\!\cQ^{(n)}(f_0^{(n)} + f_{\sing}^{(n)})
		&=\left(\cE_1^{(n)} - i\sigma_3\tfrac{\tau_n}{h_n}\partial_{\theta_n}\right)f_{\sing}^{(n)}\,_{\big|_{\theta\neq0}}\label{eq:Q_rest_1}\\
		&\quad + \Big(\cE_1^{(n)} + \cE_0^{(n)}+\frac{c_k^{(n)}}{h_n}\sigma_3\Big)f_0^{(n)}\,_{\big|_{\theta\neq0}}\label{eq:Q_rest_2}\\
		&\quad+\Big(\nt(f_0^{(n)}+f_{\sing}^{(n)}) + \big(\cE_0^{(n)}
		+\sigma_3(\tfrac{c_k^{(n)}}{h_n}+i\tfrac{\tau_n}{h_n}\partial_{\theta_n}\big)f_{\sing}^{(n)}\Big)_{\big|_{\theta\neq0}}\label{eq:Q_rest_3}.
	\end{align}
	We now determine the behavior of $f_0^{(n)}$ and
	$f_{\sing}^{(n)}$, then prove that the $L^2$-norm of \eqref{eq:Q_rest_1}-\eqref{eq:Q_rest_2} converge to $0$ 
	and study in details the term \eqref{eq:Q_rest_3}.

	\noindent\textbf{Convergence of $\wt{f}_0^{(n)}$ and concentration of $\wt{f}_{\sing}^{(n)}$:}

	Let us show that the following holds: first for any $\eps_1>0$
	\begin{equation}\label{eq:collapse_sing}
		m_{\eps_1}(\wt{f}_{\sing}^{(n)}):=
			\int_{\rho\ge \eps_1}(|\wt{f}_{\sing}^{(n)}|^2+|\cD_{\T_n,\alpha^{(n)}}\wt{f}_{\sing}^{(n)}|^2)\rho\d\rho\d s\d\theta\to_{n\to\infty} 0,
	\end{equation}
	and then
	\begin{equation}\label{eq:q_0_to_0}
		\lim_{n \to \infty}\norm{\wt{f}_0^{(n)}}_{L^2(\T_n \times \R^2)^2} = 0.
	\end{equation}
	
	 These results imply 
	$\psi_{\sing}^{(n)}(f)\to 0$ in $L^2_{\loc}(\S^3\setminus\gamma_k)^2$
	and $\psi_0^{(n)}(f)\to 0$ in $L^2(\S^3)^2$.

	Using the explicit form of $\wt{f}_{\sing}^{(n)}$ (\eqref{eq:explicit_sing})
	and \eqref{eq:bound_graph_norm_sing}, we get:
	\begin{align*}
	m_{\eps_1}(\wt{f}_{\sing}^{(n)})&=
		\sum_{j\in(\T_n)^*}|\lambda_j^{(n)}|^2\int_{\rho\ge \eps_1}(K_{\alpha^{(n)}}^2+K_{1-\alpha^{(n)}})(\rho\langle j\rangle)^2\rho\d\rho,\\
							&\le 2\sum_{j\in(\T_n)^*}\frac{|\lambda_j^{(n)}|^2}{1+j^2}\int_{\rho\ge \eps_1}K_{1}(\rho)^2\rho\d\rho,\\
							&\le C(1-\alpha^{(n)})\int_{\rho\ge \eps_1}K_{1}(\rho)^2\rho\d\rho\underset{n\to+\infty}{\to}0.
	\end{align*}

	We turn now to $\wt{f}_0^{(n)}$. By renormalizing the parameter $s$ with the transformation
	\[
		g(s,z)\in L^2(\T_n\times\R^2)^2\mapsto g_{\mathrm{sc}}(s',z)
		:=g\Big(\frac{\ell_k^{(n)}}{\ell_k}s',z\Big)\in L^2(\T_{\ell_k}\times\R^2)^2,
	\]
	we can work in the fixed Hilbert space $L^2(\T\times\R^2)^2$. We now consider the renormalized $\wt{f}_0^{(n)}$:
	\[
		\wt{g}_0^{(n)}(s',z)
		:=f_0^{(n)}\Big(\frac{\ell_k^{(n)}}{\ell_k}s',z\Big)\in \dom(\cD_{\T_{\ell_k,\alpha_k^{(n)}}}^{(\min)}).
	\]
	By \eqref{eq:bound_graph_norm_f}, $(\norm{\wt{f}_0^{(n)}}_{\T_n,\alpha_k^{(n)}})_n$ is bounded.
	The two facts~\eqref{eq:psi_n_vanish} and \eqref{eq:collapse_sing} imply that $\wt{g}_0^{(n)}\to 0$ in 
	$L^2_{\loc}(\T_{\ell_k}\times(\R^2\setminus\{0\}))^2$. The equality~\eqref{eq:dirac_dirichlet} 
	implies that the sequence $(\wt{g}_0^{(n)})_n$ is $H^1(\{\theta\neq 0\})^2$-bounded, 
	and thus converges up to extraction in $L^2_{\loc}(\T_{\ell_k}\times\R^2)^2$. 
	As $\wt{f}_{\sing}^{(n)}$ concentrates on $\T_n\times\{0\}$ and the sum of
	$\wt{f}_{\sing}^{(n)}$ and $\wt{f}_{0}^{(n)}$ has compact support, we obtain $\wt{g}_0^{(n)}\to 0$ in $L^2$, and
	\[
		\norm{\wt{f}_0^{(n)}}_{L^2(\T_n\times\R^2)^2}^2
		=\frac{\ell_k}{\ell_k^{(n)}}\norm{\wt{g}_0^{(n)}}_{L^2(\T_{\ell_k}\times\R^2)^2}^2\to 0.
	\]

	\smallskip
	\noindent\textbf{The term~\eqref{eq:Q_rest_1}:}
	Using \eqref{eq:explicit_sing} and \eqref{eq:collapse_sing}, one can show by direct computation that
	$$
		\lim_{n \to \infty}\norm{\rho_n (\partial_{s_n} f_{\sing}^{(n)})_{|_{\theta\neq 0}}}_{L^2(\T_n \times \R^2)^2}= 0,
		\quad \lim_{n \to \infty}\norm{\rho_n \partial_{\theta_n} f_{\sing}^{(n)}\,_{|_{\theta\neq 0}}}_{L^2(\T_n \times \R^2)^2}= 0.
	$$

	This shows that the norm of $(\cE_1^{(n)} - i\sigma_3\frac{\tau_n}{h_n}\partial_{\theta_n})f_{\sing}^{(n)}\,_{|_{\theta\neq 0}}$ tends to 
	zero as $n \to \infty$. Similarly, by direct computation thanks to \eqref{eq:collapse_sing} we have:
	\begin{equation}\label{eq:collapsa_rho_sing}
	  \norm{\rho f_{\sing}^{(n)}}_{L^2(\T_n\times\R^2)^2}+\norm{\rho \cD_{\T_n,\alpha_k^{(n)}}f_{\sing}^{(n)}}_{L^2(\T_n\times\R^2)^2}
	  \to 0.
	\end{equation}

	\smallskip
	\noindent\textbf{The term~\eqref{eq:Q_rest_2}:}
	The convergence \eqref{eq:q_0_to_0} together with the uniform boundedness of 
	$\cE_0^{(n)}$ in $n$ shows that $(\cE_0^{(n)}+\tfrac{c_k^{(n)}}{h_n}\sigma_3)f_0^{(n)}$ 
	converges to zero in norm.
	
	We now show that  $\cE_1^{(n)}f_0^{(n)}$ converges to zero. Below $\bG^{(n)}$ denotes the vector field
	defined in \eqref{def:bG} associated to the knot $\gamma_k^{(n)}$. 
	Using \eqref{def:bG} and \eqref{eq:rel_phase}, for $\bp\in B_{2\delta}[\gamma_k^{(n)}]\cap\{\theta_n\neq0\}$:  
	$(\tfrac{1}{h_n}-1)\partial_{s_n}-\tfrac{\tau_n}{h_n}\partial_{\theta_n}$ corresponds to $(1-h_n)\bT^{(n)}-\tau_n\sin(\rho_n)\bG^{(n)}$ 
	and the following equality holds:
	\begin{align}\label{eq:E1_comp_1}
		&-i\sigma_3\Big[(h_n^{-1}-1)\partial_{s_n} - \tfrac{\tau_n}{h_n}\partial_{\theta_n}\nn\\
		&\quad\quad\quad\quad\quad\quad+\big((1-h_n)M_{\xi^{(n)}}(\bT^{(n)})-\tau_n\sin\rho_n M_{\xi^{(n)}}(\bG^{(n)})\big)\Big]f_0^{(n)}\big|_{\theta\neq 0}\nn\\
		&\,=\!\begin{pmatrix}\cip{\xi_{+}^{(n)}}{-i\bsigma(\bT^{(n),\flat})[(1-h_n)\nabla_{\bT^{(n)}} 
		- \tau_n\sin\rho_n\nabla_{\bG^{(n)}}](e^{i\zeta_k^{(n)}}\psi_0^{(n)}(f))\big|_{\Omega_{\uS}}}\\
		\cip{\xi_{-}^{(n)}}{-i\bsigma(\bT^{(n),\flat})
		[(1-h_n)\nabla_{\bT^{(n)}} - \tau_n\sin\rho_n\nabla_{\bG^{(n)}}](e^{i\zeta_k^{(n)}}\psi_0^{(n)}(f))\big|_{\Omega_{\uS}}}
		\end{pmatrix}\!.
	\end{align}
	Similarly $\tfrac{\rho_n-\sin(\rho_n)}{\rho_n\sin(\rho_n)}\partial_{\theta_n}$ corresponds to $\tfrac{\rho_n-\sin(\rho_n)}{\rho_n}\bG^{(n)}$ and we have:
	\begin{multline}\label{eq:E1_comp_2}
		-i(-\sin\theta_n \sigma_1 + \cos\theta_n \sigma_2)\Big[\frac{\rho_n-\sin\rho_n}{\rho_n\sin\rho_n}\partial_{\theta_n}
		+\tfrac{\rho_n-\sin\rho_n}{\rho_n}M_{\xi^{(n)}}(\bG^{(n)})\Big]f_0^{(n)}\big|_{\theta\neq 0}\\
		=\begin{pmatrix}\cip{\xi_+^{(n)}}
		  {-i\bsigma(\bG^{(n),\flat})\tfrac{\rho_n-\sin\rho_n}{\rho_n}\nabla_{\bG^{(n)}}e^{i\zeta_k^{(n)}}\psi_0^{(n)}(f)\big|_{\Omega_{\uS}}}\\
		\cip{\xi_{-}^{(n)}}
		  {-i\bsigma(\bG^{(n),\flat})\tfrac{\rho_n-\sin\rho_n}{\rho_n}\nabla_{\bG^{(n)}}e^{i\zeta_k^{(n)}}\psi_0^{(n)}(f)\big|_{\Omega_{\uS}}}.
		\end{pmatrix}(\bp).
	\end{multline}
	As $\norm{\psi_0^{(n)}(f)}_{L^2}\to 0$, the zero-order terms with the $M_{\xi^{(n)}}$'s all converge to $0$ in the limit.
	Thus, it suffices to prove that the following three terms converge to $0$:
	\begin{align*}
		&\norm{(1-h_n)\nabla_{\bT^{(n)}} \psi_0^{(n)}(f)\big|_{\Omega_{\uS}}}_{L^2(\S^3)^2},
		\quad \norm{\tau_n \sin\rho_n\nabla_{\bG^{(n)}} \psi_0^{(n)}(f)\big|_{\Omega_{\uS}}}_{L^2(\S^3)^2},\\
		&\norm{\tfrac{\rho_n - \sin\rho_n}{\sin\rho_n}\nabla_{\bG^{(n)}} \psi_0^{(n)}(f)\big|_{\Omega_{\uS}}}_{L^2(\S^3)^2}.
	\end{align*}
	We only treat the first one, the calculation for the latter two is identical. As $(1-h_n) = \mathcal{O}(\rho_n)$, it suffices to estimate
	\begin{align*}
		\norm{\rho_n (\nabla_{\bT^{(n)}} \psi_0^{(n)}(f))\big|_{\Omega_{\uS}}}_{L^2(\S^3)^2} 
		&\leq \norm{\nabla_{\bT^{(n)}}(\rho_n\psi_0^{(n)}(f))\big|_{\Omega_{\uS}}}_{L^2(\S^3)^2}\\
		&\leq \norm{\cD_{\bA_k^{(n)}}^{(\min)}(\rho_n\psi_0^{(n)}(f))}_{L^2(\S^3)^2},
	\end{align*}
	where in the last step we used the Lichnerowicz formula that holds for the minimal Dirac operator, 
	see \cite[Section~3.2.1]{dirac_s3_paper1}. Recall $\bA_k^{(n)}=2\pi\alpha_k^{(n)}[S_k^{(n)}]$, 
	\cite{dirac_s3_paper1}*{Lemma 12} gives: if $\psi\in\dom(\cD_{\bA_k^{(n)}}^{(\max)})$ with
	$\supp\psi\subset B_{2\delta}[\gamma_k]$, then we have $\rho\psi\in \dom(\cD_{\bA_k^{(n)}}^{(\min)})$. Next,
	\begin{multline*}
		\norm{\cD_{\bA_k^{(n)}}(\rho_n\psi_0^{(n)}(f))}_{L^2(\S^3)^2}
		\leq \norm{\bsigma(\d\rho_n)\psi_0^{(n)}(f)}_{L^2(\S^3)^2} + \norm{\rho_n \cD_{\bA_k^{(n)}}\psi_0^{(n)}(f)}\\
		\leq \norm{\psi_0^{(n)}(f)}_{L^2(\S^3)^2} + \norm{\rho_n\cD_{\bA_k^{(n)}}(\psi^{(n)}(f))}_{L^2(\S^3)^2}
		+ \norm{\rho_n\cD_{\bA_k^{(n)}}\psi_{\sing}^{(n)}(f)}_{L^2(\S^3)^2}.
	\end{multline*}
	The first term tends to zero by \eqref{eq:q_0_to_0}. Recall the action of $\cD_{\bA_k^{(n)}}$ 
	in coordinates \eqref{eq:Q_rest_1}-\eqref{eq:Q_rest_3}.
	As in the part on \eqref{eq:Q_rest_1}, we get by direct computation that the last term converges to $0$ in $L^2$.
	We claim that the second term also vanishes. Indeed, by \eqref{eq:Dpsi_n_vanish},\eqref{def:w-l_approx_seq},
	we know that $\psi^{(n)},\cD_{\bA}\psi^{(n)}$ concentrates on $\gamma_k$. So by
	\eqref{eq:gauge_E_k}, $\cD_{\bA_k^{(n)}}(\psi^{(n)}(f))$ also concentrates on $\gamma_k$, hence 
	the claim follows.

	\smallskip
	\noindent\textbf{The term~\eqref{eq:Q_rest_3}:}
	It is through this expression that we will be able to deduce the sought-after condition.
	To ease the notational burden we will from now
	on drop all indices on the coordinates and simply write $(s,\rho,\theta)$.
	The condition $\int_{\S^3}|(\cD_{\bA^{(n)}}-\lambda)\psi_n|^2 \to 0$ implies that 
	$(\cQ^{(n)}-\lambda)f^{(n)}$ tends to $0$ in norm and both terms \eqref{eq:Q_rest_1} 
	\& \eqref{eq:Q_rest_2} tend to $0$ in norm. We obtain that the term
	\begin{equation}\label{eq:prep_reduc}
		\nt f_0^{(n)} + \big(\nt-\lambda + \cE_0^{(n)}
		+ \alpha_k^{(n)}\tfrac{\tau_n}{h_n}-\tfrac{c_k^{(n)}}{h_n}\big)f_{\sing}^{(n)}
	\end{equation}
	tends to zero in norm once we observe that: 1. $\wt{f}_0^{(n)}\to 0$ in norm, 
	2. the spinor $\sigma_3f_{\sing}^{(n)}$ equals $-f_{\sing}^{(n)}$ and 3. 
	$i\sigma_3 \tfrac{\tau_n}{h_n}\partial_{\theta} f_{\sing}^{(n)}\big|_{\theta\neq 0}
		= \alpha^{(n)}\tfrac{\tau_n}{h_n}f_{\sing}^{(n)}.$

	The connection form $\cE_0^{(n)}$ is a smooth matrix depending on the Seifert frame
	of $\gamma_k^{(n)}$ relatively to $S^{(n)}$. So we have:
	\[
	|\cE_0^{(n)}(s,\rho,\theta)-\cE_0^{(n)}(s,0,0)|\le \text{Cst}(k,n)\min(\rho,2\delta),
	\]
	where $\cE_0^{(n)}(s,0,0)$ denotes the value of $\cE_0^{(n)}$ at $\gamma_k^{(n)}(s)$.
	By geometric convergence of $\uS^{(n)}$ to $\uS$, we can assume that
	the constant $\text{Cst}(k,n)$ is uniform in $k$ and $n$. As we know that
	the $L^2$-norm of $\rho f_{\sing}^{(n)}$ tends to $0$, we can replace $\cE_0^{(n)}(s,\rho,\theta)$
	by $\cE_0^{(n)}(s)$. As the spinor $\wt{f}_{\sing}^{(n)}$ has no upper component, using \eqref{eq:con_form} 
	and \eqref{eq:def_xi}-\eqref{eq:rel_phase}, we can write:
	\begin{multline*}
	\cE_0^{(n)}(s,0,0)f_{\sing}^{(n)}(s,\rho,\theta)=\\i\big(\cip{\xi^{(n)}_-}{\nabla_{\bT^{(n)}}\xi_-^{(n)}}(s)
			-\cip{\xi^{(n)}_+}{\nabla_{\bT^{(n)}}\xi_-^{(n)}}(s)\sigma_1 \big)f_{\sing}^{(n)}(s,\rho,\theta),
	\end{multline*}
	Indeed, by the definition of the extension of the Seifert frame on $B_\eps[\gamma_k^{(n)}]$, 
	we have \emph{on the curve $\gamma_k^{(n)}$}:
	\[
	\nabla_{\bS^{(n)}}\bS^{(n)}=\nabla_{\bS^{(n)}}\bN^{(n)}=\nabla_{\bN^{(n)}}\bS^{(n)}=\nabla_{\bN^{(n)}}\bN^{(n)}=0.
	\]
	
	Similarly, as $|h_n^{-1}-1|\le \rho\times\mathrm{Cst}$, we can replace $h_n^{-1}$ by $1$ in \eqref{eq:prep_reduc}. 
	Thus, dropping all terms whose $L^2$-norm vanishes reduces \eqref{eq:prep_reduc} to
	\begin{equation}\label{eq:residuelle}
		\lVert\nt f_0^{(n)}
		+\big(-c_k^{(n)}+\cE_0^{(n)}(s) +\alpha_k^{(n)}\tau_n(s)-\lambda\big)
		f_{\sing}^{(n)}\rVert_{L^2}\to 0.
	\end{equation}
	
	We recall that $f^{(n)}=\chi(\tfrac{\rho}{2\delta})\wt{f}^{(n)}$.
	As $\norm{f_0^{(n)}}_{L^2}\to 0$ and $\wt{f}_{\sing}^{(n)},\cD_{\T_n,\alpha^{(n)}}\wt{f}_{\sing}^{(n)}$ 
	concentrate on $\T_n\times\{0\}$, then $\cD_{\T_n,\alpha^{(n)}}f_0^{(n)} $ concentrates
	on the same set, which implies that we can replace $f_0^{(n)}$ and $f_{\sing}^{(n)}$ by
	$\wt{f}_0^{(n)}$ and $\wt{f}_{\sing}^{(n)}$ in \eqref{eq:residuelle}.

	We are now left with the following problem: how can the term $\cD_{\T_n,\alpha^{(n)}}\wt{f}_0^{(n)}$ coming from an element $\wt{f}_0^{(n)}$ 
	\emph{in the minimal domain} approximate the expression~\eqref{eq:residuelle} (the 2nd line only)?
	
	We need to go into details. We denote by $f_{+}$ (resp. $f_{-}$) the spin up (resp. down) of a spinor $f$.
	For simplicity, we introduce the complex variable $z:=\rho e^{i\theta}$ in $\T_n\times\R^2$. 
	We rewrite \eqref{eq:residuelle} in components, recalling \eqref{eq:action_D_T} we have
	\begin{equation}\label{eq:residuelle_decouplee}
		\left\{
			\begin{array}{l}
				\Sup{\cD_{\T_n,\alpha^{(n)}}\wt{f}_0^{(n)}}
				+(F_{+}^{(n)}(s)-2i\partial_z)\Sdo{\wt{f}_{\sing}^{(n)}}\big|_{\theta\neq 0},\\
				\Sdo{\cD_{\T_n,\alpha^{(n)}}\wt{f}_0^{(n)}} +(F_{-}^{(n)}(s)
					+i\partial_s )\Sdo{\wt{f}_{\sing}^{(n)}}\big|_{\theta\neq 0}.
			\end{array}
		\right.
	\end{equation}
	where the $F_{\pm}^{(n)}(s)$ are defined as follows:
	\begin{equation*}
		\left\{
			\begin{array}{rcl}
				F_+^{(n)}(s)&:=&-i\cip{\xi_+^{(n)}}{\nabla_{\bT^{(n)}}\xi_-^{(n)}}(\gamma_k^{(n)}(s)),\\
				F_-^{(n)}(s)&:=&i\cip{\xi_-^{(n)}}{\nabla_{\bT^{(n)}}\xi_-^{(n)}}(\gamma_k^{(n)}(s))-c_k^{(n)}
				+\alpha_k^{(n)}\tau_k^{(n)}(s) -\lambda.
			\end{array}
		\right.
	\end{equation*}
	
	Our aim is to prove
	\begin{equation}\label{eq:cond_sur_s}
	\lim_{n\to+\infty}\norm{\big(F_{-}^{(n)}(s)+i\partial_s \big)\Sdo{\wt{f}_{\sing}^{(n)}}\big|_{\theta\neq 0}}=0,
	\end{equation}
	which we then show implies (S3).
	\subparagraph{\textit{Proof of \eqref{eq:cond_sur_s}}}
	We show that the inner product of $\Sdo{ \cD_{\T_n,\alpha^{(n)}}\wt{f}_0^{(n)}}$ with $\big(F_{-}^{(n)}\Sdo{\wt{f}_{\sing}^{(n)}}$
	resp. $\partial_s \Sdo{\wt{f}_{\sing}^{(n)}}\big|_{\theta\neq 0}$ tends to $0$: it implies \eqref{eq:cond_sur_s}. 
	For the first inner product we have:	
		\begin{equation}\label{eq:D_t_f_0_inner_F_-}
			\cip{\cD_{\T_n,\alpha^{(n)}}\wt{f}_0^{(n)}}{F_{-}^{(n)}\wt{f}_{\sing}^{(n)}}_{L^2}\!=\!
				-\cip{\wt{f}_0^{(n)}}{i(F_{-}^{(n)})'\wt{f}_{\sing}^{(n)}}_{L^2}
				+ \cip{\wt{f}_0^{(n)}}{F_{-}^{(n)}\cD_{\T_n,\alpha^{(n)}}\wt{f}_{\sing}^{(n)}}_{L^2}.
		\end{equation}
	As we know that $\norm{\wt{f}_0^{(n)}}_{L^2}\to 0$ and 
	$\norm{F_-^{(n)}}_{C^1(\T_n)}+\norm{\wt{f}_{\sing}^{(n)}}_{\bT_n}\le \mathrm{Cst}$,
	the above inner product converges to $0$.
	Now, by Cauchy-Schwarz inequality we have
	$$
		\lim_{n \to \infty}\big|\cip{\wt{f}_0^{(n)}}{\wt{f}_{\sing}^{(n)}}_{L^2}\big|=0.
	$$
	As $\wt{f}_0^{(n)}$ and $\wt{f}_{\sing}^{(n)}$ are $\norm{\,\cdot\,}_{\T_n}$-orthogonal, 
	the limit is the same for the scalar product 
	$$
		\cip{\cD_{\T_n,\alpha_k^{(n)}} \wt{f}_0^{(n)}}{\cD_{\T_n,\alpha_k^{(n)}} \wt{f}_{\sing}^{(n)}}_{L^2}.
	$$
	A direct calculation shows that
	\begin{equation}\label{eq:d_z_f_sing}
		\norm{2\partial_z\Sdo{\wt{f}_{\sing}^{(n)}}\big|_{\theta\neq 0}}_{L^2(\T_n\times\R^2)}^2
		=\left(\int_0^{\infty}K_{1-\alpha_{k}^{(n)}}(\rho)^2\rho\,\d\rho\right)\sum_{j\in \T_n^{*}}|\lambda_j^{(n)}|^2
		\underset{n\to \infty}{\longrightarrow}0,
	\end{equation}
	which implies
	\begin{equation}\label{eq:D_t_f_0_-avec_d_s_f_sing}
		\lim_{n \to \infty}
		\cip{\Sdo{ \cD_{\T_n,\alpha^{(n)}}\wt{f}_0^{(n)}}}
		  {i\partial_s \Sdo{\wt{f}_{\sing}^{(n)}}\big|_{\theta\neq 0}}_{L^2(\T_n \times\R^2)}
		=0.
	\end{equation}
	Hence \eqref{eq:cond_sur_s} holds. At last, we show the following.
	
	\subparagraph{\textit{\eqref{eq:cond_sur_s} implies (S3)}}
	
	We see $L^2(\T_n \times \R^2)^2$ as the tensor product 
	$L^2(\T_n)\otimes L^2(\R^2)^2$. The spectrum of the operator
	$i\partial_s +F_-^{(n)}(s)$, seen as an operator on $L^2(\T_n)$
	(or on $L^2(\T_n \times \R^2)^2$),
	is given by the \emph{discrete} set
	$$
		\frac{1}{\ell_k^{(n)}} \int_0^{\ell_k^{(n)}}F_-^{(n)}+\frac{2\pi}{\ell_k^{(n)}}\Z.
	$$
	From $\lim_{n\to\infty}\norm{\big(F_{-}^{(n)}(s)
		+i\partial_s \big)\Sdo{\wt{f}_{\sing}^{(n)}}}_{L^2}
		=0$ we know that
	$$
		\lim_{n \to \infty}\dist\left(0,\tfrac{1}{\ell_k^{(n)}} \int_0^{\ell}F_-^{(n)}+\tfrac{2\pi}{\ell_k^{(n)}}\Z\right)=0,
	$$
	or equivalently,
	$$
		\lim_{n \to \infty}\dist\big(\lambda,\spec\big(i\partial_s +i\cip{\xi_-^{(n)}}{\nabla_{\bT^{(n)}} \xi_-^{(n)}}-c_k^{(n)}
		+\alpha_k^{(n)}\tau_k^{(n)}\big)\big)=0.
	$$
	This condition implies that $\lambda$ is in the spectrum of the limit operator 
	$$
		i\partial_s +i\cip{\xi_-}{\nabla_{\bT} \xi_-}-c_k+\tau_{S_k},
	$$
	acting on the limit space $L^2(\T\times\R^2)^2$ with domain $H^1(\T)\otimes L^2(\R^2)^2$, 
	or on $L^2(\T)$ with domain $H^1(\T)$.
	Reintroducing the phase-jumps through the functions $E_k^{(n)}$, the condition is equivalent to 
	$$
		\lambda\in \spec\big(-D_{k,\bA} +i\cip{\xi_-}{\nabla_{\bT} \xi_-}+\tau_{S_k}\big)
	$$ 
	on $L^2(\T)$, that is 
	$\lambda\in \spec\big(\cT_{k,\bA} +\tau_{S_k}\big)$.

	\medskip
	\noindent\textbf{(S3) implies (S2):}
	We pick $\alpha_k^{(n)}\to 1^-$ and keep the Seifert surfaces and the other fluxes fixed. 
	In particular, we have
	$$
		\lim_{n \to \infty}\dist(\lambda,\spec(\cT_{k,\bA} +\alpha_k^{(n)}\tau_{S_k})) = 0.
	$$
	Let $(\wt{e}^{(n)}(s))_n$ be the normalized eigenfunction of $\cT_{k,\bA} +\alpha_k^{(n)}\tau_{S_k}$
	with eigenvalues approximating $\lambda$. Writing $z_n:=\rho_ne^{i\theta_n}$, the Ansatz
	$$
		\wt{e}^{(n)}(s_n)\chi\big(\frac{\rho_n}{2\delta}\big)\sqrt{\frac{1-\alpha_k^{(n)}}{2\pi\ell_k^{(n)}}}
		  \Big(1-\frac{i}{2}F_+^{(n)}(s_n)z_n\Big)\overline{z}_n^{-\alpha_k^{(n)}}\xi_-^{(n)}
	$$
	defines (up to normalization) the desired concentrating sequence of (S2): the function 
	$F_+^{(n)}(s_n)=-i\cip{\xi_+^{(n)}}{\nabla_{\bT^{(n)}}\xi_-^{(n)}}(\gamma_k^{(n)}(s_n))$
	is $C^1$-bounded by geometrical convergence of $\uS^{(n)}$. 
\qed

\subsubsection{Proof of Theorem~\ref{thm:bump_cont_bdry_B}}
	Recall that $\cD_{\wt{\bA}}$ and the induced operators $\cT_{\wt{\gamma}_k,\wt{\bA}}+\tau_{\wt{S}_{k}}$ 
	(with $k\in R$) have discrete spectrum.
	So we can find a common $2\eta>0$ that isolates $\lambda$ from the rest of the spectrum
	for \emph{all of them}: that is we can find $\eta$ with
	\begin{equation}\label{eq:split_spec}
	[\lambda-2\eta,\lambda+2\eta]\cap\spec\big(\wt{O}\big)\subset\{\lambda\},
	\end{equation}
	where $\wt{O}$ is the operator $\cD_{\wt{\bA}}$ or one of the $\cT_{\wt{\gamma}_k,\wt{\bA}}+\tau_{\wt{S}_{k}},k\in R$.
	And we write $I:=[\lambda-\eta,\lambda+\eta]$.
	
	We split the small ball $B_\eps[\wt{\uS}]\times B_\eps[\wt{\ua}]$ (with $\eps>0$ to be chosen),
	into the $2^{|R|}$ parts $C_\eps(\wt{\ua},R'),\,R'\subset R$.
	Up to taking $\eps>0$ small enough, we first assume that for all 
	$(\uS,\ua)\in B_\eps[\wt{\uS}]\times B_\eps[\wt{\ua}]$, $\bA:=\sum_k 2\pi\alpha_k[S_k]$, we have
	\begin{equation}\label{eq:split_spec2}
	[\lambda-2\eta,\lambda+2\eta]\cap\spec\big(O\big)\subset[\lambda-\eta,\lambda+\eta],
	\end{equation}
	where $O$ denotes either $\cD_{\bA}$ or $\cT_{k,\bA}+\tau_{S_k}$ for $k\in R$ such that $\alpha_k=0$.
	Such an assumption is possible thanks to Theorem~\ref{thm:bump_cont_bdry_A} and Proposition~\ref{prop:calcul_spectre}.
	
	We now study the continuity in the region $B_\eps[\wt{\uS}]\times C_\eps(\wt{\ua},R')$, 
	on which the dimension $d(\uS,\ua,\lambda)$ of $V$ (in Theorem~\ref{thm:bump_cont_bdry_B}(1)) is constant. 
	By strong resolvent continuity, the defect of bump-continuity is only due to the vanishing of eigenfunctions.
	We need the following auxiliary result, which is proved at the end.
		
	\begin{lemma}\label{lem:aux_result}
		Let $(\uS^{(n)},\ua^{(n)})_{n\ge 1}$ be a sequence of $B_\eps[\wt{\uS}]\times C_\eps(\wt{\ua},R')$ 
		converging to $(\uS^{(\infty)},\ua^{(\infty)})$, with 
		$\ua^{(\infty)}\in\Gamma(R_0)\cap \rT_b(K)$ and $R_0\subset R$.
		Let $d(R',R,\lambda)$ be the dimension $d(\uS,\ua,\lambda)$ on
		$B_\eps[\wt{\uS}]\times C_\eps(\wt{\ua},R')$. There exists a subspace $V_n$ 
		of eigenfunctions of $\cD_{\bA^{(n)}}$ with dimension $d(R',R,\lambda)-d(\uS^{(\infty)},\ua^{(\infty)},\lambda)$, 
		such that
		\[
		\mathds{1}_I(\cD_{\bA^{(n)}}) -P_{V_n},\ \cD_{\bA^{(n)}}\big(\mathds{1}_I(\cD_{\bA^{(n)}}) -P_{V_n}\big)
		\]
		converge in norm to $\mathds{1}_{[\lambda-\eta,\lambda+\eta]}(\cD_{\bA^{(\infty)}})$ and 
		$\cD_{\bA^{(\infty)}}\mathds{1}_{[\lambda-\eta,\lambda+\eta]}(\cD_{\bA^{(\infty)}})$ respectively.
		Furthermore the eigenvalues of the eigenfunctions of $V_n$ converge to 
		the $\lambda_k^{(\infty)}$ of Theorem~\ref{thm:bump_cont_bdry_B}(4), and $P_{V_n}$
		converges to $0$ in the strong operator topology.
	\end{lemma}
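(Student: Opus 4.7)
My plan is to construct $V_n$ explicitly as the span of eigenfunctions of $\cD_{\bA^{(n)}}$ whose eigenvalues approach the discrete spectral points associated with the indices $k\in R'\cap R_0$, and then to upgrade the strong resolvent convergence on the complement to norm convergence via a dimension-matching compactness argument. First I will reconcile the dimension count: for $(\uS^{(n)},\ua^{(n)})\in C_\eps(\wt{\ua},R')$, the condition $1/2<\alpha_k^{(n)}<1$ for $k\in R$ selects exactly $R'$, whereas at the limit the same condition fails precisely for $k\in R_0$ (where $\alpha_k^{(\infty)}=0$). A direct computation from the formula defining $d$ gives
\[
d(R',R,\lambda)-d(\uS^{(\infty)},\ua^{(\infty)},\lambda)=\sum_{k\in R'\cap R_0}\dim\ker(\cT_{k,\wt{\bA}}+\tau_{\wt{S}_k}-\lambda),
\]
and by Proposition~\ref{prop:calcul_spectre} each summand is $0$ or $1$, so the difference is the number of $k\in R'\cap R_0$ for which $\lambda\in\spec(\cT_{k,\wt{\bA}}+\tau_{\wt{S}_k})$.

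For each such $k$ I will invoke the explicit Ansatz from the (S3)$\Rightarrow$(S2) construction at the end of the proof of Theorem~\ref{thm:bump_cont_bdry_A} to produce a vanishing approximate eigenfunction $\psi_k^{(n)}$ concentrating on $\gamma_k^{(n)}$, with approximate eigenvalue $\lambda_k^{(n)}$ whose shift from $\lambda$ is governed by \eqref{eq:prec_eig}. Since distinct $k$'s concentrate on disjoint knots the $\psi_k^{(n)}$ are asymptotically orthogonal, and by \eqref{eq:split_spec2} projecting each onto $\ran\mathds{1}_I(\cD_{\bA^{(n)}})$ costs $o(1)$ in norm. The resulting span $V_n$ is then a subspace of eigenfunctions of $\cD_{\bA^{(n)}}$ of the required dimension; its eigenvalues converge to the $\lambda_k^{(\infty)}$ claimed, and $P_{V_n}\to 0$ in the strong operator topology because each $\psi_k^{(n)}\rightharpoonup 0$ by construction.

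The main obstacle is promoting strong to norm convergence of $W_n:=\mathds{1}_I(\cD_{\bA^{(n)}})-P_{V_n}$. Strong convergence $W_n\to \mathds{1}_I(\cD_{\bA^{(\infty)}})$ follows from Theorem~\ref{thm:strg_res_cont} (the endpoints $\lambda\pm\eta$ lie outside $\spec(\cD_{\bA^{(\infty)}})$ by \eqref{eq:split_spec}) together with $P_{V_n}\to 0$ strongly. For norm convergence I will argue by contradiction: if $\varphi^{(n)}\in\ran W_n$ with $\norm{\varphi^{(n)}}_{L^2}=1$ fails to converge strongly, Theorem~\ref{thm:compactness} forces it to concentrate on a knot $\gamma_k$ with $\alpha_k^{(n)}\to 1^-$, hence $k\in R'\cap R_0$, at spectral level $\lambda$; reproducing the singular-part analysis of Steps 1--4 of the proof of Theorem~\ref{thm:bump_cont_bdry_A} then shows that $\varphi^{(n)}$ is, up to $o(1)$, a linear combination of the very Ansatz vectors generating $V_n$, contradicting $\varphi^{(n)}\perp V_n$. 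Together with the dimension identity $\dim\ran W_n=\dim\ran\mathds{1}_I(\cD_{\bA^{(\infty)}})$, this upgrades strong convergence of $W_n$ to norm convergence; the companion statement for $\cD_{\bA^{(n)}}W_n$ then follows since the relevant spectrum is confined to the bounded interval $I$.
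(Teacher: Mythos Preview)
Your proposal is correct and follows essentially the same route as the paper: construct the Ansatz vanishing approximate eigenfunctions for each $k\in R'\cap R_0$ with $\lambda\in\spec(\cT_{k,\wt{\bA}}+\tau_{\wt{S}_k})$, project onto eigenspaces to obtain $V_n$, and then argue maximality via the singular-part (colinearity) analysis to upgrade strong convergence of $W_n$ to norm convergence. One small imprecision: in your contradiction step you write ``at spectral level $\lambda$'', but a general $\varphi^{(n)}\in\ran W_n$ has spectral support only in $I$, not at $\lambda$; the analysis must be run at the level $\lambda_k^{(\infty)}\in\spec(\cT_{k,\bA^{(\infty)}}+\tau_{S_k^{(\infty)}})\cap I$ (which is a single point by \eqref{eq:split_spec2} and \eqref{eq:prec_eig}), exactly as the paper does in its ``Extension to other points'' paragraph and in the re-derivation of \eqref{eq:similarly_cond_s} using only graph-norm boundedness rather than $\norm{(\cD_{\bA^{(n)}}-\lambda)\psi^{(n)}}\to 0$.
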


	\paragraph{\textit{Definition of $P_W(\bA,\lambda)$}}
	
	Let $R_0\subset R$, we will call $B_\eps[\wt{\uS}]\times(B_\eps[\wt{\ua}]\cap\Gamma(R_0))$ 
	the $\Gamma(R_0)$-boundary (in $T_b(K)\cap\Gamma(R)$). Pick $(\uS,\ua)\in B_\eps[\wt{\uS}]\times C_\eps(\wt{\ua},R')$.
	According to Lemma~\ref{lem:aux_result}, we try to define the projection $P_{R_0}(\bA,\lambda)$ which corresponds 
	to the part of $\mathds{1}_I(\cD_{\bA})$ which converges in norm as $(\uS,\ua)$ tends to the $\Gamma(R_0)$-boundary.
	
	We use the projection onto a closed convex set. For $(\uS,\ua)$ in a small ball $B_\eps[(\wt{\uS},\wt{\ua})]$
	we define the ``trace on the $\Gamma(R_0)$-boundary''
	\begin{equation}\label{eq:def_trace}
		P_{R_0}(\bA,\lambda):=\mathds{1}_{I}\big[\cD_{\bA_0}\big],
		\quad \bA_0:=\sum_{k\notin R_0}2\pi\alpha_k [S_k].
	\end{equation}
	This operator is norm-continuous for $\eps$ small enough: it is ensured by the strong 
	resolvent-continuity of $\cD_{\bA_0}$ and the fact that the rank of 
	$P_{R_0}(\bA,\lambda)$ is constant as $\eta>0$ 
	satisfies \eqref{eq:split_spec}-\eqref{eq:split_spec2}.
	
	Let $\sC_{\bA}(R_0)$ be the \emph{compact convex set} (in the Hilbert-Schmidt norm):
	\[
	 \sC_{\bA}(R_0):=\Big\{\omega,\ 0\le \omega\le \mathds{1}_{I}(\cD_{\bA}), [\cD_{\bA},\omega]=0\ \&\ \tr(\omega)=\tr(P_{R_0}(\bA,\lambda))\Big\}.
	\]
	We project $P_{R_0}(\bA,\lambda)$ onto $\sC_{\bA}(R_0)$ w.r.t. the Hilbert-Schmidt norm $\norm{\,\cdot\,}_{\mathfrak{S}_2}$:
	\begin{equation}\label{eq:def_proj}
	      \omega_{R_0}(\bA,\lambda):=\mathrm{Proj}_{\sC_{\bA}(R_0)}(P_{R_0}(\bA,\lambda))
	      	=\underset{\omega\in \sC_{\bA}(R_0)}{\mathrm{argmin}}\,\tr\,(\omega-P_{R_0}(\bA,\lambda))^2.
	\end{equation}
	
	Let us now freeze $\uS$ and take the limit $\ua\to \ua'\in\Gamma(R_0)$: the point $(\uS,\ua')$ defines the potential $\bA'$
	and we write $P':=\mathds{1}_I(\cD_{\bA'})$. By Theorem~\ref{thm:strg_res_cont} and Lemma~\ref{lem:char_sres_conv},
	$\omega_{R_0}(\bA,\lambda)$ converges to $P'$: for any convergent sequence $\ua^{(n)}\to \ua'$, 
	we can construct a \emph{projector} $P^{(n)}\in \sC_{\bA^{(n)}}(R_0)$ converging to $P'$. 
	We can assume $[P^{(n)},\cD_{\bA^{(n)}}]=0$
	as the Dirac operators have discrete spectrum. In particular, we have: $\lim_{\ua\to \ua'}\tr\{\omega_{R_0}(\bA,\lambda)(1-\omega_{R_0}(\bA,\lambda))\}\to 0$.
	
	Thus there exists $0<\delta_1<2^{-1}$ such that for $\dist(\ua,\ua_0)<\delta_1$, the projector 
	$\mathds{1}_{[3/4,1]}(\omega_{R_0}(\bA,\lambda))=:P_{W_{R_0}}(\bA,\lambda)$ has rank $\tr(P')$ 
	and converges to $P'$ as $\ua\to \ua'$. As $\ran\,\omega_{R_0}(\bA,\lambda)$ is spanned by eigenfunctions of $\cD_{\bA}$, 
	by Theorem~\ref{thm:strg_res_cont} and Lemma~\ref{lem:char_sres_conv}, the range $W_{R_0}(\bA,\lambda)$ of $P_{W_{R_0}}(\bA,\lambda)$
	is also spanned by eigenfunctions. For $R=R_0$, this defines the subspace $W(\bA,\lambda)$. 
	The subspace $V(\bA,\lambda)$ is defined through its projection by:
	\[
	 P_{V(\bA,\lambda)}:=\mathds{1}_I(\cD_{\bA})-P_{W(\bA,\lambda)}.
	\]
	We now study the continuity of these operators.
	
	\smallskip
	
	\paragraph{\textit{Continuity of $\omega_{R_0}$ and $P_{W}$}}
	\subparagraph{\textit{Continuity of $\omega_{R_0}$ in the bulk}}
	We check sequential continuity. Theorem~\ref{thm:strg_res_cont} and Lemma~\ref{lem:char_sres_conv}
	ensure that for any sequence $(\uS^{(n)},\ua^{(n)})\to (\uS',\ua')$ where $(\uS',\ua')$ is in the bulk $B_\eps[\wt{\uS}]\times(B_\eps[\wt{\ua}]\setminus T_b(K))$,
	we can find a sequence $\omega^{(n)}\in\sC_{\bA^{(n)}}(R_0)$ with $\tr(\omega_{R_0}(\bA',\lambda)-\omega^{(n)})^2\to 0$.
	By norm-continuity of $P_{R_0}(\cdot,\lambda)$ in a neighborhood of $(\uS',\ua')$, 
	we have continuity of $\omega_{R_0}(\cdot,\lambda)$ at $(\uS',\ua')$.

	\subparagraph{\textit{Continuity of $\omega_{R_0}$ on the $\Gamma(R_0)$-boundary}}
	We recall that for $(\uS',\ua')$ on the $\Gamma(R_0)$-boundary: as $(\uS,\ua)\to(\uS',\ua')$,
	the operator $\omega_{R_0}(\bA,\lambda)$ converges to $\mathds{1}_I(\cD_{\bA'})$. 
	Now, let $Q_{R_0}$ be a compact set in $B_\eps[\wt{\uS}]\times (B_\eps[\wt{\ua}]\cap\Gamma(R_0))$.
	With the help of Lemma~\ref{lem:aux_result} and standard technique we easily get the following uniform convergence result.
	For any $a>0$ there exists $0<r<\eps$ such that:
	for any $(\uS,\ua)$ in the neighborhood $(B_{\eps}[\wt{\uS}]\times B_{\eps}[\wt{\ua}])\cap B_{r}[Q_{R_0}]$ of $Q_{R_0}$, we have:
	\begin{equation}\label{eq:unif_cont}
	 \norm{\omega_{R_0}(\bA,\lambda)-\mathds{1}_{I}(\cD_{\bA_0})}_{\mathfrak{S}_2}<a,\ \bA_0:=\sum_{k\notin R_0}2\pi\alpha_k[S_k].
	\end{equation}

	\subparagraph{\textit{Continuity of $P_{W_{R_0}}$ near $Q_{R_0}$}}
	Let $Q_{R_0}\subset B_\eps[\wt{\uS}]\times (B_\eps[\wt{\ua}]\cap\Gamma(R_0))$ be a \emph{compact} set.
	By the uniform convergence of $\omega_{R_0}(\bA,\lambda)$ on some $B_{r}[Q_{R_0}]$, 
	the positive number $0<\delta_1<2^{-1}$ used to define $P_{W_{R_0}}$ can be chosen uniformly on 
	$B_{r}[Q_{R_0}]$, and the $\norm{\,\cdot\,}_{\mathfrak{S}_2}$-continuity of $P_{W_{R_0}}$ on this set follows from that of
	$\omega_{R_0}(\bA,\lambda)$. For $R_0=R$ and $Q_{R_0}=\{(\wt{S},\wt{\ua})\}$, we obtain continuity of 
	$P_{W(\bA,\lambda)}$ in some ball $B_{\eps}[(\wt{S},\wt{\ua})]$ up to taking $\eps>0$ small enough.
	
	\subparagraph{\textit{Continuity of the other operators}}
	
	Theorem~\ref{thm:strg_res_cont} and $\mathfrak{S}_2$-continuity of $P_{W}$ ensure the
	norm-continuity of $\cD_{A} P_W$ (the rank of $P_W$ is bounded). 
	By Theorem~\ref{thm:strg_res_cont}, Lemma~\ref{lem:char_sres_conv} and our choice of $\eta>0$ (\eqref{eq:split_spec}-\eqref{eq:split_spec2}), 
	we have strong continuity of $\mathds{1}_{I}(\cD_{\bA})$ in a small ball $B_\eps[(\wt{\uS},\wt{\ua})]$,
	hence we have strong continuity of $P_{V(\bA,\lambda)}=\mathds{1}_{I}(\cD_{\bA})-P_W$. 
	As $V(\bA,\lambda)$ is spanned by eigenfunctions of $\cD_{\bA}$ with eigenvalues in $I=\lambda+[-\eta,\eta]$,
	Theorem~\ref{thm:strg_res_cont} gives strong continuity of $\cD_{\bA}P_{V(\bA,\lambda)}$ (using
	the discreteness of the spectrum): it suffices to check sequential continuity.

	By Theorem~\ref{thm:bump_cont_bulk}, norm-continuity of $P_{W(\bA,\lambda)}$ and \eqref{eq:split_spec}-\eqref{eq:split_spec2},
	the dimension of $V(\bA,\lambda)$ is constant on each connected subset $B_\eps[\wt{\uS}]\times C_\eps(R')$: 
	by Lemma~\ref{lem:aux_result}, it is equal to $d(R',R,\lambda)$. 
	For $R_0\subset R$, the continuity of the corresponding eigenvalues 
	on the $\Gamma(R_0)$-boundary is given by the following argument.
	First, when $(\uS,\ua)$ converges to a point on the $\Gamma(R_0)$-boundary, 
	the part $P_{W_{R_0}}(\bA,\lambda)$ converges in norm.
	The remaining part converges weakly to $0$ by Lemma~\ref{lem:aux_result}.
	Using sequential continuity together with the spectral study of Theorem~\ref{thm:bump_cont_bdry_A} and Lemma~\ref{lem:aux_result}, 
	we easily obtain the convergence of the eigenvalues of the vanishing part $\cD_{\bA}(P_{V(\bA,\lambda)}-P_{W_{R_0}}(\bA,\lambda))$.
	Indeed, these eigenvalues cannot escape $I=[\lambda-\eta,\lambda+\eta]$ by assumption, and in the limit, each one must be
	an eigenvalue of the induced operator of one of the knots whose flux has converged to $1^{-}$.
	The computation of the spectrum of the induced operators in Proposition~\ref{prop:calcul_spectre} ensures the continuity
	of their limits on the $\Gamma(R_0)$-boundary.	
	The continuity of the family 
	\[
		G(\bA,\lambda):=\cD_{\bA}(1-P_{V(\bA,\lambda)})+ \mu P_{V(\bA,\lambda)}-\lambda
	\]
	in the neighborhood of $(\wt{\uS},\wt{\ua})$ follows easily. The strong continuity of $G(\bA,\lambda)$
	follows from that of $\cD_{\bA},P_V$ and $\cD_{\bA}P_V$. 
	As we have shifted the eigenvalues carried by the elements in $V$, this shows that 
	$\mathds{1}_{[-\eta,\eta]}\big(G(\bA,\lambda)\big)$ is norm-continuous in a neighborhood of $(\wt{\uS},\wt{\ua})$. 
	Up to picking a smaller $\eps$, this proves the bump-continuity of $G(\bA,\lambda)$. 
	There only remains to prove Lemma~\ref{lem:aux_result}
	to end the proof of Theorem~\ref{thm:bump_cont_bdry_B}.\hfill\qed

	\medskip

	\paragraph{\textit{Proof of Lemma~\ref{lem:aux_result}}}\ 
	
	We first assume $(\uS^{(\infty)},\ua^{(\infty)})=(\wt{\uS},\wt{\ua})$, 
	the general case is similar.

	\subparagraph{\textit{Ansatz for vanishing approximate eigenfunctions}}
	Let $(\psi^{(n)})_n$ be a vanishing sequence along $(\uS^{(n)},\ua^{(n)})_n$ (see \eqref{def:w-l_approx_seq}).
	It exists as long as $d(R',R,\lambda)>0$ thanks to the spectral study of Theorem~\ref{thm:bump_cont_bdry_A},
	and the equivalence $(S1)\iff (S2)$ in its proof see \eqref{def:w-l_approx_seq}. 
	We localize around a knot $\gamma_k^{(n)}$ for which the flux
	$2\pi\alpha_k^{(n)}$ tends to $2\pi$. We use the same notations as the proof of Theorem~\ref{thm:bump_cont_bdry_A}.
	We introduce the same Ansatz as in the part (S3) implies (S2).
	
	The function $E_k^{(n)}$ is the phase jump function for $\gamma_k^{(n)}$ (Proposition~\ref{prop:sa_link}).
	The gauge $e^{i\zeta_k^{(n)}}$ locally around $\gamma_k^{(n)}$ shifts the phase jump across $S_k^{(n)}$
	from this surface to $\{\theta_n=0\}\cap B_{\eps}[\gamma_k^{(n)}]$ (see Lemma~\ref{lem:control_Ngr}).
	We consider $e_0^{(n)}(s)$ the normalized approximate zero mode corresponding to the operator 
	$i\partial_s +F_{-}^{(n)}(s)$ on $L^2(\T_n)$. The function $E_k^{(n)}(\gamma_k^{(n)}(s))e_0^{(n)}(s)$ 
	is the corresponding $\lambda$-approximate eigenfunction corresponding to the following operator defined on $\dom(D_{k,\bA^{(n)}})$: 
	$
		i\partial_s +i\cip{\xi_-^{(n)}}{\nabla_{\bT^{(n)}} \xi_-^{(n)}}(\gamma_k^{(n)}(s))+\alpha_k^{(n)}\tau_k^{(n)}(s)
	$
	(see Section~\ref{sec:effective_operator} for the definition of $D_{k,\bA^{(n)}}$). We define the Ansatz
	\begin{equation}\label{eq:ansatz_eig}
		\wt{\psi}_{k,\textrm{ans}}^{(n)}(\bp):=E_k^{(n)}e^{i\zeta_k^{(n)}}f_{-,\textrm{ans}}^{(n)}\big((s_n,\rho_n,\theta_n)(\bp)\big)\xi_-^{(n)}(\bp),	
	\end{equation}
	where, writing $z_n:=\rho_n e^{i\theta_n}$, 
	$f_{-,\textrm{ans}}^{(n)}$ is defined as follows:
	\begin{multline*}
		f_{-,\textrm{ans}}^{(n)}(s_n,\rho_n,\theta_n)
		:=e_0^{(n)}(s_n)\chi\big(\frac{\rho_n}{2\delta}\big)
		\times\sqrt{\frac{1-\alpha_k^{(n)}}{2\pi\ell_k^{(n)}}}\Big(1-\frac{i}{2}F_+^{(n)}(s_n)z_n\Big)\overline{z}_n^{-\alpha_k^{(n)}}.
	\end{multline*}
	By construction we have $\psi_{k,\textrm{ans}}^{(n)}\in\dom(\cD_{\bA^{(n)}})$, and we can check using \eqref{eq:rappel} that it 
	concentrates on $\gamma_k$ as $n$ tends to $0$, and that $\norm{(\cD_{\bA^{(n)}}-\lambda)\psi_{k,\textrm{ans}}^{(n)}}_{L^2}$ tends to $0$.
	In other words, it satisfies \eqref{def:w-l_approx_seq} up to normalization.
	
	\smallskip
	
	\subparagraph{\textit{Maximality of the dimension of $V_n$}}
	Let us prove: if $\psi^{(n)}$ concentrates around $\gamma_k^{(n)}$ 
	with $\norm{(\cD_{\bA^{(n)}}-\lambda)\psi^{(n)}}_{L^2}\to 0$, 
	then the vanishing part of its localization around this knot 
	is essentially colinear to the above Ansatz.
	
	We consider $(\chi_{\delta,\gamma_k^{(n)}}\psi^{(n)})_n$ and define  the 
	functions $\wt{f}_0^{(n)}$ and $\wt{f}_{\sing}^{(n)}(\ul^{(n)})$ in $\dom(\cD_{\T_n,\alpha^{(n)}})$,
	as in (S2) implies (S3) in the proof of Theorem~\ref{thm:bump_cont_bdry_A}. 
	We also call $f_0^{(n)}$ and $f_{\sing}^{(n)}$ their multiplication by $\chi(\tfrac{\rho_k^{(n)}}{2\delta})$
	(recall also \eqref{eq:rappel} and \eqref{eq:Q_rest_1}-\eqref{eq:Q_rest_3}).
	We claim that \eqref{eq:cond_sur_s} also holds here:
	\begin{equation}\label{eq:similarly_cond_s}
		\lim_{n\to+\infty}\norm{(i\partial_s +F_{-}^{(n)}(s))\Sdo{f_{\sing}^{(n)}}}_{L^2}=0.
	\end{equation}
	
	This claim is proved at the end. We first show the colinearity result.
	
	Let $\cD_{0,\alpha^{(n)}}$ be the 2D-Dirac operator with Dirac-point magnetic field $2\pi\alpha^{(n)}\delta_0$ 
	(see \cite{Persson06} for instance). It is the self-adjoint operator on $L^2(\R^2)^2$ acting on spinors with the phase jump
	$e^{-2i\pi\alpha^{(n)}}$ across $\{\theta=0\}$, acting like $-i\sigma_1\partial_{u_1}-i\sigma_2\partial_{u_2}$
	on $\{\theta\neq 0\}$, and whose domain contains $w_{\sing}^{(n)}:=(0, K_{\alpha^{(n)}}(\rho)e^{i\alpha^{(n)} \theta})^\rT$.
	This last element spans the singular domain of $\cD_{0,\alpha^{(n)}}$. On its graph-norm complement, the minimal domain,
	the Lichnerowicz formula is satisfied: for $w\in\dom(\cD_{0,\alpha^{(n)}}^{(\min)})$, we have:
	\[
	\int|\cD_{0,\alpha^{(n)}}w|^2=\int|(\nabla w)_{|_{\theta\neq 0}}|^2.
	\]
	Using \eqref{eq:dpm_decouple}, it is easy to see that 
	$H^1(\T_n)\otimes\dom(\cD_{0,\alpha^{(n)}})$ is dense in the domain $\dom(\cD_{\T_n,\alpha^{(n)}})$.
	On this latter domain, the graph norm $\norm{\,\cdot\,}_{0,\alpha^{(n)}}$ of $\cD_{0,\alpha^{(n)}}$
	defines an intermediate norm between $\norm{\,\cdot\,}_{L^2}$ and  $\norm{\,\cdot\,}_{\T_n}$:
	\[
	\norm{f}_{0,\alpha^{(n)}}^2:=\int|f|^2+\int|(\wt{\sigma}\cdot\wt{\nabla}_u f)_{|_{\theta\neq 0}}|^2
		=\norm{f}_{\T_n}^2-\int |(\partial_s f)_{|_{\theta\neq 0}}|^2.
	\]
	We introduce the $\norm{\,\cdot\,}_{0,\alpha^{(n)}}$-Hilbert basis
	$$
		(e_{j_1}^{(n)}(s)w_{j_2}^{(n)}(\rho,\theta))_{j_1,j_2},
	$$ 
	where $(e_{j_1}^{(n)})_{j_1\in \T_n^*}$ is an eigenbasis for $i\partial_s +F_-^{(n)}(s)$ 
	on $L^2(\T_n)$, $(w_{j_2})_{j_2\ge 0}$ is a Hilbert basis for $\norm{\,\cdot\,}_{0,\alpha^{(n)}}$ on 
	$L^2(\R^2)^2$, where $w_0$ is co-linear to $w_{\sing}^{(n)}$.
	
	We decompose $f_{\sing}^{(n)}$ with respect to this Hilbert basis. 
	The operator $i\partial_s +F_-^{(n)}(s)$ has discrete spectrum which is uniformly spaced in $n$,
	and $(w_{j_2})_{j_2\ge 1}$ is a Hilbert basis for the minimal domain of $\cD_{0,\alpha^{(n)}}$.
	We thus obtain the existence of a sequence $(\mu_n)_n$ such that
	$$
		\lim_{n\to\infty}\norm{f_{\sing}^{(n)}-\mu_nf_{-,\textrm{ans}}^{(n)}\begin{pmatrix}0&  1\end{pmatrix}^{\rT}}_{\T_n}=0
	$$
	with 
	$$
		0<\liminf_{n\to\infty}|\mu_n|\le \limsup_{n\to+\infty}|\mu_n|<+\infty.
	$$
	We have used the fact that $H^1(\T_n)\otimes \dom(\cD_{0,\alpha_k^{(n)}})$ is $\norm{\,\cdot\,}_{\T_n}$-dense.
	Thus the rest $\chi(\delta^{-1}\rho_k^{(n)})\psi^{(n)}-\mu_nE_k^{(n)}\psi_{k,\sing}^{(n)}(f_{\textrm{ans}}^{(n)})$ does not collapse
	onto the knot $\gamma_k^{(\infty)}$, or equivalently, Theorem~\ref{thm:compactness} applies to it and along any converging
	subsequence, there is no loss of mass.
	
	Thus, for each $\gamma_k^{(n)}$ with $\alpha_k^{(n)}\to 1^-$, we can remove the collapsing part, and the rest 
	converges (up to a subsequence) without loss of mass.
	
	\smallskip
	
	\noindent\underline{\textit{Proof of \eqref{eq:similarly_cond_s}}}
	By Lemma~\ref{lem:control_Ngr}, $\wt{f}_{0}^{(n)},\wt{f}_{\sing}^{(n)}$ are $\norm{\,\cdot\,}_{\T_n}$-bounded. So, first by 
	\eqref{eq:dirac_dirichlet}, $\wt{f}_0^{(n)}(\ell_k^{(n)}(\ell_k^{(\infty)})^{-1}s,\rho,\theta)$ converges in $L^2(\T_{\infty}\times\R^2)^2$
	up to extraction of a subsequence. Then we have: $\sum_{j\in\T_n^*}|\lambda_j^{(n)}|^2=\mathcal{O}(1-\alpha_k^{(n)})$.
	As in \eqref{eq:collapse_sing}, by direct computation we obtain that both $\wt{f}_{\sing}^{(n)}$ and 
	$\cD_{\T_n,\alpha_k^{(n)}}\wt{f}_{\sing}^{(n)}$ collapse onto $\T_n\times\{0\}$.
	Consider now the splitting \eqref{eq:Q_rest_1}-\eqref{eq:Q_rest_3} relative to $\cD_{\bA^{(n)}}\psi^{(n)}$.
	The term \eqref{eq:Q_rest_1} tends to $0$ by direct computation.
	We now use the collapse of the singular part: it implies that of $\cD_{\T_n,\alpha_k^{(n)}}f_{\sing}^{(n)}$.
	Furthermore the inner product of \eqref{eq:Q_rest_2} with \eqref{eq:Q_rest_3} minus 
	$\cD_{\T_n,\alpha_k^{(n)}}\wt{f}_0^{(n)}$ tends to $0$: $(\nabla f_0^{(n)})_{|_{\theta\neq 0}}$ is $L^2$-bounded and
	by direct computation: $\norm{\rho \big( |f_{\sing}^{(n)}|+|\cD_{\T_n,\alpha_k^{(n)}}f_{\sing}^{(n)}|\big)}_{L^2}\to 0.$
	
	Using that \eqref{eq:D_t_f_0_inner_F_-} converges to $0$, \eqref{eq:d_z_f_sing} and \eqref{eq:D_t_f_0_-avec_d_s_f_sing}
	in \eqref{eq:Q_rest_3} we get \eqref{eq:similarly_cond_s}.
	
	\smallskip

	\subparagraph{\textit{Conclusion}}
	From the Ansatz, we get that there is at least a $d(R',R,\lambda)$-dimensional plane
	$V_n'$ of $\lambda$-approximate eigenfunctions vanishing in the limit $n\to+\infty$ 
	(one dimension for each of the knots $\gamma_k$, $k\in R'$ with $\lambda\in\spec(\cT_{k,\wt{\bA}}+\tau_{\wt{S}_k})$). 
	As the operators $\cD_{\bA^{(n)}}$ have discrete spectrum,
	up to projecting onto the eigenspaces, we thus obtain a $d(R',R,\lambda)$-dimensional subspace $V_n$
	of eigenfunctions, with eigenvalues converging to $\lambda$ in the limit.
	The fact that this dimension is maximal follows from the results on the approximate collinearity of the vanishing
	sequence to the Ansatz. This maximality implies the norm-convergence of $(\mathds{1}_I(\cD_{\bA^{(n)}})-P_{V_n})$
	and that of $\big(\cD_{\bA^{(n)}}(\mathds{1}_I(\cD_{\bA^{(n)}})-P_{V_n})\big)$.
	The fact that $P_{V_n}$ converges to $0$ is due to the fact that the functions of its range
	collapse onto the knots, weakly converging to $0$ in $L^2$ (and that $\dim V_n=d(R',R,\lambda)$ is finite).
	
	\subparagraph{\textit{Extension to other points $(\uS^{(\infty)},\ua^{(\infty)})$}}
	We now assume that the sequence converges to another point $(\uS^{(\infty)},\ua^{(\infty)})$.
	We can adapt the previous parts. For all knots $\gamma_{k}^{(\infty)}$ such that $\alpha_{k}^{(\infty)}=0$ and
	\[
	[\lambda-\eta,\lambda+\eta]\cap\spec\big(\cT_{k,\bA^{(\infty)}}+\tau_{S_{k}^{(\infty)}}\big)=\{\lambda_k^{(\infty)}\}\neq \emptyset,
	\]
	we pick a similar Ansatz as in \eqref{eq:ansatz_eig}, but where
	the function $e_0^{(n)}(s_n)$ is defined from the eigenfunction of $\cT_{k,\bA^{(\infty)}}+\tau_{S_{k}^{(\infty)}}$ associated to
	$\lambda_k^{(\infty)}$. There are $d(R',R,\lambda)-d(\uS^{(\infty)},\ua^{(\infty)},\lambda)$ of such knots, which gives a vanishing subspace $V^{(n)}$
	of the same dimension, spanned by approximate eigenfunctions (associated to the $\lambda_k^{(\infty)}$). Once again, we can assume
	that they are eigenfunctions of $\cD_{\bA^{(n)}}$ by discreteness of the spectrum. The maximality 
	(in the spectral region $[\lambda-\eta,\lambda+\eta]$) is proved in a similar manner.
\qed

\appendix
\section{}

\subsection{Integrated relative torsion, normal holonomy \& writhe of a knot}\label{sec:techn}
We define and study in \ref{sec:integrated_relative_torsion} the integrated relative torsion of a knot $\gamma$,
and in \ref{sec:link_writhe} we show that it is equal to $-2\pi$ times its writhe. In \ref{sec:varying_I_tau}
we show how we can vary these numbers by adding twists to $\gamma$.
\subsubsection{Trivial relative degree of Seifert frames}
In this part we consider a knot $\gamma:(\R/\ell\Z)\to \S^3$ with Seifert surface $S$ and Seifert frame $(\bT,\bS,\bN)$, framing $\gamma$.
Note that for $0<\eps\ll 1$, the parallel curve $\mathrm{exp}_{\gamma(s)}^{S}(\eps\bS(s))$ to the knot $\gamma$ has trivial homology in $H^1(\S^3\setminus\gamma)$
since it lies in the Seifert surface $S$ ($\mathrm{exp}^{S}$ denotes the exponential map with respect to the induced metric on $S$). 
This holds for all Seifert framing of $\gamma$. This implies that the two Seifert frames $(\bT,\bS,\bN)$ and $(\bT,\bS',\bN')$ of $\gamma$
have trivial relative degree. That is, writing
\[
	g_3(\bS,\bS')+ig_3(\bS,\bN')=e^{i\vartheta},
\]
$\vartheta$ has degree zero as a function from $\R/\ell\Z\simeq \S^1$ to $\R/(2\pi\Z)\simeq \S^1$ (using for instance the fact that $\link(\cdot,\gamma)$
defines an isomorphism of $H^1(\S^3\setminus\gamma)$ onto $\Z$).
Another proof can be easily derived from \cite{MR916076}.

\subsubsection{Integrated relative torsion of a knot}\label{sec:integrated_relative_torsion}
In this section we recall the definition of the normal holonomy of a knot $\gamma\subset \S^3$ and show 
that the integrated (or total) relative torsion associated to a Seifert frame
$$
	\int_{\gamma}\tau\,\bT^{\flat}
	=I_\tau(\gamma) \in \R
$$ 
does not depend on the choice of the Seifert surface. We recall that given a Seifert surface $S$ for $\gamma$,
the relative torsion is $\tau=\cip{\nabla_{\bT}\bS}{\bN}$, where $(\bT,\bS,\bN)$ is the Darboux frame of $\gamma$ 
on $S$ (see Section~\ref{sec:seif_fram_loc_coord}). As above, we consider
two Seifert surfaces $S,S'$ for $\gamma$ (with basepoint $\bp_0=\gamma(0)$). 
The Levi-Civita connection induces a canonical 
connection
$\nabla^{\cN}$ on the normal bundle $\cN \gamma$, defined by
$$
	\nabla^{\cN}_X Y:= (1-P_{\bT}) \nabla_X Y,
$$
where $P_{\bT}$ is the (pointwise) projection onto $\R\,\bT$. 
We now pick any unit vector $X(0)$ in $\bT(\bp_0)^{\perp}\subset \rT_{\bp_0} \S^3$ 
(say $\bS(0)$) and write $X(s)$ for the parallel transport of $X(0)$ 
along $\gamma$ in $\cN \gamma$ with respect to $\nabla^{\cN}$. 
After one full turn, the angle $\theta_{\hol}:=\measuredangle (X(0),X(\ell))$ is the normal holonomy 
$\mathrm{Hol}(\gamma)$ \cite{HebdaTsau}. As we deal with the parallel transport, 
it does not depend on the choice of $X(0)$. 
If we choose $X(0)=\bS(0)$, we obtain
$$
	X(\ell)=\cos(\theta_{\hol})\bS(0)+\sin(\theta_{\hol})\bN(0).
$$
More precisely, the normal holonomy is the isometry (on the fiber of $\cN_{\gamma}$ at point $\bp_0$) 
obtained after the parallel transport along the loop $\gamma$. 
Since this fiber is a real plane, we can see the holonomy as the angle of the rotation in $\SO(2)$.

We can now interpret the torsion geometrically; it is simply the derivative of the angle 
$\phi_{S}(s):=\measuredangle (X(s),\bS(s))\in \R/(2\pi\Z)$ (in the \emph{oriented} fiber):
$$
	\d \phi_{S}=\tau(s)\,\d s.
$$
The angle $\phi_{S}$ depends on $X$ but its derivative does not. If we consider another 
Seifert surface $S'$ we have
\[
	\phi_{S'}=\phi_S+\vartheta_{S,S'},
\]
and thus, writing $\tau_S$ and $\tau_{S'}$ the relative torsions with respect to $S$ and $S'$
\[
	\int_{\gamma}\tau_{S'}\,\bT^{\flat}
	=\int_{\gamma}\tau_S\,\bT^{\flat}+\int_{\gamma}\d\vartheta_{S,S'}
	=\int_{\gamma}\tau_S\,\bT^{\flat}.
\]
In particular we have:
\begin{equation}
	\mathrm{Hol}(\gamma)=-I_\tau(\gamma)\mod 2\pi.
\end{equation}
Observe the following:
\begin{enumerate}
	\item Assume that the Frenet frame $(\bT,\bN_{\gamma},\bB_{\gamma})$ of $\gamma$ is 
	defined everywhere on $\gamma$. We recall that the Frenet frame is defined at $p\in\gamma$
	if $\big(\nabla_{\bT}\bT\big)(p)\neq 0$. At this point the frame is defined by
	$$
		\nabla_{\bT}\begin{pmatrix} \bT\\ \bN_{\gamma}\\\bB_{\gamma}\end{pmatrix}=
		\begin{pmatrix}0& \kappa_\gamma& 0 \\ -\kappa_\gamma &0 & \tau_\gamma \\ 0 & -\tau_\gamma & 0 \end{pmatrix}  
		\begin{pmatrix} \bT\\ \bN_{\gamma}\\\bB_{\gamma}\end{pmatrix},
	$$
	where $\kappa_\gamma\ge 0$ and $\tau_\gamma$ are called respectively the curvature and the torsion of $\gamma$.
	The torsion $\tau_{\gamma}$ corresponds to the derivative of the angle $\measuredangle (X(s),\bN_\gamma(s))$. 
	We thus recover the fact that the total torsion (\emph{i.e.} the integral of $\tau_{\gamma}$ along the curve) 
	is equal to the integrated relative torsion modulo $2\pi$:
	\[
		\int_0^{\ell}\tau_\gamma(s)\d s
		=\int_{\gamma}\tau\,\bT^{\flat}\!\!\mod 2\pi.
	\]

	\item The integrated relative torsion is conformally invariant in the following sense. 
	Say we conformally change the metric in a tubular neighborhood $B_\eps[\gamma]$, 
	the new metric being $g_\Omega=\Omega^2g_3$ with conformal factor $\Omega>0$.
	By \cite[section 4]{MR1860416}, the new Levi-Civita connection $\nabla^{(\Omega)}$ on $\rT B_\eps[\gamma]$ is:
	\[
		\nabla^{(\Omega)}_XY=\nabla_X Y+\frac{1}{\Omega}
		\Big(X[\Omega]Y+Y[\Omega]X-g_3(X,Y)(\d \Omega)^{\sharp} \Big).
	\]
	In this new metric we easily get that the relative torsion $\tau^{(\Omega)}$ at point $\gamma(s)$ is:
	\[
		\tau^{(\Omega)}(\gamma(s))=\frac{\tau_S(\gamma(s))}{\Omega(\gamma(s))},
	\]
	thus the integrated relative torsion for the two metric coincides. We emphasize that the relative 
	torsion is then defined for the renormalized Seifert frame $(\Omega^{-1}\bT,\Omega^{-1}\bS,\Omega^{-1}\bN)$.
	In particular through a stereographic projection, 
	we can study the integrated relative torsion in $\R^3$ with the flat metric $g_{\R^3}$. 

	Let us say then that the $g_{\R^3}$-Frenet frame is defined in $\R^3$. By \cite{total_torsion}, we know that 
	\begin{itemize}
		\item[-] the $g_{\R^3}$-total torsion is $0\!\!\mod \pi$ if and only if the curve is a line of curvature of a surface,
		\item[-] the $g_{\R^3}$-total torsion is $0\!\!\mod 2\pi$ if and only if the curve is a line of curvature 
			of a close oriented surface of genus $1$.
	\end{itemize}
\end{enumerate}

\subsubsection{Link with the writhe of a curve}\label{sec:link_writhe}
We have seen above that the integrated relative torsion of a knot was a conformal invariant.
So for a knot $\gamma\subset\S^3$, we consider its image $\gamma_0$ through a stereographic
projection, and identify this curve with its arclength parametrization in $\R^3$. 
Consider now a Seifert surface for $\gamma_0$. This induced a copy of $\gamma_0$ 
through $\wt{\gamma}_0(s):=\gamma_0(s)+\varepsilon \bn(s)\in\R^3$ where $(\bt,\bs,\bn)$ is the Seifert
frame and $\varepsilon>0$ a small number. By C\u{a}lug\u{a}reanu-White-Fuller Theorem (see \cite{Calu61}*{p. 613} and \cite{White69}) 
we have:
\[
\link(\wt{\gamma}_0,\gamma_0)=\mathrm{Tw}(\bn)+\Wr(\gamma_0),
\]
where the twist $\mathrm{Tw}(\bn)$ of $\bn$ along $\gamma_0$ is:
\[
\mathrm{Tw}(\bn):=\frac{1}{2\pi}\int_{0}^{\ell(\gamma_0)}\cip{\tfrac{\d\bn}{\d s}(s)}{\bt(s)\times\bn(s)}_{\R^3}\d s,
\]
and the writhe of $\gamma_0$ is:
\[
\Wr(\gamma_0):=\frac{1}{4\pi}\iint_{[0,\ell(\gamma_0)]^2} 
	\cip{\bt(s_1)\times\bt(s_2)}{\frac{\gamma_0(s_1)-\gamma_0(s_2)}{|\gamma_0(s_1)-\gamma_0(s_2)|^3}}_{\R^3}\d s_1\d s_2.
\]
In the special case of $\bn$, we have $\mathrm{Tw}(\bn)=\frac{1}{2\pi}I_\tau(\gamma)$ and $\link(\wt{\gamma}_0,\gamma_0)=0$, hence
the integrated relative torsion is $-2\pi$ times the writhe. As this result holds for any stereographic projection, we define the writhe
of $\gamma$ on $\S^3$ as the writhe of any of its stereographic projection on $\R^3$.

\subsubsection{Varying the integrated relative torsion}\label{sec:varying_I_tau}
Let $\gamma$ be a knot. In this section we give an explicit procedure to deform
a knot which changes $I_\tau(\gamma)$ while staying in the same isotopy class.

In the previous part we have shown that $I_\tau(\gamma)$ is a conformal invariant, so it suffices
to prove the result in $\R^3$, and we consider an arc length parametrization $\gamma:\T_\ell\to \R^3$,
where $\ell$ is the $\R^3$-length of $\gamma$. Let us consider a Seifert surface $S$ with Seifert frame
$(\bt,\bs,\bn)$, with geodesic and normal curvatures $\kappa_g,\kappa_n$ and relative torsion $\tau$
defined in the metric of $\R^3$.

We pick $r_0>0$ small enough\footnote{smaller than the distance of $\gamma$ from its cut locus, 
that is, the set of points having at least two geodesics minimizing the distance from $\gamma$.} 
and $n \in \N$. We introduce $j:=\frac{2\pi}{\ell}n$ and $a:=r j$.
We define two families $(\gamma_{r,n,\mp})_{0\le r\le r_0}$ of deformations of $\gamma$ by the formulas
\begin{equation}\label{def_deform}
	\begin{array}{rcl}
	\gamma_{r,n,\mp}(s)&:=&\exp_{\gamma(s)}\big\{r(\cos(js)\bs(s)\mp \sin(js)\bn(s))\big\}\\
					&=&\gamma(s)+r(\cos(js)\bs(s)\mp \sin(js)\bn(s)).
	\end{array}
\end{equation}

\begin{proposition}\label{prop:varying_I_tau}
	Let $a_0>0$ and let $\gamma_{r,n,\mp}$ be defined as in \eqref{def_deform}. 
	Then as $n\to +\infty$, we have in the regime $0< a\le a_0$
	\[
		I_\tau(\gamma_{r,n,\mp})=\pm 2\pi n\Big(1-\frac{1}{\sqrt{1+a^2}}\Big)+\underset{n\to+\infty}{\mathcal{O}}(1).
	\]
\end{proposition}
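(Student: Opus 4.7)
The plan is to reduce to a writhe computation and apply the C\u{a}lug\u{a}reanu-White-Fuller (CWF) formula twice. By the conformal invariance of $I_\tau$ from Section~\ref{sec:integrated_relative_torsion} we may work throughout in $\R^3$ with the flat metric, and by the identity $I_\tau=-2\pi\Wr$ from Section~\ref{sec:link_writhe} it suffices to show
\[
	\Wr(\gamma_{r,n,\mp})=\mp n\Big(1-\tfrac{1}{\sqrt{1+a^2}}\Big)+\mathcal{O}(1)\quad\text{as }n\to\infty.
\]
Set $\boldsymbol{w}_{\mp}(s):=\cos(js)\bs(s)\mp\sin(js)\bn(s)$, a unit normal field along $\gamma$, so that $\gamma_{r,n,\mp}=\gamma+r\boldsymbol{w}_{\mp}$.

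First I view $\gamma_{r,n,\mp}$ as a framed push-off of $\gamma$. Using the Darboux equations $\bs'=-\kappa_g\bt+\tau\bn$ and $\bn'=-\kappa_n\bt-\tau\bs$, a direct computation yields the pointwise identity $\langle\boldsymbol{w}'_{\mp},\bt\times\boldsymbol{w}_{\mp}\rangle=\tau\mp j$, in which the $\kappa_g$ and $\kappa_n$ contributions cancel exactly. Hence $\mathrm{Tw}_\gamma(\boldsymbol{w}_{\mp})=\tfrac{1}{2\pi}I_\tau(\gamma)\mp n$, and CWF together with $\Wr(\gamma)=-\tfrac{1}{2\pi}I_\tau(\gamma)$ produces the \emph{exact} linking number $\link(\gamma,\gamma_{r,n,\mp})=\mp n$. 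A short check shows that $-\boldsymbol{w}_{\mp}$ is perpendicular to the unit tangent of $\gamma_{r,n,\mp}$, so $\gamma=\gamma_{r,n,\mp}+r(-\boldsymbol{w}_{\mp})$ realizes $\gamma$ as a framed push-off of $\gamma_{r,n,\mp}$ in turn. Applying CWF to this second framing, and reparametrizing the twist integral by $s$ (which is not arc length for $\gamma_{r,n,\mp}$, so a factor $1/V$ with $V:=|\gamma'_{r,n,\mp}|$ appears from normalizing the tangent), the same cancellation gives
\[
	\mp n=\Wr(\gamma_{r,n,\mp})+\frac{1}{2\pi}\int_0^\ell\frac{\tau(s)\mp j}{V(s)}\,ds.
\]

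The key asymptotic input is that, writing $\gamma'_{r,n,\mp}=f\bt+g\bs+h\bn$, a direct expansion of $f,g,h$ using the Darboux equations gives $V^2=f^2+g^2+h^2=1+a^2+\mathcal{O}(1/n)$ uniformly in $s$, and hence $1/V=1/\sqrt{1+a^2}+\mathcal{O}(1/n)$. Since this error is multiplied by $\tau\mp j=\mathcal{O}(n)$ but weighted by $\mathcal{O}(1/n)$, its contribution to the integral remains $\mathcal{O}(1)$, so the twist equals $\mp n/\sqrt{1+a^2}+\mathcal{O}(1)$. Solving for $\Wr(\gamma_{r,n,\mp})$ and applying $I_\tau=-2\pi\Wr$ yields the formula. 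The hard part will be verifying the two pointwise identities for the twist integrands, where the precise form of $\boldsymbol{w}_{\mp}$ must produce the exact cancellation of the curvature terms, leaving the simple expression $\tau\mp j$; the uniform asymptotic analysis of $V$ is then routine.
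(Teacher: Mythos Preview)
Your proof is correct and takes a genuinely different route from the paper. The paper constructs an explicit moving frame $\mathbf{W}(r)=(W_0,W_1,W_2)$ along $\gamma_{r}$ with $W_1,W_2$ built from $\dot{\gamma}_r\times\bs$ and a further cross product, argues by continuity in $r$ that this frame has relative degree $0$ with any Seifert frame of $\gamma_r$ (so its integrated relative torsion equals $I_\tau(\gamma_r)$ exactly), and then computes the resulting integral directly; the leading asymptotic $\frac{ja^2}{\sqrt{1+a^2}}\int_0^\ell\frac{\sin^2(js)}{1+a^2\cos^2(js)}\,ds$ is evaluated in closed form. Your argument avoids building a frame on $\gamma_{r,n,\mp}$ altogether: you apply C\u{a}lug\u{a}reanu--White--Fuller once to compute $\link(\gamma,\gamma_{r,n,\mp})=\mp n$ exactly, then a second time with the roles of the two curves reversed, exploiting the symmetry of the linking number. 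The key identity behind the second application, $\langle(-\boldsymbol{w}_{\mp})',\,\gamma'_{r,n,\mp}\times(-\boldsymbol{w}_{\mp})\rangle=\tau\mp j$, holds \emph{exactly} (the $h$-term and the $r(\kappa_g\cos js-\kappa_n\sin js)$ correction cancel completely), which you should perhaps flag more explicitly since it is a shade stronger than ``the same cancellation'' as in the first twist. What your approach buys is conceptual clarity and a much shorter computation; what the paper's frame computation buys is that it is self-contained and does not appeal back to the identity $I_\tau=-2\pi\Wr$ and CWF already established in Section~\ref{sec:link_writhe}.
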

\begin{proof}
	We only deal with the case of $\gamma_{r,n,-}$ and drop the subscripts $n,-$ for short 
	(the case $\gamma_{r,n+}$ is dealt with in a smilar way). 
	Furthermore, the dot above function denotes the derivative with respect to $s\in\T_\ell$. 
	We will proceed as follows. We introduce a smooth one-parameter family 
	$(\mathbf{W}(r))_{0\le r\le r_0}$ of direct orthonormal frames associated to $(\gamma_r)_{r\le r_0}$, 
	which converge to the Seifert frame $(\bt,\bs,\bn)$ as $r$ tends to $0$, say, in the norm
	topology of $C^1(\T_\ell;\S^2)^3$, and where 
	$$
		\mathbf{W}(r)=\big(W_0(r):=\norm{\dot{\gamma}_r}^{-1}\dot{\gamma}_r,W_1(r),W_2(r)\big).
	$$
	By continuity, this implies that for all $r$, the relative degree of $\mathbf{W}(r)$ with the Seifert frames of $\gamma_r$ is $0$.
	
	Using the same method as in the proof of the independence of $I_\tau(\gamma)$ with respect to the Seifert surface, 
	defining the relative torsion of $\mathbf{W}(r)$
	\[
		\tau_{\mathbf{W}(r)}:=\cip{\dot{W}_1(r)}{W_2(r)}_{\R^3}=-\cip{W_1(r)}{\dot{W}_2(r)}_{\R^3},
	\]
	we obtain
	\[
		\int_0^\ell \tau_{\mathbf{W}(r)}(s)\norm{\dot{\gamma}_r(s)}ds
		=\mathrm{relative\,deg}\,(\mathbf{W}(r),\mathrm{Seifert\ frame})+I_\tau(\gamma_r)=I_\tau(\gamma_r).
	\]
	
	Setting
	\[
		V_2(r,s):=\dot{\gamma}_r(s)\times \bs(s)\quad\&\quad V_1(r,s):=V_2(r,s)\times \dot{\gamma}(s),
	\]
	we define $W_1(r,s):=\norm{V_1(r,s)}^{-1}V_1(r,s)$, $W_2(r,s):=\norm{V_2(r,s)}^{-1}V_2(r,s)$.	
	A computation shows:
	\begin{multline*}
		\dot{\gamma}(s)=\big(1-\eps(\cos(js)\kappa_g(s)-\sin(js)\kappa_n(s))\big)\bt(s)\\
		-(a-\eps\tau_r(s))\big(\sin(js)\bs(s)+\cos(js)\bn(s)\big).
	\end{multline*}
	Writing 
	\[
		h(r,s):=1-r(\cos(js)\kappa_g(s)-\sin(js)\kappa_n(s)), \quad m(r,s):=a-r\tau(s),
	\] 
	we also have
	\begin{equation*}
		\left\{
		\begin{array}{rcl}
			V_2(r,s)&=&m(r,s)\cos(js)\bt(s)+h(r,s)\bn(s),\\
			V_1(r,s)&=&h(r,s)m(r,s)\sin(js)\bt(s)\\
			&&\ +\big(h(r,s)^2+\cos(js)^2m(r,s)^2\big)\bs(s)-m(r,s)^2\frac{\sin(2js)}{2}\bn(s).
		\end{array}
		\right.
	\end{equation*}
	The product of the norms of the $V$'s is
	\[
		\norm{V_1(r,s)}\norm{V_2(r,s)}=(h(r,s)^2+m(r,s)^2\cos(js)^2)\sqrt{h(r,s)^2+m(r,s)^2},
	\]
	and their scalar product computes to
	\begin{multline*}
	-\cip{V_1}{\dot{V}_2}_{\R^3}=jhm^2\sin(js)^2+hm\sin(js)\big[h\kappa_n(s)+r\dot{\tau}(s)\cos(js)\big]\\
			+(\tau(s) h-m\kappa_g(s))(h^2+m^2\cos(js)^2)+m^2\frac{\sin(2js)}{2}(\dot{h}+\kappa_n(s) m).
	\end{multline*}
	The formula follows easily; in the regime $0< a\le a_0$, we have
	\begin{align*}
		\int_0^\ell \tau_{\mathbf{W}(r)}(s)\norm{\dot{\gamma}_r(s)}ds
		&=-\int_0^\ell\frac{\cip{V_1(r,s)}{\dot{V}_2(r,s)}_{\R^3}}{\norm{V_1(r,s)}\norm{V_2(r,s)}}ds,\\
		&=\frac{ja^2}{\sqrt{1+a^2}}\int_0^\ell \frac{\sin(js)^2}{1+a^2\cos(js)^2}ds+\underset{n\to+\infty}{\mathcal{O}}(1),\\
		&=2\pi n\Big(1-\frac{1}{\sqrt{1+a^2}}\Big)+\underset{n\to+\infty}{\mathcal{O}}(1).
	\end{align*}
\end{proof}

\begin{rem}[Examples]
	We apply Proposition~\ref{prop:varying_I_tau} for the unit circle. 
	For $0<r<1$ and $n\in\mathbb{N}$, let $\gamma_{r,n,\mp}:(\R/(2\pi\Z))\to\R^3$ be the knots
	\[
		\gamma_{r,n,\mp}(s):=\begin{pmatrix}([1+r\cos(ns)]\cos(s)& [1+ r\cos(ns)]\sin(s) & \pm r\sin(ns) \end{pmatrix}^T.
	\]
	For the unit circle ($r=0$), we have $I_\tau=0$.
	We then obtain the following formula, where $a$ denotes $rn$,
	\begin{align*}
		I_\tau(\gamma_{r,n,\mp})&=\pm\int_0^{2\pi}\frac{d t}{\sqrt{a^2+(1+r\cos(t))^2}}
		\bigg[\frac{a^2n\sin(t)^2}{a^2\cos(t)^2+(1+r\cos(t))^2}+a\cos(t)\bigg],\\
		&=\pm 2\pi n\Big(1-\frac{1}{\sqrt{1+a^2}}\Big)+\mathcal{O}\big(\frac{r}{\sqrt{1+a^2}}\big).
	\end{align*}
	Here the $\mathcal{O}(r/(1+a^2)^{-1/2})$ means that the error is bounded by $\tfrac{Cr}{\sqrt{1+a^2}}$, 
	where $C$ depends neither on $r$ nor on $a$.
\end{rem}

\subsection{Computation of $\int_{\gamma}\omega_{\xi}$}\label{app:comp_omega_xi}
In Section~\ref{sec:dirac_op_def}, we have introduced (on the vector bundle $B_\eps[\gamma]\times\C^2$)
two smooth sections $\xi_{\pm}$ with unit length
of the line bundles corresponding to the two eigenspaces
$$
	\ker(\bsigma(\bT^{\flat})\mp 1).
$$
Their relative phase is fixed by the pointwise condition
\[
	\omega(\bS)+i\omega(\bN)=\cip{\xi_-}{\bsigma(\omega)\xi_+}, \quad \forall\,\omega\in\Omega^1(B_{\eps}[\gamma]).
\]
Furthermore, $\omega_{\xi}$ is the one form
\[
	\omega_{\xi}(X):=\frac{i}{2}\big(\cip{\xi_+}{\nabla_X \xi_+} +\cip{\xi_-}{\nabla_X \xi_-} \big).
\]
We will now prove that 
$$
	\alpha_\xi:=\int_\gamma \omega_\xi = \pi\!\!\mod 2\pi.
$$
First observe the following remarks:
\begin{enumerate}
	\item The class of $\alpha_\xi$ does not depend on the choice of the sections $\xi_{\pm}$. Indeed, 
	if we change the gauge
	$$
		\wt{\xi}_{\pm}:=e^{i\phi}\xi_{\pm},\quad \phi\in C^\infty(\gamma),
	$$
	we have
	$$
		\int_\gamma \omega_{\wt{\xi}}
		=-\int_{\gamma}\d\phi+\int_\gamma \omega_{\xi}
		=\alpha_\xi\!\!\mod 2\pi.
	$$
		
	\item The class of $\alpha_\xi$ does not depend on the choice of the Seifert surface; let us consider another
		Seifert surface $S'$ with Seifert frame $(\bT,\bS',\bN')$ which makes an angle $\vartheta$ with 
		$(\bT,\bS,\bN)$.
		The corresponding unital sections $\xi_+'=e^{i\phi_+}\xi_+$ and $\xi_-'=e^{i\phi_-}\xi_-$ 
		must satisfy
		$$
			\phi_--\phi_+=\vartheta\!\!\mod 2\pi.
		$$
		As we have $\int_{\gamma}\d\vartheta=0$, the following holds:
		\begin{align*}
			\int_\gamma \omega_{\xi'}
			=-\frac{1}{2}\int_\gamma \d\vartheta+\int_\gamma \d\phi_+
			+\int_\gamma \omega_\xi
			=\alpha_\xi \!\! \mod2\pi.
		\end{align*}
	
	\item Using \cite[Section 4]{MR1860416}, we see that the class of $\alpha_\xi$ is a 
		conformal invariant in the following sense. 
		We conformally change the metric in a tubular neighborhood 
		$B_\eps[\gamma]$ of the knot $\gamma$ to
		$$
			g'=\Omega^2g_3,
		$$
		where $\Omega$ is a smooth conformal factor.
		The new connection $\nabla^{(\Omega)}$ on the spin spinor bundle $\Psi$ is given by
		\[
			\nabla_X^{(\Omega)}=\nabla_X+\frac{1}{4\Omega}\big[\bsigma(X^{\flat}),\bsigma(\d\Omega)\big],
		\]
		and we write $\omega_\xi^{(\Omega)}$ the new $1$-form. 
		As
		\begin{align*}
			\big[\bsigma(\bT^{\flat}),\bsigma(\d\Omega) \big]
			&=\big[\bsigma(\bT^{\flat}),\bsigma(\bT^{\flat})\bT[\Omega]\big]
			+\big[\bsigma(\bT^{\flat}),\bsigma(\bS^{\flat})\bS[\Omega] \big]\\
			&\quad+\big[\bsigma(\bT^{\flat}),\bsigma(\bN^{\flat})\bN[\OmegaØ \big]\\
			&=i\big(\bsigma(\bN^{\flat})\bS[\Omega]- \bsigma(\bS^{\flat})\bN[\Omega]\big),
		\end{align*}
		and $\cip{\xi_{\pm}}{\bsigma(\bS^{\flat})\xi_{\pm}}=\cip{\xi_{\pm}}{\bsigma(\bN^{\flat})\xi_{\pm}}=0$,
		we obtain
		\[
			\int_{\gamma}\omega_\xi^{(\Omega)}=\int_{\gamma}\omega_\xi.
		\]
\end{enumerate}

Since the Seifert surface does not cover all $\S^3$, there exists a point $\bp\in \S^3\setminus S$, 
and we can work in $\R^3$, seen as a chart of $\S^3\setminus \{\bp\}$ through the stereographic 
projection $\st$ with respect to $\bp$. Thanks to the last remark we can work in $\R^3\times\C^2$ with 
the flat metric $g_{\R^3}$ of $\R^3$ and the canonical connection 
$\nabla^{\R^3}$ on $\R^3\times\C^2$ (with Clifford map $\bsigma_{\R^3}$). 
Up to a change of sections, we can assume that the isometry
$\bsigma_{\R^3}$ is given by
\[
	\bsigma_{\R^3}(\omega)=\bsigma\cdot \big(g_{\R^3}(\d x_j,\omega)\big)_{1\le j\le 3}, \quad \forall \omega \in \Omega^1(\R^3).
\]
In other words, after identifying vectors and $1$-forms in $\R^3$, we can assume that for all vector fields $X=X_i\partial_{x_i}$, we have the usual
formula:
\[
	\bsigma_{\R^3}(X)=\sigma\cdot X=\sigma_1 X_1+\sigma_2 X_2+\sigma_3 X_3.
\]

We write $c:\R/\ell_{\R^3}\Z\to \R^3$ the arclength parametrization of the loop $\st\circ\gamma$ in $\R^3$. 
Furthermore $(\bT_{\Omega_3},\bS_{\Omega_3},\bN_{\Omega_3})$
denotes  the push forward of the Seifert frame by $\st$ and $(\mathbf{t},\mathbf{s},\mathbf{n})$ the normalized frame in the $g_{\R^3}$-metric.
For simplicity we write $\xi_\pm(r):=\xi_{\pm}(\st^{-1}(c(r)))$.
Up to a global change of sections $U_0\in \mathbf{SU}(2)$ ($(\bp,\psi)\in\S^3\times\C^2\mapsto (\bp,U_0\psi)$) 
we can assume that $|t_3(\bx)|\neq 1$ for all $\bx\in\st (B_\eps[\gamma])$.  
Then there exists $z_+,z_-:\R/\ell_{\R^3}\Z\to \S^1$ such that:
\[
	\xi_+=\frac{z_+}{\sqrt{2(1-t_3)}}\begin{pmatrix} t_1-it_2 \\ 1-t_3\end{pmatrix},\quad\xi_-
	=\frac{z_-}{\sqrt{2(1-t_3)}}\begin{pmatrix} t_3-1\\ t_1+it_2\end{pmatrix}.
\]
If we see $(\xi_+,\xi_-)=U_{\xi}$ as a $2\times 2$-matrix, the matrix $U_\xi$ is in $\mathbf{U}(2)$ and its associated rotation $R_\xi$ defined by
\[
	U_{\xi}^*(\bsigma\cdot v)U_{\xi}=\bsigma\cdot(R_\xi v),\quad v\in\R^3
\]
maps the Seifert frame $(\mathbf{s},\mathbf{n},\mathbf{t})$ onto the canonical basis of $\R^3$. Remark the following:
\begin{equation}
	M(r):=\frac{1}{\sqrt{2(1-t_3)}}\begin{pmatrix} t_1-it_2 & t_3-1\\   1-t_3 &  t_1+it_2 \end{pmatrix}(c(r))\in \mathbf{SU}(2).
\end{equation}

\begin{lemma}\label{lem:loop_not_trivial}
	The loop 
	$$
		r\in\R/\ell_{\R^3}\Z \mapsto R_\xi(c(r))\in \SO(3)
	$$ 
	is not trivial in the fundamental group $\pi_1(\SO(3))\simeq\Z/(2\Z)$. 
\end{lemma}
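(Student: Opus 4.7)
The plan is to identify $[R_\xi]$ with the mod-$2$ winding of $\det U_\xi$, carry out that winding explicitly using the relative-phase condition, verify oddness on the round unknot, and propagate by homotopy invariance.

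First, using the factorization $U_\xi = M \cdot D$ with $M \in \SU(2)$ and $D = \mathrm{diag}(z_+, z_-)$, I will observe that $r \mapsto M(r)$ is a genuine closed loop in $\SU(2) = \mathrm{Spin}(3)$, since it is an explicit smooth function of the periodic quantity $\mathbf{t}(r)$. Its image $R_M$ in $\SO(3)$ therefore lifts to a closed loop in the double cover, so $[R_M]$ is trivial and $[R_\xi] = [R_D]$; but $R_D$ is rotation about the third axis by the angle $\phi_+ - \phi_-$ (with $z_\pm = e^{i\phi_\pm}$), whose $\pi_1(\SO(3))$-class is precisely the mod-$2$ winding of $\det U_\xi = z_+ z_-$ (equivalently, the map $\mathbf{U}(2) \to \mathbf{U}(2)/Z \cong \SO(3)$ induces on $\pi_1$ the reduction modulo $2$ of the $\det$-winding).

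Second, I will compute $\det U_\xi$ explicitly. Parametrizing $\mathbf{t}$ in spherical coordinates $(\theta_t, \phi_t)$ (legitimate since $|t_3| \neq 1$), the natural eigenvectors simplify to
\[
	\xi_+^{\mathrm{nat}} = \begin{pmatrix} \cos(\theta_t/2)\,e^{-i\phi_t} \\ \sin(\theta_t/2) \end{pmatrix}, \qquad \xi_-^{\mathrm{nat}} = \begin{pmatrix} -\sin(\theta_t/2) \\ \cos(\theta_t/2)\,e^{i\phi_t}\end{pmatrix}.
\]
Writing $\mathbf{s} = \cos\psi\,\mathbf{e}_\theta + \sin\psi\,\mathbf{e}_\phi$ for the angle $\psi$ of $\mathbf{s}$ in the coordinate frame of $T_{\mathbf{t}}\S^2$ and imposing $\bsigma(\mathbf{s})\xi_+ = \xi_-$, a direct algebraic simplification gives $\det U_\xi = z_+^2 \, e^{i(\psi - \phi_t)}$. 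Since $z_+^2$ has even winding, the parity of the winding of $\det U_\xi$ equals the parity of the integer $(\Delta\psi - \Delta\phi_t)/(2\pi)$, where $\Delta$ denotes the total increment along one traversal of $\gamma$.

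Third, I will evaluate this integer on the round unknot $c(r) = (\cos r, \sin r, 0)$ equipped with its flat disk Seifert surface: one has $\phi_t(r) = r + \pi/2$ (so $\Delta\phi_t = 2\pi$) and $\mathbf{s}(r) = (-\cos r, -\sin r, 0) = \mathbf{e}_\phi$ (so $\psi \equiv \pi/2$ and $\Delta\psi = 0$), hence $(\Delta\psi - \Delta\phi_t)/(2\pi) = -1$, which is odd. This proves the lemma in this model case.

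Finally, for a general knot with an arbitrary Seifert surface, the conclusion will follow from the homotopy invariance of $[R_\xi] \in \pi_1(\SO(3))$: the smooth loop $(\mathbf{t}, \mathbf{s}, \mathbf{n}) : \S^1 \to \SO(3)$ can be connected to the corresponding loop for the round unknot by a smooth family of framings of smooth closed curves in $\R^3$, along which the $\Z/2$-valued class is constant. The main obstacle lies precisely in this last step, since a regular homotopy of curves does not preserve Seifert surfaces at self-intersections; one must check that the smoothly transported framing ends at a framing whose integer twist-difference from the target Seifert framing is of even parity, which reduces to the standard fact that self-linking numbers of smoothly framed curves in $\R^3$ vary only by even integers under regular homotopy.
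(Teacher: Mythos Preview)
Your argument is correct and takes a genuinely different route from the paper's proof. You reduce the $\pi_1(\SO(3))$-class of $R_\xi$ to the mod-$2$ winding of $\det U_\xi$ via the factorization $U_\xi=MD$ (using that $M$ is a closed loop in $\SU(2)$), compute this winding explicitly for the round unknot with its disk Seifert surface, and then propagate to arbitrary knots by regularly homotoping the framed curve and invoking that the self-linking number of a smoothly framed curve changes only by $\pm 2$ at each crossing moment (equivalently, that the writhe jumps by $\pm 2$ under a crossing change, while twist varies continuously). This last fact is exactly what is needed to guarantee that the transported framing differs from the target Seifert framing by an even twist.

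The paper's proof is instead diagrammatic and combinatorial: it fixes a Seifert surface obtained from the Seifert algorithm with $S$ Seifert circles and $C$ crossings, uses the genus formula for a knot to see that $S+C$ is odd, introduces an auxiliary frame obtained by projecting $\be_3$ off $\bT$, and argues that the parity of $S$ gives the class of the auxiliary-frame loop while the parity of $C$ gives the class of the relative rotation between the auxiliary frame and the Seifert frame; the product is therefore nontrivial. Your approach trades the diagrammatic input (Seifert algorithm and genus formula) for a differential-topological one (Gauss-integral behavior of writhe under crossing changes). Both proofs ultimately sit at a comparable level of detail, and each has its virtue: the paper's is self-contained once one accepts the two stated parity identifications, while yours makes transparent the connection to self-linking and Călugăreanu's theorem and avoids any choice of diagram.
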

This lemma is proved below. For each $r\in [0,\ell_{\R^3})$, 
up to multiplying $U_{\xi}(c(r))$ by a phase $z(r)\in \S^1$, we can assume that 
$\wt{U}_{\xi}(c(r))=z(r)U_{\xi}(c(r))$ is in $\mathbf{SU}(2)$.
We can also assume that $z$ is smooth up to $\ell_{\R^3}^{-}$, but by Lemma~\ref{lem:loop_not_trivial}, 
we have $z(\ell_{\R^3})\overline{z(0)}=-1$.
For each $r\in [0,\ell_{\R^3}]$, there exists a phase $w(r)\in\S^1$ such that
\[
	\wt{U}_\xi(r)=\frac{1}{\sqrt{2(1-t_3)}}\begin{pmatrix} w(t_1-it_2) & \overline{w}(t_3-1)\\
	w(1-t_3) &  \overline{w}(t_1+it_2) \end{pmatrix}(c(r)).
\]
We can assume that $w:[0,\ell_{\R^3}]\to \S^1$ is smooth but by the 
Lemma~\ref{lem:loop_not_trivial} we have $w(\ell_{\R^3})\overline{w(0)}=-1$. 
We thus obtain:
\[
	z_+(r)=w(r)\overline{z(r)},\quad z_-(r)=\overline{w(r)}\overline{z(r)}.
\]
Writing $z(r)=\exp(i\phi_z(r))$ and $w(r)=\exp(i\phi_w(r))$ we easily get:
\begin{align*}
	\frac{i}{2}\int_0^{\ell_{\R^3}}\!\big[\cip{\xi_+}{\frac{\d}{\d r}\xi_+}(r)+\cip{\xi_-}{\frac{\d}{\d r}\xi_-}(r)\big]\d r
	&=i\int_0^{\ell_{\R^3}}\!\big[z\overline{z}'(r)+\tfrac{1}{2}(|w|^2)'\big]d r\\
	&=i\int_0^{\ell_{\R^3}}z\!\overline{z}'(r)d r=\pi\!\!\mod 2\pi.
\end{align*}

\subsubsection*{Proof of Lemma~\ref{lem:loop_not_trivial}}

As we know that two Seifert frames have trivial relative degree, we can assume that the Seifert surface
under consideration can be obtained from the Seifert algorithm. Following the construction,
we have $S$ Seifert circles and $C$ crossings. We can assume that the diagram is drawn on the projection onto $x_3=0$.
The formula of the genus of the Seifert surface of a knot ($g=1-\frac{1}{2}(S-C+1)$) implies that $S+C$ is odd (see \cite{Rolfsen}*{Part 5.A}). 
Let us show that this parity gives the class of the loop $R_\xi^{-1}$ in $\mathbf{SO}(3)$. 
We define an auxiliary (direct and orthonormal) moving frame $(\bT,\wt{\mathbf{e}}_2,\wt{\mathbf{e}}_3)$,
in which we set 
\[
\wt{\mathbf{e}}_3:=\frac{\mathbf{e}_3-\cip{\bT}{\mathbf{e}_3}_{\R^3}}{\norm{\mathbf{e}_3-\cip{\bT}{\mathbf{e}_3}_{\R^3}}}.
\] 
It is well-defined up to a continuous deformation of $\gamma$ and it gives rise to another loop $s\mapsto \wt{R}(s)$ in $\mathbf{SO}(3)$ 
(mapping the canonical basis $(\mathbf{e}_1,\mathbf{e}_2,\mathbf{e}_3)$ of $\R^3$ onto the auxiliary frame). 
Here $\bT$ denotes the tangent vector of $\gamma$. 
The parity of $S$ corresponds to the class of the loop $s\mapsto \wt{R}(s)$.
It is not difficult to see that the parity of $C$ gives the class of the loop $s\mapsto R_\xi^{-1}(s)\wt{R}(s)^{-1}$ 
(these last rotations are the rotations that map the auxiliary moving frame onto the Seifert frame). This proves that
the class of the loop $R_\xi^{-1}$ is given by the parity of $S+C$, which is odd.
\qed

\begin{bibdiv}[Bibliography]{}
\begin{biblist}
	\bib{MR3258125}{article}{
   	author={Arrizabalaga, Naiara},
  	author={Mas, Albert},
   	author={Vega, Luis},
   	title={Shell interactions for Dirac operators},
   	journal={J. Math. Pures Appl. (9)},
   	volume={102},
   	date={2014},
   	number={4},
   	pages={617--639},
   	issn={0021-7824},
	}
	
	\bib{ASP1}{article}{
	author={Atiyah, M. F.},
	author={Patodi, V. K.},
	author={Singer, I. M.},
	title={Spectral asymmetry and Riemannian geometry. I},
	journal={Math. Proc. Cambridge Philos. Soc.},
	volume={77},
	date={1975},
	pages={43--69},
	}
	
	\bib{ASP2}{article}{
	author={Atiyah, M. F.},
	author={Patodi, V. K.},
	author={Singer, I. M.},
	title={Spectral asymmetry and Riemannian geometry. II},
	journal={Math. Proc. Cambridge Philos. Soc.},
	volume={78},
	date={1975},
	number={3},
	pages={405--432},
	}
	
	\bib{ASP3}{article}{
	author={Atiyah, M. F.},
	author={Patodi, V. K.},
	author={Singer, I. M.},
	title={Spectral asymmetry and Riemannian geometry. III},
	journal={Math. Proc. Cambridge Philos. Soc.},
	volume={79},
	date={1976},
	number={1},
	pages={71--99},
	}
	
	\bib{AS68}{article}{
	author={Atiyah, M. F.},
	author={Singer, I. M.},
	title={The index of elliptic operators. I},
	journal={Ann. of Math. (2)},
	volume={87},
	date={1968},
	pages={484--530},
	}

	\bib{BBLP05}{article}{
	author={Boo\ss-Bavnbek, Bernhelm},
	author={Lesch, Matthias},
	author={Phillips, John},
	title={Unbounded Fredholm operators and spectral flow},
	journal={Canad. J. Math.},
	volume={57},
	date={2005},
	number={2},
	pages={225--250},
	}

	\bib{Calu61}{article}{
	   author={C{\u{a}}lug{\u{a}}reanu, G.},
	   title={Sur les classes d'isotopie des n\oe uds tridimensionnels et leurs
	   invariants},
	   language={French, with Russian summary},
	   journal={Czechoslovak Math. J.},
	   volume={11 (86)},
	   date={1961},
	   pages={588--625},
	}

	\bib{MR1860416}{article}{
   	author={Erd{\H{o}}s, L{\'a}szl{\'o}},
   	author={Solovej, Jan Philip},
   	title={The kernel of Dirac operators on $\mathbb{S}\sp 3$ and $\mathbb{R}\sp 3$},
   	journal={Rev. Math. Phys.},
   	volume={13},
   	date={2001},
   	number={10},
   	pages={1247--1280},
   	issn={0129-055X},
	}

	\bib{Fefferman95}{article}{
	author={Fefferman, Charles},
	title={Stability of Coulomb systems in a magnetic field},
	journal={Proc. Nat. Acad. Sci. U.S.A.},
	volume={92},
	date={1995},
	number={11},
	pages={5006--5007},
	}
	
	\bib{Fefferman96}{article}{
	author={Fefferman, Charles},
	title={On electrons and nuclei in a magnetic field},
	journal={Adv. Math.},
	volume={124},
	date={1996},
	number={1},
	pages={100--153},
	}

	\bib{Figu_al_cosmic_string}{article}{
   	author={Filgueiras, C.},
   	author={Moraes, Fernando},
   	title={The bound-state Aharonov-Bohm effect around a cosmic string revisited},
   	journal={Phys. Lett. A},
   	volume={367},
   	date={2007},
   	pages={13--15},
	}

	\bib{FP30}{article}{
   	author={Frankl, F.},
   	author={Pontrjagin, L.},
   	title={Ein Knotensatz mit Anwendung auf die Dimensionstheorie},
   	language={German},
   	journal={Math. Ann.},
   	volume={102},
   	date={1930},
   	number={1},
   	pages={785--789},
   	issn={0025-5831},
	}

	\bib{FroLiebLoss86}{article}{
	author={Fr{\"o}hlich, J{\"u}rg},
	author={Lieb, Elliott H.},
	author={Loss, Michael},
	title={Stability of Coulomb systems with magnetic fields. I. The one-electron atom},
	journal={Comm. Math. Phys.},
	volume={104},
	date={1986},
	number={2},
	pages={251--270},
	}

	\bib{Fuller71}{article}{
	   author={Fuller, F. Brock},
	   title={The writhing number of a space curve},
	   journal={Proc. Nat. Acad. Sci. U.S.A.},
	   volume={68},
	   date={1971},
	   pages={815--819},
	}

	\bib{getzler}{article}{
	author={Getzler, Ezra},
	title={A short proof of the local Atiyah-Singer index theorem},
	journal={Topology},
	volume={25},
	date={1986},
	number={1},
	pages={111--117},
	}

	\bib{Goette12}{article}{
	author={Goette, Sebastian},
	title={Computations and applications of $\eta$ invariants},
	conference={title={Global differential geometry},},
	book={
		series={Springer Proc. Math.},
		volume={17},
		publisher={Springer, Heidelberg},
	},
	date={2012},
	pages={401--433},
	}

	\bib{Grubb05}{article}{
	   author={Grubb, Gerd},
	   title={Analysis of invariants associated with spectral boundary problems
	   for elliptic operators},
	   conference={
	      title={Spectral geometry of manifolds with boundary and decomposition
	      of manifolds},
	   },
	   book={
	      series={Contemp. Math.},
	      volume={366},
	      publisher={Amer. Math. Soc., Providence, RI},
	   },
	   date={2005},
	   pages={43--64},
	}
	
	\bib{HebdaTsau}{article}{
	   author={Hebda, James J.},
	   author={Tsau, Chichen M.},
	   title={Normal holonomy and writhing number of smooth knots},
	   journal={J. Knot Theory Ramifications},
	   volume={17},
	   date={2008},
	   number={12},
	   pages={1483--1509},
	}
	
	\bib{hitchin}{article}{
	author={Hitchin, Nigel},
	title={Harmonic spinors},
	journal={Advances in Math.},
	volume={14},
	date={1974},
	pages={1--55},
	}
	
	\bib{Kauff_1987_On_knots}{book}{
  	author = {Kauffman, Louis H.},
	title = {On knots},
	publisher = {Princeton University Press},
	year={1987},
	}

	\bib{LiebLossSol95}{article}{
	author={Lieb, Elliott H.},
	author={Loss, Michael},
	author={Solovej, Jan Philip},
	title={Stability of matter in magnetic fields},
	journal={Phys. Rev. Lett.},
	volume={75},
	date={1995},
	number={6},
	pages={985--989},
	}
	
	\bib{LiebLoss86}{article}{
	author={Lieb, Elliott H.},
	author={Loss, Michael},
	title={Stability of Coulomb systems with magnetic fields. II. The
	many-electron atom and the one-electron molecule},
	journal={Comm. Math. Phys.},
	volume={104},
	date={1986},
	number={2},
	pages={271--282},
	}
	
	\bib{LossYau86}{article}{
	author={Loss, Michael},
	author={Yau, Horng-Tzer},
	title={Stabilty of Coulomb systems with magnetic fields. III. Zero energy
	bound states of the Pauli operator},
	journal={Comm. Math. Phys.},
	volume={104},
	date={1986},
	number={2},
	pages={283--290},
	}

	\bib{Melrose}{book}{
	author={Melrose, Richard B.},
	title={The Atiyah-Patodi-Singer index theorem},
	series={Research Notes in Mathematics},
	volume={4},
	publisher={A K Peters, Ltd., Wellesley, MA},
	date={1993},
	pages={xiv+377},
	}

	\bib{Nicolaescu07}{article}{
	author={Nicolaescu, Liviu I.},
	title={On the space of Fredholm operators},
	journal={An. \c Stiin\c t. Univ. Al. I. Cuza Ia\c si. Mat. (N.S.)},
	volume={53},
	date={2007},
	number={2},
	pages={209--227},
	}
	
	\bib{Persson06}{article}{
	author={Persson, Mikael},
	title={On the Dirac and Pauli operators with several Aharonov-Bohm
	solenoids},
	journal={Lett. Math. Phys.},
	volume={78},
	date={2006},
	number={2},
	pages={139--156},
	issn={0377-9017},
	}
	
	\bib{Philips_spectral_flow}{article}{
  	author = {Phillips, John},
	title = {Self-adjoint Fredholm operators and spectral flow},
  	journal = {Canad. Math. Bull.},
  	volume = {39},
	date={1996},
	number={4},
  	pages = {460--467},
	}
	
	\bib{dirac_s3_paper1}{article}{
	author={Portmann, Fabian},
	author={Sok, J\'er\'emy},
	author={Solovej, Jan Philip},
	title={Self-adjointness \& spectral properties of Dirac operators with magnetic links},
        journal={To appear in Jour.\ Math.\ Pures et Appl.},
        Eprint = {arXiv:1701.04987},
	date={2017}
	}
	
	\bib{dirac_s3_paper3}{article}{
	author={Portmann, Fabian},
	author={Sok, J\'er\'emy},
	author={Solovej, Jan Philip},
	journal={Jour. Fun. Anal.},
	volume={1},
	number={3},
	title={Analysis of zero modes for Dirac operators with magnetic links},
        Eprint = {arXiv:1705.02959},
	date={2018}
	pages={604--659},
	}
	
	
	\bib{total_torsion}{article}{
	author={Qin, Yong-An},
   	author={Li, Shi-Jie},
   	title={Total torsion of lines of curvature},
   	journal={Bull. Austr. Math. Soc.},
   	volume={65},
   	date={2002},
   	pages={73--78},
	}

	\bib{ReedSimon1}{book}{
	author={Reed, Michael},
	author={Simon, Barry},
	title={Methods of modern mathematical physics. I. Functional analysis},
	publisher={Academic Press, New York-London},
	date={1980},
	}

	\bib{Rham55}{book}{
	author={de Rham, Georges},
	title={Vari{\'e}t{\'e}s diff{\'e}rentiables},
	language={French},
	year={1955},
	publisher={Hermann, Paris},
	}
	
	\bib{Robbin_Salamon}{article}{
	author={Robbin, Joel},
	author={Salamon, Dietmar},
	title={The spectral flow and the Maslov index},
	journal={Bull. London Math. Soc.},
	volume={27},
	date={1995},
	number={1},
	pages={1--33},
	}

	\bib{Rolfsen}{book}{
	author={Rolfsen, Dale},
	title={Knots and links},
	series={Mathematics Lecture Series},
	volume={7},
	note={Corrected reprint of the 1976 original},
	publisher={Publish or Perish, Inc., Houston, TX},
	date={1990},
	}

	\bib{MR916076}{article}{
	author={Scharlemann, Martin},
   	author={Thompson, Abigail},
   	title={Finding disjoint Seifert surfaces},
   	journal={Bull. London Math. Soc.},
   	volume={20},
   	date={1988},
   	number={1},
   	pages={61--64},
   	issn={0024-6093},
	}
	
	\bib{Seifert35}{article}{
   	author={Seifert, Herbert},
   	title={\"Uber das Geschlecht von Knoten},
   	language={German},
   	journal={Math. Ann.},
   	volume={110},
   	date={1935},
   	number={1},
   	pages={571--592},
   	issn={0025-5831},
	}
	
	\bib{Spivakvol4}{book}{
	author={Spivak, Michael},
	title={A comprehensive introduction to differential geometry. Vol. IV},
	edition={3},
	publisher={Publish or Perish, Inc., Huston, Texas},
	date={1999},
	}
	
	\bib{Wahl08}{incollection}{
   	author={Wahl, Charlotte},
	title={A New Topology on the Space of Unbounded Selfadjoint Operators, K-theory and Spectral Flow},
	booktitle={C\textsuperscript{*}-Algebras and Elliptic Theory II},
	editor={Burghelea},
	editor={Melrose},
	editor={Mishchenko},
	editor={Troitsky},
   	year={2008},
   	pages={297--309},
	publisher={Birkh\"auser}
	}
	
	\bib{White69}{article}{
	   author={White, James H.},
	   title={Self-linking and the Gauss integral in higher dimensions},
	   journal={Amer. J. Math.},
	   volume={91},
	   date={1969},
	   pages={693--728},
	}

\end{biblist}
\end{bibdiv}

\end{document}